\def\Re{{\operatorname{Re}}}
\def\openone{\leavevmode\hbox{\small1\kern-3.8pt\normalsize1}}
\def\cm{{\cal M}}
\def\cn{{\cal N}}
\def\cd{{\cal D}}
\def\ce{{\cal E}}
\def\PSD{\operatorname{PSD}}
\def\cc{{\cal C}}
\def\CC{\mathbb{C}}
\def\RR{\mathbb{R}}
\def\NN{\mathbb{N}}
\def\11{\mathbb{I}}
\def\fqln{\succeq_{\operatorname{l.n.}}^{\operatorname{fq}}}
\def\fqmc{\succeq_{\operatorname{m.c.}}^{\operatorname{fq}}}
\def\cmc{\succeq_{\operatorname{m.c.}}^{\operatorname{c}}}
\def\cln{\succeq_{\operatorname{l.n.}}^{\operatorname{c}}}
\def\rln{\succeq_{\operatorname{l.n.}}^{\operatorname{reg}}}
\def\sln{\succeq_{\operatorname{l.n.}}}
\def\mc{\succeq_{\operatorname{m.c.}}}
\def\deg{\succeq_{\operatorname{deg}}}
\newcommand{\efqln}[1]{\succeq_{\operatorname{l.n.}}^{\operatorname{fq},{#1}}}
\newcommand{\ecln}[1]{\succeq_{\operatorname{l.n.}}^{\operatorname{c},{#1}}}
\newcommand{\erln}[1]{\succeq_{\operatorname{l.n.}}^{\operatorname{reg},{#1}}}
\newcommand{\esln}[1]{\succeq_{\operatorname{l.n.}}^{{#1}}}
\newtheorem{definition}{Definition}[section]
\newtheorem{proposition}[definition]{Proposition}
\newtheorem{lemma}[definition]{Lemma}
\newtheorem{theorem}[definition]{Theorem}
\newtheorem{corollary}[definition]{Corollary}
\newtheorem{conjecture}[definition]{Conjecture}
\newtheorem{remark}[definition]{Remark}
\def\eps{\varepsilon}
\newcommand{\supp}{\mathop{\rm supp}\nolimits}
\newcommand{\tr}{\mathop{\rm Tr}\nolimits}
\newcommand{\bra}[1]{\langle#1|}
\newcommand{\ket}[1]{|#1\rangle}
\newcommand{\cA}{{\cal A}}
\newcommand{\cB}{{\cal B}}
\newcommand{\cD}{{\cal D}}
\newcommand{\cE}{{\cal E}}
\newcommand{\cG}{{\cal G}}
\newcommand{\cN}{{\cal N}}
\newcommand{\cT}{{\cal T}}
\newcommand{\cH}{{\cal H}}
\newcommand{\cK}{{\cal K}}
\newcommand{\cV}{{\cal V}}
\newcommand{\cP}{\mathcal{P}}
\newcommand{\cL}{{\cal L}}
\newcommand{\cR}{{\cal R}}
\newcommand{\cM}{{\mathcal{M}}}
\newcommand{\Id}{{\mathds{1}}}
\newcommand{\R}{{\mathbb{R}}}
\def\e{\mathrm{e}}
\newcommand{\CPTP}{\operatorname{CPTP}}
\numberwithin{equation}{section}
\DeclareRobustCommand\openone{\leavevmode\hbox{\small1\normalsize\kern-.33em1}}
\newcommand{\identity}{I}
\newcommand{\id}{{\rm{id}}}
\newcommand{\be}{\begin{equation}}
	\newcommand{\ee}{\end{equation}}
\newcommand{\bea}{\begin{eqnarray}}
	\newcommand{\eea}{\end{eqnarray}}
\newcommand{\beas}{\begin{eqnarray*}}
	\newcommand{\eeas}{\end{eqnarray*}}
\title{On contraction coefficients, partial orders and approximation of capacities for quantum channels}
\begin{document}

\author{Christoph Hirche}
\affiliation{QMATH, Department of Mathematical Sciences, University of Copenhagen, Universitetsparken 5, 2100 Copenhagen, Denmark}
\author{Cambyse Rouz\'{e}}
\author{Daniel Stilck Fran\c ca}
\affiliation{Department of Mathematics, Technische Universit\"at M\"unchen, 85748 Garching, Germany}
\affiliation{Munich Center for Quantum Science and Technology (MCQST), M\"unchen, Germany}

\date{}

\maketitle

\begin{abstract}
The data processing inequality is the most basic requirement for any meaningful measure of information. It essentially states that distinguishability measures between states decrease if we apply a quantum channel and is the centerpiece of many results in information theory. Moreover, it justifies the operational interpretation of most entropic quantities. In this work, we revisit the notion of contraction coefficients of quantum channels, which provide sharper and specialized versions of the data processing inequality. A concept closely related to data processing is partial orders on quantum channels. First, we discuss several quantum extensions of the well-known less noisy ordering and relate them to contraction coefficients. We further define approximate versions of the partial orders and show how they can give strengthened and conceptually simple proofs of several results on approximating capacities. Moreover, we investigate the relation to other partial orders in the literature and their properties, particularly with regards to tensorization. We then examine the relation between contraction coefficients with other properties of quantum channels  such as hypercontractivity. Next, we extend the framework of contraction coefficients to general f-divergences and prove several structural results. Finally, we consider two important classes of quantum channels, namely Weyl-covariant and bosonic Gaussian channels. For those, we determine new contraction coefficients and relations for various partial orders.
\end{abstract}

\tableofcontents

\section{Introduction}

One of the arguably most fundamental concepts in (quantum) information theory is that of data processing\footnote{``If you don't have data processing, you're toast'', \href{https://youtu.be/3gloUVUzgLc?t=1423}{\textit{Mark M. Wilde, ISL Colloquium, 17/09/2020}}}. 
Many relevant quantities are monotone under the application of a quantum channel. That is what allows us to assign them an operational meaning in terms of distinguishability and, in turn, makes them useful in assessing physical properties. A commonly considered quantity is the relative entropy for which, when considering an arbitrary channel $\cn$, data processing manifests as
\begin{align}
D( \cn(\rho) \| \cn(\sigma) ) \leq D(\rho\|\sigma)
\end{align}
for any two states $\rho,\sigma$. This mathematical statement also gives rise to an operational interpretation of data processing: The relative entropy gives the optimal rate at which one can asymptotically discriminate two quantum states in an asymmetric setting~\cite{HP91, ON00}. It is now immediately apparent that applying a quantum channel to the state can never make the discrimination task easier; therefore the relative entropy has to become smaller. 

Since the relative entropy acts as a parent quantity for several other entropic quantities, they also inherit its data processing property. An important example is the mutual information: Given a bipartite quantum state $\rho_{AA'}$ and a quantum channel $\cn:A'\to B$, we always have
\begin{align}
I(A:B)_\rho \leq I(A:A')_\rho\,
\end{align}

Although data processing is a potent tool, often it is not sufficient to know that a quantity is decreasing when applying a channel, one also needs to quantify by how much. When fixing the reference state $\sigma$ and varying $\rho$, in many cases it is possible to show that $D(\cn(\rho)\|\cn(\sigma))$ is strictly smaller than $D(\rho\|\sigma)$ unless $\rho=\sigma$. A natural approach is then to consider contraction coefficients, 
\begin{align}\label{contractioncoeff}
\eta_{\operatorname{Re}}(\cn,\sigma):=\sup_{\rho\ne\sigma}\frac{D(\cn(\rho)\|\cn(\sigma))}{D(\rho\|\sigma)}\,,
\end{align}
and we say that $\cn$ satisfies a \textit{strong data processing inequality} at input state $\sigma$ when $\eta_{\operatorname{Re}}(\cN,\sigma)<1$. These contraction coefficients have already found many applications in information theory~\cite{polyanskiy2017strong,ramakrishnan2020quantum} and were studied in the quantum setting in~\cite{Lesniewski1998a,hiai2016contraction}.

The contraction coefficient \eqref{contractioncoeff} has been extensively studied in the classical setting. In their pioneering work, Ahlswede and G\'{a}cs \cite{ahlswede1976spreading} discovered a deep relationship between $\eta_{\operatorname{Re}}(\cn,\sigma)$ and several other quantities, among which the maximal correlation and hypercontractive properties of the channel. After this, the notion of entropic contraction coefficients were extended to general $f$-divergences and other relevant measures of distinguishability between two probability distributions such as the Dobrushin contraction coefficient (for the trace distance), or that of the $\chi^2$-divergence. \cite{cohen1993relative,choi1994equivalence,miclo1997remarques,cohen1998comparisons,del2003contraction} (see also the more recent contributions \cite{raginsky2016strong,polyanskiy2017strong,7279116,makur2018comparison}).

Another closely related question is to compare channels based on how much they contract a certain quantity. In this case, comparing the contraction coefficients is only of limited use. A more direct approach is to define partial orders on the set of channels based on how much the channels contract each possible input state individually. While many such orders are known, the most commonly used ones are the less noisy and the more capable order~\cite{korner1977comparison}, especially in the classical setting.

In the quantum setting, Watanabe introduced quantum generalizations of the less noisy and more capable orders~\cite{watanabe2012private}. A noteworthy difference to the classical case is that the definitions in~\cite{watanabe2012private} include a regularization, in the sense that the property has to hold not just for $\cn$ but $\cn^{\otimes n}$ for all $n$. This choice proved to be useful in the context of capacities and was therefore operationally motivated. However, in many cases it can also be useful to look at the $n=1$ case, see e.g.~\cite{tikku2020non}. We go a step further and also consider two additional quantum generalizations: a third version with additional reference system that turns out to have some desirable properties such as tensorization and a fourth where all systems are fully quantum.

In this work we revisit the concepts of contraction coefficients and partial orders in quantum information theory. Most importantly we show how both contraction coefficients and the less noisy order can be understood in very similar terms as either relative entropy or mutual information based tools. From this starting point, we discuss properties such as tensorization, bounds and the relation to other partial orders in the spirit of~\cite{makur2018comparison,raginsky2016strong}. 

We then discuss the application of partial orders in the context of quantum capacities, revisiting the recently introduced concept of approximate degradability and the resulting capacity bounds~\cite{sutter2017approximate} and introducing approximate versions of the different less noisy and more capable definitions. This leads to weaker requirements for approximating the quantum, private and classical capacities. Moreover, as corollaries we get the approximate degradability capacity bounds via conceptually straightforward proofs.

The main drawback of the entropy based partial orders discussed here is that compared to (approximate) degradability, they are rather complicated to verify. It can be nontrivial to check whether two quantum channels fulfill the conditions of the partial order of choice. To this end, we investigate alternative characterizations and conditions, providing some progress towards this question. We also draw connections to the notions of hypercontractivity and other functional inequalities.  

Finally, we discuss both generalizations and special cases of the discussed concepts. First, we investigate generalizations of contraction coefficients and the less noisy order to $f$-divergences~\cite[Chapter 7]{OhyaPetz-Entropy-1993}. Then we present several results for the setting where the channels are from the special classes of either Weyl-covariant channels or bosonic Gaussian channels. In particular we find examples of Weyl-covariant channels that satisfy the less noisy ordering without one being the degraded version of the other. We end this paper with a derivation of precise contraction coefficients for the most important classes of quantum Gaussian channels, namely the attenuator, the amplifier and the additive noise channels. We also consider the case of tensor products of the latter, under the constrained minimum output entropy conjecture.

\paragraph{Some Notations:} In what follows, we denote by $\cH$ a finite dimensional Hilbert space. $\cB(\cH)$ is the set of linear operators acting on $\cH$. The set of quantum states is denoted by $\cD(\cH)$, and that of full-rank states by $\cD(\cH)_+$. Given two quantum systems $\cH,\cK$, the set of quantum channels $\cN:\cB(\cH)\to\cB(\cK)$, that is of completely positive, trace preserving superoperators, is denoted by $\CPTP(\cH,\cK)$. For a quantum channel $\cN$, we will denote the channel in the Heisenberg picutre by $\cN^*$. The set of probability mass functions over sets of cardinality $m\in\NN$ is denoted by $\cP_m$. We also use standard quantum information conventions, denoting quantum systems by capital letters $A,B$ and classical ones by $U$ and $X$.

Furthermore, for two states $\rho,\sigma$  we define their relative entropy as 
\begin{align}
D(\rho\|\sigma)=\tr\left[\rho\left(\log(\rho)-\log(\sigma)\right)\right]
\end{align}
if the support of $\rho$ is contained in that of $\sigma$. It this condition is not satisfied, it is defined to be $+\infty$. We define the von Neumann entropy of a quantum state $\rho_A$ as 
\begin{align}
H(A)_{\rho_A}:=-\tr\left[\rho_A\log(\rho_A)\right]
\end{align}
For other standard definitions such as most entropic quantities and channel capacities we refer to~\cite{tomamichel2015quantum, wilde2013quantum}.

\section{Contraction coefficients and partial orders of quantum channels}\label{sec:defs}

In this section we will first review the main partial orders used throughout the manuscript: Degradable, less noisy and more capable. In the case of the latter two, we introduce several variants that are all valid generalizations of the corresponding classical formulations. Next, we review contraction coefficients and in particular, the one based on the relative entropy for which we then show an alternative formulation in terms of mutual information. This  provides us with a theme throughout the rest of the manuscript. Finally, we briefly discuss some connections between contraction coefficients and partial orders.

\subsection{Channel preorders in quantum information theory, and related tasks}\label{chanpreorder}

Channel preorders can be understood as ways of quantifying a channel's noisiness as compared to another one. However, the notion of noise varies according to the tasks for which a given channel is being used. For this reason, a zoo of preorders can be found already in the classical setting \cite{makur2018comparison}. We recall some of the main preorders one can find in the literature and recall their relations here. We start with the most common partial order: Degradability. 

\begin{tcolorbox}
\textbf{Degradation preorder:} \cite{cover1972broadcast,bergmans1973random} A channel $\cn\in\CPTP(\cH_A,\cK_{B'})$ is said to be a \textit{degraded} version of a channel $\cm\in\CPTP(\cH,\cK_{B})$ with same input space, denoted $\cm\succeq_{\operatorname{deg}} \cn$ if the following equivalent conditions hold:
\begin{itemize}
\item The channels are related as 
\begin{align}
\cn=\Theta\circ \cm
\end{align}
for some channel $\Theta\in\CPTP(\cK,\cK')$. 
\item For any c-q state $\rho_{UAR}$,
\begin{align}
H_{\min}(U|BR)_{(\id_{UR}\otimes \cm)(\rho_{UAR})}\leq H_{\min}(U|B'R)_{(\id_{UR}\otimes \cn)(\rho_{UAR})}\,.
\end{align} 
\end{itemize}
\end{tcolorbox}

The first condition is the well-known standard definition of degradability, while the equivalence with the second was recently shown in~\cite{buscemi2016degradable}. We recall that the $\min$ conditional entropy is defined as 
\begin{align}
H_{\min}(A|B)_\rho = -\inf_{\sigma_B} \inf\{\lambda\in\R : \rho_{AB}\leq 2^\lambda \identity_A\otimes\sigma_B\}. 
\end{align}
While the first condition is often more useful due to data processing the second will make the close familiarity to the remaining partial orders easily evident. 

We will continue by defining what can be considered the basic version of the less noisy and more capable partial orders. 
\begin{tcolorbox}
\textbf{Less noisy, more capable:} \cite{korner1977comparison, watanabe2012private} A channel $\cm\in\CPTP(\cH_A,\cK_B)$ is said to be \textit{less noisy} than a channel $\cn\in\CPTP(\cH_A,\cK_{B'})$ with same input space, denoted $\cm\succeq_{\operatorname{l.n.}} \cn$, if for all classical random variables $U$ and any c-q state $\rho_{UA}$
$$I(U:B)_{(\id_U\otimes \cm)(\rho_{UA})}\ge I(U:B')_{(\id_U\otimes \cn)(\rho_{UA})}\,.$$
Moreover, $\cm$ is said to be \textit{more capable} than $\cn$, denoted $\cm\succeq_{\operatorname{m.c.}}\cn$, if for any ensemble $\{p,|\psi_x\rangle_A\}$, and the corresponding c-q state $\rho_{XA}=\sum_x\,p(x)\,|x\rangle\langle x|\otimes |\psi_x\rangle\langle\psi_x|$
\begin{align*}
I(X:B)_{(\id_X\otimes \cm)(\rho_{XA})}\ge I(X:B')_{(\id_X\otimes \cn)(\rho_{XA})}\,.
\end{align*}
\end{tcolorbox}
	We recall that the mutual information $I(A:B)_{\rho}$ of a bipartite state $\rho_{AB}$ is defined as
	\begin{align*}
	I(A:B)_\rho:=H(A)_\rho+H(B)_\rho-H(AB)_\rho\,,
	\end{align*}	 
	where $H(C)_{\rho}:=H(\rho_C)=-\tr[\rho_C\ln\rho_C]$ denotes the von Neumann entropy of a state $\rho$ on a subsystem $C$. In the classical setting, these definitions are well motivated by their applications to wiretap and broadcast channels~\cite{korner1977comparison,csiszar1978broadcast,marton1979coding}. In the quantum setting, the less noisy and more capable relations determine properties and relations of the quantum and private capacities~\cite{watanabe2012private}. Furthermore, it is immediately clear that
\begin{align}
\cm\sln\cn \quad\Rightarrow \quad \chi(\cm)\geq\chi(\cn)\,,
\end{align}
where $\chi(\cn)$ is the Holevo quantity of the channel $\cn$ defined as 
\begin{align*}
	\chi(\cn):=\sup\limits_{p,\rho_x}H\left(\sum_xp(x)\cn(\rho_x)\right)-\sum_xp(x)H(\cn(\rho_x)).
\end{align*}

Note that we stated the condition here in its most common form via the mutual information. It can however easily be written in terms of the conditional entropy, $H(A|B)_\rho:=H(AB)_\rho-H(B)_\rho$,  as
\begin{align}
H(U|B)_{(\id_U\otimes \cm)(\rho_{UA})}\leq H(U|B')_{(\id_U\otimes \cn)(\rho_{UA})}\,,
\end{align}
and similarly for the more capable relation, which bears noticeable similarity with the alternative definition of degradability. The lack of a reference system here is crucial and we will discuss this in more details later. Indeed, one can also find a different partial order in the literature called \textit{less ambiguous} that is defined as a mix of the above characteristics: min-entropy, but no reference system (see e.g.~\cite{buscemi2017comparison}). 
\begin{align}
H_{\min}(U|B)_{(\id_U\otimes \cm)(\rho_{UA})}\leq H_{\min}(U|B')_{(\id_U\otimes \cn)(\rho_{UA})}\,.
\end{align}

As expected in the quantum setting, the notions of less noisy and more capable need in general to be regularized in order to lead to an operational interpretation~\cite{watanabe2012private}\footnote{Strictly speaking, Watanabe defined the notion of regularized less noisy and more capable between a channel and its complementary, which he simply referred to as less noisy and more capable.}: 

\begin{tcolorbox}
\textbf{Regularized less noisy, more capable:}
With the previous notations, the channel $\cm$ is said to be \textit{regularized less noisy} than $\cn$, denoted $\cm\succeq_{\operatorname{l.n.}}^{\operatorname{reg}}\cn$, if for any $n\in\NN$, any classical register $U$ and any c-q state $\rho_{UA^n}$,
\begin{align*}
I(U:B^n)_{(\id_{U}\otimes \cm^{\otimes n})(\rho_{UA^n})}\ge I(U:B'^n)_{(\id_{U}\otimes \cn^{\otimes n})(\rho_{UA^n})}\,.
\end{align*} 
Moreover, $\cm$ is \textit{regularized more capable} than $\cn$, denoted $\cm\succeq_{\operatorname{m.c.}}^{\operatorname{reg}}\cn$, if for any $n\in\NN$, and  ensemble $\{p,|\psi_x\rangle_{A^n}\}$ over $\cH_A^{\otimes n}$, and the corresponding c-q state $\rho_{XA^n}=\sum_{x}\,p(x)\,|x\rangle\langle x|\otimes |\psi_{x}\rangle\langle\psi_{x}|_{A^n}$,
\begin{align*}
I(X:B^n)_{(\id_{X}\otimes \cm^{\otimes n})(\rho_{XA^n})}\ge I(X:B'^n)_{(\id_{X}\otimes \cn^{\otimes n})(\rho_{XA^n})}\,.
\end{align*} 
\end{tcolorbox}
 In general the regularized version of the partial orders give a strictly stronger condition than the unregularized one. We will discuss the problem of tensorization in more details in Section~\ref{Sec:PropertiesAndBounds}. Similar to the standard definition we can also here directly observe that
 \begin{align}
\cm\rln\cn \quad\Rightarrow \quad C(\cm)\geq C(\cn)\,,
\end{align}
where $C(\cn)$ is the classical capacity of the channel $\cn$. 
 An alternative quantum generalization is to provide the channel input with an additional quantum reference system. We will later see that this also leads to an order that obeys additional desirable properties such as tensorization. 

 \begin{tcolorbox}
 \textbf{Completely less noisy, more capable} 
 With the previous notations, the channel $\cm$ is said to be \textit{completely less noisy} than $\cn$, denoted $\cm\succeq_{\operatorname{l.n.}}^{\operatorname{c}}\cn$ if for any classical register $U$ and any c-q state $\rho_{UAR}$,
\begin{align*}
I(U:BR)_{(\id_{UR}\otimes \cm)(\rho_{UAR})}\ge I(U:B'R)_{(\id_{UR}\otimes \cn)(\rho_{UAR})}\,.
\end{align*} 
Moreover, $\cm$ is \textit{completely more capable} than $\cn$, denoted $\cm\succeq_{\operatorname{m.c.}}^{\operatorname{c}}\cn$, if for any ensemble $\{p,|\psi_x\rangle_{AR}\}$ over $\cH_{AR}$, and the corresponding c-q state $\rho_{XAR}=\sum_{x}\,p(x)\,|x\rangle\langle x|\otimes |\psi_{x}\rangle\langle\psi_{x}|$,
\begin{align*}
I(X:BR)_{(\id_{XR}\otimes \cm)(\rho_{XAR})}\ge I(X:B'R)_{(\id_{XR}\otimes \cn)(\rho_{XAR})}\,.
\end{align*} 
\end{tcolorbox}

Note that there is no known restriction on the dimension of the $R$ system, making this condition generally even harder to check than the standard variants. This is a common problem in quantum information theory, see e.g.~\cite{BeigiGohari2013,HW2020,CHW2020,datta2019convexity,wilde2020amortized}.

At this point it should be briefly noticed that, in the classical less noisy setting, allowing a classical reference system does not change the order itself. Even more, one can easily see that even in the quantum setting a classical reference system would not help.
\begin{lemma}
If one restricts the reference system in the completely less noisy ordering to be classical the order would become equivalent to the usual less noisy ordering, meaning
\begin{align}
\cm\cln\cn \Leftrightarrow \cm\sln\cn\,.
\end{align}
\end{lemma}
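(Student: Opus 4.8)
The plan is to prove the two implications separately, with the forward direction being essentially trivial and the reverse direction being the content of the lemma. For the direction $\cm \cln \cn \Rightarrow \cm \sln \cn$: the ordinary less noisy condition is the special case of the completely less noisy condition where the reference system $R$ is trivial (one-dimensional), so this follows immediately by specialization. Equivalently, if we agree that ``classical reference system'' includes the empty register, there is nothing to do; but even taking a genuine classical $R$, tracing it out and using data processing recovers the no-reference statement, so the forward implication holds.

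The substantive direction is $\cm \sln \cn \Rightarrow$ the completely-less-noisy condition restricted to classical $R$. Here I would take an arbitrary c-q-c state of the form $\rho_{UAR} = \sum_{u,r} p(u,r)\, |u\rangle\langle u|_U \otimes |r\rangle\langle r|_R \otimes \rho^{(u,r)}_A$, where without loss of generality $R$ is classical. The key observation is that the pair $(U,R)$ can itself be regarded as a single classical register $V := (U,R)$, so that after applying $\id_V \otimes \cm$ we have a c-q state $\sigma_{VB}$ on a classical system $V$ and the quantum output $B$. I would then want to relate $I(U:BR)_{\cm}$ to mutual informations of the form $I(V:B)_{\cm}$ for which the hypothesis $\cm \sln \cn$ applies directly. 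The natural identity to use is the chain rule together with the fact that, since $R$ is classical, conditioning on $R=r$ is legitimate: write
\begin{align*}
I(U:BR)_{\cm} = I(U:R)_{\cm} + I(U:B|R)_{\cm} = I(U:R) + \sum_r p(r)\, I(U:B)_{\cm, \rho^{(r)}_{UA}}\,,
\end{align*}
where $\rho^{(r)}_{UA}$ is the conditional c-q state given $R=r$, and crucially $I(U:R)$ does not depend on the channel because $R$ is untouched. Since for each fixed $r$ the state $\rho^{(r)}_{UA}$ is a valid c-q state, the hypothesis gives $I(U:B)_{\cm,\rho^{(r)}_{UA}} \ge I(U:B')_{\cn,\rho^{(r)}_{UA}}$ for every $r$; averaging over $r$ with weights $p(r)$ and adding back the channel-independent term $I(U:R)$ yields $I(U:BR)_{\cm} \ge I(U:B'R)_{\cn}$, which is exactly what is needed.

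The main point requiring care — and the closest thing to an obstacle — is the validity of the conditional decomposition $I(U:B|R) = \sum_r p(r) I(U:B)_{\rho^{(r)}}$ and the channel-independence of $I(U:R)$; both rest on $R$ being classical, so the argument genuinely does not extend to quantum $R$, consistent with the discussion preceding the lemma. I would also remark that the same reasoning handles the completely more capable order restricted to classical reference: an ensemble of pure states $\{p, |\psi_x\rangle_{AR}\}$ with classical $R$ can be grouped by the value of $R$, and within each group one has an ensemble on $A$ alone, so the more capable hypothesis applies blockwise and averages up. No nontrivial estimates are involved; the whole content is the bookkeeping that a classical reference merely reindexes the classical register already present in the definitions.
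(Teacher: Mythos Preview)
Your proof is correct and follows essentially the same approach as the paper: both argue that the forward direction is trivial, and for the reverse use the chain rule $I(U:BR)=I(U:R)+I(U:B|R)$, observe that $I(U:R)$ is channel-independent, expand $I(U:B|R)=\sum_r p(r)\,I(U:B)_{\rho^{(r)}}$ since $R$ is classical, and apply the less noisy hypothesis termwise. Your brief mention of merging $(U,R)$ into a single classical register is not actually used (and would not directly yield the needed inequality), but the chain-rule argument you give is exactly the paper's proof.
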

\begin{proof}
The ``$\Rightarrow$'' direction is obvious. The ``$\Leftarrow$'' direction follows by noticing that
\begin{align*}
\cm\cln\cn ~~ \Leftrightarrow~~ & ~~  I(U:B|R)_{(\id_{UR}\otimes \cm)(\rho_{UAR})}\ge I(U:B'|R)_{(\id_{UR}\otimes \cn)(\rho_{UAR})}  \\
\Leftrightarrow~~ & ~~ \sum_r p(r) I(U:B|R=r)_{(\id_{UR}\otimes \cm)(\rho_{UAR})}\ge \sum_r p(r) I(U:B'|R=r)_{(\id_{UR}\otimes \cn)(\rho_{UAR})}\,.
\end{align*} 
The first equivalence follows by subtracting $I(U:R)_\rho$ on both sides and the second because $R$ is assumed to be classical. Now the proof follows by the definition of the less noisy ordering.
\end{proof}
As quantum conditioning is significantly more complex, the same argument does not hold for quantum reference systems~\cite{BeigiGohari2013,hirche2018bounds}. 

All previously discussed definitions of quantum extensions of the classical less noisy order in this work have in common that the system correlated to the input is classical. However, one could envision a \textit{fully quantum} extension of the less noisy order where this restriction is lifted. 

\begin{tcolorbox}
\textbf{Fully quantum less noisy, more capable} A channel $\cm\in\CPTP(\cH_A,\cK_B)$ is said to be \textit{fully quantum less noisy} than a channel $\cn\in\CPTP(\cH_A,\cK_{B'})$ with same input space, denoted $\cm\fqln \cn$, if for all quantum states $\rho_{AA'}$
$$I(A:B)_{(\id_A\otimes \cm)(\rho_{AA'})}\ge I(A:B')_{(\id_A\otimes \cn)(\rho_{AA'})}\,.$$
Moreover, $\cm$ is said to be \textit{fully quantum more capable} than $\cn$, denoted $\cm\succeq_{\operatorname{m.c.}}^{\operatorname{fq}}\cn$, if for all pure quantum states $\Psi_{AA'}$
\begin{align*}
I(A:B)_{(\id_A\otimes \cm)(\Psi_{AA'})}\ge I(A:B')_{(\id_A\otimes \cn)(\Psi_{AA'})}\,.
\end{align*}
\end{tcolorbox}
Similar to before we can also here directly observe that
 \begin{align}
\cm\fqmc\cn \quad\Rightarrow \quad C_E(\cm)\geq C_E(\cn)\,,
\end{align}
where $C_E(\cn)$ is the entanglement assisted classical capacity of the channel $\cn$. 

We remark here that the fully quantum less noisy order has previously been defined in~\cite{cross2017uniform} where it was called \textit{informationally degradable}. 
Furthermore, in line with the previous discussion, when considering the $\min$ entropy this partial order is the \textit{more coherent order} discussed e.g. in~\cite{buscemi2017comparison}.

The relations between the different partial orders are summarized in Figure~\ref{Fig:RelationsPO} and we defer their proofs to Section~\ref{Sec:PropertiesAndBounds}. 

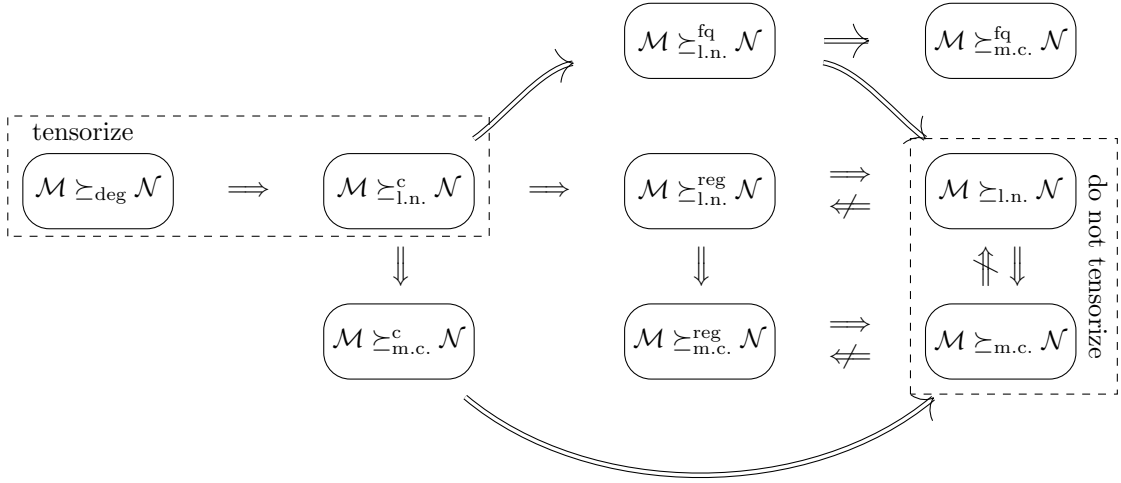
\begin{figure}[t]
\centering
\begin{tikzpicture}
\draw[rounded corners=10pt] (8,2) rectangle (10,3) node[pos=.5](fqln){$\cm\fqln\cn$};
\draw[rounded corners=10pt] (12,2) rectangle (14,3) node[pos=.5](fqmc){$\cm\succeq_{\operatorname{m.c.}}^{\operatorname{fq}}\cn$};
\draw[rounded corners=10pt] (0,0) rectangle (2,1) node[pos=.5](deg){$\cm\succeq_{\operatorname{deg}} \cn$};
\draw[rounded corners=10pt] (4,0) rectangle (6,1) node[pos=.5](cln){$\cm\succeq_{\operatorname{l.n.}}^{\operatorname{c}}\cn$};
\draw[rounded corners=10pt] (8,0) rectangle (10,1) node[pos=.5] {$\cm\succeq_{\operatorname{l.n.}}^{\operatorname{reg}}\cn$};
\draw[rounded corners=10pt] (12,0) rectangle (14,1) node[pos=.5](ln) {$\cm\succeq_{\operatorname{l.n.}}\cn$};
\draw[rounded corners=10pt] (4,-2) rectangle (6,-1) node[pos=.5](cmc) {$\cm\succeq_{\operatorname{m.c.}}^{\operatorname{c}}\cn$};
\draw[rounded corners=10pt] (8,-2) rectangle (10,-1) node[pos=.5] {$\cm\succeq_{\operatorname{m.c.}}^{\operatorname{reg}}\cn$};
\draw[rounded corners=10pt] (12,-2) rectangle (14,-1) node[pos=.5](mc) {$\cm\succeq_{\operatorname{m.c.}}\cn$};
\draw[double,->,double distance=1.3pt, shorten <=20pt,shorten >=21pt] (fqln) to [out=350,in=145] (ln);
\draw[double,->,double distance=1.3pt, shorten <=20pt,shorten >=21pt] (fqln) to [out=360,in=180] (fqmc);
\draw[double,->,double distance=1.3pt, shorten <=20pt,shorten >=21pt] (cln) to [out=35,in=190] (fqln);
\draw[double,->,double distance=1.3pt, shorten <=20pt,shorten >=21pt] (cmc) to [out=320,in=220] (mc);
\node at (3,0.5){$\Longrightarrow$};
\node at (7,0.5){$\Longrightarrow$};
\node at (11,0.7){$\Longrightarrow$};
\node at (11,0.3){$\centernot\Longleftarrow$};
\node at (11,-1.3){$\Longrightarrow$};
\node at (11,-1.7){$\centernot\Longleftarrow$};
\node[rotate=270] at (5,-0.5){$\Longrightarrow$};
\node[rotate=270] at (9,-0.5){$\Longrightarrow$};
\node[rotate=270] at (13.2,-0.5){$\Longrightarrow$};
\node[rotate=270] at (12.8,-0.5){$\centernot\Longleftarrow$};
\draw[dashed] (-0.2,-0.1) rectangle (6.2,1.5);
\node at (0.8,1.3){tensorize};
\draw[dashed] (11.8,-2.2) rectangle (14.55,1.2);
\node[rotate=270] at (14.3,-0.5){do not tensorize};
\end{tikzpicture}
\caption{\label{Fig:RelationsPO} Summary of relations of the partial orders defined in Section~\ref{chanpreorder}. Proofs can be found in Proposition~\ref{inclusions}.}
\end{figure}

As customary we also define the anti-orders, i.e. anti-degradable, anti-less noisy and so on, as above but with the roles of $\cn$ and $\cm$ interchanged. This will become relevant when we discuss capacities where we often consider $\cm$ to be the complementary channel of $\cn$.

Before ending this section we state some elementary facts about the channels dominated by a fixed channel $\cm$. To that end, given a channel $\cm$, define the \textit{less noisy domination region}
\begin{align*}
\cL_\cm:=\big\{  \cn  :\,\cm   \succeq_{\operatorname{l.n.}} \cn\big\}
\end{align*}
as well as the \textit{degradation region} of $\cm$
\begin{align*}
\cD_\cm:=\big\{  \cn  :\,\cm \succeq_{\operatorname{deg}}  \cn\big\}\,.
\end{align*}
\begin{proposition}\label{obviousprops}
	Given the channel $\cm$, its less noisy domination region $\cL_\cm$ and its degradation region $\cD_\cm$ are non-empty, closed, convex and for any channel $\Lambda$
	\begin{align*}
	\cn\in\cL_\cm\Rightarrow \Lambda\circ\cn\in\cL_{\cm}~~~\text{ and }~~~\cn\in\cD_\cm\Rightarrow \Lambda\circ \cn\in \cD_\cm\,.
	\end{align*}
\end{proposition}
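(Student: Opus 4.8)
The plan is to establish the five assertions --- non-emptiness, closedness, convexity, and the two ideal-like properties under post-composition --- for $\cL_\cm$ and $\cD_\cm$ separately, though several arguments run in parallel. First I would dispose of non-emptiness: for $\cD_\cm$, note that the completely depolarizing channel $\cn_0(\cdot) = \tr[\cdot]\,\tau$ onto any fixed state $\tau$ satisfies $\cn_0 = \Theta \circ \cm$ with $\Theta$ itself the completely depolarizing channel, so $\cn_0 \in \cD_\cm$; since $\cm \succeq_{\operatorname{deg}} \cn \Rightarrow \cm \succeq_{\operatorname{l.n.}} \cn$ (the top-row implications of Figure~\ref{Fig:RelationsPO}, equivalently Proposition~\ref{inclusions}), the same $\cn_0 \in \cL_\cm$, so both regions are non-empty. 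Alternatively one observes $\cm \in \cD_\cm \subseteq \cL_\cm$ directly, taking $\Theta = \id$.

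Next, the post-composition property. For the degradation region this is immediate: if $\cn = \Theta \circ \cm$ then $\Lambda \circ \cn = (\Lambda \circ \Theta) \circ \cm$, and $\Lambda \circ \Theta$ is again a channel (composition of CPTP maps), so $\Lambda \circ \cn \in \cD_\cm$. For the less noisy region I would use the mutual-information definition directly: for any c-q state $\rho_{UA}$, applying $\id_U \otimes (\Lambda \circ \cn)$ and then invoking data processing of the mutual information under the channel $\id_U \otimes \Lambda$ on the $B'$ system gives
\begin{align*}
I(U:B'')_{(\id_U \otimes (\Lambda\circ\cn))(\rho_{UA})} \le I(U:B')_{(\id_U \otimes \cn)(\rho_{UA})} \le I(U:B)_{(\id_U \otimes \cm)(\rho_{UA})},
\end{align*}
where the second inequality is the hypothesis $\cn \in \cL_\cm$; hence $\Lambda \circ \cn \in \cL_\cm$. (The same data-processing step, applied to the $n$-fold tensor powers, would also work if one wanted the analogous statement for $\cL_\cm^{\operatorname{reg}}$, but that is not asked here.)

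For convexity and closedness I would argue on both regions at once using the mutual-information / relative-entropy characterizations. Closedness: the set $\CPTP(\cH,\cK)$ is compact, and for $\cD_\cm$ one takes a convergent sequence $\cn_k = \Theta_k \circ \cm \to \cn$, passes to a convergent subsequence of the $\Theta_k$ (compactness again), and uses continuity of composition to get $\cn = \Theta \circ \cm$. For $\cL_\cm$ one fixes $\rho_{UA}$ and uses continuity of the von Neumann entropy (hence of the mutual information) in the channel, so the defining inequality is preserved under limits; since it holds for every such $\rho_{UA}$, the limit channel lies in $\cL_\cm$. Convexity of $\cD_\cm$: if $\cn_i = \Theta_i \circ \cm$ then $\lambda \cn_1 + (1-\lambda)\cn_2 = (\lambda \Theta_1 + (1-\lambda)\Theta_2) \circ \cm$ and a convex combination of channels is a channel. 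Convexity of $\cL_\cm$ is the one place requiring a genuine inequality fact: I would want that $\cn \mapsto I(U:B')_{(\id_U\otimes\cn)(\rho_{UA})}$ is convex in $\cn$ for fixed $\rho_{UA}$ --- equivalently, writing the mutual information as a Holevo-type quantity, one uses that $I(U:B')$ at a c-q input equals $\sum_u p(u) D(\cn(\rho_u)\|\cn(\bar\rho))$ minus nothing extra, and then convexity in $\cn$ follows from joint convexity of the relative entropy together with the linearity of $\cn \mapsto \cn(\rho_u)$; a convex combination $\cn_\lambda$ thus has $I(U:B')_{\cn_\lambda} \le \lambda I(U:B')_{\cn_1} + (1-\lambda) I(U:B')_{\cn_2} \le I(U:B)_\cm$, and since this holds for all $\rho_{UA}$ we conclude $\cn_\lambda \in \cL_\cm$.

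The main obstacle is precisely this last point --- verifying that the less noisy defining functional is convex (not merely continuous) in the dominated channel, so that the inequality survives convex combinations. Everything else is soft (compactness, continuity, associativity of composition). I would handle the convexity of $\cn \mapsto I(U:B')_{(\id_U\otimes\cn)(\rho_{UA})}$ via the representation of the c-q mutual information as an average of relative entropies to the channel output of the averaged input, and then invoke joint convexity of the quantum relative entropy; care is needed because the second argument of the relative entropy, $\cn(\bar\rho)$, also depends on $\cn$, but joint convexity is exactly what covers simultaneous variation of both arguments.
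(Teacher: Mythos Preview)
Your proposal is correct and follows essentially the same approach as the paper: non-emptiness via a trivial witness, closedness of $\cD_\cm$ via compactness of the degrading maps, closedness of $\cL_\cm$ via (semi)continuity, convexity via joint convexity of the relative entropy, and invariance under post-composition via DPI and associativity. The only cosmetic difference is that the paper first passes to the equivalent relative-entropy characterization of $\succeq_{\operatorname{l.n.}}$ (Proposition~\ref{propDmutualinfo}) and then argues closedness of $\cL_\cm$ from lower semicontinuity of $D(\cdot\|\cdot)$, whereas you stay on the mutual-information side and use continuity of the von Neumann entropy; both are valid in finite dimensions, with the paper's lower-semicontinuity route being slightly more robust if one ever leaves that setting.
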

\begin{proof}
	Non-emptiness is obvious since $\cm\in \cD_\cm$ and $\cm\in\cL_\cm$. The closure of the sets is proved as follows: first, as we show later in Proposition~\ref{propDmutualinfof} (see also Proposition~\ref{propDmutualinfo}), the condition 
\begin{align}
I(U:B)_{(\id_{U}\otimes \cm)(\rho_{UA})}\ge I(U:B')_{(\id_{U}\otimes \cn)(\rho_{UA})}
\end{align}
for all $\operatorname{c-q}$ state $\rho_{UA}$ with marginal $\sigma_{A}=\tr_U(\rho_{UA})$is equivalent to
\begin{align}\label{equ:rel_entropy_char}
D(\cm(\rho_{A})\|\cm(\sigma_{A}))\ge		D(\cn(\rho_{A})\|\cn(\sigma_{A}))
\end{align}
for any state $\rho_{A}\in\cD(\cH_{A})$ with $\supp(\rho_{A})\subseteq\supp(\sigma_{A})$. Thus, we will work with the equivalent characterization of the less noisy ordering given by Eq.~\eqref{equ:rel_entropy_char} to show closure.
Given two states $\rho,\sigma$ consider a sequence of channels $\cn_k\in\cL_\cn$ such that $\cn_k\to\cn$ (say with respect to the $2\to 2$ norm). Then we also have that $\cn_k(\rho)\to\cn(\rho)$ and $\cn_k(\sigma)\to\cn(\sigma)$ (say in Hilbert-Schmidt norm). Therefore
	\begin{align*}
	D(\cn(\rho)\|\cn(\sigma))\le \liminf_{k\to\infty}\,D(\cn_k(\rho)\|\cn_k(\sigma))\le D(\cm(\rho)\|\cm(\sigma))\,,
	\end{align*}
	where the first inequality arises from lower semicontinuity of Umegaki's relative entropy, and the second one holds because $\cn_k\in\cL_\cm$ for all $k$. Therefore the set $\cL_\cm(\rho,\sigma):=\{\cn:\,D(\cn(\rho)\|\cn(\sigma))\le D(\cm(\rho)\|\cm(\sigma)) \}$ is closed. Since $\cL_\cm=\bigcap_{\rho,\sigma}\mathcal{L}_\cm(\rho,\sigma)$, $\cL_\cm$ itself is closed as an intersection of closed sets. The closure of $\cD_{\cm}$ follows similarly: consider a sequence of channels $\cn_k$ such that $\cn_k\to\cn$. Since for each $k$, $\cm_k=\Lambda_k\circ\cm$ for some $\Lambda_k$
	belonging to the compact set of quantum channels, there exists a subsequence $\Lambda_{k_m}$ that converges by sequential compactness: $\Lambda_{k_m}\to\Lambda$. Hence $\cn\in\cD_{\cm}$ since $\cn_{k_m}=\Lambda_{k_m}\circ\cm\to \Lambda\circ\cm$, proving closedness of $\cD_{\cm}$. The convexity of the sets is obvious from the convexity of the relative entropy for $\cL_{\cm}$ and the linearity of the defining condition for degradability. Invariance under output channels is obvious by DPI for $\cL_\cm$ and using the direct definition for $\cD_\cm$.
\end{proof}

\subsection{Contraction coefficients and the strong data processing inequality}\label{lessnoisySDPI}

The central objects of this section are contraction coefficients and strong data processing constants. Given a fixed state $\sigma\in\cD(\cH_A)$ and a quantum channel $\cn:\cB(\cH_A)\to \cB(\cK_B)$, define the \textit{relative entropy} strong data processing inequality (SDPI) \textit{constants} \cite{ahlswede1976spreading,hiai2016contraction}: 
\begin{align}
\eta_{\operatorname{Re}}(\cn,\sigma):=\sup_{\substack{\rho\in\cD(\cH_A)\\0<D(\rho\|\sigma)<\infty}}\frac{D(\cn(\rho)\|\cn(\sigma))}{D(\rho\|\sigma)},
\end{align}
and the contraction coefficient is then $\eta_{\operatorname{Re}}(\cn):=\sup_{\sigma\in\cD(\cH)}\eta_{\operatorname{Re}}(\cn,\sigma)$.

It turns out that these coefficients and the notion of (regularized/completely) less noisiness $\succeq_{\operatorname{l.n.}}^{\operatorname{(reg/c)}}$ can be understood on a very similar footing, allowing us to treat them with the same tools by expressing the partial orders in terms of relative entropies and the contraction coefficients in terms of mutual information. The key argument here is the following proposition that is a quantum generalization of  \cite[Proposition 4]{watanabe2012private}:
\begin{proposition}\label{propDmutualinfo}
Given two channels $\cm\in\CPTP(\cH_A,\cK_B)$ and $\cn\in\CPTP(\cH_A,\cK_{B'})$, $n\in\NN$, $\sigma_{A}\in\cD(\cH_A)$ and $\eta\ge 0$, the following are equivalent:
	\begin{itemize}
		\item[(i)] For all $\operatorname{c-q}$ state $\rho_{UA}$ with marginal $\sigma_{A}=\tr_U(\rho_{UA})$, where $U$ is an arbitrary classical system, 
		\begin{align*}
		\,\eta \,I(U:B)_{(\id_{U}\otimes \cm)(\rho_{UA})}\ge I(U:B')_{(\id_{U}\otimes \cn)(\rho_{UA})}\,.
		\end{align*}
		\item[(ii)] For any state $\rho_{A}\in\cD(\cH_{A})$ with $\supp(\rho_{A})\subseteq\supp(\sigma_{A})$,
		\begin{align}\label{charactrelativeent}
		\eta\,D(\cm(\rho_{A})\|\cm(\sigma_{A}))\ge		D(\cn(\rho_{A})\|\cn(\sigma_{A}))\,.
		\end{align}
		\end{itemize}
\end{proposition}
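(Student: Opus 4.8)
The plan is to prove the equivalence by the standard trick of relating the mutual information with a classical register to a convex combination of relative entropies, exploiting the freedom in the conditional distribution of the c-q state.

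\textbf{Direction (ii) $\Rightarrow$ (i).} Suppose $\eta\, D(\cm(\rho_A)\|\cm(\sigma_A)) \ge D(\cn(\rho_A)\|\cn(\sigma_A))$ for all $\rho_A$ with $\supp(\rho_A)\subseteq\supp(\sigma_A)$. Let $\rho_{UA} = \sum_u p(u)\,|u\rangle\langle u|\otimes\rho_A^u$ be any c-q state with $\sum_u p(u)\rho_A^u = \sigma_A$. The key identity is that for any c-q state and any channel $\Phi$,
\begin{align*}
I(U:B)_{(\id_U\otimes\Phi)(\rho_{UA})} = \sum_u p(u)\, D\big(\Phi(\rho_A^u)\,\big\|\,\Phi(\sigma_A)\big),
\end{align*}
which follows from writing $I(U:B) = D(\rho_{UB}\|\rho_U\otimes\rho_B)$ and the fact that $\rho_B = \Phi(\sigma_A)$. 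Since $\supp(\rho_A^u)\subseteq\supp(\sigma_A)$ for every $u$ with $p(u)>0$ (as $\sigma_A$ is the average), applying hypothesis (ii) termwise and summing against $p(u)$ gives $\eta\, I(U:B)_\cm \ge I(U:B')_\cn$. This direction is essentially a one-line consequence once the identity is in place.

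\textbf{Direction (i) $\Rightarrow$ (ii).} Given $\rho_A$ with $\supp(\rho_A)\subseteq\supp(\sigma_A)$, we want to manufacture a c-q state that ``reads off'' the single relative entropy $D(\cm(\rho_A)\|\cm(\sigma_A))$. The natural move is to take a binary register $U\in\{0,1\}$ and set $\rho_{UA} = (1-\lambda)\,|0\rangle\langle 0|\otimes\rho_A + \lambda\,|1\rangle\langle 1|\otimes\tau_A$, where $\tau_A$ is chosen so that the marginal is exactly $\sigma_A$, i.e. $\tau_A = (\sigma_A - (1-\lambda)\rho_A)/\lambda$; this is a valid state for $\lambda$ close enough to $1$ because $\supp(\rho_A)\subseteq\supp(\sigma_A)$ ensures $\sigma_A - (1-\lambda)\rho_A \ge 0$ for $\lambda$ near $1$. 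Then by the identity above, $I(U:B)_\Phi = (1-\lambda)D(\Phi(\rho_A)\|\Phi(\sigma_A)) + \lambda D(\Phi(\tau_A)\|\Phi(\sigma_A))$ for $\Phi\in\{\cm,\cn\}$. Plugging into (i) and taking $\lambda\to 1$, the leading-order behavior in $(1-\lambda)$ isolates the desired inequality: one checks that $D(\Phi(\tau_A)\|\Phi(\sigma_A))\to 0$ as $\lambda\to 1$ (since $\tau_A\to\sigma_A$ and relative entropy is continuous there, using $\supp$ containment and finite dimensions), so dividing by $(1-\lambda)$ and passing to the limit yields $\eta\, D(\cm(\rho_A)\|\cm(\sigma_A))\ge D(\cn(\rho_A)\|\cn(\sigma_A))$. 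Alternatively, and perhaps more cleanly, one differentiates at $\lambda=1$; either way the continuity/vanishing of the $\tau_A$ term is what makes it work.

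\textbf{Main obstacle.} The only genuinely delicate point is the limiting argument in (i) $\Rightarrow$ (ii): one must ensure $\tau_A$ stays a valid density operator (handled by the support condition, which guarantees $\sigma_A - (1-\lambda)\rho_A\ge 0$ for $\lambda$ sufficiently close to $1$), and that the ``error term'' $\lambda D(\Phi(\tau_A)\|\Phi(\sigma_A))$ does not contribute in the limit. Since $\tau_A \to \sigma_A$ and everything lives in finite dimensions with $\supp(\Phi(\tau_A)) \subseteq \supp(\Phi(\sigma_A))$, the relative entropy term tends to $0$; quantifying that it is $o(1)$ (indeed $O((1-\lambda)^2/(1-\lambda)) = O(1-\lambda)$ after a second-order expansion of relative entropy) is routine but should be spelled out. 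I would also remark that, restricting to $\eta=1$, this proposition is exactly the equivalence between the mutual-information and relative-entropy forms of the less noisy order used in Proposition~\ref{obviousprops}, and that the $n$-fold statement follows by applying the $n=1$ case to $\cm^{\otimes n}$, $\cn^{\otimes n}$ and $\sigma_A^{\otimes n}$ (or, for the genuinely regularized statements, to arbitrary $\sigma_{A^n}$).
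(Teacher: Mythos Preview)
Your proof is correct and follows essentially the same strategy as the paper: (ii)$\Rightarrow$(i) via the identity $I(U:B)=\sum_u p(u)\,D(\Phi(\rho_A^u)\|\Phi(\sigma_A))$, and (i)$\Rightarrow$(ii) via a binary c-q state with a small weight on $\rho_A$ followed by a limiting/differentiation argument at the boundary. The paper actually defers to the more general $f$-divergence version (Proposition~\ref{propDmutualinfof}) and computes the derivative of the error term explicitly via a chain-rule lemma, but for the relative entropy this is precisely your observation that $D(\Phi(\tau_A)\|\Phi(\sigma_A))=O((1-\lambda)^2)$, so the two arguments coincide.
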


\begin{proof}
We postpone the proof to \Cref{fdivergence}, where a more general equivalence between $f$-divergences and the corresponding mutual information quantities is proven in Proposition~\ref{propDmutualinfof}.
\end{proof}

Note that the above Proposition also applies to our regularized quantities by choosing $\cm'=\cm^{\otimes n}$ and to the complete ones by choosing $\cm'=\cm\otimes\id$.
A direct consequence is that, $\cm\succeq_{\operatorname{l.n.}}\cn$ if and only if \Cref{charactrelativeent} holds for $\eta=1$, whereas $\cm\succeq_{\operatorname{l.n.}}^{\operatorname{reg}}\cn$ if and only if \Cref{charactrelativeent} holds for $\cm^{\otimes n}$ and all $n\in\NN$  and $\eta=1$. Finally, $\cm\succeq_{\operatorname{l.n.}}^{\operatorname{c}}\cn$  if and only if \Cref{charactrelativeent} holds for $\cm\otimes\id$ and $\eta=1$.

 The next result is another direct consequence of the above Proposition and provides us with an alternative way to express $\eta_{\operatorname{Re}}(\cn,\sigma)$ in terms of the mutual information.
\begin{lemma}
\begin{align}\label{mutualinfo-etarel}
\eta_{\operatorname{Re}}(\cn,\sigma)=\sup_U\sup_{\substack{\rho_{UA}\\0<I(U:A)<\infty \\ \tr_U(\rho_{UA})=\sigma}}\frac{I(U:B)_{(\id_U\otimes \cn)(\rho_{UA})}}{I(U:A)_{\rho_{UA}}}.
\end{align}
\end{lemma}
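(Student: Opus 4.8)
The plan is to derive the mutual-information formula for $\eta_{\operatorname{Re}}(\cn,\sigma)$ directly from Proposition~\ref{propDmutualinfo} applied with $\cm=\id$. Observe that for the identity channel one has $\cm(\rho_A)=\rho_A$ and $\cm(\sigma_A)=\sigma_A$, and on the c-q side $I(U:B)_{(\id_U\otimes\cm)(\rho_{UA})}=I(U:A)_{\rho_{UA}}$. Thus, for a fixed $\eta\ge 0$, condition (ii) of the Proposition with $\cm=\id$ reads: for every $\rho_A$ with $\supp(\rho_A)\subseteq\supp(\sigma_A)$,
\begin{align*}
\eta\, D(\rho_A\|\sigma_A)\ge D(\cn(\rho_A)\|\cn(\sigma_A))\,,
\end{align*}
which is exactly the statement that $\eta_{\operatorname{Re}}(\cn,\sigma)\le \eta$ (using that $D(\rho_A\|\sigma_A)<\infty$ iff $\supp(\rho_A)\subseteq\supp(\sigma_A)$, and that $D(\rho_A\|\sigma_A)=0$ only for $\rho_A=\sigma_A$, which contributes trivially). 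Meanwhile condition (i) with $\cm=\id$ reads: for every classical $U$ and every c-q state $\rho_{UA}$ with $\tr_U\rho_{UA}=\sigma$,
\begin{align*}
\eta\, I(U:A)_{\rho_{UA}}\ge I(U:B)_{(\id_U\otimes\cn)(\rho_{UA})}\,,
\end{align*}
which says precisely that the supremum on the right-hand side of \eqref{mutualinfo-etarel} is at most $\eta$. Since (i) and (ii) are equivalent for every $\eta\ge 0$, the two suprema coincide.

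More carefully, I would phrase it as follows. Let $R(\cn,\sigma)$ denote the right-hand side of \eqref{mutualinfo-etarel}. First I would argue $\eta_{\operatorname{Re}}(\cn,\sigma)=\inf\{\eta\ge 0 : \text{(ii) holds for }\cm=\id\}$: indeed $\eta$ is an upper bound for the ratio $D(\cn(\rho)\|\cn(\sigma))/D(\rho\|\sigma)$ over all admissible $\rho$ exactly when $\eta D(\rho\|\sigma)\ge D(\cn(\rho)\|\cn(\sigma))$ for all such $\rho$, and the supremum of the ratios equals the infimum of such upper bounds. Symmetrically $R(\cn,\sigma)=\inf\{\eta\ge 0:\text{(i) holds for }\cm=\id\}$ by the same elementary fact about suprema of ratios versus upper bounds (here one uses that $I(U:A)_{\rho_{UA}}=0$ forces the c-q state to be product, hence $I(U:B)=0$ too, so such terms do not affect the supremum). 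By Proposition~\ref{propDmutualinfo} the two infima are over the same set of $\eta$, hence equal.

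One small point to handle cleanly is the range-of-$\eta$ and the edge cases: if $\eta_{\operatorname{Re}}(\cn,\sigma)>1$ is impossible by data processing, so $\eta\in[0,1]$ suffices, and both suprema are well-defined (possibly $0$, when $\cn(\sigma)$ is such that $\cn$ is "useless" relative to $\sigma$). I should also make sure the supremum defining $R(\cn,\sigma)$ is genuinely attained in the limit by the same states that nearly attain $\eta_{\operatorname{Re}}(\cn,\sigma)$: given a near-optimal $\rho$ for $\eta_{\operatorname{Re}}$, the binary ensemble $\rho_{UA}=(1-p)\proj{0}\otimes\sigma+p\proj{1}\otimes\rho$ has $\tr_U\rho_{UA}\ne\sigma$ in general, so one instead uses the construction from the proof of Proposition~\ref{propDmutualinfo} (a c-q state with the correct marginal $\sigma$ whose mutual-information ratio approaches $D(\cn(\rho)\|\cn(\sigma))/D(\rho\|\sigma)$); but since we are only invoking Proposition~\ref{propDmutualinfo} as a black box, this is already taken care of and nothing further is needed.

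The main obstacle is essentially bookkeeping rather than mathematics: one must take the constraint $\tr_U(\rho_{UA})=\sigma$ seriously so that the specialization $\cm=\id$ of Proposition~\ref{propDmutualinfo} matches the definition of $\eta_{\operatorname{Re}}(\cn,\sigma)$ on the nose — in particular, verifying that restricting the supremum in \eqref{mutualinfo-etarel} to states with marginal $\sigma$ (rather than arbitrary marginal) is exactly what corresponds to fixing the reference state $\sigma$ in the relative-entropy contraction coefficient. Everything else reduces to the elementary observation that $\sup_i a_i/b_i \le \eta \iff a_i \le \eta b_i\ \forall i$, applied twice and combined with the equivalence (i)$\Leftrightarrow$(ii).
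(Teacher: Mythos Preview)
Your proposal is correct and takes essentially the same approach as the paper: the paper's proof is the single sentence ``The proof follows directly from Proposition~\ref{propDmutualinfo},'' and you have spelled out exactly that deduction by specializing to $\cm=\id$ and translating the equivalence (i)$\Leftrightarrow$(ii) into the equality of the two suprema via the elementary $\sup_i a_i/b_i\le\eta\iff a_i\le\eta b_i$ observation. Your handling of the edge cases ($D(\rho\|\sigma)=0$ and $I(U:A)=0$) is also fine and more explicit than the paper.
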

\begin{proof}
The proof follows directly from Proposition~\ref{propDmutualinfo}. 
\end{proof}

The above result suggests the introduction of the following so-called \textit{less noisy domination factor} \cite{makur2018comparison}: given two channels $\cm\in\CPTP(\cH_A,\cK_B)$ and $\cn\in\CPTP(\cH_A,\cK_{B'})$, and a state $\sigma\in\cD(\cH_A)$:
\begin{align*}
\eta_{\operatorname{Re}}(\cm,\cn,\sigma):=\sup_{\substack{\rho\in\cD(\cH_A)\\0<D(\cm(\rho)\|\cm(\sigma))<\infty}}\frac{D(\cn(\rho)\|\cn(\sigma))}{D(\cm(\rho)\|\cm(\sigma))}\equiv \sup_U\sup_{\substack{\rho_{UA}\\0<I(U:B)<\infty \\ \tr_U(\rho_{UA})=\sigma}}\,\frac{I(U:B')}{I(U:B)}\,.
\end{align*}
and $\eta_{\operatorname{Re}}(\cm,\cn):=\sup_\sigma\eta(\cm,\cn,\sigma)$. In the case when $\cm=\id$, we retrieve the usual SDPI constants of $\cn$. In particular, from \Cref{propDmutualinfo} we directly have $$\eta_{\operatorname{Re}}(\cm,\cn)\le 1~~~\Leftrightarrow~~~ \cm\succeq_{\operatorname{l.n.}}\cn\,.$$ 
One can of course also define complete and regularized versions of these coefficients. However, it can be easily seen that the \textit{complete} contraction coefficient $\eta^c_{\operatorname{Re}}(\cn)$ is equal to $1$ for any channel $\cN$, which limits its usefulness severely. 

Finally, we want to end this section by pointing out some relations between the less noisy order in relation to erasure channels and contraction coefficients which will serve as an example of the close relation of the two concepts and reappear throughout this work. It is immediately clear from the above discussion that we have
\begin{align}
\cm \sln \cn &\Longrightarrow \eta_{\operatorname{Re}}(\cm,\sigma) \leq \eta_{\operatorname{Re}}(\cn,\sigma)~~~\forall\sigma \\
&\Longrightarrow\; \eta_{\operatorname{Re}}(\cm) \leq \eta_{\operatorname{Re}}(\cn)\,.
\end{align}
Now consider $\cm$ to be the quantum erasure channel $\cm^{\operatorname{er}}_{\eta_e}$ of erasure parameter $\eta_e$: $\cH_{B}=\cH_A\oplus \CC$ and for any $X\in\cB(\cH_A)$,
\begin{align}
\cm^{\operatorname{er}}_{\eta_e}(X):=(1-\eta_e)X+\eta_e|e\rangle\langle e|\,,
\end{align}
where $|e\rangle\perp\cH_A$. 

The following is a direct consequence of \Cref{propDmutualinfo}:

\begin{proposition}\label{Prop:erasureContraction}
	Let $\cn\in\CPTP(\cH_A,\cK_B)$ and $\eta_e\in[0,1]$. Then the following are equivalent:
	\begin{itemize}
	\item[(i)] $\cm^{\operatorname{er}}_{\eta_e}\succeq_{\operatorname{l.n.}}\cn$.
		\item[(ii)] $\eta_{\operatorname{Re}}(\cn)\le \eta_{\operatorname{Re}}(\cm^{\operatorname{re}}_{\eta_e})=(1-\eta_e)$.
	\end{itemize}
\end{proposition}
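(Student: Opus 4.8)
The plan is to derive \Cref{Prop:erasureContraction} from \Cref{propDmutualinfo} together with the mutual-information reformulation \Cref{mutualinfo-etarel} of $\eta_{\operatorname{Re}}$. The first thing I would do is pin down the basic fact that the quantum erasure channel $\cm^{\operatorname{er}}_{\eta_e}$ has $\eta_{\operatorname{Re}}(\cm^{\operatorname{er}}_{\eta_e},\sigma)=1-\eta_e$ for every full-rank $\sigma$, and hence $\eta_{\operatorname{Re}}(\cm^{\operatorname{er}}_{\eta_e})=1-\eta_e$. This is because, writing $\tau = \cm^{\operatorname{er}}_{\eta_e}(\rho)$ and $\omega = \cm^{\operatorname{er}}_{\eta_e}(\sigma)$, the two output states live on $\cH_A\oplus\CC$ with the same weight $\eta_e$ on the orthogonal flag $|e\rangle\langle e|$ and weight $1-\eta_e$ on the $\rho$- resp. $\sigma$-block; by the direct-sum / block-additivity property of Umegaki relative entropy one gets $D(\cm^{\operatorname{er}}_{\eta_e}(\rho)\|\cm^{\operatorname{er}}_{\eta_e}(\sigma)) = (1-\eta_e)\,D(\rho\|\sigma)$ exactly, so the ratio in the definition of $\eta_{\operatorname{Re}}(\cm^{\operatorname{er}}_{\eta_e},\sigma)$ is identically $1-\eta_e$. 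Equivalently one can see this at the level of mutual information: for a c-q state $\rho_{UA}$ with $\tr_U\rho_{UA}=\sigma$, one has $I(U:B)_{(\id_U\otimes\cm^{\operatorname{er}}_{\eta_e})(\rho_{UA})} = (1-\eta_e)\,I(U:A)_{\rho_{UA}}$, again by the flag structure. This is the step I would expect to take a little care but no real difficulty.

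Next I would apply \Cref{propDmutualinfo} with $\cm = \cm^{\operatorname{er}}_{\eta_e}$ and with the parameter $\eta=1$. By that proposition, $\cm^{\operatorname{er}}_{\eta_e}\succeq_{\operatorname{l.n.}}\cn$ holds if and only if $D(\cm^{\operatorname{er}}_{\eta_e}(\rho)\|\cm^{\operatorname{er}}_{\eta_e}(\sigma))\ge D(\cn(\rho)\|\cn(\sigma))$ for all $\sigma$ and all $\rho$ with $\supp(\rho)\subseteq\supp(\sigma)$. Substituting the identity $D(\cm^{\operatorname{er}}_{\eta_e}(\rho)\|\cm^{\operatorname{er}}_{\eta_e}(\sigma)) = (1-\eta_e)D(\rho\|\sigma)$, this becomes: for all such $\rho,\sigma$,
\begin{align*}
D(\cn(\rho)\|\cn(\sigma))\le (1-\eta_e)\,D(\rho\|\sigma)\,.
\end{align*}
Dividing by $D(\rho\|\sigma)$ whenever $0<D(\rho\|\sigma)<\infty$ and taking the supremum over $\rho$ and then over $\sigma$, the left-hand side becomes exactly $\eta_{\operatorname{Re}}(\cn)$ by the definition of the contraction coefficient. (The cases $D(\rho\|\sigma)\in\{0,\infty\}$ are harmless: if $D(\rho\|\sigma)=0$ then $\rho=\sigma$ and both sides vanish, and if $D(\rho\|\sigma)=\infty$ the inequality is automatic since the right side is $+\infty$; these are precisely the constraints excluded in the $\sup$ defining $\eta_{\operatorname{Re}}$.) Hence condition (i) is equivalent to $\eta_{\operatorname{Re}}(\cn)\le 1-\eta_e$, which is condition (ii) once we insert $\eta_{\operatorname{Re}}(\cm^{\operatorname{er}}_{\eta_e})=1-\eta_e$ from the first step. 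Note (ii) as printed writes $\cm^{\operatorname{re}}_{\eta_e}$, which I read as a typo for $\cm^{\operatorname{er}}_{\eta_e}$.

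The one genuinely substantive point — and the place I would be most careful — is the exact evaluation $D(\cm^{\operatorname{er}}_{\eta_e}(\rho)\|\cm^{\operatorname{er}}_{\eta_e}(\sigma)) = (1-\eta_e)D(\rho\|\sigma)$ and, relatedly, the claim that the supremum in $\eta_{\operatorname{Re}}(\cm^{\operatorname{er}}_{\eta_e})$ is actually attained as an equality uniformly (so that $\eta_{\operatorname{Re}}(\cm^{\operatorname{er}}_{\eta_e},\sigma)$ does not depend on $\sigma$ and equals $1-\eta_e$ rather than merely being bounded by it). This follows from the block-diagonal structure: for states $P = (1-\eta_e)\rho\oplus \eta_e$ and $Q=(1-\eta_e)\sigma\oplus\eta_e$ on $\cH_A\oplus\CC$, one has $\log P = (\log((1-\eta_e)\rho))\oplus \log\eta_e$ and similarly for $Q$, and then $\tr[P(\log P - \log Q)] = (1-\eta_e)\tr[\rho(\log\rho-\log\sigma)] = (1-\eta_e)D(\rho\|\sigma)$, the $\log(1-\eta_e)$ and $\log\eta_e$ contributions cancelling. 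Once this exact formula is in hand the rest of the argument is a purely formal manipulation of the definitions, so I would present the block computation carefully and then invoke \Cref{propDmutualinfo} and the definition of $\eta_{\operatorname{Re}}(\cn)$ to close both implications at once.
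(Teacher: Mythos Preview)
Your proposal is correct and follows essentially the same approach as the paper: establish the exact identity $I(U:B)_{(\id_U\otimes\cm^{\operatorname{er}}_{\eta_e})(\rho_{UA})}=(1-\eta_e)\,I(U:A)_{\rho_{UA}}$ (equivalently $D(\cm^{\operatorname{er}}_{\eta_e}(\rho)\|\cm^{\operatorname{er}}_{\eta_e}(\sigma))=(1-\eta_e)D(\rho\|\sigma)$) and then invoke \Cref{propDmutualinfo}. The paper simply cites the mutual-information identity from the literature and applies \Cref{propDmutualinfo} in one line, whereas you additionally spell out the block-diagonal relative-entropy computation; this is just extra detail on the same argument.
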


\begin{proof}
	First, a simple calculation gives the following (see \cite[Proposition 21.6.1]{wilde2013quantum}):
\begin{align*}
I(U:B)_{(\id_U\otimes \cm^{\operatorname{er}}_{\eta_e})(\rho_{UA})}=(1-\eta_e)\,I(U:A)_{\rho_{UA}}\,.
\end{align*}
The result then follows directly from \Cref{propDmutualinfo}.
\end{proof}

Thus, we conclude that a bound on the relative entropy contraction coefficient is equivalent to a less noisy relation for the corresponding channel when compared to the erasure channel. In the next section we will discuss further properties of the concepts introduced in this section.

\section{Properties and Bounds}\label{Sec:PropertiesAndBounds}

In this section we discuss properties of partial orders and contraction coefficients. First we will discuss tensorization properties of different partial orders and their relation with each other, then we discuss bounds on contraction coefficients and finally we illustrate our results with several simple examples. 

\subsection{Tensorization and relationships  of partial orders}
In the classical setting, it is a well-known fact that the notion of less noisiness \textit{tensorizes} (see e.g.~\cite[Proposition 16]{polyanskiy2017strong},\cite[Proposition 5]{sutter2014universal}): Given channels $\cm_1\succeq_{\operatorname{l.n.}}\cn_1$ and $\cm_2\succeq_{\operatorname{l.n.}}\cn_2$, 
\begin{align*}
\cm_1\otimes\cm_2\succeq_{\operatorname{l.n.}}\cn_1\otimes\cn_2\,.
\end{align*}
In particular, we have as expected that the notions of less noisy and regularized less noisy are equivalent. This is false in general in the quantum case, as we will also discuss in more details further down. Often tensorization properties can be recovered when allowing for an additional reference system. The same is true in the case of the less noisy ordering and we show in the following that the completely less noisy ordering tensorizes. 
\begin{lemma}[The complete less noisy preorder tensorizes]\label{lemmatensorcomp}
	 For any four channels with $\cm_1\succeq_{\operatorname{l.n.}}^{\operatorname{c}}\cn_1$ and $\cm_2\succeq_{\operatorname{l.n.}}^{\operatorname{c}}\cn_2$, it follows that $$\cm_1\otimes\cm_2\succeq_{\operatorname{l.n.}}^{\operatorname{c}}\cn_1\otimes\cn_2\,.$$
\end{lemma}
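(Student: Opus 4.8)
The plan is to reduce the tensorization statement for $\succeq_{\operatorname{l.n.}}^{\operatorname{c}}$ to the relative-entropy characterization from Proposition~\ref{propDmutualinfo}, and then to exploit a telescoping (hybrid) argument together with the fact that the defining condition of $\succeq_{\operatorname{l.n.}}^{\operatorname{c}}$ is itself stated with an arbitrary reference system and hence is ``stable under appending an identity channel on an auxiliary system.'' Concretely, recall from the remark after Proposition~\ref{propDmutualinfo} that $\cm\succeq_{\operatorname{l.n.}}^{\operatorname{c}}\cn$ is equivalent to
\begin{align}
D\big((\cm\otimes\id_R)(\rho_{AR})\,\big\|\,(\cm\otimes\id_R)(\sigma_{AR})\big)\ge D\big((\cn\otimes\id_R)(\rho_{AR})\,\big\|\,(\cn\otimes\id_R)(\sigma_{AR})\big)
\end{align}
for every reference system $R$ and every pair of states $\rho_{AR},\sigma_{AR}$ with $\supp(\rho_{AR})\subseteq\supp(\sigma_{AR})$ — i.e.\ $\cm\otimes\id_R\succeq_{\operatorname{l.n.}}\cn\otimes\id_R$ for all $R$, where here the ``$\succeq_{\operatorname{l.n.}}$'' is understood in the relative-entropy-contraction sense of Eq.~\eqref{equ:rel_entropy_char}. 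So I want to show $\cm_1\otimes\cm_2\otimes\id_R$ contracts relative entropy at least as much as $\cn_1\otimes\cn_2\otimes\id_R$, for arbitrary $R$.

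First I would fix an arbitrary reference $R$ and arbitrary input states $\rho,\sigma$ on $A_1A_2R$ with the support condition. The key step is to insert an intermediate channel and telescope:
\begin{align}
D\big((\cn_1\otimes\cn_2\otimes\id_R)(\rho)\,\big\|\,(\cn_1\otimes\cn_2\otimes\id_R)(\sigma)\big)
&\le D\big((\cm_1\otimes\cn_2\otimes\id_R)(\rho)\,\big\|\,(\cm_1\otimes\cn_2\otimes\id_R)(\sigma)\big)\\
&\le D\big((\cm_1\otimes\cm_2\otimes\id_R)(\rho)\,\big\|\,(\cm_1\otimes\cm_2\otimes\id_R)(\sigma)\big).
\end{align}
For the first inequality, I view $\cn_2\otimes\id_R$ as acting on systems $A_2R$ while $\cn_1$ (resp.\ $\cm_1$) acts on $A_1$; thus the comparison is $\cn_1\otimes\id_{A_2R}$ versus $\cm_1\otimes\id_{A_2R}$ applied to the states $(\id_{A_1}\otimes\cn_2\otimes\id_R)(\rho)$ and $(\id_{A_1}\otimes\cn_2\otimes\id_R)(\sigma)$. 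Since $\cm_1\succeq_{\operatorname{l.n.}}^{\operatorname{c}}\cn_1$ gives $\cm_1\otimes\id_{A_2R}\succeq_{\operatorname{l.n.}}\cn_1\otimes\id_{A_2R}$ with $A_2R$ playing the role of the reference system (note the reference system in the definition of $\succeq_{\operatorname{l.n.}}^{\operatorname{c}}$ is allowed to be arbitrary, in particular a composite), the inequality follows — but one must first check that the support condition is preserved under applying $\cn_2\otimes\id_R$, which holds since channels map support into support. For the second inequality, symmetrically, $\cm_2\succeq_{\operatorname{l.n.}}^{\operatorname{c}}\cn_2$ gives $\cm_2\otimes\id_{A_1R}\succeq_{\operatorname{l.n.}}\cn_2\otimes\id_{A_1R}$, now with $A_1R$ as reference; applied to $(\cm_1\otimes\id_{A_2R})(\rho)$ and $(\cm_1\otimes\id_{A_2R})(\sigma)$ (again with support preserved) this yields the bound. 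Chaining the two inequalities and invoking Proposition~\ref{propDmutualinfo} once more to translate back to the mutual-information form of $\succeq_{\operatorname{l.n.}}^{\operatorname{c}}$ completes the argument.

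The main obstacle — really the only subtlety — is bookkeeping about \emph{which} system is being treated as the reference in each application of the complete-less-noisy hypothesis, and making sure that the arbitrariness of the reference in the definition of $\succeq_{\operatorname{l.n.}}^{\operatorname{c}}$ genuinely covers a composite reference $A_2R$ (resp.\ $A_1R$). This is where the definition's lack of a dimension bound on $R$ is essential: because $R$ ranges over \emph{all} reference systems, we are free to enlarge it to absorb the ``other'' channel's input system, which is exactly what a single-step hybrid argument needs and what makes $\succeq_{\operatorname{l.n.}}^{\operatorname{c}}$ tensorize whereas plain $\succeq_{\operatorname{l.n.}}$ does not. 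A secondary, entirely routine point is the support/finiteness condition $\supp(\rho)\subseteq\supp(\sigma)$ needed to invoke the relative-entropy characterization: it is preserved at each step because quantum channels are support-nondecreasing, so no extra care beyond a one-line remark is required.
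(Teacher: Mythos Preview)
Your proposal is correct and is essentially identical to the paper's own proof: both use the relative-entropy characterization of $\succeq_{\operatorname{l.n.}}^{\operatorname{c}}$ from Proposition~\ref{propDmutualinfo} and a two-step hybrid (telescoping) argument, absorbing the ``other'' channel's output together with $R$ into the reference system at each step. The only cosmetic point is a labeling slip: after applying $\cn_2$ (resp.\ $\cm_1$) the reference system should be written as $B_2'R$ (resp.\ $B_1R$), not $A_2R$ (resp.\ $A_1R$); the paper writes it as $RB_2$, $RB_1$.
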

\begin{proof}
	Assume that $\cm_1\succeq_{\operatorname{l.n.}}^{\operatorname{c}}\cn_1$ and $\cm_2\succeq_{\operatorname{l.n.}}^{\operatorname{c}}\cn_2$. Therefore, for any two tripartite states $\rho_{RA_1A_2},\sigma_{RA_1A_2}$, we have
	\begin{align*}
	D((\id_{R}\otimes\cm_1\otimes \cm_2)(\rho_{RA_1A_2})&\|(\id_{R}\otimes\cm_1\otimes \cm_2)(\sigma_{RA_1A_2}))\\
	&\ge D((\id_{R}\otimes\cn_1\otimes \cm_2)(\rho_{RA_1A_2})\|(\id_{R}\otimes\cn_1\otimes \cm_2)(\sigma_{RA_1A_2}))\\
	&\ge D((\id_{R}\otimes\cn_1\otimes \cn_2)(\rho_{RA_1A_2})\|(\id_{R}\otimes\cn_1\otimes \cn_2)(\sigma_{RA_1A_2}))
	\end{align*}
	where we successively used that $\cm_1\succeq_{\operatorname{l.n.}}^{\operatorname{c}}\cn_1$ for the reference system $RB_2$  and $\cm_2\succeq_{\operatorname{l.n.}}^{\operatorname{c}}\cn_2$ for the reference system $RB_1$.
\end{proof}

The previous result shows in particular that the notion of complete less noisiness is more stringent than the one of (regularized) less noisiness, but potentially less than that of degradation. We will now further investigate the implications between the different partial orders. The relations are also summarized in Figure~\ref{Fig:RelationsPO}. 
\begin{proposition}\label{inclusions}
	Given two channels $\cn\in\CPTP(\cH_A,\cH_B)$  and $\cm\in\CPTP(\cH_A,\cH_{B'})$
	
\begin{align}
	&\cn\succeq_{\operatorname{deg}}\cm~~\Rightarrow~~\cn\cln\cm~~\Rightarrow~~\cn\rln\cm~~\Rightarrow~~\cn\sln\cm\,, \label{eqn:relations1}\\
	\label{eqn:relations2}
&\cn\succeq_{\operatorname{deg}}\cm~~\Rightarrow~~\cn\cln\cm~~\Rightarrow~~\cn\fqln\cm~~\Rightarrow~~\cn\sln\cm\,.
\end{align}
\end{proposition}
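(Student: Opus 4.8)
The strategy is to verify each implication in the two chains by reducing everything to the relative-entropy characterization of the less noisy orderings provided by Proposition~\ref{propDmutualinfo} (with $\eta = 1$), together with the observation following that proposition that $\cm\succeq_{\operatorname{l.n.}}^{\operatorname{c}}\cn$, $\cm\succeq_{\operatorname{l.n.}}^{\operatorname{reg}}\cn$ and $\cm\succeq_{\operatorname{l.n.}}\cn$ correspond respectively to \eqref{charactrelativeent} holding for $\cm\otimes\id$, for $\cm^{\otimes n}$ for all $n$, and for $\cm$ itself. I will handle the shared first implication and then branch.

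\textbf{Step 1: $\cn\succeq_{\operatorname{deg}}\cm\Rightarrow\cn\cln\cm$.} By definition of degradability there is a channel $\Theta$ with $\cm = \Theta\circ\cn$. Fix any reference system $R$ and any two states $\rho_{RA},\sigma_{RA}$ with $\supp(\rho_{RA})\subseteq\supp(\sigma_{RA})$. Applying the data processing inequality for Umegaki's relative entropy to the channel $\id_R\otimes\Theta$ acting on $(\id_R\otimes\cn)(\cdot)$ gives
\begin{align*}
D\big((\id_R\otimes\cn)(\rho_{RA})\,\|\,(\id_R\otimes\cn)(\sigma_{RA})\big)\ge D\big((\id_R\otimes\cm)(\rho_{RA})\,\|\,(\id_R\otimes\cm)(\sigma_{RA})\big),
\end{align*}
which is exactly \eqref{charactrelativeent} for $\cn\otimes\id$ with $\eta=1$, i.e. $\cn\cln\cm$. (This also immediately gives $\cn\succeq_{\operatorname{deg}}\cm\Rightarrow\cn\sln\cm$ by specializing $R$ trivially, but we get it for free from Step 2 as well.)

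\textbf{Step 2: the ``$\cln$'' implications.} For $\cn\cln\cm\Rightarrow\cn\rln\cm$: taking the reference system in the complete ordering to be $R = U R'$ where $R'$ carries the extra $n-1$ tensor copies, and noting $\cn^{\otimes n} = \cn\otimes\cn^{\otimes(n-1)}$, one sees that the complete less noisy condition for $\cn$ with reference system $U\,B'^{\,n-1}$ (where the remaining $n-1$ output copies are grouped into the reference) applied iteratively — exactly as in the proof of Lemma~\ref{lemmatensorcomp} with $\cm_2 = \cn^{\otimes(n-1)}$ played against $\cn_2=\cm^{\otimes(n-1)}$ — yields the regularized condition; more directly, $\cn\cln\cm$ trivially implies \eqref{charactrelativeent} holds for $\cn\otimes\id_{A^{n-1}}$ which contains $\cn^{\otimes n}$-type inputs as a special case only after we also downgrade outputs, so the cleanest route is: $\cn\cln\cm\Rightarrow\cn\otimes\cn^{\otimes(n-1)}\cln\cm\otimes\cn^{\otimes(n-1)}$ and then apply the complete less noisy property of $\cn^{\otimes(n-1)}$ against $\cm^{\otimes(n-1)}$ (which holds by induction using Lemma~\ref{lemmatensorcomp} since $\cln$ tensorizes), giving $\cn^{\otimes n}\cln\cm^{\otimes n}$ and in particular $\cn^{\otimes n}\sln\cm^{\otimes n}$, i.e. $\cn\rln\cm$. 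Then $\cn\rln\cm\Rightarrow\cn\sln\cm$ is the $n=1$ case, and $\cn\cln\cm\Rightarrow\cn\fqln\cm$ follows because a fully quantum input $\rho_{AA'}$ is a legitimate choice of $\rho_{RA}$ (with $R=A'$) in the relative-entropy form of the complete less noisy condition, so \eqref{charactrelativeent} for $\cn\otimes\id$ applied with $\sigma_{RA}$ any fixed reference state (and then invoking Proposition~\ref{propDmutualinfo} in the fully quantum direction, or noting $I(A:B) = D(\rho_{AB}\|\rho_A\otimes\rho_B)$ and using joint convexity) gives the mutual information inequality defining $\fqln$.

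\textbf{Step 3: $\cn\fqln\cm\Rightarrow\cn\sln\cm$.} This is immediate from the definitions: any c-q state $\rho_{UA}$ is in particular a quantum state $\rho_{AA'}$ with $A' = U$, and $I(U:B)$ is then just the fully quantum mutual information $I(A':B)$, so the defining inequality of $\sln$ is a special case of that of $\fqln$.

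\textbf{Main obstacle.} The only step requiring genuine care is the passage $\cn\cln\cm\Rightarrow\cn\rln\cm$: one must correctly set up the induction so that at each stage the ``extra'' tensor factors are absorbed into the quantum reference system, and this is precisely where the \emph{complete} (rather than merely the regularized) less noisy hypothesis is needed — the bare less noisy order does not tensorize in the quantum setting, so a naive product argument fails. I expect to lean on Lemma~\ref{lemmatensorcomp} to make this rigorous, exactly as its proof splits a two-fold tensor product one factor at a time using the reference-system freedom. All other implications are one-line consequences of data processing, lower semicontinuity, or direct specialization of the defining quantifiers.
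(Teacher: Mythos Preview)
Your proof is correct and follows essentially the same route as the paper: data processing for $\deg\Rightarrow\cln$, tensorization of $\cln$ (Lemma~\ref{lemmatensorcomp}) for $\cln\Rightarrow\rln$, the $n=1$ specialization for $\rln\Rightarrow\sln$, the identity $I(A:B)=D(\rho_{AB}\|\rho_A\otimes\rho_B)$ for $\cln\Rightarrow\fqln$, and the fact that c-q states are a special case for $\fqln\Rightarrow\sln$. A couple of minor wording slips to clean up: in the $\cln\Rightarrow\fqln$ step the second state $\sigma_{RA}$ must be specifically the product of marginals $\rho_R\otimes\rho_A$ (not ``any fixed reference state''), and Proposition~\ref{propDmutualinfo} as stated is for c-q inputs so cannot be invoked ``in the fully quantum direction'' --- your alternative via $I(A:B)=D(\rho_{AB}\|\rho_A\otimes\rho_B)$ is the right justification, and joint convexity plays no role.
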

\begin{proof}
We first prove Equation~\eqref{eqn:relations1}: The last implication is obvious and was already discussed  in \Cref{chanpreorder}. The first implication follows by the data processing inequality: Assume that $\cn\succeq_{\operatorname{deg}}\cm$. Therefore, by definition there exists a channel $\Theta$ such that $\cm=\Theta\circ\cn$, which also implies that for any reference system $R$, $\id_R\otimes \cm=\id_R\otimes \Theta\circ \cn$. Then for any two states $\rho_{AR},\sigma_{AR}$
	\begin{align*}
	D((\id_R\otimes\cm)(\rho_{AR})\|(\id_R\otimes \cm)(\sigma_{AR}))&=
	D((\id_R\otimes\Theta\circ\cn)(\rho_{AR})\|(\id_R\otimes \Theta\circ\cn)(\sigma_{AR}))\\
	&\le D((\id_R\otimes\cn)(\rho_{AR})\|(\id_R\otimes \cn)(\sigma_{AR}))\,.
	\end{align*}
	where the inequality follows by DPI for the channel $\id\otimes \Theta$. Therefore $\cn\cln\cm$. Finally, that then $\cn\rln\cm$ is a simple consequence of  \Cref{lemmatensorcomp} and the characterization of $\rln$ in \Cref{propDmutualinfo}. 
	
For Equation~\eqref{eqn:relations2} the first implication was already shown and the last is obvious because classical-quantum states are a special case of general quantum states. It remains to show the middle one. Using Proposition~\ref{propDmutualinfo} we can express the condition for $\cn\cln\cm$ as one of relative entropies involving two arbitrary states with reference system. Now note that, for a state $\rho_{AB}$, we have 
\begin{align*}
I(A:B) = D(\rho_{AB} \| \rho_A \otimes \rho_B).
\end{align*}
Using this to rewrite the condition for $\cn\fqln\cm$ it becomes clear that it is indeed a special case of the completely less noisy relation. 
\end{proof}
It remains an interesting open problem whether any relation between the regularized and the fully quantum less noisy ordering can be found. 

Sometimes, further relationships between partial orders can be uncovered when one restricts to the commonly used special case when $\cm = \cn^c$ is the complementary channel of $\cN$\footnote{For detailed definitions, we refer to \cite{wilde2013quantum}.}. One such example is given by the following proposition.
\begin{proposition} 
Given a channel $\cn\in\CPTP(\cH_A,\cH_B)$  and its complementary channel  $\cn^c\in\CPTP(\cH_A,\cH_{B'})$ one has
\begin{align}
\cn\cmc\cn^c~~\Leftrightarrow~~\cn\mc\cn^c. 
\end{align}
\end{proposition}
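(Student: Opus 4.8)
The statement to prove is the equivalence $\cn\cmc\cn^c \Leftrightarrow \cn\mc\cn^c$, i.e., the completely more capable and the ordinary more capable orders coincide when comparing a channel against its own complement. One direction, $\cn\cmc\cn^c \Rightarrow \cn\mc\cn^c$, is immediate since the ordinary more capable condition only involves ensembles with no reference system, which is the special case $R$ trivial of the complete version. So the real content is the reverse implication.

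\textbf{Plan for the nontrivial direction.} The plan is to exploit the special structure coming from the complementary channel: if $\cn$ has Stinespring isometry $V_{A\to BE}$ then $\cn^c$ sends $A\to E$ with $B'\equiv E$, and for a \emph{pure} global input the outputs of $\cn$ and $\cn^c$ are complementary subsystems of a pure state, so their entropies coincide, $H(B)=H(B')$ after applying the isometry. Given an ensemble $\{p(x),|\psi_x\rangle_{AR}\}$ with reference system $R$, I would form the c-q state $\rho_{XAR}$ and purify / dilate $\cn$ and $\cn^c$ jointly: the state $(\id_{XR}\otimes V)(\rho_{XAR})$ lives on $XRBE$ and is pure conditioned on $X=x$ on the $RBE$ part (since $|\psi_x\rangle_{AR}$ is pure and $V$ is an isometry). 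Then $I(X:BR) - I(X:B'R) = [H(BR) - H(BR|X)] - [H(B'R) - H(B'R|X)]$. The key observation is that conditioned on $X=x$, the state on $RBE$ is pure, so $H(BR|X=x) = H(B'|X=x)$ wait — need to be careful: purity of $RBE$ given $X=x$ gives $H(BR|X=x) = H(E|X=x) = H(B'|X=x)$ and likewise $H(B'R|X=x)=H(B|X=x)$. Hence $H(BR|X) - H(B'R|X) = H(B'|X) - H(B|X)$, which is exactly $-(I(X:B) - I(X:B'))$ evaluated on the reduced ensemble $\{p(x),\psi_x^A\}$ obtained by tracing out $R$... actually $H(B|X)$ and $H(B'|X)$ here are computed on the $x$-conditional reduced states of $A$, so this difference is $\chi$-type quantity difference for the no-reference ensemble $\{p(x), \rho_x^A\}$.

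\textbf{Assembling the argument.} Putting this together, I would show
\begin{align*}
I(X:BR) - I(X:B'R) = \big[ H(BR) - H(B'R) \big] + \big[ H(B'|X) - H(B|X) \big],
\end{align*}
where the second bracket equals $-(I(X:B)-I(X:B'))_{\{p(x),\rho_x^A\}}$. For the first bracket, the global state on $XRBE$ before conditioning is generally mixed, but $BR$ and $B'R=ER$ together with $X$... I'd use that $H(BR) - H(ER)$ can be related via the purification of the whole c-q state: purify $\rho_{XAR}$ by a system $X'$ (a copy of $X$), apply $V$, get a pure state on $X'XRBE$; then $H(BRX) = H(B'RX X')$-type identities from purity give a clean relation, and after carefully tracking which systems are classical one finds $H(BR)-H(ER)$ also reduces to a no-reference quantity plus terms that cancel. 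The upshot should be that $I(X:BR)-I(X:B'R) \ge 0$ for all ensembles with reference iff $I(X:B)-I(X:B')\ge 0$ for all ensembles without reference, because the reference system contributions enter symmetrically through the complementarity $H(BR|X)=H(ER|X)$ (swapped) and cancel. I expect the main obstacle to be the bookkeeping in the first bracket $H(BR)-H(B'R)$: unlike the conditional terms, this is evaluated on a genuinely mixed state and complementarity does not directly apply, so one must purify the ensemble (doubling $X$) and argue that the extra purifying register, being classically correlated with $X$, does not spoil the cancellation — essentially reducing everything to the conditional (pure) case plus a data-processing / grouping argument on the classical register. An alternative route for this step, which I would try if the direct computation gets messy, is to first prove it for a fixed ensemble by a convexity/grouping argument reducing $R$ to the trivial system, mirroring the proof of the earlier Lemma where a \emph{classical} reference was shown to be useless — here the complementary-channel structure should let a \emph{quantum} reference be absorbed as well.
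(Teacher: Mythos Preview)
Your approach is quite different from the paper's, and your route has real gaps that you correctly sense but do not resolve.

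The paper does not try to manipulate $I(X:BR)-I(X:ER)$ directly at all. Instead it invokes the identity (Wilde, Thm.~13.6.1) that for a pure state $\Psi_{ABE}$ and any pure-state ensemble $\{p(x),|\psi_x\rangle\}$ generating $\psi_A$, one has $I(A\rangle B)_\Psi = I(X:B)-I(X:E)$. Since for complementary channels and pure inputs $I(A:B)-I(A:E) = 2\,I(A\rangle B)$ and $I(A\rangle E)=-I(A\rangle B)$, one gets the pointwise chain
\[
I(A:B)\ge I(A:E)\ \Longleftrightarrow\ I(A\rangle B)\ge I(A\rangle E)\ \Longleftrightarrow\ I(X:B)\ge I(X:E),
\]
which yields the equivalence in one line. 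Both sides are thus collapsed to the single scalar condition $H(\cn(\rho))\ge H(\cn^c(\rho))$ for all $\rho$. (As written, the paper's argument literally establishes $\fqmc\Leftrightarrow\mc$; the reference system $R$ of the $\cmc$ definition never enters.)

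Concretely, your decomposition has a sign slip (the second bracket should be $H(B|X)-H(B'|X)$), and your identification of that bracket with ``$-(I(X:B)-I(X:B'))$ on the reduced ensemble'' is incorrect: the Holevo difference also carries the unconditional term $H(B)-H(B')$, which does not vanish. Even after fixing this, the reduced ensemble $\{p(x),\rho_x^A\}$ consists of \emph{mixed} states, so the ordinary more-capable hypothesis --- which quantifies over pure-state ensembles --- does not apply to it. Your plan to control the remaining bracket $H(BR)-H(B'R)$ by purifying $X$ is the right instinct, but the purification introduces genuine coherences on the purifying register and the hoped-for ``symmetric cancellation'' does not occur: for instance, with a single maximally entangled $|\psi\rangle_{AR}$ one computes $H(BR)-H(ER)=H(E)-H(B)$, which can be strictly negative under the $\mc$ hypothesis. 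So neither bracket is individually nonnegative, and you would need a mechanism that combines them --- precisely what the paper's coherent-information identity supplies without ever splitting the expression this way.
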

\begin{proof}
Note that the coherent information of a pure state $\Psi_{ABE}$ can be written as a difference of two mutual informations with respect to a classical quantum state $\rho_{XBE}$ (see e.g.~\cite[Theorem 13.6.1]{wilde2013quantum}):
\begin{align}
I(A\rangle B) = I(X:B) - I(X:E).
\end{align}
We can use this in the following chain of equivalences,
\begin{align}
&&I( A : B ) &\leq I(A : E ) \\
&\Leftrightarrow& I( A \rangle B ) &\leq I(A \rangle E )  \\
&\Leftrightarrow& I(X:B) - I(X:E) &\leq I(X:E) - I(X:B)   \\
&\Leftrightarrow& I(X:B)  &\leq I(X:E),  
\end{align}
which holds for all pure input states and therefore the first line corresponds to the fully quantum more capable condition and the last line to the usual more capable condition. 
\end{proof}
It is currently unclear whether something similar holds for an arbitrary second channel $\cm$.

We will now briefly discuss whether the less noisy ordering tensorizes. It does in the classical case~\cite[Proposition 16]{polyanskiy2017strong},\cite[Proposition 5]{sutter2014universal}. In the quantum setting tensorzation does not hold. Namely, we can find $\cm_1,\cm_2,\cn_2,\cn_2$ such that $\cm_1\succeq_{\operatorname{l.n.}}\cm_2$, $\cn_1\succeq_{\operatorname{l.n.}}\cn_2$ but $\cm_1\otimes \cm_2\nsucceq_{\operatorname{l.n.}}\cn_1\otimes\cn_2$. This follows directly from superactivation of the private capacity. If we have $P^{(1)}(\cn_{A\rightarrow BE}) = 0$, it follows that the channel to Eve is less noisy than that to Bob. On the other hand, from $P^{(1)}(\cn^{\otimes 2}_{A\rightarrow BE}) > 0$, it follows that this order does not carry over to the two copy case. Note that the additivity of the private information questions has been recently considered in~\cite{tikku2020non}, showing that when the sender is quantum but the receivers are classical additivity still holds. However, when either one of the receivers is quantum, additivity is violated, even when the sender is classical. Since the counterexamples in~\cite{tikku2020non} exhibit the desired superactivation feature, the results directly carry over to the tenzorization problem for the less noisy order. 

It should be noted that all the examples above are based on wiretap channels that are not necessarily isometric. It would be interesting to find similar examples in the isometric case, i.e. where the channel to Eve is the complement of that to Bob. From the same argument, it can easily be seen that anti-less noisy channels have $P^{(1)}(\cn)=0$, however anti-regularized less noisy is needed to ensure that $P(\cn)=0$. Another consequence is that the less noisy ordering and the completely less noisy ordering (and therefore the degradable ordering) cannot be equivalent. An interesting open problem is whether the regularized or complete less noisy ordering could still be equivalent to the degradable ordering. 

From the above it also follows that when checking for the regularized less noisy order one has to check $n>1$. On the other hand, one can show that it is sufficient to check the condition in the asymptotic limit. 
\begin{lemma}
The following two conditions are equivalent: 
\begin{enumerate}
\item  $\cn\rln\cm$. 
\item For all $\operatorname{c-q}$ states $\rho_{UA^n}$,
\begin{align*}
\lim_{n\rightarrow\infty} I(U:B^n)_{(\id_{U}\otimes \cm^{\otimes n})(\rho_{UA^n})}\ge \lim_{n\rightarrow\infty} I(U:B'^n)_{(\id_{U}\otimes \cn^{\otimes n})(\rho_{UA^n})}\,.
\end{align*} 
\end{enumerate}
\end{lemma}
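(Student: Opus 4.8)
The plan is to prove the two implications separately; $(1)\Rightarrow(2)$ is essentially immediate, and $(2)\Rightarrow(1)$ rests on a one-line ``padding'' construction.

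For $(1)\Rightarrow(2)$ I would simply note that if the defining inequality of $\rln$ holds at every finite level $n$ (for every classical register $U$ and every $\operatorname{c-q}$ input), then for a fixed sequence $(\rho_{UA^n})_n$ of $\operatorname{c-q}$ states it holds term by term, and a term-by-term inequality between two sequences passes to the limit. The only point requiring a word of care is the meaning of $\lim$ in $(2)$: I would read it as ``whenever both limits exist'', or equivalently replace the left-hand $\lim$ by $\liminf$ and the right-hand one by $\limsup$, which does not affect the argument.

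For $(2)\Rightarrow(1)$ I would argue by contraposition. Suppose $\rln$ fails; then there is a finite $k$, a classical register $U$, and a $\operatorname{c-q}$ state $\rho_{UA^k}$ violating the level-$k$ inequality strictly. Fix any $\tau\in\cD(\cH_A)$ and, for $n\ge k$, pad this state to $\rho_{UA^k}\otimes\tau^{\otimes(n-k)}$ on $UA^n$ (for $n<k$ the choice is immaterial to the limit). The key elementary fact is that adjoining to a bipartite state a subsystem uncorrelated with one of the parties leaves the mutual information unchanged; applying it on both the $\cm^{\otimes n}$ output and the $\cn^{\otimes n}$ output shows that the two mutual-information sequences built from the padded states are eventually constant, equal to the two sides of the strict level-$k$ violation. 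Hence their limits witness a strict violation of the inequality in $(2)$, so $(2)$ fails.

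I do not expect a real obstacle here — the argument is ``soft''. The only items needing a one-line justification are (a) invariance of mutual information under adjoining an uncorrelated system (equivalently, additivity of $I(U:\,\cdot\,)$ over a tensor product whose added factor carries no correlation with $U$), and (b) pinning down the reading of the limits in $(2)$ as above. Conceptually, the padding shows that the asymptotic condition already ``sees'' every finite level, so it cannot be strictly weaker than the full regularized condition, while the converse inclusion is automatic.
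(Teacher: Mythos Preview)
Your proposal is correct and follows essentially the same approach as the paper: $(1)\Rightarrow(2)$ is immediate, and for $(2)\Rightarrow(1)$ both you and the paper use the padding observation that appending an uncorrelated factor $\tau^{\otimes(n-k)}$ leaves $I(U:B^n)$ equal to $I(U:B^k)$, so a finite-level violation persists in the limit. The paper phrases the same step as ``if the inequality holds at level $n$ then it holds at level $n-1$'' and descends inductively, whereas you run the identical idea by contraposition; you are also a bit more careful than the paper about the meaning of the limits in $(2)$, which is a fair point.
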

\begin{proof}
Since regularized less noisy requires the mutual information condition to hold for all $n$ the direction $(1) \Rightarrow (2)$ is obvious. The other direction follows by noting that if the condition holds for some $n$ it also holds for $n-1$: This is simply because if the condition holds for all states $\rho_{UA^n}$ it in particular also holds for those of the form $\rho_{UA^n}=\rho_{UA^{n-1}}\otimes\rho_{A_n}$, for which we have 
\begin{align}
I(U:B^n) = I(U:B^{n-1}). 
\end{align}
This directly implies the result for $n-1$ and therefore the claim follows by starting at $n=\infty$ and stepwise reducing $n$. 
\end{proof}

We could also ask for a weaker requirement than tensorization. We call a partial order $\succeq$ \textit{tensor stable} if for all channels the following holds: If $\cn_1 \succeq \cn_2$ then $\cn_1\otimes\cm \succeq \cn_2\otimes\cm$ for all $\cM$. Every partial order that tensorizes is automatically tensor stable, however also those that do not tensorize can potentially still possess this property. We will show that for the less noisy ordering, while the property holds for some channels, in general the order is not tensor stable. Again an example based on erasure channels can serve as a primer. 
\begin{lemma}
Fix any two erasure channels $\ce_1$ and $\ce_2$ with erasure probabilities $\epsilon_1$ and $\epsilon_2$, respectively. For any channel $\cm$ we have the following:
\begin{align}
\ce_1 \succeq_{\operatorname{l.n.}} \ce_2\quad\Rightarrow\quad\ce_1\otimes\cm \succeq_{\operatorname{l.n.}} \ce_2\otimes\cm. 
\end{align}
\end{lemma}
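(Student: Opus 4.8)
The plan is to reduce the tensorization-with-a-fixed-third-channel statement for erasure channels to the erasure characterization of the less noisy order already available in Proposition~\ref{Prop:erasureContraction}, together with the explicit formula $I(U:B)_{(\id_U\otimes\cm^{\operatorname{er}}_{\epsilon})(\rho_{UA})}=(1-\epsilon)\,I(U:A)_{\rho_{UA}}$ used in its proof. First I would observe that $\ce_1\succeq_{\operatorname{l.n.}}\ce_2$ for two erasure channels is, by that formula, simply the statement $\epsilon_1\le\epsilon_2$, i.e. $1-\epsilon_1\ge 1-\epsilon_2$. So the task becomes: show that $\ce_1\otimes\cm\succeq_{\operatorname{l.n.}}\ce_2\otimes\cm$ whenever $\epsilon_1\le\epsilon_2$.

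Next I would compute the mutual information produced by $\id_U\otimes\ce_\epsilon\otimes\cm$ on an arbitrary c-q state $\rho_{UA_1A_2}$. Writing the erasure channel's action as ``pass with probability $1-\epsilon$, output the flag $\ket e$ with probability $\epsilon$'', one gets a flag register $F$ on the first output that is classical and independent of everything else, so that $I(U:B_1B_2) = I(U:B_1B_2F) = (1-\epsilon)\,I(U:A_1'B_2) + \epsilon\, I(U:B_2)$, where $A_1'$ denotes the undamaged copy of the first input $A_1$ and $B_2$ is the output of $\cm$ on $A_2$ (the relevant states on the right-hand side being the appropriate reduced/conditioned states of $(\id_U\otimes\id_{A_1}\otimes\cm)(\rho_{UA_1A_2})$). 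By data processing $I(U:A_1'B_2)\ge I(U:B_2)$, hence the coefficient in front of $I(U:A_1'B_2)$ carries all the $\epsilon$-dependence and the whole expression is monotonically nonincreasing in $\epsilon$. Therefore $\epsilon_1\le\epsilon_2$ gives
\begin{align*}
I(U:B_1B_2)_{(\id_U\otimes\ce_1\otimes\cm)(\rho_{UA_1A_2})}\ge I(U:B_1B_2)_{(\id_U\otimes\ce_2\otimes\cm)(\rho_{UA_1A_2})}
\end{align*}
for every c-q state, which is exactly $\ce_1\otimes\cm\succeq_{\operatorname{l.n.}}\ce_2\otimes\cm$.

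Alternatively — and perhaps more cleanly for the write-up — I would note that an erasure channel of parameter $\epsilon_2$ is a degraded version of one of parameter $\epsilon_1$ when $\epsilon_1\le\epsilon_2$ (further erase with conditional probability $(\epsilon_2-\epsilon_1)/(1-\epsilon_1)$), i.e. $\ce_1\succeq_{\operatorname{deg}}\ce_2$; then $\ce_1\otimes\cm\succeq_{\operatorname{deg}}\ce_2\otimes\cm$ trivially by tensoring the degrading map with the identity, and degradability implies less noisy by Proposition~\ref{inclusions}. This packages the argument through results already in the excerpt and avoids the explicit mutual-information bookkeeping.

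The only real subtlety — the "main obstacle" — is making the flag-register decomposition of $I(U:B_1B_2)$ rigorous: one must check that the classical erasure flag on the first factor is genuinely decoupled from $U$ and from the second output, so that conditioning on it splits the mutual information additively, and one must handle the support/finiteness conditions in the definition of $\succeq_{\operatorname{l.n.}}$ (this is where invoking Proposition~\ref{propDmutualinfo} to pass between the mutual-information and relative-entropy formulations is convenient). Given the degradation route above, however, even this is sidestepped, so I expect the proof to be short.
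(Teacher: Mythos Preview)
Your first approach---the flag-register decomposition $I(U:B_1B_2)=(1-\epsilon)\,I(U:A_1B_2)+\epsilon\,I(U:B_2)$ followed by monotonicity in $\epsilon$ via data processing---is exactly the paper's proof, only with slightly different bookkeeping (the paper subtracts the two expressions and writes the difference as $(\epsilon_1-\epsilon_2)\big(I(U:A_1B_2)-I(U:B_2)\big)\le 0$).

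Your alternative via degradability is a genuinely different and shorter route that the paper does not take. Since $\epsilon_1\le\epsilon_2$ implies $\ce_1\succeq_{\operatorname{deg}}\ce_2$ (further erase with conditional probability $(\epsilon_2-\epsilon_1)/(1-\epsilon_1)$), and degradation trivially tensorizes with any fixed channel $\cm$, Proposition~\ref{inclusions} gives $\ce_1\otimes\cm\succeq_{\operatorname{l.n.}}\ce_2\otimes\cm$ immediately. This is cleaner and in fact proves the stronger statement $\ce_1\otimes\cm\succeq_{\operatorname{deg}}\ce_2\otimes\cm$. The paper's direct computation, on the other hand, stays entirely within the less-noisy framework and makes explicit the mutual-information identity for erasure channels that is reused elsewhere in the surrounding discussion of tensor stability; it also generalizes verbatim to situations where one wants to compare $\ce\otimes\cm$ with $\ce\otimes\cn$ for distinct $\cm,\cn$, where the degradation shortcut is unavailable.
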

\begin{proof}
First note that $\ce_1 \succeq_{\operatorname{l.n.}} \ce_2$ is clearly equivalent to $\epsilon_1\leq\epsilon_2$. We now need to show that 
\begin{align}
I(U:E_1B) \ge I(U:E_2B)
\end{align}
We are using
\begin{align}
I(U:E_iB) = (1-\epsilon_i) I(U:A_iB) + \epsilon_i I(U:B),  
\end{align}
which follows essentially from~\cite[Eqn. (20.90) -- (20.94)]{wilde2013quantum}, to show that the above is equivalent to
\begin{align}
0\ge \left(\epsilon_1 - \epsilon_2\right) \left(I(U:A_1B) - I(U:B)\right), 
\end{align}
which holds since the first term in the product is negative by assumption and the second positive by the data processing inequality. This concludes the proof. 
\end{proof}

A direct implication is that for two erasure channels,
\begin{align}
\ce_1 \succeq_{\operatorname{l.n.}} \ce_2\quad\Rightarrow\quad\ce_1 \cln \ce_2. 
\end{align}
The same would generalize to all channels if less noisy were tensor stable in general. In that case, the less noisy and the completely less noisy orderings would be equivalent, which we know they are not. Therefore, the less noisy ordering cannot be tensor stable in general.

\subsection{Bounds on contraction coefficients}
In this section, we will discuss several properties and bounds of SDPI constants and contraction coefficients. 
Let's start with a simple bound for concatenated channels.
\begin{lemma}\label{Lemma:UpperConcatenated}
We have
\begin{align}
&\eta_\Re(\cm\circ\cn, \sigma) \leq \eta_\Re(\cm,\cn(\sigma)) \,\,\eta_\Re(\cn, \sigma), \\
&\eta_\Re(\cm\circ\cn) \leq \eta_\Re(\cm) \,\eta_\Re(\cn). 
\end{align}
\end{lemma}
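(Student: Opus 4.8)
The plan is to prove the pointwise bound first and then obtain the contraction-coefficient bound by taking a supremum over $\sigma$. For the pointwise statement, fix an input state $\sigma$ and any $\rho$ with $0 < D(\rho\|\sigma) < \infty$. The idea is to insert the intermediate state $\cn(\sigma)$ and write the ratio defining $\eta_\Re(\cm\circ\cn,\sigma)$ as a telescoping product:
\begin{align*}
\frac{D(\cm(\cn(\rho))\|\cm(\cn(\sigma)))}{D(\rho\|\sigma)}
= \frac{D(\cm(\cn(\rho))\|\cm(\cn(\sigma)))}{D(\cn(\rho)\|\cn(\sigma))}\cdot\frac{D(\cn(\rho)\|\cn(\sigma))}{D(\rho\|\sigma)}\,.
\end{align*}
Each factor is then bounded by the appropriate SDPI constant: the second factor is at most $\eta_\Re(\cn,\sigma)$ by definition, and the first factor is at most $\eta_\Re(\cm,\cn(\sigma))$ since $\cn(\rho)$ is a valid competitor in the supremum defining that constant (with reference state $\cn(\sigma)$). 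Taking the supremum over $\rho$ gives the first inequality.

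The one technical point to handle carefully is the degenerate case where $D(\cn(\rho)\|\cn(\sigma)) = 0$, so that the telescoped factorization is not literally valid (division by zero). In that case the data processing inequality gives $D(\cm(\cn(\rho))\|\cm(\cn(\sigma))) = 0$ as well, so the overall ratio is $0$ and the claimed bound holds trivially (the right-hand side is a product of nonnegative quantities). Similarly one should note $D(\cn(\rho)\|\cn(\sigma)) < \infty$ by DPI, so the first factor's denominator is finite; and if $D(\cm(\cn(\rho))\|\cm(\cn(\sigma))) = 0$ the inequality is again immediate. Splitting into these cases and using DPI to control the finiteness/positivity of the intermediate divergence is the only real obstacle, and it is minor.

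For the second inequality, take the supremum over $\sigma \in \cD(\cH)$ on both sides of the pointwise bound. The right-hand side is $\sup_\sigma \eta_\Re(\cm,\cn(\sigma))\,\eta_\Re(\cn,\sigma)$, which is bounded above by $\big(\sup_\sigma \eta_\Re(\cm,\cn(\sigma))\big)\big(\sup_\sigma \eta_\Re(\cn,\sigma)\big) \le \eta_\Re(\cm)\,\eta_\Re(\cn)$, using that $\eta_\Re(\cm,\cn(\sigma)) \le \sup_{\tau} \eta_\Re(\cm,\tau) = \eta_\Re(\cm)$ and $\eta_\Re(\cn,\sigma) \le \eta_\Re(\cn)$, together with nonnegativity of all contraction coefficients so that the supremum of a product is at most the product of suprema. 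This completes the argument.
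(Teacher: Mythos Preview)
Your proof is correct and follows essentially the same approach as the paper: telescope the ratio through the intermediate state $\cn(\sigma)$, bound each factor by the corresponding SDPI constant, and then take the supremum over $\sigma$. In fact you are more careful than the paper, which does not explicitly address the degenerate case $D(\cn(\rho)\|\cn(\sigma))=0$ or the bound $\sup_\sigma \eta_\Re(\cm,\cn(\sigma))\le \eta_\Re(\cm)$.
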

\begin{proof}
Observe, 
\begin{align}
\eta_\Re(\cm\circ\cn, \sigma) = \frac{D(\cm\circ\cn(\rho)\|\cm\circ\cn(\sigma))}{D(\rho\|\sigma)} = \frac{D(\cm\circ\cn(\rho)\|\cm\circ\cn(\sigma))}{D(\cn(\rho)\|\cn(\sigma))} \,\frac{D(\cn(\rho)\|\cn(\sigma))}{D(\rho\|\sigma)}\,, 
\end{align}
assuming that $\rho$ is the optimizing state, from which the first claim follows directly. The second then follows by taking the supremum over $\sigma$ on both sides. 
\end{proof}
A common approach to bounding capacities is to consider flagged channels. This approach also translates to contraction coefficients. Let $\cn = \sum_i \lambda_i \,\cn_i$ for some positive numbers $\lambda_i$ and completely positive maps $\cN_i$,  and define
\begin{align}
\widehat\cn = \sum_i\,\lambda_i\, \cn_i \otimes |i\rangle\langle i|. 
\end{align}
\begin{lemma}\label{Lemma:UpperFlag}
For all states $\sigma$, we have
\begin{align}
\eta_\Re(\cn,\sigma) \leq \eta_\Re(\widehat\cn,\sigma) \leq  \sum_i \lambda_i \,\eta_\Re(\cn_i,\sigma). 
\end{align}
In fact, the two above bounds are generally true for any contraction coefficient based on a divergence satisfying the data processing inequality. The first inequality also extends to the case of non-trace preserving $\cN_i$.
\end{lemma}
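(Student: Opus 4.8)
The plan is to establish the two inequalities separately, both as consequences of the data processing inequality (DPI) for the underlying divergence; I will phrase everything for a generic divergence $\mathbf{D}$ satisfying DPI and only specialize to Umegaki relative entropy at the end, since the lemma explicitly claims this level of generality.

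First I would prove the left inequality $\eta_\Re(\cn,\sigma)\le\eta_\Re(\widehat\cn,\sigma)$. The key observation is that $\cn$ is obtained from $\widehat\cn$ by tracing out the flag register, i.e. $\cn = \tr_{F}\circ\,\widehat\cn$ where $F$ is the flag system carrying the labels $|i\rangle$. Hence for any $\rho$, by DPI applied to the partial trace channel,
\begin{align}
\mathbf{D}(\cn(\rho)\|\cn(\sigma)) = \mathbf{D}(\tr_F\widehat\cn(\rho)\|\tr_F\widehat\cn(\sigma)) \le \mathbf{D}(\widehat\cn(\rho)\|\widehat\cn(\sigma)).
\end{align}
Dividing by $\mathbf{D}(\rho\|\sigma)$ (restricting to $\rho$ with $0<\mathbf{D}(\rho\|\sigma)<\infty$) and taking the supremum over such $\rho$ gives the claim; note this uses only DPI and does not require trace preservation of the $\cN_i$, which is why the remark about non-trace-preserving $\cN_i$ applies to this first inequality.

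Second I would prove the right inequality $\eta_\Re(\widehat\cn,\sigma)\le\sum_i\lambda_i\,\eta_\Re(\cn_i,\sigma)$. The idea is that the output state $\widehat\cn(\rho)=\sum_i\lambda_i\,\cn_i(\rho)\otimes|i\rangle\langle i|$ is block-diagonal in the flag register, so the divergence between two such block-diagonal states decomposes as a weighted sum over blocks: for the relative entropy one has the standard identity $D\big(\sum_i\lambda_i\,\omega_i\otimes|i\rangle\langle i|\,\big\|\,\sum_i\lambda_i\,\tau_i\otimes|i\rangle\langle i|\big)=\sum_i\lambda_i\,D(\omega_i\|\tau_i)$ (with the $\lambda_i$ summing to one, as they must since $\cn$ is a channel), and analogous direct-sum additivity holds for general $f$-divergences. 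Applying this with $\omega_i=\cn_i(\rho)$, $\tau_i=\cn_i(\sigma)$ yields
\begin{align}
\mathbf{D}(\widehat\cn(\rho)\|\widehat\cn(\sigma)) = \sum_i \lambda_i\,\mathbf{D}(\cn_i(\rho)\|\cn_i(\sigma)) \le \sum_i \lambda_i\,\eta_\Re(\cn_i,\sigma)\,\mathbf{D}(\rho\|\sigma),
\end{align}
and dividing by $\mathbf{D}(\rho\|\sigma)$ and taking the supremum over $\rho$ finishes the proof.

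The routine part is the block-diagonal additivity identity, which is classical and can be cited or stated without proof. The only point requiring a little care — and what I expect to be the mildest obstacle — is bookkeeping around the degenerate cases in the definition of $\eta_\Re$: one must make sure that the suprema are taken over states for which the relevant denominators are finite and nonzero, and that passing from $\eta_\Re(\cn,\sigma)$ to $\eta_\Re(\widehat\cn,\sigma)$ does not spuriously change the admissible set of $\rho$ (it does not, since $\mathbf{D}(\rho\|\sigma)$ is the common denominator in both and is unaffected by the channel). No genuine difficulty is anticipated; the whole argument is two one-line applications of DPI plus one additivity identity.
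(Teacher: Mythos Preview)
Your proposal is correct and follows essentially the same approach as the paper: the first inequality is DPI for the partial trace (since $\cn=\tr_F\circ\widehat\cn$), and the second is the direct-sum additivity of the divergence for flagged states followed by bounding each summand by its contraction coefficient and taking the supremum. The paper's own proof is a two-line sketch (``data processing'' and ``properties of flagged channels and splitting the supremum''), and you have filled in exactly those details.
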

\begin{proof}
The first inequality follows from data processing. The second from the properties of flagged channels and splitting the supremum.
\end{proof} 
Note that, in general, the flags do not have to be orthogonal and one could define an extension 
\begin{align}
\widetilde\cn = \sum_i  \lambda_i\,\cn_i \otimes \tau_i, 
\end{align}
leading to a potentially tighter bound
\begin{align}
\eta_\Re(\cn,\sigma) \leq \eta_\Re(\widetilde\cn,\sigma). 
\end{align}
This approach has recently been employed to find very tight bounds on quantum capacities~\cite{fanizza2019quantum, wang2019optimizing}.

We will now discuss whether contraction coefficients tensorize and give some simples examples to the contrary. We start with the contraction coefficient $\eta_\Re(\cm)$. Our counterexample is going to be based on the erasure channel. As mentioned before, we have
\begin{align}
\eta_\Re(\ce) = 1-\epsilon. 
\end{align}
Our goal will be to show that $\eta_\Re(\ce) \neq \eta_\Re(\ce\otimes\ce)$. First, we will see that the contraction coefficient cannot get smaller under tensorization. 
\begin{lemma}\label{Lemma:LowerBoundCC}
The following holds for any two quantum channels:
\begin{align*}
	\eta_{\operatorname{Re}}(\cm\otimes\cn) \geq \max\{\eta_\Re(\cm),\,\eta_\Re(\cn)\}\,.
\end{align*}
\end{lemma}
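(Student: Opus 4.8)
The plan is to exploit the additivity of Umegaki's relative entropy under tensor products together with the fact that a product channel acts independently on its two factors, so that tensoring a near-optimal input pair for $\cm$ with a fixed state on the second system produces an admissible competitor for $\eta_{\operatorname{Re}}(\cm\otimes\cn)$ with the same ratio.

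Concretely, fix $\varepsilon>0$. By definition of the two nested suprema in $\eta_{\operatorname{Re}}(\cm)=\sup_\sigma\sup_{\rho}D(\cm(\rho)\|\cm(\sigma))/D(\rho\|\sigma)$, choose states $\sigma,\rho\in\cD(\cH_A)$ with $0<D(\rho\|\sigma)<\infty$ (and $\supp(\rho)\subseteq\supp(\sigma)$) such that $D(\cm(\rho)\|\cm(\sigma))/D(\rho\|\sigma)\ge \eta_{\operatorname{Re}}(\cm)-\varepsilon$. Now pick any fixed $\tau\in\cD(\cH_{A'})$ and feed the product states $\rho\otimes\tau$ and $\sigma\otimes\tau$ into $\cm\otimes\cn$. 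Additivity gives $D\big((\cm\otimes\cn)(\rho\otimes\tau)\,\|\,(\cm\otimes\cn)(\sigma\otimes\tau)\big)=D\big(\cm(\rho)\otimes\cn(\tau)\,\|\,\cm(\sigma)\otimes\cn(\tau)\big)=D(\cm(\rho)\|\cm(\sigma))+D(\cn(\tau)\|\cn(\tau))=D(\cm(\rho)\|\cm(\sigma))$, and similarly $D(\rho\otimes\tau\,\|\,\sigma\otimes\tau)=D(\rho\|\sigma)\in(0,\infty)$. Hence $\rho\otimes\tau$ is admissible in the supremum defining $\eta_{\operatorname{Re}}(\cm\otimes\cn,\sigma\otimes\tau)$, so $\eta_{\operatorname{Re}}(\cm\otimes\cn)\ge \eta_{\operatorname{Re}}(\cm\otimes\cn,\sigma\otimes\tau)\ge D(\cm(\rho)\|\cm(\sigma))/D(\rho\|\sigma)\ge \eta_{\operatorname{Re}}(\cm)-\varepsilon$. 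Letting $\varepsilon\to 0$ yields $\eta_{\operatorname{Re}}(\cm\otimes\cn)\ge\eta_{\operatorname{Re}}(\cm)$; by the symmetric argument (tensoring a near-optimal pair for $\cn$ on the left with a fixed state on $\cH_A$) we get $\eta_{\operatorname{Re}}(\cm\otimes\cn)\ge\eta_{\operatorname{Re}}(\cn)$, and taking the maximum finishes the proof.

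The argument is essentially routine; the only points deserving a little care are the admissibility constraints $0<D<\infty$ and $\supp(\rho)\subseteq\supp(\sigma)$, both of which are preserved because tensoring with a fixed state leaves the divergence unchanged, and the fact that the supremum defining $\eta_{\operatorname{Re}}(\cm)$ need not be attained — which is why I phrase everything with an $\varepsilon$-approximating pair and pass to the limit rather than invoking an optimizer. I do not expect any genuine obstacle. One could also remark that the same proof works verbatim for any contraction coefficient built from a divergence that is additive under tensor products and satisfies the data processing inequality, and in fact establishes the slightly stronger statement $\eta_{\operatorname{Re}}(\cm\otimes\cn,\sigma_1\otimes\sigma_2)\ge\max\{\eta_{\operatorname{Re}}(\cm,\sigma_1),\eta_{\operatorname{Re}}(\cn,\sigma_2)\}$ at the level of fixed reference states.
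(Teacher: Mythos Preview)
Your proof is correct and follows essentially the same idea as the paper's: tensor a (near-)optimal input for one factor with a fixed state on the other, so that the ratio is unchanged. The paper phrases this via the mutual-information characterization \eqref{mutualinfo-etarel} rather than the relative-entropy definition and casually speaks of ``the optimizer'', whereas you work directly with the relative entropy and are more careful to use an $\varepsilon$-approximating pair; but the underlying argument is the same.
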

\begin{proof}
Recall that 
\begin{align}
\eta_{\Re}(\cm):=\sup_{\rho_{UA}}\frac{I(U:B)_{(\id_U\otimes\cm)(\rho_{UA})}}{I(U:A)_{\rho_{UA}}}\,.
\end{align}
Pick $\rho_{UA_1}$ as the optimizer in $\eta_{\Re}(\cm)$. With $\sigma_{UA_1A_2}=\rho_{UA_1}\otimes\rho_{A_2}$ with arbitrary $\rho_{A_2}$, we immediately have
\begin{align}
	\eta_{\operatorname{Re}}(\cm\otimes\cn) \geq \eta_\Re(\cm)
\end{align}
and
\begin{align}
	\eta_{\operatorname{Re}}(\cm\otimes\cn) \geq \eta_\Re(\cn)
\end{align}
follows by symmetry. 
\end{proof}
Upper bounds are more involved, but one can give a nice one when one channel is an erasure channel. 
\begin{lemma}\label{Lemma:ErasureUpperBound}
For all quantum channel $\cM$, we have
\begin{align}
\eta_{\operatorname{Re}}(\cm\otimes\ce) \leq (1-\epsilon) + \epsilon\,\eta_{\Re}(\cm) \leq (1+\epsilon)\max\{\eta_{\Re}(\ce),\eta_{\Re}(\cm)\}.
\end{align}
\end{lemma}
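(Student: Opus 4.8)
The plan is to prove the first inequality $\eta_{\operatorname{Re}}(\cm\otimes\ce) \leq (1-\epsilon) + \epsilon\,\eta_{\Re}(\cm)$ directly from the mutual-information formula for the contraction coefficient, and then to derive the second inequality by a short estimate combining this bound with the fact that $\eta_{\Re}(\ce) = 1-\epsilon$ (noted earlier in the excerpt). For the first inequality I would use the representation $\eta_{\Re}(\cn) = \sup_{\rho_{UA}} I(U:B)/I(U:A)$ from \Cref{mutualinfo-etarel}, applied to the channel $\cm\otimes\ce$ acting on a c-q state $\rho_{UA_1A_2}$, where $A_2$ is the input leg fed into the erasure channel $\ce$ with output $E_2$ and $A_1$ the leg fed into $\cm$ with output $B_1$.

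The key computational step is the decomposition of the mutual information of the erased leg. As already used in the proof of \Cref{Prop:erasureContraction} and of the tensor-stability lemma for erasure channels above, one has, for any c-q state on $UA_1A_2$ and after applying $\id_U \otimes \cm \otimes \ce$,
\begin{align*}
I(U : B_1 E_2) = (1-\epsilon)\, I(U : B_1 A_2) + \epsilon\, I(U : B_1),
\end{align*}
which follows from the block structure of the erasure channel output (see \cite[Eqn. (20.90)--(20.94)]{wilde2013quantum}). Now I would bound each term separately against $I(U:A_1A_2)$. The second term satisfies $I(U:B_1) \le I(U:B_1 A_2) \le I(U:A_1 A_2)$ by data processing, contributing the coefficient $\epsilon$. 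For the first term, I would regard $A_2$ as a reference system attached to the classical register — i.e.\ treat $U' = UA_2$ as an enlarged ``input-side'' system — so that $I(U:B_1 A_2) = I(U A_2 : B_1) + I(U:A_2) - I(A_2 : B_1)$; more cleanly, since $\cm$ acts only on $A_1$, one has $I(U : B_1 A_2) \le I(U A_2 : B_1 A_2)$ and then applies the contraction bound for $\cm$ with the grouped classical-side system. The cleanest route is: $I(U:B_1A_2) = I(UA_2 : B_1) + I(U:A_2)$ and $I(U:A_1A_2) = I(UA_2:A_1) + I(U:A_2)$, and since $I(UA_2:B_1) \le \eta_{\Re}(\cm)\, I(UA_2:A_1)$ with $\eta_\Re(\cm)\le 1$, we get $I(U:B_1A_2) \le \eta_\Re(\cm) I(UA_2:A_1) + I(U:A_2) \le \eta_\Re(\cm)\big(I(UA_2:A_1)+I(U:A_2)\big) = \eta_\Re(\cm)\, I(U:A_1A_2)$ — here using $\eta_\Re(\cm)\le 1$ to pull the $I(U:A_2)$ term under the coefficient. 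Combining the two pieces gives $I(U:B_1E_2) \le \big((1-\epsilon)\eta_\Re(\cm) + \epsilon\big) I(U:A_1A_2)$; wait — I should double check the coefficient placement so that it matches the claimed $(1-\epsilon) + \epsilon\eta_\Re(\cm)$, which means the erased term should carry $\eta_\Re(\cm)$; so in fact I want to split as $I(U:B_1E_2) = (1-\epsilon) I(U:B_1A_2) + \epsilon I(U:B_1)$ with the surviving leg bounded by $I(U:B_1A_2) \le I(U:A_1A_2)$ (pure data processing, coefficient $1-\epsilon$) and the erased-flag term $I(U:B_1) \le \eta_\Re(\cm) I(U:A_1) \le \eta_\Re(\cm) I(U:A_1A_2)$ (coefficient $\epsilon$, using that $B_1$ only depends on $A_1$ through $\cm$). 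Taking the supremum over $\rho_{UA_1A_2}$ yields the first inequality.

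For the second inequality I would simply note $(1-\epsilon) + \epsilon\,\eta_\Re(\cm) \le \max\{1-\epsilon,\eta_\Re(\cm)\} + \epsilon\max\{1-\epsilon,\eta_\Re(\cm)\} = (1+\epsilon)\max\{\eta_\Re(\ce),\eta_\Re(\cm)\}$, using $\eta_\Re(\ce) = 1-\epsilon$ and that a convex combination of two numbers is at most their maximum (applied to $(1-\epsilon)\cdot 1 + \epsilon\cdot\eta_\Re(\cm) \le \max\{1-\epsilon,\eta_\Re(\cm)\}$, then adding the spare $\epsilon$-multiple of the max crudely). The main obstacle is bookkeeping in the mutual-information splitting of step two: one must be careful that the leg routed through $\cm$ genuinely only depends on $A_1$ (so that $I(U:B_1)\le \eta_\Re(\cm)I(U:A_1)$ is legitimate even in the presence of $A_2$), and that the two terms in the erasure decomposition are bounded by the \emph{same} denominator $I(U:A_1A_2)$; once the correct grouping $U' = U$ (not $UA_2$) is fixed for the flag term, everything reduces to data processing plus the definition of $\eta_\Re(\cm)$, with no genuine analytic difficulty. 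I would also remark, as the excerpt does for the flagged-channel lemma, that the argument only uses data processing and hence the same bound holds for any contraction coefficient attached to a divergence obeying the DPI.
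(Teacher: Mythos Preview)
Your proof is correct and, after your mid-proposal self-correction, follows essentially the same route as the paper: the erasure decomposition $I(U:B_1B_2)=(1-\epsilon)I(U:B_1A_2)+\epsilon I(U:B_1)$, then $I(U:B_1A_2)\le I(U:A_1A_2)$ by data processing and $I(U:B_1)\le \eta_{\Re}(\cm)I(U:A_1)\le \eta_{\Re}(\cm)I(U:A_1A_2)$ by the definition of $\eta_{\Re}(\cm)$ applied to the marginal $\rho_{UA_1}$, with the second inequality handled by the same case split (your $\max$ bound is exactly the two-case argument compressed). The initial detour via $U'=UA_2$ is unnecessary and, as you noticed, gives the wrong coefficient placement; the closing remark that ``only DPI is used'' slightly overstates things, since the erasure decomposition relies on the direct-sum output structure, not merely on data processing.
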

\begin{proof}
The main ingredient is the following equality
\begin{align}
I(U:B_1B_2) = (1-\epsilon) I(U:B_1A_2) + \epsilon I(U:B_1),  \label{Eq:MutualInfoErasure}
\end{align}
which follows essentially from~\cite[Eqn. (20.90) -- (20.94)]{wilde2013quantum}. We have, 
\begin{align}
\frac{I(U:B_1B_2)}{I(U:A_1A_2)} &= (1-\epsilon) \frac{I(U:B_1A_2)}{I(U:A_1A_2)} + \epsilon \frac{I(U:B_1)}{I(U:A_1A_2)} \\
&\leq (1-\epsilon) + \epsilon \frac{I(U:B_1)}{I(U:A_1)} \\
&\leq (1-\epsilon) + \epsilon \eta_{\operatorname{Re}}(\cm), 
\end{align}
where the first inequality follows from applying data-processing twice and the second by taking the supremum over all $\rho_{UA_1}$. 

Now the second inequality follows by noting that 
\begin{align}
(1-\epsilon) + \epsilon \,\eta_{\operatorname{Re}}(\cm) = \eta_{\Re}(\ce) + \epsilon\, \eta_{\operatorname{Re}}(\cm)
\end{align}
and considering the two cases $\eta_{\Re}(\ce)\geq\eta_{\Re}(\cm)$ and $\eta_{\Re}(\cm)\geq\eta_{\Re}(\ce)$.
\end{proof}
This allows us to calculate the contraction coefficient $\eta_\Re(\ce^{\otimes n})$. 
\begin{lemma}
Given the erasure channel $\cE$ of parameter $\eps$, we have
\begin{align}
\eta_\Re(\ce^{\otimes n}) = 1-\epsilon^n.
\end{align}
\end{lemma}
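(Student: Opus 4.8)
The plan is to prove $\eta_\Re(\ce^{\otimes n}) = 1-\epsilon^n$ by induction on $n$, combining the lower bound coming from the erasure channel being a tensor power of itself with the upper bound from \Cref{Lemma:ErasureUpperBound}. The base case $n=1$ is precisely the already-established identity $\eta_\Re(\ce)=1-\epsilon$ (equivalently \Cref{Prop:erasureContraction}).

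For the inductive step, suppose $\eta_\Re(\ce^{\otimes(n-1)}) = 1-\epsilon^{n-1}$. Writing $\ce^{\otimes n} = \ce^{\otimes(n-1)}\otimes\ce$, I would first apply \Cref{Lemma:ErasureUpperBound} with $\cm = \ce^{\otimes(n-1)}$ to get
\begin{align*}
\eta_\Re(\ce^{\otimes n}) = \eta_\Re(\ce^{\otimes(n-1)}\otimes\ce) \leq (1-\epsilon) + \epsilon\,\eta_\Re(\ce^{\otimes(n-1)}) = (1-\epsilon) + \epsilon(1-\epsilon^{n-1}) = 1-\epsilon^n\,.
\end{align*}
For the matching lower bound, I would use the mutual-information formulation of the contraction coefficient, $\eta_\Re(\cn) = \sup_{\rho_{UA}} I(U:B)/I(U:A)$, together with the repeated application of the erasure identity \eqref{Eq:MutualInfoErasure}: feeding a c-q state $\rho_{UA^n}$ through $\ce^{\otimes n}$, one can peel off one tensor factor at a time to obtain, for a suitably chosen input (e.g. the maximally correlated state on all $n$ copies), that $I(U:B^n) = (1-\epsilon^n) I(U:A^n)$, since the information survives iff at least one of the $n$ copies is not erased, which happens with probability $1-\epsilon^n$. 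This shows $\eta_\Re(\ce^{\otimes n})\geq 1-\epsilon^n$. Alternatively, one can get the lower bound directly from the observation that $\ce^{\otimes n}$ is itself an erasure-type channel on the block $\cH_A^{\otimes n}$ only on the all-erased subspace, so a cleaner route is to exhibit the explicit state; either way the computation is the routine part.

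I expect the only mild subtlety to be making the lower-bound input state precise: one needs $I(U:A^n)$ finite and nonzero and needs the clean factorization $I(U:B^n)=(1-\epsilon^n)I(U:A^n)$ to hold exactly, which requires choosing $\rho_{UA^n}$ so that the flag registers recording which copies were erased do not leak extra information about $U$ beyond the ``at least one survived'' event — taking $U$ perfectly correlated with a fixed pure-state ensemble on $\cH_A^{\otimes n}$ handles this. Combining the two bounds yields the claimed equality, and the induction closes.
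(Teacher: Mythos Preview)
Your proposal is correct and follows essentially the same approach as the paper. The upper bound via induction using \Cref{Lemma:ErasureUpperBound} is identical to the paper's recurrence $f(n)=(1-\epsilon)+\epsilon f(n-1)$, and for the lower bound the paper does exactly what you suggest, exhibiting the explicit state $\tfrac12|0\rangle\langle 0|_U\otimes|0\rangle\langle 0|_{A^n}+\tfrac12|1\rangle\langle 1|_U\otimes|1\rangle\langle 1|_{A^n}$ and computing $I(U:B^n)=(1-\epsilon^n)\log 2$ directly from the observation that $U$ is recoverable unless all $n$ copies are erased.
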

\begin{proof}
The $\leq$ direction follows from Lemma~\ref{Lemma:ErasureUpperBound}. Set $\cm=\ce^{\otimes n-1}$ and define a function $f(n)$ by $f(1)=1-\epsilon$ and $f(n) = (1-\epsilon) + \epsilon f(n-1)$ for $n\geq 2$. It can easily be shown by induction that $f(n)= 1-\epsilon^n$ and therefore we have $\eta_\Re(\ce^{\otimes n}) = 1-\epsilon^n$. 

 For the opposite direction we have to show that this bound is indeed achievable. Consider the following state
\begin{align}
\rho_{UA_1A_2} = \frac12 |0\rangle\langle 0|_U \otimes |0\rangle\langle 0|_{A_1^n} + \frac12 |1\rangle\langle 1|_U \otimes |1\rangle\langle 1|_{A_1^n}. \label{Eqn:InputStates}
\end{align}
It can be checked that 
\begin{align}
I(U:A_1A_2) &= \log2 \\ 
H(U) &= \log2 \\
H(UB_1B_2) - H(B_1B_2) &= \epsilon^n\log2  \label{entropy-diff}\\
\Rightarrow I(U:B_1B_2) &= (1-\epsilon^n)\log2
\end{align}
where the only non-trivial one is Equation~\ref{entropy-diff}, which follows from observing that the only overlap in support between the two output states of the erasure channels corresponds to the state being completely deleted. From this the desired result follows immediately. 
\end{proof}
This completes our argument that the contraction coefficient does not tensorize. 

However, $\eta_{\Re}(\cm,\sigma)$ might still tensorize. Indeed, this is the case for erasure channels. Consider two erasure channels $\ce_1$ and $\ce_2$ with erasure probability $\epsilon_1$ and $\epsilon_2$, respectively. By Lemma~\ref{Lemma:ErasureUpperBound} we immediately get
\begin{align}
\eta_{\Re}(\ce_1\otimes\ce_2) \leq 1 - \epsilon_1\epsilon_2. 
\end{align}
Opposed to that, in the restricted case, tensorization holds as can be seen in the following lemma. 
\begin{lemma}
For the erasure channels defined above and any two states $\sigma_1$ and $\sigma_2$, we have
\begin{align}
\eta_{\Re}(\ce_1\otimes\ce_2, \sigma_1\otimes\sigma_2) \leq \max\{(1-\epsilon_1),(1-\epsilon_2)\}. 
\end{align}
\end{lemma}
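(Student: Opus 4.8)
The plan is to mimic the structure of the proof of Lemma~\ref{Lemma:ErasureUpperBound}, but now applied to the \emph{restricted} contraction coefficient $\eta_{\Re}(\cdot,\sigma)$ with the product state $\sigma_1\otimes\sigma_2$ fixed, so that we can iterate the erasure identity twice and land on single-system quantities that are each bounded by $1-\epsilon_i$. Concretely, I would start from the mutual-information characterization of $\eta_{\Re}(\ce_1\otimes\ce_2,\sigma_1\otimes\sigma_2)$ given by \Cref{mutualinfo-etarel}, so that the object to bound is $\sup_U \sup_{\rho_{UA_1A_2}} I(U:B_1B_2)/I(U:A_1A_2)$ where the supremum over $\rho_{UA_1A_2}$ is subject to the marginal constraint $\tr_U(\rho_{UA_1A_2}) = \sigma_1\otimes\sigma_2$.

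The first key step is to apply the erasure identity~\eqref{Eq:MutualInfoErasure} (equivalently \eqref{Equ:erasureContraction}-style identities from \cite[Eqn. (20.90)--(20.94)]{wilde2013quantum}) on the \emph{second} factor to write
\begin{align*}
I(U:B_1B_2) = (1-\epsilon_2)\, I(U:B_1A_2) + \epsilon_2\, I(U:B_1),
\end{align*}
and then apply it again on the \emph{first} factor inside each term:
\begin{align*}
I(U:B_1A_2) = (1-\epsilon_1)\, I(U:A_1A_2) + \epsilon_1\, I(U:A_2), \qquad
I(U:B_1) = (1-\epsilon_1)\, I(U:A_1) + \epsilon_1\, I(U:\emptyset),
\end{align*}
where $I(U:\emptyset)=0$. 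Dividing through by $I(U:A_1A_2)$ and using data processing to bound $I(U:A_2)\le I(U:A_1A_2)$, $I(U:A_1)\le I(U:A_1A_2)$ termwise, the ratio is at most
\begin{align*}
(1-\epsilon_2)\big[(1-\epsilon_1) + \epsilon_1\big] + \epsilon_2\,(1-\epsilon_1) = (1-\epsilon_2) + \epsilon_2(1-\epsilon_1) = 1 - \epsilon_1\epsilon_2,
\end{align*}
which only recovers the weaker product bound; so the crude termwise DPI step is too lossy, and the real work is to exploit the product structure of the fixed input $\sigma_1\otimes\sigma_2$ to do better. The refinement I expect to need is to condition on the erasure flags of \emph{both} channels simultaneously — i.e.\ use the flagged-channel viewpoint (Lemma~\ref{Lemma:UpperFlag}) so that $\ce_1\otimes\ce_2$ becomes a convex combination over the four outcomes (both erased / first erased / second erased / neither erased), and in the "both kept" branch the input is literally $\sigma_1\otimes\sigma_2$ seen through the identity; then rewrite $I(U:B_1B_2)$ using the chain rule and the fact that, conditioned on $A_2$ being passed through and with the product marginal, the relevant comparison decouples. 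The cleanest route is probably to go back to the relative-entropy side via Proposition~\ref{propDmutualinfo}: with $\sigma=\sigma_1\otimes\sigma_2$ fixed, $\eta_{\Re}(\ce_1\otimes\ce_2,\sigma)$ equals the best constant $\eta$ with $\eta\, D(\rho\|\sigma_1\otimes\sigma_2) \ge D((\ce_1\otimes\ce_2)(\rho)\|(\ce_1\otimes\ce_2)(\sigma_1\otimes\sigma_2))$ for all $\rho$ with appropriate support, and the explicit block structure of $(\ce_1\otimes\ce_2)(\rho)$ (a direct sum over the four erasure patterns with the "kept" blocks being partial traces of $\rho$ rescaled by $(1-\epsilon_1)^{a}(1-\epsilon_2)^{b}\epsilon_1^{1-a}\epsilon_2^{1-b}$) lets one expand $D((\ce_1\otimes\ce_2)(\rho)\|\cdots)$ as a sum of four relative-entropy terms plus a Shannon term in the flag distribution, which is unchanged since the flag distribution is the same for $\rho$ and $\sigma$; the surviving terms are $(1-\epsilon_1)(1-\epsilon_2)D(\rho\|\sigma_1\otimes\sigma_2) + (1-\epsilon_1)\epsilon_2 D(\rho_{A_1}\|\sigma_1) + \epsilon_1(1-\epsilon_2) D(\rho_{A_2}\|\sigma_2)$, and now superadditivity of relative entropy on product references, $D(\rho_{A_1}\|\sigma_1)+D(\rho_{A_2}\|\sigma_2)\le D(\rho\|\sigma_1\otimes\sigma_2)$ (here is where the product form of $\sigma$ is essential), bounds everything by $\big[(1-\epsilon_1)(1-\epsilon_2) + \max\{(1-\epsilon_1)\epsilon_2,\ \epsilon_1(1-\epsilon_2)\} + \text{the other}\big]$-type combination — and one checks that the two off-diagonal weights $(1-\epsilon_1)\epsilon_2$ and $\epsilon_1(1-\epsilon_2)$ can be merged against a single copy of $D(\rho\|\sigma_1\otimes\sigma_2)$ only up to $\max\{1-\epsilon_1,1-\epsilon_2\}$, giving the claimed bound.

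Putting it together: the coefficient of $D(\rho\|\sigma_1\otimes\sigma_2)$ after bounding is $(1-\epsilon_1)(1-\epsilon_2) + (1-\epsilon_1)\epsilon_2\cdot\mathbbm{1}[\epsilon_1\le\epsilon_2] + \epsilon_1(1-\epsilon_2)\cdot\mathbbm{1}[\epsilon_2\le\epsilon_1]$ after assigning each single-system term its weight to the $\max$ side; a short case analysis ($\epsilon_1\le\epsilon_2$ vs.\ $\epsilon_2\le\epsilon_1$) collapses this to $1-\epsilon_2$ resp.\ $1-\epsilon_1$, i.e.\ $\max\{1-\epsilon_1,1-\epsilon_2\}$, and taking the supremum over $\rho$ and then noting the bound is independent of $\rho$ finishes the proof. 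The main obstacle, as flagged above, is precisely the step where one must \emph{not} bound the single-system divergences $D(\rho_{A_i}\|\sigma_i)$ independently by $D(\rho\|\sigma_1\otimes\sigma_2)$ (that loses a factor and only yields $1-\epsilon_1\epsilon_2$) but must instead pair them via superadditivity so that their \emph{sum} costs only one copy of the joint divergence; getting the bookkeeping of the four erasure-pattern weights right so that the leftover combines to exactly $\max\{1-\epsilon_1,1-\epsilon_2\}$ rather than something larger is the delicate part, and it genuinely uses that the reference state is a product.
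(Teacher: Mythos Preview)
Your approach is correct and essentially matches the paper's: both expand the output over the four erasure patterns and then invoke superadditivity with respect to the product reference --- the paper phrases it on the mutual-information side as $I(U:A_1A_2)\ge I(U:A_1)+I(U:A_2)$ (obtained via the chain rule together with $I(A_1:A_2)=0$, which is exactly where $\sigma_1\otimes\sigma_2$ enters), while you phrase the equivalent inequality on the relative-entropy side as $D(\rho_{A_1}\|\sigma_1)+D(\rho_{A_2}\|\sigma_2)\le D(\rho\|\sigma_1\otimes\sigma_2)$. The subsequent case split $\epsilon_1\lessgtr\epsilon_2$ (replacing the smaller off-diagonal weight by the larger before applying superadditivity) and the final arithmetic collapsing to $\max\{1-\epsilon_1,1-\epsilon_2\}$ are identical to the paper's.
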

\begin{proof}
As before, we can use Equation~\ref{Eq:MutualInfoErasure}, however, this time we apply it twice and we get
\begin{align}
I(U:B_1B_2) = (1-\epsilon_1)(1-\epsilon_2) I(U:A_1A_2) + (1-\epsilon_2)\epsilon_1 I(U:A_1) + (1-\epsilon_1)\epsilon_2 I(U:A_2).
\end{align}
The second main ingredient is the following argument: 
\begin{align}
I(U:A_1A_2) &= I(U:A_2) + I(U:A_1|A_2) \\
&=  I(U:A_2) + I(UA_2:A_1) - I(A_2:A_1) \\
&=  I(U:A_2) + I(UA_2:A_1) \\
&\geq I(U:A_2) + I(U:A_1), 
\end{align}
where the first two inequalities follow by the chain rule, the third because $\rho_{A_1A_2}=\sigma_1\otimes\sigma_2$ and the inequality by data processing. 

Now, we have to distinguish two cases. First,  consider $\epsilon_1\geq\epsilon_2$. We have, 
\begin{align}
&(1-\epsilon_1)(1-\epsilon_2) I(U:A_1A_2) + (1-\epsilon_2)\epsilon_1 I(U:A_1) + (1-\epsilon_1)\epsilon_2 I(U:A_2) \\
&\leq (1-\epsilon_1)(1-\epsilon_2) I(U:A_1A_2) + (1-\epsilon_2)\epsilon_1 [I(U:A_1A_2)- I(U:A_2)] + (1-\epsilon_1)\epsilon_2 I(U:A_2) \\
&= (1-\epsilon_1) I(U:A_1A_2) + (\epsilon_1-\epsilon_2) I(U:A_2) \\
&\leq (1-\epsilon_2) I(U:A_1A_2), 
\end{align} 
where the first inequality follows by using the previously derived argument, the equality is simple rearranging and the final inequality is data-processing together with the assumption that $(\epsilon_1-\epsilon_2)\geq 0$. 
It follows directly that in this case 
\begin{align}
\eta_{\Re}(\ce_1\otimes\ce_2, \sigma_1\otimes\sigma_2) \leq (1-\epsilon_2). 
\end{align}
The second part follows directly by assuming $\epsilon_2\geq\epsilon_1$ and exchanging the roles of $A_1$ and $A_2$ in the previous derivation. Putting both cases together concludes the proof. 
\end{proof}
Whether SDPI constants based on the relative entropy with respect to a fixed tensor product state tensorize in general remains an important open problem. 

\subsection{Additional examples} 

First, we recall an example from~\cite{hiai2016contraction} for qubit channels. The original theorem applies to several contraction coefficients, however we state it here for the relative entropy. Recall that any unital qubit channel $\cM$ can be represented by a $3\times 3$ real matrix $T$ describing how the channel acts on the Pauli matrices $\sigma$. This representation is usually referred to as the Bloch sphere representation. We then have:
\begin{lemma}[{\cite[Theorem 6.1]{hiai2016contraction}}]
For any unital map $\cm_T : I + w\cdot\sigma \rightarrow I + (Tw)\cdot\sigma$ where $T$ is a real matrix with $\| T \|_\infty\leq 1$, 
\begin{align}
\eta_\Re(\cm_T ) = \| T \|_\infty^2. 
\end{align}
\end{lemma}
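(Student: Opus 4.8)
The plan is to compute $\eta_\Re(\cm_T)$ by establishing matching upper and lower bounds. The channel $\cm_T$ is the qubit unital map whose action on the Bloch ball is the linear contraction $w \mapsto Tw$; by the singular value decomposition we may write $T = U\,\mathrm{diag}(s_1,s_2,s_3)\,V$ with $s_1 \ge s_2 \ge s_3$ the singular values and $\|T\|_\infty = s_1$. Since pre- and post-composing with unitary channels (rotations of the Bloch sphere) does not change the relative entropy, we have $\eta_\Re(\cm_T) = \eta_\Re(\cm_{\mathrm{diag}(s_1,s_2,s_3)})$, so without loss of generality $T$ is diagonal with nonnegative entries and we must show the coefficient equals $s_1^2$.

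For the lower bound $\eta_\Re(\cm_T) \ge s_1^2$, I would exhibit a near-optimal family of input pairs $(\rho,\sigma)$. The natural choice is to take $\sigma = I/2$ (the maximally mixed state, which is fixed by any unital qubit map) and $\rho = \tfrac12(I + r\, e_1\cdot\sigma)$ for small $r$, where $e_1$ is the singular direction achieving $s_1$. Then $\cm_T(\rho) = \tfrac12(I + r s_1\, e_1\cdot\sigma)$, and a short Taylor expansion of $D(\tfrac12(I+p\,\hat n\cdot\sigma)\,\|\,I/2)$ around $p = 0$ gives $D(\rho\|I/2) = \tfrac{r^2}{2}\big(1 + O(r^2)\big)$ in nats (or the analogous expression with $\log 2$), and likewise $D(\cm_T(\rho)\|I/2) = \tfrac{(r s_1)^2}{2}(1 + O(r^2))$. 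Taking the ratio and letting $r \to 0$ yields $\eta_\Re(\cm_T) \ge s_1^2$. This is the step I expect to be routine once the quadratic expansion of the relative entropy near the maximally mixed state is carried out carefully.

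For the upper bound $\eta_\Re(\cm_T) \le s_1^2 = \|T\|_\infty^2$, the cleanest route is to use the known characterization of $\eta_\Re$ via the $\chi^2$-contraction coefficient or, more elementarily, to reduce to the case $s_1 = 1$ by writing $\cm_T = \cm_{T'} \circ \cm_{s_1 I}$ where $\cm_{s_1 I}$ is the depolarizing-type channel $w \mapsto s_1 w$ and $T' = T/s_1$ has $\|T'\|_\infty = 1$. By Lemma~\ref{Lemma:UpperConcatenated}, $\eta_\Re(\cm_T) \le \eta_\Re(\cm_{T'})\,\eta_\Re(\cm_{s_1 I})$, and $\eta_\Re(\cm_{s_1 I}) = s_1^2$ is a direct computation (this is essentially the qubit depolarizing channel, whose relative-entropy contraction coefficient is $s_1^2$). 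So it suffices to show $\eta_\Re(\cm_{T'}) \le 1$, which is automatic from the data processing inequality since $\cm_{T'}$ is a genuine quantum channel. The main obstacle is thus isolated: proving $\eta_\Re(\cm_{\lambda I}) = \lambda^2$ for the scalar Bloch-contraction (depolarizing) channel, i.e.\ that among all pairs $(\rho,\sigma)$ the supremum of $D(\cm_{\lambda I}(\rho)\|\cm_{\lambda I}(\sigma))/D(\rho\|\sigma)$ is exactly $\lambda^2$ and is approached only in the infinitesimal limit; this can be handled either by the $\chi^2$-divergence comparison (the $\chi^2$-contraction coefficient of $\cm_{\lambda I}$ is manifestly $\lambda^2$, and for this channel it coincides with the relative-entropy one) or by a direct SDPI argument, and I would cite~\cite{hiai2016contraction,Lesniewski1998a} for the relevant comparison between $\eta_\Re$ and $\eta_{\chi^2}$ that makes the scalar case exact.
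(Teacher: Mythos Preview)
The paper does not prove this lemma; it is quoted as Theorem~6.1 of \cite{hiai2016contraction} and then used to compute examples. So there is no proof in the paper to compare against, and I will just assess your argument on its own merits.

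Your lower bound is fine, and the factorization $\cm_T=\cm_{T'}\circ\cm_{s_1 I}$ with $T'=T/s_1$ is a nice reduction. One small point: $\cm_{T'}$ is typically \emph{not} completely positive. With $T'=\operatorname{diag}(1,s_2/s_1,s_3/s_1)$ the Fujiwara--Algoet tetrahedron forces $s_2=s_3$ for complete positivity, which generically fails. It is still a positive trace-preserving map (since $\|T'\|_\infty\le 1$), and the DPI for Umegaki relative entropy holds for positive TP maps, so $\eta_\Re(\cm_{T'})\le 1$ survives---but you should justify it this way rather than calling $\cm_{T'}$ a ``genuine quantum channel''.

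The real gap is that you have only relocated the difficulty. Your upper bound now rests entirely on $\eta_\Re(\cm_{\lambda I})\le \lambda^2$, which you propose to obtain from ``$\eta_\Re=\eta_{\chi^2}$ for this channel'' and a citation of \cite{hiai2016contraction,Lesniewski1998a}. But in the quantum setting the inequality $\eta_\Re\le\eta_{\chi^2}$ is \emph{not} known in general (unlike classically), and establishing it for qubit unital maps---including the depolarizing channel with arbitrary second argument $\sigma$---is exactly the non-trivial content of \cite[Theorem~6.1]{hiai2016contraction}. Citing that reference here is circular. The known SDPI results for depolarizing channels (e.g.\ \cite{muller2016relative}) control $\eta_\Re(\cm_{\lambda I},I/2)$, not the full supremum over $\sigma$. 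So your reduction correctly isolates the hard step but does not discharge it; a complete proof still needs the qubit-specific analysis carried out in \cite{hiai2016contraction}.
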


Examples to which this result applies include the depolarizing channel,
\begin{align}
\cd_p(\cdot) = (1-p)(\cdot) + p \frac{\Id}{2},
\end{align}
and the dephasing and bit-flip channels,
\begin{align}
\cd^Z_p = (1-p)(\cdot) + p Z(\cdot)Z, \\
\cd^X_p = (1-p)(\cdot) + p X(\cdot)X,  
\end{align}
We get, 
\begin{align}
\eta(\cd_p) &= (1-p)^2, \\
\eta(\cd^X_p) &= \eta(\cd^Z_p) = 1. 
\end{align}

We can combine this with results from the previous section to test our bounds on products of channels. Let $\ce_{1/2}$ be an erasure channel with erasure probability $\epsilon=\frac12$, using the first upper bound in Lemma~\ref{Lemma:ErasureUpperBound} and lower bounding by fixing the input state as that from Equation~\eqref{Eqn:InputStates}, we get
\begin{align}
\frac{\log2 - \frac12 h(\frac p2)}{\log2} \leq \eta_\Re(\ce_{1/2}\otimes\cd_p) \leq \frac12(1+(1-p)^2).  \label{Eqn:BoundsErasureDep}
\end{align}
These simple bounds already limit the possible values significantly to the green area in Figure~\ref{Fig:ErasureDep}, where we also plot the looser second bound from Lemma~\ref{Lemma:ErasureUpperBound} and the lower bound from Lemma~\ref{Lemma:LowerBoundCC} for comparison. 

\begin{figure}
\centering
\includegraphics[scale=0.4]{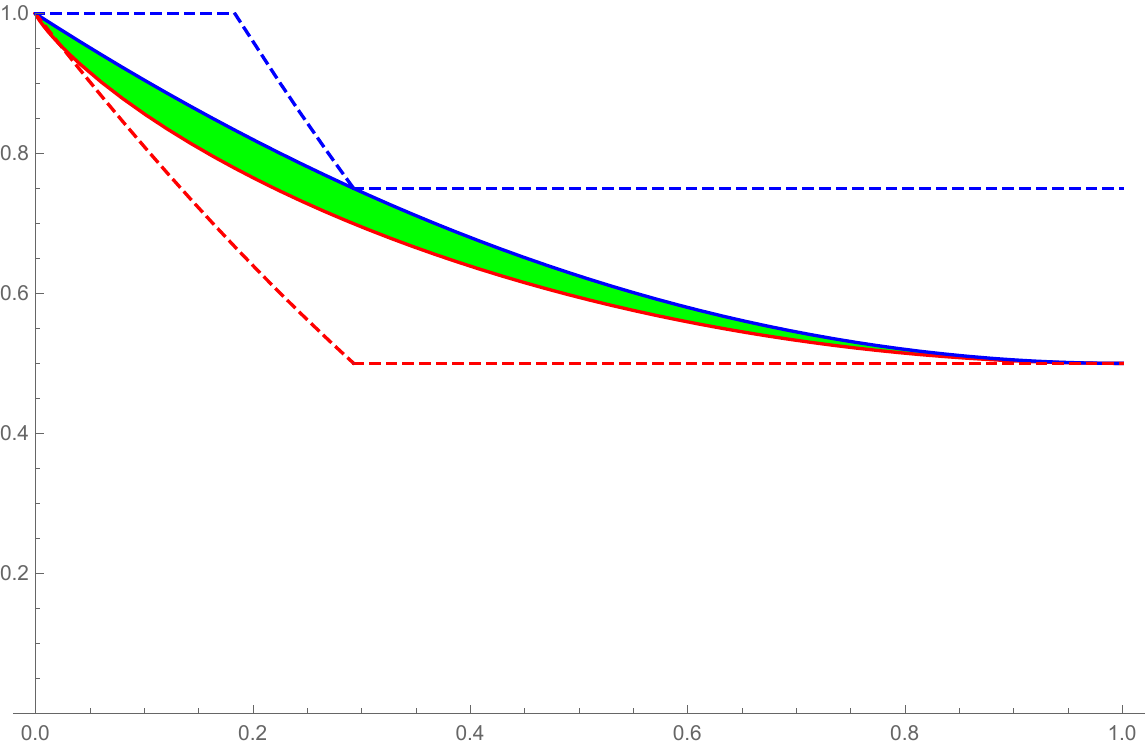}
\caption{\label{Fig:ErasureDep} Bounds on $\eta_\Re(\ce_{1/2}\otimes\cd_p)$ as given in Equation~\eqref{Eqn:BoundsErasureDep}. The green area marks the remaining possible range of the contraction coefficient.}
\end{figure}

Let us consider one more example. A channel that has recently proven useful in investigating the properties of quantum capacities is the dephrasure channel~\cite{leditzky2018dephrasure},
\begin{align}
\cd_{\epsilon, p} = \ce_\epsilon \circ \cd^Z_p = (1-\epsilon) \left( (1-p)(\cdot) + p Z(\cdot)Z)\right) + \epsilon |e\rangle\langle e|. 
\end{align}
We can easily get a lower bound on the contraction coefficient by picking the input state in Equation~\ref{Eqn:InputStates}. An upper bound can either be found from using Lemma~\ref{Lemma:UpperConcatenated} or Lemma~\ref{Lemma:UpperFlag} (for the latter note that the dephrasure channel can directly be written as a flag channel). Together we get, 
\begin{align}
\eta_\Re(\cd_{\epsilon, p}) = 1-\epsilon. 
\end{align}
Finally, a simple observation is that a replacer channel that always outputs a state $\tau$, 
\begin{align}
\cR_\tau(\cdot) = \tau \tr(\cdot)\,,
\end{align}
is the noisiest channel and an isometric channel $\cV(\rho)=V\rho V^\dagger$ for some isometry $V$ is the least noisy channel:
\begin{lemma}
For any channel $\cn$, replacer channel $\cR_\tau$ and isometric channel $\cV$, we have
\begin{align}
&\eta_\Re(\cR_\tau)= 0,  \qquad \eta_\Re(\cV)=1  \\
&\eta_\Re(\cn\otimes\cR_\tau) = \eta_\Re(\cn)   \\
&\cV \deg\cn\deg\cR_\tau\,,
\end{align}
where the final line implies a similar statement for all other partial orders defined here via Proposition~\ref{inclusions}.
\end{lemma}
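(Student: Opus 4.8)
The plan is to verify each of the four claims separately, using the characterizations and lemmas already established in the excerpt.

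For the first line, $\eta_\Re(\cR_\tau)=0$: the output states $\cR_\tau(\rho)=\tau=\cR_\tau(\sigma)$ coincide for every input, so $D(\cR_\tau(\rho)\|\cR_\tau(\sigma))=D(\tau\|\tau)=0$ for all $\rho,\sigma$, whence the supremum defining $\eta_\Re(\cR_\tau,\sigma)$ is $0$ for every $\sigma$, and taking the supremum over $\sigma$ gives $\eta_\Re(\cR_\tau)=0$. For $\eta_\Re(\cV)=1$: since $\cV(\rho)=V\rho V^\dagger$ with $V$ an isometry, it is invertible on its range by $V^\dagger(\cdot)V$, and Umegaki relative entropy is invariant under isometries, so $D(\cV(\rho)\|\cV(\sigma))=D(\rho\|\sigma)$ for all $\rho$ with $\supp(\rho)\subseteq\supp(\sigma)$; hence every ratio in the supremum equals $1$ and $\eta_\Re(\cV)=1$ (the upper bound $\eta_\Re\le 1$ being data processing).

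For the second line, $\eta_\Re(\cn\otimes\cR_\tau)=\eta_\Re(\cn)$: the inequality ``$\ge$'' is immediate from Lemma~\ref{Lemma:LowerBoundCC} since $\eta_\Re(\cn\otimes\cR_\tau)\ge\max\{\eta_\Re(\cn),\eta_\Re(\cR_\tau)\}\ge\eta_\Re(\cn)$. For ``$\le$'', I would use the mutual-information formulation \eqref{mutualinfo-etarel}: for any $\rho_{UA_1A_2}$, the output on the second system is $\tau$ independently of $U$, so $B_2$ is in product with $UB_1$ and $I(U:B_1B_2)=I(U:B_1)\le I(U:A_1)\le I(U:A_1A_2)$, where the last step is data processing. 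Dividing by $I(U:A_1A_2)$ and taking suprema over $\rho_{UA_1A_2}$ and $U$ shows the ratio never exceeds $\eta_\Re(\cn)$. (Alternatively one can invoke Lemma~\ref{Lemma:UpperFlag}/the flag argument, but the direct computation is cleaner here since $\cR_\tau$ contributes no correlations.)

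For the last line, $\cV\deg\cn\deg\cR_\tau$: by definition of the degradation preorder we must exhibit channels realizing each relation. For $\cV\deg\cn$, take $\Theta=\cn\circ\cV^{-1}$ where $\cV^{-1}(\cdot)=V^\dagger(\cdot)V+\tr[(\Id-VV^\dagger)(\cdot)]\,\omega$ for an arbitrary fixed state $\omega$; this is CPTP and satisfies $\Theta\circ\cV=\cn$ since $\cV^{-1}\circ\cV=\id$. For $\cn\deg\cR_\tau$, take $\Theta'=\cR_\tau$ itself viewed as a channel from the output of $\cn$, since $\cR_\tau\circ\cn=\cR_\tau$ (the composition ignores its input). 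The final clause that this implies the analogous statements for all other partial orders is exactly the content of Proposition~\ref{inclusions} (together with the already-noted implication $\succeq_{\operatorname{deg}}\Rightarrow\succeq_{\operatorname{l.n.}}^{\operatorname{c}}\Rightarrow\cdots$). The only mildly delicate point is writing down a valid CPTP inverse for a non-surjective isometry, which is handled by padding with $\omega$ on the orthogonal complement as above; everything else is routine.
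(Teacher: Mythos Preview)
Your proof is correct and close to the paper's, with one presentational wrinkle and one genuine difference in route.

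For the identity $\eta_\Re(\cn\otimes\cR_\tau)=\eta_\Re(\cn)$, the paper stays in the relative-entropy picture and uses additivity directly: since $(\cn\otimes\cR_\tau)(\rho)=\cn(\rho_{A_1})\otimes\tau$, one has $D((\cn\otimes\cR_\tau)(\rho)\|(\cn\otimes\cR_\tau)(\sigma))=D(\cn(\rho_{A_1})\|\cn(\sigma_{A_1}))$, and both bounds then follow from $D(\rho_{A_1}\|\sigma_{A_1})\le D(\rho\|\sigma)$ (upper bound) and restricting to product inputs (lower bound). You instead go through the mutual-information characterization \eqref{mutualinfo-etarel}; this is equivalent and perfectly valid, but as written your chain $I(U:B_1B_2)=I(U:B_1)\le I(U:A_1)\le I(U:A_1A_2)$ only yields a ratio bounded by $1$. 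What you actually need is
\[
\frac{I(U:B_1B_2)}{I(U:A_1A_2)}=\frac{I(U:B_1)}{I(U:A_1A_2)}\le\frac{I(U:B_1)}{I(U:A_1)}\le\eta_\Re(\cn)\,,
\]
using $I(U:A_1)\le I(U:A_1A_2)$ for the first inequality and the definition of $\eta_\Re(\cn)$ for the second. The ingredients are all in your text, just assembled in the wrong order.

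Your treatment of the degradation line is slightly more careful than the paper's: you explicitly build a CPTP inverse of $\cV$ by padding with a fixed state on the orthogonal complement of $VV^\dagger$, whereas the paper simply writes ``$\cn\circ\cV^{-1}$'' and leaves this implicit.
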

\begin{proof}
Note that 
\begin{align}
D(\tau\|\tau) &= 0, \\
D(\rho\otimes\tau \| \sigma\otimes\tau) &= D(\rho\|\sigma), \\
D(V\rho V^\dagger\| V\sigma V^\dagger) &= D(\rho\|\sigma),
\end{align}
by additivity, faithfulness and data processing of the relative entropy. The proof of the first two lines follows easily from there. The third line is obvious by choosing in the first step $\cn\circ\cV^{-1}$ and in the second step $\cR_\tau$ as the degrading maps. 
\end{proof}

\section{Approximate partial orders and capacities}

In this section, we explore approximations of the partial orders introduced in \Cref{sec:defs}. These are powerful tools which can be used for instance in approximating  capacities of a quantum channel in terms of the ones of another channel when the latter are simpler to compute. 
\subsection{Approximate partial orders}
We recall that the diamond norm distance between two quantum channels $\cM,\cN\in\operatorname{CPTP}(\cH_A, \cH_B)$ is defined as
\begin{align*}
\|\cM-\cN\|_\diamond:=\sup_{R}\sup_{\rho_{AR}}\,\|(\cN\otimes\id_R-\cM\otimes\id_R)(\rho_{AR})\|_1\,.
\end{align*}
Previously, Sutter et. al.~\cite{sutter2017approximate} defined $\epsilon$-approximate degradation as follows. 
\begin{definition}
	A channel $\cn$ is said to be an $\epsilon$-degraded version of $\cm$ if there exists a channel $\Theta$ such that $\| \cn - \Theta\circ\cm \|_\diamond \leq \epsilon$. 
\end{definition}
Now, one can also define approximate versions of other partial orders, as we will do in the following. 
\begin{definition}
	A channel $\cn$ is said to be \textit{$\epsilon$-completely less noisy} than $\cm$ (denoted $\cn \ecln{\epsilon} \cm$) if 
	\begin{align}
	I(U:ER)_{\cm} \leq I(U:BR)_{\cn} +\epsilon \quad \forall \rho_{UAR}\;, 
	\end{align}
	\textit{$\epsilon$-regularized less noisy} than $\cm$ (denoted $\cn \erln{\epsilon} \cm$) if 
	\begin{align}
	I(U:E^n)_{\cm} \leq I(U:B^n)_{\cn} +n\epsilon \quad \forall \rho_{UA_1^n}\;\forall n\;,
	\end{align}
	\textit{$\epsilon$-less noisy} than $\cm$ (denoted $\cn \esln{\epsilon} \cm$) if 
	\begin{align}
	I(U:E)_{\cm} \leq I(U:B)_{\cn} +\epsilon \quad \forall \rho_{UA}\;,
	\end{align}
	\textit{$\epsilon$-fully quantum less noisy} than $\cm$ (denoted $\cn \efqln{\epsilon} \cm$) if 
	\begin{align}
	I(A:E)_{\cm} \leq I(A:B)_{\cn} +\epsilon \quad \forall \rho_{AA'}\;. 
	\end{align}
One defines similarly the approximate versions of the different \textit{more capable} orders introduced in \Cref{sec:defs}. $\epsilon$-anti orders are defined in the same way by  exchanging $\cn$ and $\cm$. 
\end{definition}
Following the same arguments as in the exact case, see Proposition~\ref{inclusions}, we have immediately that 
\begin{alignat}{2}
\cn \ecln{\epsilon} \cm &\;\Rightarrow\;& \cn \erln{\epsilon} \cm &\;\Rightarrow\; \cn \esln{\epsilon} \cm\,. \label{Eqn:chain} \\
\cn \ecln{\epsilon} \cm &\;\Rightarrow\;& \cn \efqln{\epsilon} \cm &\;\Rightarrow\; \cn \esln{\epsilon} \cm\,. \label{Eqn:chain2}
\end{alignat}

We would of course hope that $\eps$-approximate degradability implies the above orders. This is indeed the case as we will see in a moment. To relate the two partial orders we will make use of the continuity bounds introduced in~\cite{audenaert2007sharp} and~\cite{AFW}. In what follows, $h(x):=-x\ln x-(1-x)\ln (1-x)$, $x\in[0,1]$, denotes the binary entropy. 
\begin{lemma}[Continuity bound, Theorem 1 in~\cite{audenaert2007sharp}]\label{Hsharp}
For quantum states $\rho_{A}$ and $\sigma_{A}$, if $\frac12\| \rho - \sigma\|_1 \leq \epsilon \leq 1$, then
\begin{align}
| H(A)_\rho - H(A)_\sigma | \leq \epsilon\log\left( |A|-1\right) + h\left(\epsilon\right)\,.
\end{align}
\end{lemma}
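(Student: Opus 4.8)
The plan is to reduce the matrix inequality to an inequality about probability vectors and then establish the latter by a maximal‑coupling argument.

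First I would use that the von Neumann entropy depends only on the spectrum: writing $d=|A|$, diagonalize so that effectively $\rho=\diag(p)$ and $\sigma=\diag(q)$ for probability vectors $p,q$ on a $d$‑letter alphabet. The one external ingredient here is Mirsky's inequality: if $p^{\downarrow},q^{\downarrow}$ denote the eigenvalues of $\rho,\sigma$ listed in decreasing order, then $\tfrac12\|p^{\downarrow}-q^{\downarrow}\|_1\le\tfrac12\|\rho-\sigma\|_1\le\epsilon$. Thus it suffices to show $|H(p)-H(q)|\le\epsilon\log(d-1)+h(\epsilon)$ whenever $\delta:=\tfrac12\|p-q\|_1\le\epsilon$.

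Next, invoke the maximal coupling: there exist jointly distributed random variables $X\sim p$, $Y\sim q$ with $\Pr[X\ne Y]=\delta$. Set $E:=\ind[X\ne Y]$ and expand $H(X|Y)\le H(XE|Y)=H(E|Y)+H(X|YE)$. By concavity of the binary entropy, $H(E|Y)=\sum_y q(y)\,h(\Pr[X\ne y\mid Y=y])\le h(\delta)$; and since $X$ is determined by $Y$ on $\{E=0\}$ and takes at most $d-1$ values on $\{E=1\}$, one gets $H(X|YE)\le\delta\log(d-1)$. Hence $H(X|Y)\le h(\delta)+\delta\log(d-1)$, and symmetrically $H(Y|X)\le h(\delta)+\delta\log(d-1)$. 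Since $H(X)-H(Y)=H(X|Y)-H(Y|X)$ and both conditional entropies are nonnegative, $|H(p)-H(q)|=|H(X)-H(Y)|\le h(\delta)+\delta\log(d-1)$.

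Finally, to replace $\delta$ by the upper bound $\epsilon$, I would note that $t\mapsto h(t)+t\log(d-1)$ is nondecreasing on $[0,1-1/d]$ — it is the entropy of $\bigl(1-t,\tfrac{t}{d-1},\dots,\tfrac{t}{d-1}\bigr)$, maximized at the uniform point $t=1-1/d$ — which covers the range relevant to the statement and yields the claimed bound. The main subtlety I expect is precisely this monotonicity bookkeeping in passing from $\delta$ to $\epsilon$, together with correctly invoking the diagonal reduction via Mirsky's inequality; the coupling computation itself is routine once set up. An alternative to the coupling step is Audenaert's original extreme‑point analysis of the polytope $\{(p,q):\tfrac12\|p-q\|_1\le\epsilon\}$, but the coupling route is shorter and self‑contained.
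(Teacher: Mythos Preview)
The paper does not actually prove this lemma: it is quoted verbatim from Audenaert~\cite{audenaert2007sharp} and used as a black box. So there is no ``paper's approach'' to compare against beyond the citation itself.

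Your argument is correct and is one of the standard self-contained proofs of the Fannes--Audenaert bound. The reduction to sorted eigenvalue vectors via Mirsky's inequality is the right move, and the coupling computation is clean: $E$ is a function of $(X,Y)$ so $H(XE|Y)=H(X|Y)$, the concavity bound $H(E|Y)\le h(\delta)$ is Jensen, and the cardinality bound on $\{E=1\}$ gives the $\delta\log(d-1)$ term. You are also right to flag the monotonicity of $t\mapsto h(t)+t\log(d-1)$ as the only delicate point; strictly speaking this function is increasing only on $[0,1-1/d]$, so the replacement $\delta\rightsquigarrow\epsilon$ is justified there. (For $\epsilon>1-1/d$ the inequality as literally stated in the lemma can fail---e.g.\ $d=3$, $\rho=|0\rangle\langle0|$, $\sigma=I/3$ has trace distance $2/3$ and entropy gap $\ln 3$, which exceeds $h(\epsilon)+\epsilon\ln 2$ for $\epsilon$ slightly above $2/3$---so this is a minor imprecision in the lemma statement rather than in your proof.) Audenaert's original argument proceeds instead by an extreme-point analysis of the constraint set, as you note; the coupling route is shorter and arguably more transparent, at the cost of importing the maximal-coupling construction.
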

\begin{lemma}[Alicki-Fannes-Winter bound, Lemma 2 in~\cite{AFW}]\label{AFW}
For quantum states $\rho_{AB}$ and $\sigma_{AB}$, if $\frac12\| \rho - \sigma\|_1 \leq \epsilon \leq 1$, then
\begin{align}
| H(A|B)_\rho - H(A|B)_\sigma | \leq 2\epsilon\log|A| + (1+\epsilon) \,h\Big(\frac{\epsilon}{1+\epsilon}\Big)\,,
\end{align}
where we recall that the conditional entropy of a state $\rho_{AB}$ is defined as $H(A|B)_\rho:=H(AB)_\rho-H(B)_\rho$.
\end{lemma}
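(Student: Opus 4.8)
The statement is the standard Alicki--Fannes--Winter continuity bound, so the plan is to reproduce Winter's coupling (``flagged state'') proof rather than treat it as a black box. Write $\epsilon' := \tfrac12\|\rho-\sigma\|_1 \le \epsilon$; if $\epsilon' = 0$ then $\rho=\sigma$ and there is nothing to prove, so assume $\epsilon'>0$. First I would record that the right-hand side, viewed as a function $g(t) = 2t\log|A| + (1+t)h\!\left(\tfrac{t}{1+t}\right)$, is nondecreasing on $[0,1]$ (its derivative is $2\log|A| + \log\tfrac{1+t}{t}\ge 0$), so it suffices to prove the inequality with $\epsilon$ replaced by the exact value $\epsilon'$. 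By the Jordan--Hahn decomposition of $\rho-\sigma$ into its positive and negative parts (which have equal trace $\epsilon'$), there are states $\tau^+_{AB},\tau^-_{AB}$ with $\rho-\sigma = \epsilon'(\tau^+-\tau^-)$; equivalently $\rho+\epsilon'\tau^- = \sigma+\epsilon'\tau^+$, so that
\begin{align*}
\Lambda_{AB} := \frac{1}{1+\epsilon'}\big(\rho_{AB}+\epsilon'\tau^-_{AB}\big) = \frac{1}{1+\epsilon'}\big(\sigma_{AB}+\epsilon'\tau^+_{AB}\big)
\end{align*}
is a valid state.

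Next I would introduce a classical flag qubit $X$ and the two states
\begin{align*}
\hat\rho_{ABX} := \tfrac{1}{1+\epsilon'}\rho_{AB}\otimes|0\rangle\langle0|_X + \tfrac{\epsilon'}{1+\epsilon'}\tau^-_{AB}\otimes|1\rangle\langle1|_X,\qquad \hat\sigma_{ABX} := \tfrac{1}{1+\epsilon'}\sigma_{AB}\otimes|0\rangle\langle0|_X + \tfrac{\epsilon'}{1+\epsilon'}\tau^+_{AB}\otimes|1\rangle\langle1|_X,
\end{align*}
both of which have $AB$-marginal $\Lambda_{AB}$. Because $X$ is classical, $H(A|BX)_{\hat\rho} = \tfrac{1}{1+\epsilon'}H(A|B)_\rho + \tfrac{\epsilon'}{1+\epsilon'}H(A|B)_{\tau^-}$ and likewise for $\hat\sigma$ with $\sigma,\tau^+$. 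Two applications of strong subadditivity then supply the needed inequalities: conditioning reduces entropy gives $H(A|BX)\le H(A|B)_\Lambda$ for both states, and in the other direction $H(A|B)_\Lambda - H(A|BX)_{\hat\rho} = I(A:X|B)_{\hat\rho} \le H(X)_{\hat\rho} = h\!\left(\tfrac{\epsilon'}{1+\epsilon'}\right)$, where the last bound again uses that $X$ is classical (so $H(X|AB)\ge 0$) and subadditivity. Eliminating $H(A|B)_\Lambda$ between the two chains yields
\begin{align*}
H(A|B)_\sigma - H(A|B)_\rho \le (1+\epsilon')\,h\!\left(\tfrac{\epsilon'}{1+\epsilon'}\right) + \epsilon'\big(H(A|B)_{\tau^-}-H(A|B)_{\tau^+}\big) \le 2\epsilon'\log|A| + (1+\epsilon')\,h\!\left(\tfrac{\epsilon'}{1+\epsilon'}\right),
\end{align*}
the final step using $|H(A|B)_\omega|\le\log|A|$ for any $\omega_{AB}$. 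Since the construction is symmetric under $\rho\leftrightarrow\sigma$ (which swaps $\tau^+\leftrightarrow\tau^-$), the same bound holds for $H(A|B)_\rho - H(A|B)_\sigma$, and the monotonicity of $g$ upgrades $\epsilon'$ to $\epsilon$, giving the claim.

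There is no genuine obstacle here — this is a textbook argument, and in the context of the present paper one could simply cite~\cite{AFW}. The only points deserving care are the \emph{direction} of the conditional-mutual-information bound $I(A:X|B)\le H(X)$ (this is exactly why $X$ must be the classical flag and not $A$ or $B$), and the elementary monotonicity of the right-hand side in $\epsilon$, which is what lets one pass from the exact trace distance $\epsilon'$ to the hypothesis $\epsilon\ge\epsilon'$.
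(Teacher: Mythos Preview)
Your proof is correct and is precisely Winter's original ``flagged state'' argument from~\cite{AFW}. The paper itself does not prove this lemma at all---it is stated as a black-box citation of Lemma~2 in~\cite{AFW} and used downstream---so you have actually supplied strictly more than the paper does.
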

This allows us to directly show the following relation. 
\begin{lemma}\label{Lemma:eps-c-LN}
If $\cM:A\to B$ is an $\epsilon$-degraded version of $\cn:A\to B'$ then $\cn$ is $\tilde\epsilon$-completely less noisy than $\cm$, with $$\tilde\epsilon=2\epsilon\log|B| + (2+\epsilon)\,h\Big(\frac{\epsilon}{2+\epsilon}\Big)\,.$$ 
Therefore,
\begin{align*}
\|	\cM-\Theta\circ \cN\|_\diamond\le \eps\;\Rightarrow\;\cn \ecln{\tilde\epsilon} \cm \;\Rightarrow\; \cn \erln{\tilde\epsilon} \cm \;\Rightarrow\; \cn \esln{\tilde\epsilon} \cm\,.
	\end{align*}
\end{lemma}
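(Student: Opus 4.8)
The plan is to reduce the statement $\cn \ecln{\tilde\epsilon}\cm$ to a bound on mutual informations using the $\epsilon$-degradability hypothesis together with the Alicki-Fannes-Winter continuity estimate. By definition of $\epsilon$-degradability there is a channel $\Theta$ with $\|\cm-\Theta\circ\cn\|_\diamond\le\epsilon$. Fix an arbitrary c-q state $\rho_{UAR}$; I want to compare $I(U:ER)$ evaluated on $(\id_{UR}\otimes\cm)(\rho_{UAR})$ with $I(U:BR)$ evaluated on $(\id_{UR}\otimes\cn)(\rho_{UAR})$. Writing $\omega:=(\id_{UR}\otimes\cn)(\rho_{UAR})$ and $\omega':=(\id_{UR}\otimes\Theta\circ\cn)(\rho_{UAR})=(\id_{UR}\otimes\Theta)(\omega)$, the definition of the diamond norm gives $\tfrac12\|\,(\id_{UR}\otimes\cm)(\rho_{UAR})-\omega'\,\|_1\le\tfrac{\epsilon}{2}$, since the diamond norm already takes the supremum over all reference systems (here the reference for the channel is $UR$, and we may absorb $U$ and $R$ together into one system).

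The key steps, in order, are: (i) express everything via conditional entropies, using $I(U:ER)=H(U)-H(U|ER)$ so that the $H(U)$ terms cancel when we compare the two channels applied to the \emph{same} input $\rho_{UAR}$; this leaves us needing $H(U|ER)_{(\id\otimes\cm)(\rho)}\ge H(U|B R)_\omega-\tilde\epsilon$. (ii) Apply data processing of the conditional entropy (equivalently, of the mutual information) to the channel $\Theta$: since $\omega'$ is obtained from $\omega$ by acting with $\id_{UR}\otimes\Theta$ on the $B$ system, $H(U|B R)_\omega\le H(U| E R)_{\omega'}$, i.e. processing can only increase the conditional entropy of $U$ given the rest. (iii) Apply \Cref{AFW} to the two states $(\id_{UR}\otimes\cm)(\rho_{UAR})$ and $\omega'$, which are $\tfrac{\epsilon}{2}$-close in trace distance on the system $U|ER$; this yields $|H(U|ER)_{(\id\otimes\cm)(\rho)}-H(U|ER)_{\omega'}|\le 2\cdot\tfrac{\epsilon}{2}\log|U|+(1+\tfrac{\epsilon}{2})h\big(\tfrac{\epsilon/2}{1+\epsilon/2}\big)$. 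Chaining (ii) and (iii) gives the desired inequality with an error term, and then I massage the constant: note $|U|$ is unbounded, so this naive route gives a $U$-dependent bound — the fix is to instead apply AFW with $U$ playing the role of the conditioning system, i.e. bound $H(E R|U)$ differences, or equivalently bound the differences of $I(U:ER)$ directly via AFW in the form that depends only on $\log|B|$ (the dimension of the \emph{quantum output} of $\Theta$, equivalently of $\cm$), which is exactly what the stated constant $\tilde\epsilon=2\epsilon\log|B|+(2+\epsilon)h(\tfrac{\epsilon}{2+\epsilon})$ reflects. Concretely: $I(U:ER)_{(\id\otimes\cm)(\rho)}-I(U:ER)_{\omega'}$ is controlled by AFW applied with conditioning on $U$ together with $R$; the contribution of $R$ cancels since $R$ is untouched, leaving only $2\cdot\tfrac{\epsilon}{2}\cdot\log|B|$ plus the binary-entropy remainder, and one checks $2\cdot\tfrac{\epsilon}{2}=\epsilon$ and $(1+\tfrac{\epsilon}{2})h(\tfrac{\epsilon/2}{1+\epsilon/2})=(2+\epsilon)h(\tfrac{\epsilon}{2+\epsilon})/1$ after clearing the factor of $\tfrac12$ inside — I would verify this bookkeeping carefully since it is exactly where the stated constant comes from. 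Finally, combine with data processing through $\Theta$ as in (ii) to conclude $I(U:ER)_\cm\le I(U:BR)_\cn+\tilde\epsilon$ for all $\rho_{UAR}$, which is precisely $\cn\ecln{\tilde\epsilon}\cm$; the two further implications then follow immediately from \eqref{Eqn:chain}.

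The main obstacle I anticipate is the dimension dependence in the continuity bound: a careless application of \Cref{AFW} introduces $\log|U|$, which is unacceptable since $U$ ranges over arbitrary classical registers. The resolution — applying the Alicki-Fannes-Winter bound in the form that conditions on $U$ rather than on the quantum system, so the only surviving dimension factor is $\log|B|$ — is the crux of the argument, and getting the precise form of $\tilde\epsilon$ with the factor $(2+\epsilon)h(\tfrac{\epsilon}{2+\epsilon})$ requires tracking the $\tfrac12$ in the hypothesis $\tfrac12\|\rho-\sigma\|_1\le\epsilon$ versus the $\|\cdot\|_\diamond\le\epsilon$ normalization. Everything else is a routine combination of data processing and the chain rule.
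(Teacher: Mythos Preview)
Your approach is essentially the paper's: use data processing through $\Theta$ to pass from $I(U:B'R)_{\cn(\rho)}$ to $I(U:BR)_{\Theta\circ\cn(\rho)}$, then apply the Alicki--Fannes--Winter bound to compare the latter with $I(U:BR)_{\cm(\rho)}$, arranging the conditioning so that only $\log|B|$ appears. The one place your sketch needs tightening is the bookkeeping: writing $\tau=(\id\otimes\cm)(\rho)$ and $\sigma=(\id\otimes\Theta\circ\cn)(\rho)$, the paper uses $\tau_{UR}=\sigma_{UR}$ to rewrite
\[
I(U:BR)_\tau-I(U:BR)_\sigma=\bigl[H(B|R)_\tau-H(B|R)_\sigma\bigr]-\bigl[H(B|UR)_\tau-H(B|UR)_\sigma\bigr],
\]
and then applies \Cref{AFW} \emph{twice} (once per bracket, each with trace-distance parameter $\epsilon/2$), yielding $\epsilon\log|B|+\tfrac{2+\epsilon}{2}h\bigl(\tfrac{\epsilon}{2+\epsilon}\bigr)$ per application and hence exactly $\tilde\epsilon$ in total. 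Your ``$R$ cancels'' is correct, but it is this decomposition that makes it precise; applying AFW directly to $H(BR|U)$ would drag in $\log|R|$.
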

\begin{proof}
We assume that $\|\cM-\Theta\circ \cN\|_\diamond\le \eps$ and start with an input state $\rho_{UAR}$. Define $\sigma_{UBR}= (\Theta\circ\cn)\otimes \id(\rho_{UAR})$ and $\tau_{UBR}= \cm\otimes \id(\rho_{UAR})$. First, observe the following:
	\begin{align*}
I(U:BR)_{\cM(\rho)}-I(U:B'R)_{\cN(\rho)}&\le 	I(U: BR)_\tau-I(U : BR)_\sigma \\
&=  -H(BR)_\sigma+H(UBR)_\sigma - H(UBR)_\tau+H(BR)_\tau  \\
	&= -H(B|R)_\sigma +H(B|R)_\tau   - H(B|UR)_\tau + H(B|UR)_\sigma \\
	&\leq 2\epsilon\log|B| + (2+\epsilon)\,h\Big(\frac{\epsilon}{2+\epsilon}\Big) \\
	&=: \tilde\epsilon\,,
	\end{align*}
	where the first inequality follows by data-processing to remove the degrading map. The first equality follows by the definition of the mutual information, the second by adding a zero twice and the fact that $\tau_{UR}=\sigma_{UR}$ and the last inequality by applying Lemma~\ref{AFW} twice. The second claim follows from \eqref{Eqn:chain}.
\end{proof}

In the same manner, one can go via the fully quantum less noisy order. 
\begin{lemma}\label{Lemma:epsFQLN1}
If $\cM:A\to B$ is an $\epsilon$-degraded version of $\cn:A\to B'$ then $\cn$ is $\hat\epsilon$-fully quantum less noisy than $\cm$, with $$\hat\epsilon= \frac12\epsilon\log(|B|-1) +\epsilon\log|B| + (1+\frac\epsilon{2})\,h\Big(\frac{\epsilon}{2+\epsilon}\Big) + h\left(\frac{\epsilon}2\right)\,.$$ 
Therefore,
\begin{align*}
\|	\cM-\Theta\circ \cN\|_\diamond\le \eps\;\Rightarrow\;\cn \efqln{\hat\epsilon} \cm \;\Rightarrow\; \cn \esln{\hat\epsilon} \cm\,.
	\end{align*}
\end{lemma}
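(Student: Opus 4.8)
The plan is to mimic the structure of the proof of Lemma~\ref{Lemma:eps-c-LN}, but now working with the mutual information $I(A:B)$ for a \emph{quantum} system $A$ (with a purifying reference $A'$) rather than $I(U:BR)$ with a classical register. First I would fix a bipartite input state $\rho_{AA'}$ and, as before, set $\sigma_{AB'}=(\Theta\circ\cn)\otimes\id_{A'}(\rho_{AA'})$ and $\tau_{AB}=\cm\otimes\id_{A'}(\rho_{AA'})$. Data processing for the degrading map $\Theta$ gives $I(A:E)_{\cm(\rho)}-I(A:B')_{\cn(\rho)}\le I(A:B)_\tau - I(A:B)_\sigma$, so everything reduces to bounding the difference of two mutual informations between the close states $\sigma$ and $\tau$, which satisfy $\tfrac12\|\sigma-\tau\|_1\le\tfrac\epsilon2$ (since $\|\cm-\Theta\circ\cn\|_\diamond\le\epsilon$).

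Next I would expand $I(A:B)=H(A)+H(B)-H(AB)$ and use that $\sigma_A=\tau_A=\rho_A$, so the $H(A)$ terms cancel and only $|H(B)_\sigma-H(B)_\tau|$ and $|H(AB)_\sigma-H(AB)_\tau|$ remain. For the first term I would apply Lemma~\ref{Hsharp} (the Audenaert bound) on the $B$ system, giving $\tfrac\epsilon2\log(|B|-1)+h(\tfrac\epsilon2)$. For the second term, rather than using Audenaert on the joint system $AB$ (whose dimension involves $|A|$, which is unbounded as $A'$ is arbitrary and $A$ can be large), I would instead write $H(AB)=H(B)+H(A|B)$ and bound $|H(A|B)_\sigma-H(A|B)_\tau|$ by the Alicki--Fannes--Winter bound Lemma~\ref{AFW}. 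The AFW bound is stated with $\log|A|$ in front, which is again bad; so the key trick is to use symmetry of conditional mutual information and note $H(A|B)-H(A|B')$-type quantities can be rephrased via the complementary/purifying system. Concretely, because $\rho_{AA'}$ can be taken pure (it suffices to check the fully quantum less noisy condition on pure states — this is exactly the more-capable-style reduction, and mutual information is concave so the supremum over mixed $\rho_{AA'}$ is attained on pure states after convexity/purification arguments), the output states $\sigma_{AB'A'}$ and $\tau_{ABA'}$ have purifications on a small environment, letting one trade the conditioning system so that AFW is applied with $\log$ of the \emph{output} dimension $|B|$ rather than $|A|$.

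Assembling the pieces: $|H(B)_\sigma-H(B)_\tau|\le \tfrac\epsilon2\log(|B|-1)+h(\tfrac\epsilon2)$ from Audenaert, and $|H(A|B)_\sigma-H(A|B)_\tau|\le \epsilon\log|B|+(1+\tfrac\epsilon2)h(\tfrac{\epsilon}{2+\epsilon})$ from AFW applied with the parameter $\tfrac\epsilon2$ and with the roles arranged so that the dimension factor is $|B|$; adding these two contributions yields exactly
\[
\hat\epsilon= \tfrac12\epsilon\log(|B|-1) +\epsilon\log|B| + \big(1+\tfrac\epsilon{2}\big)\,h\Big(\tfrac{\epsilon}{2+\epsilon}\Big) + h\big(\tfrac{\epsilon}2\big),
\]
which is the claimed constant. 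The second implication $\cn\efqln{\hat\epsilon}\cm\Rightarrow\cn\esln{\hat\epsilon}\cm$ is then immediate from \eqref{Eqn:chain2}.

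The main obstacle I anticipate is getting the dimension factor in the $H(AB)$ term to be $|B|$ and not $|A|$: a naive application of a continuity bound to the joint state $\sigma_{AB}$ versus $\tau_{AB}$ produces $\log(|A||B|)$, which is useless since $A$ (and its reference) is unconstrained. Resolving this requires the observation that it suffices to test the fully quantum less noisy relation on pure states $\Psi_{AA'}$ (so that the full global state is pure and conditional entropies can be flipped to the complementary system via $H(A|B)_{\psi}=-H(A|E)_\psi$ for a purification), together with careful tracking of which continuity bound (Audenaert vs.\ AFW) is applied to which entropy term so that only $|B|$ appears. Once that bookkeeping is done correctly the rest is the same routine ``add a zero, apply data processing, apply a continuity bound'' pattern used in Lemma~\ref{Lemma:eps-c-LN}.
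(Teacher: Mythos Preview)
Your overall strategy and your identification of the obstacle are correct, but the proposed resolution is both unnecessary and not properly justified. The reduction to pure $\rho_{AA'}$ via ``mutual information is concave'' is false as stated—$I(A:B)$ is neither convex nor concave in $\rho_{AB}$ in general—and the purification trick you sketch for trading $\log|A|$ for $\log|B|$ is never made precise (the environments of $\cm$ and of $\Theta\circ\cn$ are different systems, so there is no common purifying register to flip to). Moreover, with your decomposition $H(AB)=H(B)+H(A|B)$, the $H(B)$ contributions actually cancel against the other $H(B)$ term in $I(A:B)=H(A)+H(B)-H(AB)$, leaving only $H(A|B)_\sigma-H(A|B)_\tau$; so your ``assembling the pieces'' double-counts and would not add up to $\hat\epsilon$ the way you claim.

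The paper avoids all of this with a one-line observation: simply condition on $A$ rather than on $B$. Writing $I(A:B)=H(B)-H(B|A)$ and using $\sigma_A=\tau_A$ gives
\[
I(A:B)_\tau-I(A:B)_\sigma=\big(H(B)_\tau-H(B)_\sigma\big)+\big(H(B|A)_\sigma-H(B|A)_\tau\big).
\]
Now Lemma~\ref{Hsharp} bounds the first bracket by $\tfrac\eps2\log(|B|-1)+h(\tfrac\eps2)$, and Lemma~\ref{AFW} applied to $H(B|A)$ (with trace-distance parameter $\tfrac\eps2$) bounds the second by $\eps\log|B|+(1+\tfrac\eps2)h\big(\tfrac{\eps}{2+\eps}\big)$, since in $H(B|A)$ the dimension appearing in AFW is that of $B$, not $A$. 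Summing gives exactly $\hat\eps$. No restriction to pure inputs and no purification argument is needed.
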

\begin{proof}
We assume that $\|\cM-\Theta\circ \cN\|_\diamond\le \eps$ and start with an input state $\rho_{AA'}$. Define $\sigma_{AB}= (\Theta\circ\cn)\otimes \id(\rho_{AA'})$ and $\tau_{AB}= \cm\otimes \id(\rho_{AA'})$. First, observe the following:
	\begin{align*}
I(A:B)_{\cM(\rho)}-I(A:B')_{\cN(\rho)}&\le 	I(A: B)_\tau-I(A : B)_\sigma \\
&=  -H(B)_\sigma+H(AB)_\sigma - H(AB)_\tau+H(B)_\tau  \\
	&= -H(B)_\sigma +H(B)_\tau   - H(B|A)_\tau + H(B|A)_\sigma \\
	&\leq \frac12\epsilon\log(|B|-1) +\epsilon\log|B| + (1+\frac\epsilon{2})\,h\Big(\frac{\epsilon}{2+\epsilon}\Big) + h\left(\frac{\epsilon}2\right)\\
	&=: \hat\epsilon\,,
	\end{align*}
	where the first inequality follows by data-processing to remove the degrading map, the first equality is by definition of the mutual information, the second by adding a zero once and the fact that $\tau_{A}=\sigma_{A}$, the last inequality by applying Lemma~\ref{Hsharp} and Lemma~\ref{AFW}. The second claim follows from \eqref{Eqn:chain}.
\end{proof}
Clearly, the diamond norm is symmetric under exchanging the arguments. This motivates the following definition. 
\begin{definition}
Two channels $\cn$ and $\cm$ are said to be comparably $\epsilon$-completely less noisy if $\cn \ecln{\epsilon} \cm$ and also $\cm \ecln{\epsilon} \cn$, and similarly for comparably $\epsilon$- regularized less noisy and comparably $\epsilon$- less noisy.
\end{definition}
\begin{corollary}\label{Lemma:epsDegCompLN}
Assuming $\cN,\cM:A\to B$, if $\|\cm - \cn\|_\diamond\leq\epsilon$, then $\cm$ and $\cn$ are comparably $\hat\epsilon$-completely less noisy, with $\hat\epsilon=2\epsilon\log|B| + (2+\epsilon)h\big(\frac{\epsilon}{2+\epsilon}\big)$. 
\end{corollary}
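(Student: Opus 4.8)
The plan is to reduce this to the one-sided statement already established in Lemma~\ref{Lemma:eps-c-LN}. Recall that $\epsilon$-degradability of $\cn$ into $\cm$ (i.e. $\|\cm - \Theta\circ\cn\|_\diamond \le \epsilon$ for some channel $\Theta$) was shown there to imply $\cn \ecln{\tilde\epsilon}\cm$ with $\tilde\epsilon = 2\epsilon\log|B| + (2+\epsilon)h\big(\tfrac{\epsilon}{2+\epsilon}\big)$. The key observation is that when $\cm$ and $\cn$ have the \emph{same} output space and $\|\cm - \cn\|_\diamond \le \epsilon$, we may take $\Theta = \id_B$ in both directions: $\|\cm - \id\circ\cn\|_\diamond = \|\cm-\cn\|_\diamond \le \epsilon$ shows $\cn$ is an $\epsilon$-degraded version of $\cm$, and by symmetry of the diamond norm, $\|\cn - \id\circ\cm\|_\diamond \le \epsilon$ shows $\cm$ is an $\epsilon$-degraded version of $\cn$.

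Concretely, I would argue as follows. Apply Lemma~\ref{Lemma:eps-c-LN} with the trivial degrading channel $\Theta = \id_B$ and with the roles ``$\cM$'' $=\cm$, ``$\cN$'' $=\cn$: the hypothesis $\|\cm - \id_B\circ\cn\|_\diamond = \|\cm-\cn\|_\diamond \le \epsilon$ is satisfied, so we obtain $\cn \ecln{\hat\epsilon}\cm$ with $\hat\epsilon = 2\epsilon\log|B| + (2+\epsilon)h\big(\tfrac{\epsilon}{2+\epsilon}\big)$. Then swap the roles: since $\|\cn - \id_B\circ\cm\|_\diamond = \|\cn-\cm\|_\diamond = \|\cm-\cn\|_\diamond \le \epsilon$ (using that the diamond norm is symmetric in its arguments), Lemma~\ref{Lemma:eps-c-LN} applied again — now with ``$\cM$'' $=\cn$, ``$\cN$'' $=\cm$ — yields $\cm \ecln{\hat\epsilon}\cn$ with the same constant $\hat\epsilon$, because the bound depends only on $\epsilon$ and $|B|$, which are unchanged under the swap. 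Having both $\cn \ecln{\hat\epsilon}\cm$ and $\cm \ecln{\hat\epsilon}\cn$ is exactly the definition of $\cm$ and $\cn$ being comparably $\hat\epsilon$-completely less noisy.

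There is essentially no obstacle here — the statement is a corollary precisely because it is a direct two-fold application of the preceding lemma with $\Theta=\id$. The only small point worth stating explicitly is that the constant $\tilde\epsilon$ in Lemma~\ref{Lemma:eps-c-LN} is symmetric in the sense that it does not reference which channel plays the role of $\cM$ versus $\cN$ (it only sees the common output dimension $|B|$ and the approximation parameter $\epsilon$), so the same $\hat\epsilon$ works in both directions and no worse constant is incurred. One could write this out in two short lines, and I would keep the proof to essentially that: invoke Lemma~\ref{Lemma:eps-c-LN} twice with $\Theta=\id_B$, once in each direction, and conclude by the definition of comparable $\epsilon$-complete less noisiness.

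\begin{proof}
Since $\cN,\cM:A\to B$ share the same output space, the identity channel $\id_B$ is a valid degrading map. From $\|\cm - \id_B\circ\cn\|_\diamond = \|\cm - \cn\|_\diamond \le \epsilon$ we see that $\cn$ is an $\epsilon$-degraded version of $\cm$, so Lemma~\ref{Lemma:eps-c-LN} gives $\cn \ecln{\hat\epsilon} \cm$ with $\hat\epsilon = 2\epsilon\log|B| + (2+\epsilon)h\big(\tfrac{\epsilon}{2+\epsilon}\big)$. By symmetry of the diamond norm, $\|\cn - \id_B\circ\cm\|_\diamond = \|\cm - \cn\|_\diamond \le \epsilon$, so $\cm$ is an $\epsilon$-degraded version of $\cn$; applying Lemma~\ref{Lemma:eps-c-LN} again, now with the roles of $\cm$ and $\cn$ exchanged, yields $\cm \ecln{\hat\epsilon} \cn$ with the same constant $\hat\epsilon$, as the bound depends only on $\epsilon$ and $|B|$. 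Having both $\cn \ecln{\hat\epsilon}\cm$ and $\cm \ecln{\hat\epsilon}\cn$ is precisely the statement that $\cm$ and $\cn$ are comparably $\hat\epsilon$-completely less noisy.
\end{proof}
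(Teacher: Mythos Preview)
Your proof is correct and follows exactly the paper's approach: invoke Lemma~\ref{Lemma:eps-c-LN} with $\Theta=\id_B$ in each direction, using the symmetry of the diamond norm. (One minor wording slip: $\|\cm - \id_B\circ\cn\|_\diamond\le\epsilon$ means, in the paper's convention, that $\cm$ is an $\epsilon$-degraded version of $\cn$, not the other way around---but your application of the lemma and the resulting conclusions are correct.)
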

\begin{proof}
This follows easily by Lemma~\ref{Lemma:eps-c-LN} and the symmetry of the diamond norm. 
\end{proof}
As a special case, of course also the non-complete version $(n=1)$ is implied. Furthermore, it can be easily seen that $\hat\epsilon$-completely less noisy implies $\hat\epsilon$-completely more capable. Again, the same holds for the non-complete case. In a similar way, we can directly derive a few other Lemmas. 
\begin{lemma}
	If $\cm$ is an $\epsilon$-anti-degraded version of $\cn$, then $\cm$ is $\hat\epsilon$-anti-completely less noisy than $\cn$, with $\hat\epsilon=2\epsilon\log|E| + (1+\frac{\epsilon}{2})h\big(\frac{\epsilon}{2+\epsilon}\big)$. 
\end{lemma}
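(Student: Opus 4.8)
The plan is to mirror the proof of Lemma~\ref{Lemma:eps-c-LN} almost verbatim, but with the roles of $\cn$ and $\cm$ swapped to reflect the ``anti'' convention, and working on the environment systems rather than the output systems. Recall that $\cm$ being an $\epsilon$-anti-degraded version of $\cn$ means there is a degrading channel $\Theta$ with $\|\cn-\Theta\circ\cm\|_\diamond\le\epsilon$ (equivalently $\cn$ is an $\epsilon$-degraded version of $\cm$ in the sense of the earlier definition). We want the anti-completely-less-noisy statement $\cm\succeq^{\operatorname{c},\hat\epsilon}_{\operatorname{anti-l.n.}}\cn$, i.e. for every c-q state $\rho_{UAR}$,
\begin{align*}
I(U:BR)_{\cn(\rho)}\le I(U:B'R)_{\cm(\rho)}+\hat\epsilon\,,
\end{align*}
where now $B$ is the output of $\cn$ and $B'$ is the output of $\cm$ (I would fix notation so that $|E|$ in the statement is the dimension of $\cm$'s output system, which is what plays the role of the ``intermediate'' system here).

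First I would fix an arbitrary c-q input state $\rho_{UAR}$ and define $\sigma_{UB'R}:=(\Theta\circ\cm)\otimes\id_{R}(\rho_{UAR})$ and $\tau_{UB'R}:=\cn\otimes\id_R(\rho_{UAR})$; here $B'$ and $B$ are the same system up to the relabeling coming from $\Theta$. Since $\|\cn-\Theta\circ\cm\|_\diamond\le\epsilon$, we have $\frac12\|\sigma_{UB'R}-\tau_{UB'R}\|_1\le\epsilon/2$ on the nose (the diamond norm already supremizes over the reference $UR$). Next, by data processing applied to $\Theta$ (to remove the degrading map), $I(U:BR)_{\cn(\rho)}=I(U:B'R)_{\tau}\le I(U:B'R)_{\sigma}+\big(I(U:B'R)_\tau-I(U:B'R)_\sigma\big)$, and crucially $I(U:B'R)_\sigma\le I(U:B'R)_{\cm(\rho)}$ by another application of data processing for $\Theta$ — wait, that goes the wrong way, so instead I would bound $I(U:BR)_{\cn(\rho)}\le I(U:B'R)_\tau$ directly and write $I(U:B'R)_\tau - I(U:B'R)_{\cm(\rho)} = \big(I(U:B'R)_\tau-I(U:B'R)_\sigma\big) + \big(I(U:B'R)_\sigma - I(U:B'R)_{\cm(\rho)}\big)$; the second bracket is $\le 0$ by DPI for $\Theta$, and the first bracket is what we estimate by continuity. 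Expanding the mutual information as $-H(B'R)+H(UB'R)$ and using $\sigma_{UR}=\tau_{UR}$, the first bracket equals $\big(H(B'|R)_\tau-H(B'|R)_\sigma\big)+\big(H(B'|UR)_\sigma-H(B'|UR)_\tau\big)$, and applying the Alicki--Fannes--Winter bound (Lemma~\ref{AFW}) to each conditional-entropy difference with trace-distance parameter $\epsilon/2$ gives a total of $4\cdot\frac\epsilon2\log|B'| + 2(1+\frac\epsilon2)h\!\big(\frac{\epsilon/2}{1+\epsilon/2}\big)=2\epsilon\log|E|+(2+\epsilon)h\!\big(\frac{\epsilon}{2+\epsilon}\big)$ — which is not quite the claimed $\hat\epsilon$, so I would re-examine whether one of the two differences can be handled by the sharper bound of Lemma~\ref{Hsharp} (when the $U$-system contributes nothing) to shave it down to $2\epsilon\log|E|+(1+\frac\epsilon2)h\!\big(\frac{\epsilon}{2+\epsilon}\big)$, as in the proof of Lemma~\ref{Lemma:epsFQLN1}.

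The main obstacle I anticipate is precisely bookkeeping the constants to land on the stated $\hat\epsilon=2\epsilon\log|E|+(1+\frac{\epsilon}{2})h\big(\frac{\epsilon}{2+\epsilon}\big)$: this requires recognizing that one of the two entropy-difference terms is an unconditioned difference (or a difference involving only a classical conditioning register) so that the tighter Audenaert bound applies there while AFW is used only once, exactly paralleling how the bound in Lemma~\ref{Lemma:epsFQLN1} is assembled. Everything else — fixing the input state, invoking the diamond-norm inequality to pass to trace distance on the extended state, removing $\Theta$ by data processing, and the algebraic expansion of mutual informations using the equality of the relevant marginals — is routine and identical in structure to Lemma~\ref{Lemma:eps-c-LN}. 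I would also note explicitly, as the paper does after Corollary~\ref{Lemma:epsDegCompLN}, that this in particular yields the $n=1$ ($\hat\epsilon$-anti-less-noisy) statement and the corresponding anti-more-capable statement, and that the chain of implications analogous to \eqref{Eqn:chain} applies.
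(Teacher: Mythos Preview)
Your approach is exactly the paper's: its entire proof reads ``The proof follows by simply exchanging the roles of $\cm$ and $\cn$ in the proof of Lemma~\ref{Lemma:eps-c-LN}.'' Your first computation, applying the Alicki--Fannes--Winter bound (Lemma~\ref{AFW}) twice to the two conditional-entropy differences $H(B'|R)_\tau-H(B'|R)_\sigma$ and $H(B'|UR)_\sigma-H(B'|UR)_\tau$, is precisely that argument and yields $2\epsilon\log|E|+(2+\epsilon)\,h\!\big(\tfrac{\epsilon}{2+\epsilon}\big)$.

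The discrepancy you flag with the stated $\hat\epsilon=2\epsilon\log|E|+(1+\tfrac{\epsilon}{2})\,h\!\big(\tfrac{\epsilon}{2+\epsilon}\big)$ is a typo in the paper's statement, not a gap in your reasoning: since the paper's own proof is literally the role-swapped version of Lemma~\ref{Lemma:eps-c-LN}, it produces the same $(2+\epsilon)$ prefactor as that lemma. Your proposed fix---replacing one AFW application by the Audenaert bound (Lemma~\ref{Hsharp})---does not work here, because both entropy differences are genuinely conditional (on $R$ and on $UR$, with $R$ quantum), so Lemma~\ref{Hsharp} is not applicable. In short: your argument is correct and matches the paper; do not chase the $(1+\tfrac{\epsilon}{2})$ constant.
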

\begin{proof}
	The proof follows by simply exchanging the roles of $\cm$ and $\cn$ in the proof of Lemma~\ref{Lemma:eps-c-LN}. 
\end{proof}
\begin{lemma}
	If $\cm$ is an $\epsilon$-conjugate degraded version of $\cn$, then $\cm$ is $\hat\epsilon$-completely less noisy than $\cn$, with $\hat\epsilon=2\epsilon\log|E| + (1+\frac{\epsilon}{2})\,h\big(\frac{\epsilon}{2+\epsilon}\big)$. 
\end{lemma}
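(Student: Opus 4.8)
The plan is to reduce the claim to a suitable instance of Lemma~\ref{Lemma:eps-c-LN} (or more precisely its anti-version), exactly as the preceding lemma about $\epsilon$-anti-degraded channels did, but now tracking which complementary channel plays which role. Recall that a channel $\cm$ being an $\epsilon$-conjugate degraded version of $\cn$ means that $\cn$ is (approximately, within $\epsilon$ in diamond norm) obtained from $\cm$ by post-composing with a channel followed by a complex conjugation, or equivalently that there is a channel $\Theta$ with $\|\cn - \overline{\Theta\circ\cm}\|_\diamond \le \epsilon$ (where $\overline{\,\cdot\,}$ denotes the conjugate channel). The first step is therefore to unfold this definition and record that conjugation is an isometry for all the entropic quantities involved: the von Neumann entropy, the conditional entropy, and hence the mutual information are all invariant under taking complex conjugates of the states. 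Consequently, conjugation cannot change the relevant side of any of the inequalities appearing in the definition of $\epsilon$-completely less noisy.

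Next I would run the same chain of estimates as in the proof of Lemma~\ref{Lemma:eps-c-LN}: fix an arbitrary input $\rho_{UAR}$, set $\sigma_{UBR} := (\overline{\Theta\circ\cn})\otimes\id(\rho_{UAR})$ and $\tau_{UBR} := \cm\otimes\id(\rho_{UAR})$ — or whichever labelling makes $\cm$ the approximate conjugate-degradation of $\cn$ — and bound $I(U:BR)_{\cm(\rho)} - I(U:B'R)_{\cn(\rho)}$ by $I(U:BR)_\tau - I(U:BR)_\sigma$ using data processing to strip off the degrading-plus-conjugation map, noting here that the conjugation itself is a (anti-)channel that still satisfies data processing for the mutual information. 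Expanding into conditional entropies and using $\tau_{UR}=\sigma_{UR}$, two applications of the Alicki-Fannes-Winter bound (Lemma~\ref{AFW}) on a pair of states that are $\tfrac{\epsilon}{2}$-close in trace distance give the constant $\hat\epsilon = 2\epsilon\log|E| + (1+\tfrac{\epsilon}{2})\,h\!\big(\tfrac{\epsilon}{2+\epsilon}\big)$ claimed in the statement; the output space here is the environment $E$ because in the conjugate-degradation setting the intermediate system carrying the relevant output is $\cn$'s environment. Then the implication down to $\epsilon$-regularized and $\epsilon$-less noisy follows from the chain \eqref{Eqn:chain} as before.

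The only genuine subtlety — and the step I expect to be the main obstacle to state cleanly — is bookkeeping the complex conjugation: one must verify that post-composing with the conjugation map does not obstruct the data-processing step (it does not, since $X \mapsto \overline{X}$ is a positive trace-preserving map and relative entropy is monotone under it, as it is an anti-automorphism that preserves the cone of positive operators and the trace), and that it leaves all the entropies fixed so that no extra terms enter the continuity estimate. Once that is checked, the proof is a verbatim copy of the proof of Lemma~\ref{Lemma:eps-c-LN} with $\cm$ and $\cn$ and the relevant output dimension relabelled, which is presumably why the authors simply assert it. I would write it as: ``The proof follows the same steps as the proof of Lemma~\ref{Lemma:eps-c-LN}, exchanging the roles of $\cm$ and $\cn$ and using that complex conjugation leaves all entropic quantities invariant and satisfies the data processing inequality; the output dimension $|E|$ replaces $|B|$ because the intermediate system is the environment of $\cn$.''
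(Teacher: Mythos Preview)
Your proposal is on the right track and follows the same overall scaffold as the paper (reduce to the argument of Lemma~\ref{Lemma:eps-c-LN} and isolate the conjugation as the only new ingredient), but the way you handle the conjugation step differs from the paper and carries a genuine risk.

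The paper does \emph{not} appeal to data processing for the conjugation map. Instead it explicitly uses that the $U$ register is classical: writing $I(E:U)=H(E)-\sum_u p(u)\,H(\rho_E^u)$, one only needs that entrywise conjugation preserves the von Neumann entropy of each conditional state $\rho_E^u$ (and of the average), which is immediate since transposition preserves spectra. This is the ``one additional step'' the authors highlight, and it is the reason they stress that the input may be taken classical--quantum.

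Your proposed route---``conjugation is a positive trace-preserving map, hence satisfies DPI, hence we may strip it off together with the degrading map''---is where the gap lies. Conjugation is positive but not completely positive, so $\cc\otimes\id_R$ is not even a positive map on bipartite states; in particular you cannot invoke monotonicity of $I(U{:}BR)$ under $\cc\otimes\id_R$ when the reference $R$ is quantum. The data-processing step therefore does not go through as stated in the ``completely less noisy'' setting with a quantum reference. The paper's argument avoids this entirely: it never applies $\cc$ tensored with a reference, it only applies $\cc$ to the full (conditional) output state for each fixed classical value $u$, where entropy invariance is all that is needed. If you rewrite your argument to use the classical decomposition of $I(U{:}\cdot)$ and entropy invariance under conjugation (rather than DPI for $\cc$), you recover exactly the paper's proof.
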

\begin{proof}
	We again use the same technique as in the proof of Lemma~\ref{Lemma:eps-c-LN}. However, we need one additional step. Since we only want to show that a less noisy type ordering is implied we can restrict the input state to be classical-quantum. It needs to be shown that going from $\sigma_{E_1^nU}= \cn^{\otimes n}(\rho_{A_1^nU})$ to $\nu_{E_1^nU}= \cc\circ\cn^{\otimes n}(\rho_{A_1^nU})$ does not change the mutual information, where $\cc$ is the conjugation map. We have that, 
	\begin{align}
	I(E_1^n:U) &= H(E_1^n) - H(E_1^n|U) \\
	&= H(E_1^n) - \sum_u p(u) H(E_1^n|U=u). 
	\end{align}
	Now, since $(\cc^{\otimes n}\circ\cN^{\otimes n})(\rho))^T= \cN^{\otimes n}(\rho)$ it follows that the mutual information remains unchanged. 
\end{proof}

\subsection{Approximating capacities of a quantum channel}
Next, we consider a quantum channel $\cn$ with associated complementary channel $\cn^c$\footnote{For detailed definitions, we refer to \cite{wilde2013quantum}.}. We want to investigate properties of the private capacity $P$ and the quantum capacity $Q$. 
Let us start with the following.
\begin{theorem}\label{theoboundscapacities}
Let $\cn$ be a quantum channel.
\begin{itemize}
	\item[(i)] If $\cn$ is $\epsilon$-more capable, then $Q^{(1)}(\cn) \leq P^{(1)}(\cn) \leq Q^{(1)}(\cn) + \epsilon$.
	\item [(ii)]	If $\cn$ is $\epsilon$-regularized more capable, then $	Q(\cn) \leq P(\cn) \leq Q(\cn) + \epsilon$.
	\item [(iii)]	If $\cn$ is $\epsilon$-fully quantum less noisy, then 
	$Q^{(1)}(\cn) \leq Q(\cn) \leq Q^{(1)}(\cn) + \epsilon.$
	\item [(iv)]	If $\cn$ is $\epsilon$-fully quantum less noisy and $\epsilon$-regularized more capable, then 
	$ P^{(1)}(\cn) \leq P(\cn) \leq P^{(1)}(\cn) + 2\epsilon.$
\end{itemize}
	
\end{theorem}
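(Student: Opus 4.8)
The plan is to prove each of the four items by combining the corresponding exact capacity formula (as a regularized or single-letter optimization of coherent/private information differences) with the relevant approximate partial order, exactly mirroring the structure already established for the exact orders in \Cref{inclusions} and for $\epsilon$-degradability in \Cref{Lemma:eps-c-LN} and \Cref{Lemma:epsFQLN1}. Throughout I will write a quantum channel $\cn:A\to B$ together with an isometric extension with environment $E$, so that $\cn^c:A\to E$, and use that for a pure input $\Psi_{AA'}$ or a c-q input $\rho_{UA}$ one has the standard identities relating coherent information and mutual informations, e.g. $I(A\rangle B)_{\cn(\Psi)} = I(A:B) - I(A:E)$ for the fully quantum quantities and, for c-q states, the private-information-type identity $I(U:B) - I(U:E)$ computed on $(\id_U\otimes\cn)(\rho_{UA})$ — these are the same identities invoked in the excerpt's proof of $\cn\cmc\cn^c \Leftrightarrow \cn\mc\cn^c$ and in \Cref{propDmutualinfo}.

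For item (iii), recall $Q(\cn) = \lim_n \frac1n Q^{(1)}(\cn^{\otimes n})$ where $Q^{(1)}(\cn) = \max_{\Psi_{AA'}} I(A\rangle B)_{\cn(\Psi)}$, and the lower bound $Q^{(1)}(\cn)\le Q(\cn)$ is automatic. For the upper bound I would take an optimal pure input $\Psi_{A^n (A')^n}$ for $Q^{(1)}(\cn^{\otimes n})$, write $I(A^n\rangle B^n)_{\cn^{\otimes n}} = I(A^n:B^n) - I(A^n:E^n)$, and apply the $\epsilon$-fully-quantum-less-noisy hypothesis — which I would first need to tensorize, i.e. observe that $\cn^c \efqln{\epsilon} \cn$ for one copy implies $I(A^n:E^n) \le I(A^n:B^n) + n\epsilon$ for $n$ copies. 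The cleanest way to get this $n$-fold statement is to note that the single-copy fully quantum less noisy relation, once rephrased via $I(A:B)=D(\rho_{AB}\|\rho_A\otimes\rho_B)$ as in the proof of \Cref{inclusions}, is a relative-entropy domination that can be chained copy-by-copy just as in \Cref{lemmatensorcomp}; here the additive slack $\epsilon$ accumulates to $n\epsilon$. This yields $I(A^n\rangle B^n)_{\cn^{\otimes n}} \le I(A^n\rangle B^n)_{\cn^{\otimes n}} + \big(I(A^n:B^n) - I(A^n:E^n)\big)\dots$ — more precisely $\frac1n Q^{(1)}(\cn^{\otimes n}) \le Q^{(1)}(\cn) + \epsilon$ after dividing by $n$ and recognizing the remaining single-letter quantity is at most $Q^{(1)}(\cn)$. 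Items (i) and (ii) are the analogous arguments with the private and coherent informations: (i) uses the single-letter bounds $Q^{(1)}(\cn)=\max I(A\rangle B) = \max[I(X:B)-I(X:E)]$ and $P^{(1)}(\cn)=\max[I(X:B)-I(X:E)]$ over c-q states (so in fact $Q^{(1)}=P^{(1)}$ at the single-letter level in a suitable formulation — one must be careful which single-letter private information is meant), with the $\epsilon$-more-capable hypothesis $I(X:E)\le I(X:B)+\epsilon$ controlling the gap; (ii) is its regularization, where $\epsilon$-regularized more capable is defined precisely so the slack stays $\epsilon$ per symbol and survives the limit.

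For item (iv), the factor $2\epsilon$ rather than $\epsilon$ is the point to be careful about, and I expect this to be the main obstacle. The private capacity satisfies $P(\cn) = \lim_n \frac1n P^{(1)}(\cn^{\otimes n})$ with $P^{(1)}(\cn) = \max_{\rho_{UXA}} \big[I(UX:B) - I(X:E)\big]$ over states with a classical $U$ and a classical $X$ — or, depending on convention, $\max [I(X:B)-I(X:E)]$; and the subtlety is that the $\epsilon$-completely-less-noisy hypothesis is stated with a reference system $R$, which lets one handle the device where the "key" register $U$ and the "message" register $X$ play asymmetric roles. Concretely, I would take an optimal input for $P^{(1)}(\cn^{\otimes n})$, split the private information using the chain rule into two mutual-information differences (one for $U$, one for $X$), bound each by the completely less noisy relation — applied once with reference system absorbing the other register — picking up $\epsilon$ each time, hence $2\epsilon$ total, then divide by $n$. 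The completely less noisy relation tensorizes exactly (\Cref{lemmatensorcomp}), so no accumulation beyond the two applications occurs, and the $n$ cancels. The hardest bookkeeping is getting the single-letterization of $P$ and the role of the reference register right so that exactly two applications of the $\epsilon$-completely-less-noisy inequality suffice; I would model this step closely on the proof of \Cref{Lemma:eps-c-LN}, where the two applications of the Alicki-Fannes-Winter bound play the analogous role.
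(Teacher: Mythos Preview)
Your treatment of (i) and (ii) is roughly on the right track (they are the approximate analogues of Watanabe's Propositions~1 and Theorem~1), but there is a genuine gap in your argument for (iii), and your route to (iv) is unnecessarily indirect compared with the paper.

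\textbf{Part (iii).} The identity you write, $I(A\rangle B)=I(A:B)-I(A:E)$, is off by a factor of two for a pure tripartite state; but the more serious problem is structural. Even granting a tensorized statement $I(R:E^n)\le I(R:B^n)+n\epsilon$, plugging into $I(R\rangle B^n)=\tfrac12\big(I(R:B^n)-I(R:E^n)\big)$ yields only the \emph{lower} bound $I(R\rangle B^n)\ge -n\epsilon/2$, which is useless for proving $Q(\cn)\le Q^{(1)}(\cn)+\epsilon$. Your own sentence ``This yields $I(A^n\rangle B^n)\le I(A^n\rangle B^n)+\ldots$'' signals that the substitution does not close. Moreover, the tensorization of $\efqln{\epsilon}$ via \Cref{lemmatensorcomp} does not go through as written: \Cref{lemmatensorcomp} relies on the two-state relative-entropy characterization of the \emph{complete} order (\Cref{propDmutualinfo}), which lets one absorb an output system into an independent reference $R$; the fully quantum order has no such separate reference, and the second argument of $D(\rho_{AB}\|\rho_A\otimes\rho_B)$ changes when you try to chain. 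The argument the paper points to (Proposition~J.1 of~\cite{cross2017uniform}) instead telescopes
\[
I(R\rangle B^n)=H(B^n)-H(E^n)=\sum_{i=1}^n\big[H(B_i\mid C_i)-H(E_i\mid C_i)\big],\qquad C_i:=B^{<i}E^{>i},
\]
and then applies the single-copy $\epsilon$-fully-quantum-less-noisy hypothesis once per $i$ with reference $C_i$, giving $H(B_i|C_i)-H(E_i|C_i)\le \big[H(B_i)-H(E_i)\big]+\epsilon\le Q^{(1)}(\cn)+\epsilon$. Summing and dividing by $n$ gives the claim. The mixed conditioning system $C_i=B^{<i}E^{>i}$ is the missing idea.

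\textbf{Part (iv).} Once (ii) and (iii) are in hand, the paper obtains (iv) in one line: $\epsilon$-completely less noisy implies both $\epsilon$-regularized more capable and $\epsilon$-fully quantum less noisy (Equations~\eqref{Eqn:chain}--\eqref{Eqn:chain2}), so
\[
P(\cn)\overset{(ii)}{\le} Q(\cn)+\epsilon\overset{(iii)}{\le} Q^{(1)}(\cn)+2\epsilon\le P^{(1)}(\cn)+2\epsilon.
\]
Your plan to split the private information with the chain rule and apply the completely-less-noisy bound twice (once for each classical register) might be made to work, but it is considerably more delicate than simply chaining (ii) and (iii), and your sketch does not yet pin down the two inequalities.
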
 

\begin{proof}
The proof of the first part is similar to that of \cite[Proposition 1]{watanabe2012private}. The second part is simply the same proof for arbitrary $n$ and taking the limit, see also \cite[Theorem 1]{watanabe2012private}. 	The third part is similar to the proof of \cite[Proposition J.1]{cross2017uniform}. Finally, the last part follows directly from the previous bounds.
\end{proof}
To compare the above results to \cite[Theorem 3.4]{sutter2017approximate}, we can easily make the following corollary:
\begin{corollary}
	Let $\cn$ be either $\epsilon$-degradable or $\epsilon$-conjugate degradable. Then we have
	\begin{align}
&	Q^{(1)}(\cn) \leq P^{(1)}(\cn) \leq Q^{(1)}(\cn) + \hat\epsilon \\
	&Q(\cn) \leq P(\cn) \leq Q(\cn) + \tilde\epsilon \\
	&Q^{(1)}(\cn) \leq Q(\cn) \leq Q^{(1)}(\cn) + \tilde\epsilon \\
	&P^{(1)}(\cn) \leq P(\cn) \leq P^{(1)}(\cn) +2 \tilde\epsilon 
	\end{align}
	with 
	\begin{align}
	\hat\epsilon &= \frac12\epsilon\log(|B|-1) +\epsilon\log|B| + \left(1+\frac\epsilon{2}\right)\,h\Big(\frac{\epsilon}{2+\epsilon}\Big) + h\left(\frac{\epsilon}2\right)\,, \\
	\tilde\epsilon &=2\epsilon\log|B| + (2+\epsilon)\,h\Big(\frac{\epsilon}{2+\epsilon}\Big)\,.
	\end{align}
\end{corollary}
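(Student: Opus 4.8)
The plan is to derive the corollary by combining the approximate-degradability-to-approximate-order implications (Lemma~\ref{Lemma:eps-c-LN}, Lemma~\ref{Lemma:epsFQLN1}, and their conjugate-degradable analogues) with the four capacity bounds of Theorem~\ref{theoboundscapacities}. The key observation is that each line of the corollary is obtained by feeding the appropriate approximate partial order, with its explicit constant, into the matching part of Theorem~\ref{theoboundscapacities}, so no genuinely new estimate is needed.

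First I would handle the $\epsilon$-degradable case. By Lemma~\ref{Lemma:epsFQLN1}, $\epsilon$-degradability gives $\cn\efqln{\hat\epsilon}\cm$ with $\hat\epsilon$ as stated; plugging this into Theorem~\ref{theoboundscapacities}(iii) yields the third line $Q^{(1)}(\cn)\le Q(\cn)\le Q^{(1)}(\cn)+\hat\epsilon$ — but wait, the corollary states the third line with $\tilde\epsilon$, so I should instead use Lemma~\ref{Lemma:eps-c-LN}, which gives $\cn\ecln{\tilde\epsilon}\cm$, and note that $\epsilon$-completely less noisy implies $\epsilon$-fully quantum less noisy (from \eqref{Eqn:chain2}) and $\epsilon$-regularized more capable. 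Thus from $\cn\ecln{\tilde\epsilon}\cm$ one gets all of: $\epsilon$-more capable with constant $\tilde\epsilon$ feeding (i) — though the first line uses $\hat\epsilon$, so for the \emph{first} line I use the sharper Lemma~\ref{Lemma:epsFQLN1} route or rather the more-capable analogue with the smaller $\hat\epsilon$ constant — the $\epsilon$-regularized more capable with $\tilde\epsilon$ feeding (ii), the $\epsilon$-fully quantum less noisy with $\tilde\epsilon$ feeding (iii), and via Theorem~\ref{theoboundscapacities}(iv) the fourth line with $2\tilde\epsilon$. The first line's $\hat\epsilon$ comes from the fully-quantum-less-noisy bound Lemma~\ref{Lemma:epsFQLN1} combined with the more-capable direction, matching the $\hat\epsilon$ in the statement.

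For the $\epsilon$-conjugate-degradable case I would invoke the last Lemma of the subsection, which states that $\epsilon$-conjugate degradability implies $\hat\epsilon'$-completely less noisy with $\hat\epsilon'=2\epsilon\log|E|+(1+\tfrac\epsilon2)h(\tfrac{\epsilon}{2+\epsilon})$; since $|E|$ and $|B|$ play symmetric roles here one can bound this by $\tilde\epsilon$ (or simply absorb it), and then run the identical chain of implications through Theorem~\ref{theoboundscapacities}. The point is that conjugate degradability enters the capacity bounds through exactly the same less-noisy/more-capable orders as ordinary degradability, so the four displayed inequalities hold verbatim with the same constants.

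The main obstacle — really a bookkeeping obstacle rather than a mathematical one — is tracking which of the two constants $\hat\epsilon$ and $\tilde\epsilon$ attaches to which line: the single-letter ($n=1$) capacity gap in the first line uses the continuity-bound-optimized $\hat\epsilon$ from Lemma~\ref{Lemma:epsFQLN1}, while the regularized bounds in lines two through four all go through the completely-less-noisy estimate with $\tilde\epsilon$, and line four picks up the factor $2$ from Theorem~\ref{theoboundscapacities}(iv). One must also make sure the conjugate-degradable case really does supply a completely-less-noisy (not merely less-noisy) order so that part (iv) applies; this is exactly the content of the conjugate-degradability lemma proved just above, whose extra step checks that the conjugation map leaves the relevant mutual informations invariant. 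Once these matchings are made explicit the proof is a one-line citation of the preceding lemmas, which is presumably why the authors present it as a corollary.
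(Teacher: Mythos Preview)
Your proposal is correct and follows essentially the same route as the paper, which simply cites Theorem~\ref{theoboundscapacities} together with the preceding approximate-order lemmas. In fact you are more careful than the paper's one-line proof: the paper cites only Lemma~\ref{Lemma:epsFQLN1}, but as you correctly observe, Lemma~\ref{Lemma:eps-c-LN} (giving $\tilde\epsilon$-completely less noisy) is what feeds lines two through four via parts (ii)--(iv) of Theorem~\ref{theoboundscapacities}, while Lemma~\ref{Lemma:epsFQLN1} supplies the sharper $\hat\epsilon$ for the first line through the chain $\efqln{\hat\epsilon}\Rightarrow\esln{\hat\epsilon}\Rightarrow\hat\epsilon$-more capable; your bookkeeping remark about matching constants to lines is exactly the substance of the argument.
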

\begin{proof}
	Follows directly from \Cref{theoboundscapacities} and \Cref{Lemma:epsFQLN1}.
\end{proof}
We can furthermore bound the quantum and private capacity of almost anti-less noisy or anti-more capable channels.
\begin{theorem}
Let $\cn$ be a quantum channel.
\begin{itemize}
	\item[(i)] $\cn$ be $\epsilon$-anti more capable. Then $Q^{(1)}(\cn) \leq \epsilon$.
	\item[(ii)] 	If $\cn$ is $\epsilon$-regularized anti more capable, then $Q(\cn) \leq \epsilon$.
	\item[(iii)] 	If $\cn$ be $\epsilon$-anti less noisy, then $	P^{(1)}(\cn) \leq \epsilon$.
	\item[(iv)] 	If $\cn$ is $\epsilon$-regularized anti less noisy, then $	P(\cn) \leq \epsilon$.
\end{itemize}
\end{theorem}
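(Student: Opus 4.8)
The plan is to exploit the fact that the private and quantum capacities admit single-letter \emph{upper} bounds whenever the channel is anti-less noisy or anti-more capable, exactly mirroring how $\succeq_{\operatorname{l.n.}}$ and $\succeq_{\operatorname{m.c.}}$ give \emph{lower} bounds in Watanabe's framework. Recall that $P^{(1)}(\cn)=\max_{\rho_{XA}}\big(I(X:B)-I(X:E)\big)$ over classical-quantum inputs, and that $Q^{(1)}(\cn)=\max_{\rho_{AA'}}\big(I(A\rangle B)\big)=\max\big(I(A:B)-I(A:E)\big)$ over pure bipartite inputs, where $B$ is Bob's system and $E$ Eve's system of an isometric extension of $\cn$. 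For part (iii), $\cn$ being $\epsilon$-anti less noisy means (exchanging $\cn$ and $\cm$ in the definition, with $\cm=\cn^c$ playing Eve's role) that $I(U:B)_{\cn}\le I(U:E)_{\cn^c}+\epsilon$ for all c-q states $\rho_{UA}$; hence for every such state $I(X:B)-I(X:E)\le\epsilon$, and taking the supremum gives $P^{(1)}(\cn)\le\epsilon$. For part (i), $\epsilon$-anti more capable gives the same inequality but only over pure-state ensembles, which by the coherent-information identity $I(A\rangle B)=I(X:B)-I(X:E)$ (used earlier in the excerpt, following \cite[Theorem 13.6.1]{wilde2013quantum}) yields $Q^{(1)}(\cn)\le\epsilon$.

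For the regularized statements (ii) and (iv), I would run the same argument on $\cn^{\otimes n}$. If $\cn$ is $\epsilon$-regularized anti less noisy then $I(U:B^n)\le I(U:E^n)+n\epsilon$ for all c-q states $\rho_{UA^n}$ and all $n$, so $P^{(1)}(\cn^{\otimes n})\le n\epsilon$ for every $n$; dividing by $n$ and taking $n\to\infty$ gives the regularized private capacity $P(\cn)=\lim_n \tfrac1n P^{(1)}(\cn^{\otimes n})\le\epsilon$. Part (ii) is identical with the private information replaced by the coherent information and $P$ replaced by $Q$, using that $Q(\cn)=\lim_n\tfrac1n Q^{(1)}(\cn^{\otimes n})$ and that $\epsilon$-regularized anti more capable controls $Q^{(1)}(\cn^{\otimes n})$ at the cost $n\epsilon$.

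The only genuinely delicate point is bookkeeping of \emph{which} complementary channel appears where: the anti-orders compare $\cn$ (to Bob) against $\cn^c$ (to Eve), so one must check that the systems $B^n$ and $E^n$ arising from the isometric extension of $\cn^{\otimes n}$ coincide with $(B^{\otimes n})$ and $(E^{\otimes n})$ of $n$ copies of the extension of $\cn$ — this is standard but should be stated. I expect the main (minor) obstacle is simply making sure the additive slack is $n\epsilon$ in the regularized definitions so that it vanishes after normalization; once that is in place the proof is a one-line application of the capacity formulas together with the defining inequalities of the approximate anti-orders.
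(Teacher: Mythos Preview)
Your proposal is correct and follows essentially the same approach as the paper: the paper's proof simply notes that the coherent information can be written as $I(A\rangle B)=I(X:B)-I(X:E)$ for parts (i) and (ii), and that parts (iii) and (iv) follow immediately from the definitions of private information and the less noisy order. Your version spells out the regularization bookkeeping (dividing by $n$, compatibility of complementary channels under tensoring) more carefully than the paper does, but the argument is the same.
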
 
\begin{proof}
	The first two bounds follow immediately by noting that the coherent information can be written as a difference of two mutual informations (see e.g.~\cite[Theorem 13.6.1]{wilde2013quantum}):
\begin{align}
I(A\rangle B) = I(X:B) - I(X:E).
\end{align}
	The proof of the next two bounds follows immediately from the definitions of the private information and the less noisy partial order.
\end{proof}
Again, this immediately implies the previously known results about degradable channels (and adding analog ones for conjugate degradable).
\begin{corollary}
	Let $\cn$ be either $\epsilon$-anti degradable or $\epsilon$-conjugate anti degradable. Then we have
	\begin{align}
&	Q^{(1)}(\cn) \leq P^{(1)}(\cn) \leq  \hat\epsilon \\
	&Q(\cn) \leq P(\cn) \leq \tilde\epsilon,
	\end{align}
	with 
	\begin{align}
	\hat\epsilon &= \frac12\epsilon\log(|B|-1) +\epsilon\log|B| + \left(1+\frac\epsilon{2}\right)\,h\Big(\frac{\epsilon}{2+\epsilon}\Big) + h\left(\frac{\epsilon}2\right)\,, \\
	\tilde\epsilon &=2\epsilon\log|B| + (2+\epsilon)\,h\Big(\frac{\epsilon}{2+\epsilon}\Big)\,.
	\end{align}
\end{corollary}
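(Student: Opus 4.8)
The plan is to reduce this corollary to the two previously established theorems — the one on $\epsilon$-anti more capable and $\epsilon$-anti less noisy channels together with their regularized versions — by feeding in the continuity estimates that convert approximate (anti-)degradability into the approximate partial orders. Concretely, if $\cn$ is $\epsilon$-anti degradable, then by definition $\cn^c$ is an $\epsilon$-degraded version of $\cn$, i.e.\ $\|\cn - \Theta\circ\cn^c\|_\diamond\le\epsilon$ for some channel $\Theta$; the roles of $\cn$ and $\cm=\cn^c$ in Lemma~\ref{Lemma:eps-c-LN} and Lemma~\ref{Lemma:epsFQLN1} are then played so that $\cn$ becomes $\tilde\epsilon$-anti completely less noisy and $\hat\epsilon$-anti fully quantum less noisy with respect to $\cn^c$, with exactly the $\tilde\epsilon,\hat\epsilon$ quoted in the statement. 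The $\epsilon$-conjugate anti degradable case is handled identically, invoking instead the conjugate-degradability version of the continuity lemma (the one whose proof notes that composing with the transpose/conjugation map leaves the relevant mutual informations unchanged), which yields the same numerical constants.

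First I would spell out the implications among the approximate anti-orders needed here: $\epsilon$-anti completely less noisy implies both $\epsilon$-anti regularized more capable and $\epsilon$-anti fully quantum less noisy (the anti-version of the chain in Proposition~\ref{inclusions} / Equations~\eqref{Eqn:chain}--\eqref{Eqn:chain2}), and anti less noisy implies anti more capable by restricting to classical-quantum inputs. Then I would apply the theorem on almost anti-less noisy / anti-more capable channels: part (i) gives $Q^{(1)}(\cn)\le\hat\epsilon$ from $\hat\epsilon$-anti more capable (which follows from $\hat\epsilon$-anti fully quantum less noisy, hence from Lemma~\ref{Lemma:epsFQLN1}), and the sandwich $Q^{(1)}(\cn)\le P^{(1)}(\cn)$ is automatic; the bound $P^{(1)}(\cn)\le\hat\epsilon$ comes from $\cn$ being $\hat\epsilon$-anti less noisy, also a consequence of Lemma~\ref{Lemma:epsFQLN1} via the implication chain. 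For the regularized line, part (ii) and part (iv) of that theorem give $Q(\cn)\le\tilde\epsilon$ and $P(\cn)\le\tilde\epsilon$ once we know $\cn$ is $\tilde\epsilon$-anti regularized more capable and $\tilde\epsilon$-anti regularized less noisy, both of which follow from $\tilde\epsilon$-anti completely less noisy (Lemma~\ref{Lemma:eps-c-LN} with arguments exchanged) since the complete order tensorizes and hence implies its regularized counterpart.

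I would then simply collect the inequalities: $Q^{(1)}(\cn)\le P^{(1)}(\cn)\le\hat\epsilon$ and $Q(\cn)\le P(\cn)\le\tilde\epsilon$, with $\hat\epsilon,\tilde\epsilon$ as displayed. The main obstacle — really the only non-bookkeeping point — is making sure the continuity lemmas are being applied with $\cn$ and $\cn^c$ in the correct slots, so that it is $\cn$ (not $\cn^c$) that ends up being the \emph{anti}-ordered channel, and that the dimension appearing in the bound is $|B|$ as claimed rather than $|E|$; this is exactly the kind of swap the earlier anti-order lemmas already perform, so it reduces to citing them. A secondary care point is the conjugate-degradable case: one must check that the conjugation map does not alter the mutual-information quantities entering the anti-order, which is precisely the observation already made in the proof of the conjugate-degradability continuity lemma. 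Given all that, the corollary is immediate, so the write-up is essentially ``follows directly from the preceding theorem together with Lemmas~\ref{Lemma:eps-c-LN} and~\ref{Lemma:epsFQLN1}, applied with the roles of $\cn$ and $\cn^c$ interchanged.''
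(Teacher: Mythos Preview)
Your approach is correct and matches the paper's (implicit) argument: the corollary follows directly from the preceding theorem on $\epsilon$-anti more capable / anti less noisy channels, together with Lemmas~\ref{Lemma:eps-c-LN} and~\ref{Lemma:epsFQLN1} applied with the roles of $\cn$ and $\cn^c$ interchanged (and the conjugate-degradability lemma for the conjugate case). One verbal slip to fix: you write ``$\cn^c$ is an $\epsilon$-degraded version of $\cn$'' but then give the formula $\|\cn-\Theta\circ\cn^c\|_\diamond\le\epsilon$; the formula is the right one for $\epsilon$-anti degradability, and it says that $\cn$ (not $\cn^c$) is the $\epsilon$-degraded channel --- this is exactly what makes the dimension come out as $|B|$ rather than $|E|$ when you plug into Lemma~\ref{Lemma:eps-c-LN}.
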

This should be compared to \cite[Theorem 3.8]{sutter2017approximate}. 

In a similar fashion we can also reproduce the results in~\cite{leditzky2018approaches}. 
\begin{lemma}
If $\cn$ is $\epsilon$-less noisy than $\cm$, we have
\begin{align}
\chi(\cm) \leq \chi(\cn) + \epsilon. 
\end{align}
If $\cn$ is $\epsilon$-regularized less noisy than $\cm$, we have
\begin{align}
C(\cm) \leq C(\cn) + \epsilon. 
\end{align}
If $\cn$ and $\cm$ are comparably $\epsilon$-regularized less noisy and $\cn$ is weakly additive, we have
\begin{align}
C(\cm) \leq \chi(\cm) + 2\epsilon. 
\end{align}
If $\cn$ is $\epsilon$-fully quantum more capable than $\cm$, we have
\begin{align}
C_E(\cm) \leq C_E(\cn) + \epsilon. 
\end{align}
\end{lemma}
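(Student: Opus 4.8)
The plan is to prove all four inequalities by a single recipe, the approximate analogue of the exact arguments in~\cite{watanabe2012private,leditzky2018approaches}: fix a state that achieves (or nearly achieves) the capacity-type quantity attached to the \emph{dominated} channel $\cm$, apply the defining inequality of the relevant approximate order to that very state so as to dominate the quantity by its $\cn$-counterpart plus the additive slack, and in the regularized statements divide by the number of tensor copies and let it tend to infinity.

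Concretely, for the first inequality I would use $\chi(\cm)=\sup_{\rho_{XA}}I(X:B)_{(\id_X\otimes\cm)(\rho_{XA})}$ with the supremum over c-q states, which is precisely the family over which the $\epsilon$-less noisy relation is quantified; applying the defining inequality of $\cn\esln{\epsilon}\cm$ to a state optimal for $\chi(\cm)$ immediately gives $\chi(\cm)\le\chi(\cn)+\epsilon$. For the second inequality I would run the same computation on $n$ tensor copies: $\cn\erln{\epsilon}\cm$ applied to an optimal $\rho_{UA_1^n}$ yields $\chi(\cm^{\otimes n})\le\chi(\cn^{\otimes n})+n\epsilon$, and since $C(\cdot)=\lim_{n\to\infty}\frac1n\chi(\cdot^{\otimes n})$, dividing by $n$ and passing to the limit gives $C(\cm)\le C(\cn)+\epsilon$. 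For the fourth inequality I would use that $C_E$ is additive (so no regularization is needed) and that $C_E(\cn)$ is a supremum of $I(A:B)$ over pure states $\Psi_{AA'}$ (with the channel applied to the $A'$ system), which is again exactly the family quantified in the approximate fully quantum more capable relation; evaluating it at a $\Psi_{AA'}$ optimal for $C_E(\cm)$ gives $C_E(\cm)\le C_E(\cn)+\epsilon$.

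The third inequality I would then obtain by chaining the others with weak additivity. From $\cn\erln{\epsilon}\cm$ and the second inequality, $C(\cm)\le C(\cn)+\epsilon$; from the other half of the comparability hypothesis $\cm\erln{\epsilon}\cn$, which implies $\cm\esln{\epsilon}\cn$ via~\eqref{Eqn:chain}, together with the first inequality, $\chi(\cn)\le\chi(\cm)+\epsilon$; and weak additivity of $\cn$ means $C(\cn)=\chi(\cn)$. Composing these three facts in that order gives $C(\cm)\le C(\cn)+\epsilon=\chi(\cn)+\epsilon\le\chi(\cm)+2\epsilon$.

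I do not expect a genuine obstacle: every step is bookkeeping, the only thing to verify being that the chosen optimizers lie in the domains over which the respective approximate orders are quantified (c-q states for $\chi$ and $C$, pure states for $C_E$). The one place that needs a little care is the third inequality, where one must use \emph{both} directions of the comparability hypothesis, the implication $\erln{\epsilon}\Rightarrow\esln{\epsilon}$, and the identity $C(\cn)=\chi(\cn)$ in the correct order so that the two slacks combine to exactly $2\epsilon$; I would regard that as the main --- and essentially only --- subtlety.
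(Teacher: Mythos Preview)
Your proposal is correct and follows essentially the same approach as the paper: the paper states that the first, second, and fourth inequalities ``follow by definition'' and proves the third by exactly the chain $C(\cm)\le C(\cn)+\epsilon=\chi(\cn)+\epsilon\le\chi(\cm)+2\epsilon$ that you wrote out. You have simply unpacked the ``by definition'' steps in more detail than the paper does.
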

\begin{proof}
The first two statements essentially follow by definition of the involved quantities. The third statement follows by chaining them together and applying the additivity:
\begin{align}
C(\cm) \leq C(\cn) + \epsilon = \chi(\cN) + \epsilon \leq \chi(\cm) + 2\epsilon. \label{Eq:proveWA}
\end{align}
The final statement again follows trivially from the definitions.
\end{proof}
Combining the previous results we easily get the following corollary. 
\begin{corollary}
If $\|\cm - \cn\|_\diamond\leq\epsilon$ and $\cn$ is weakly additive, we have
\begin{align}
C(\cm) \leq \chi(\cm) + \hat\epsilon+\tilde\epsilon ,
\end{align}
	\begin{align}
	\hat\epsilon &= \frac12\epsilon\log(|B|-1) +\epsilon\log|B| + \left(1+\frac\epsilon{2}\right)\,h\Big(\frac{\epsilon}{2+\epsilon}\Big) + h\left(\frac{\epsilon}2\right)\,, \\
	\tilde\epsilon &=2\epsilon\log|B| + (2+\epsilon)\,h\Big(\frac{\epsilon}{2+\epsilon}\Big)\,.
	\end{align}
\end{corollary}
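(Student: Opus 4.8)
The plan is to reproduce the three-step chain used to prove the weak-additivity bound $C(\cm)\le\chi(\cm)+2\epsilon$ in the Lemma immediately preceding the corollary, but feeding in the two \emph{different} approximate-order parameters that the hypothesis $\|\cm-\cn\|_\diamond\le\epsilon$ produces through the results of this section. Concretely, I would establish
\begin{align*}
C(\cm)\;\le\; C(\cn)+\tilde\epsilon \;=\; \chi(\cn)+\tilde\epsilon \;\le\; \chi(\cm)+\hat\epsilon+\tilde\epsilon
\end{align*}
and then verify the three steps in turn.

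\textbf{Step 1 (regularized capacity).} Since $\cM,\cN\colon A\to B$ share the same output space, Corollary~\ref{Lemma:epsDegCompLN} applies and shows that $\cm$ and $\cn$ are comparably completely less noisy with parameter equal to the $\tilde\epsilon$ of the statement; in particular $\cn\ecln{\tilde\epsilon}\cm$, hence $\cn\erln{\tilde\epsilon}\cm$ by the implication chain~\eqref{Eqn:chain}, and the ``regularized less noisy $\Rightarrow$ capacity'' clause of the preceding Lemma gives $C(\cm)\le C(\cn)+\tilde\epsilon$. The middle equality is simply weak additivity of $\cn$, $C(\cn)=\chi(\cn)$, which is part of the hypothesis.

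\textbf{Step 3 (Holevo quantity).} Here I would use the symmetry of the diamond norm together with the sharper, fully-quantum route: taking $\Theta=\id$, the bound $\|\cm-\cn\|_\diamond\le\epsilon$ says that $\cn$ is an $\epsilon$-degraded version of $\cm$, so Lemma~\ref{Lemma:epsFQLN1} (with the roles of $\cm$ and $\cn$ exchanged) gives $\cm\efqln{\hat\epsilon}\cn$, hence $\cm\esln{\hat\epsilon}\cn$ by~\eqref{Eqn:chain2}, and the ``less noisy $\Rightarrow$ Holevo quantity'' clause of the preceding Lemma (applied with the arguments swapped) yields $\chi(\cn)\le\chi(\cm)+\hat\epsilon$. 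Chaining the three estimates proves the claim.

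The argument is essentially bookkeeping, and the one point worth flagging — and the reason the two error terms genuinely differ — is that the fully-quantum route cannot be substituted into Step 1: $\efqln{}$ is not known to imply $\erln{}$ (the open question recorded right after Proposition~\ref{inclusions}), so the regularized-capacity inequality is forced through the completely-less-noisy order and only the coarser constant $\tilde\epsilon$ is available there. The Holevo-quantity inequality, by contrast, only needs the plain ($n=1$) less noisy order, where the tighter $\hat\epsilon$ coming from Lemma~\ref{Lemma:epsFQLN1} applies. No new continuity estimate is needed beyond those already proved in Lemmas~\ref{Lemma:eps-c-LN} and~\ref{Lemma:epsFQLN1}.
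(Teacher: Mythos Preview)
Your proposal is correct and follows exactly the paper's approach: the proof in the paper simply points to the chain $C(\cm)\le C(\cn)+\tilde\epsilon=\chi(\cn)+\tilde\epsilon\le\chi(\cm)+\hat\epsilon+\tilde\epsilon$ and remarks that the last inequality only needs the unregularized less noisy property, which is precisely what you spell out in your Steps~1 and~3. Your added explanation of why $\tilde\epsilon$ and $\hat\epsilon$ must differ (the fully quantum route not being known to imply the regularized order) is a nice clarification but not present in the paper's terse proof.
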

\begin{proof}
This follows along Equation~\eqref{Eq:proveWA} and noting that the final inequality only requires the unregularized $\epsilon$-less noisy property. 
\end{proof}
This should be compared to \cite[Corollary II.7]{leditzky2018approaches}. Moreover, we note that given channels $\cn,\cm$ it is possible to efficiently compute the minimal $\epsilon$ such that $\cN$ is an $\epsilon$ degraded version of $\cM$~\cite{leditzky2018approaches,sutter2017approximate}.

\section{Characterizations of strong data processing}
As we discussed in Sec.~\ref{lessnoisySDPI}, it is possible to show that a channel $\cN$ is less noisy than another channel $\cM$ by considering their respective contraction coefficients. Thus, contraction coefficients can be used to determine if two channels are related in the order. In this section we will now gather characterizations and new properties of contraction coefficients that are of interest beyond the study of partial orders. Indeed, SDPIs are now also becoming a standard tool to understand the computational power of noisy quantum devices~\cite{Franca2021,Wang2021,Wang2021a,DePalma2022} and for that application tensorization is also key.
\subsection{SDPI via hypercontractivity}
Let us prove some characterizations of strong data processing inequalities (SDPI), i.e. the relative entropy contraction coefficient. 
Obtaining characterizations of strong SDPIs has been the focus of recent activity in quantum information theory. For instance, in~\cite{berta_brascamp} the authors show that for a channel $\cM:\cB(\cH_A)\to\cB(\cH_B)$:
\begin{align*} 
D(\cM(\rho) \| \cM(\sigma)) & \leq \eta D(\rho \| \sigma) \quad \forall \rho \in \mathcal{D}(\cH_A) 
\end{align*}
and 
\begin{align*}
\operatorname{tr} \exp \left(\log \sigma+\cM^{\dagger}(\log \omega)\right) & \leq\left\|\exp \left(\log \omega+\frac{1}{\eta} \log \cM(\sigma)\right)\right\|_{\eta} \quad \forall \omega \in \mathcal{D}(\cH_B) 
\end{align*}
are equivalent. Here we will focus on a hypercontractive approach. In short, hypercontractive inequalities are bounds on various $p\to q$ norms of quantum channels $\cN$. As we will see, these bounds are closely related to entropic inequalities for the channel. But by reducing an entropic inequality to a norm inequality it is often easier to obtain tensorization.

In the classical setting, results connecting hypercontractivity of a Markov kernel and SDPIs go back to Alswede and G\'{a}cs~\cite{ahlswede1976spreading} and have also been the focus of the more recent paper~\cite[Theorem 6]{Raginsky_2013}. In order to state our results, let us define the $\sigma-$weighted $p$-quasi-norms for $\sigma\in\cD_+(\cH)$ and $X\in\cB(\cH)$ as:
\begin{align*}
\|X\|_{p,\sigma}^p=\textrm{tr}\left (|\sigma^{\frac{1}{2p}}X\sigma^{\frac{1}{2p}}|^p\right).
\end{align*}
For $p\geq1$ these quantities are norms, while for $p<1$ only quasi-norms.
They are closely related to the sandwiched R\'enyi divergences $D_p$ of~\cite{Muller-Lennert2013,Wilde2014}, which can be defined as 
\begin{align*}
D_p(\rho\|\sigma)=\frac{p}{p-1}\log(\|\sigma^{-\frac{1}{2}}\rho\sigma^{-\frac{1}{2}}\|_{p,\sigma})
\end{align*}
for $p\in(0,1)\cup(1,\infty)$ and are known to satisfy a data processing inequality for $p\in(\tfrac{1}{2},1)\cup(1,\infty)$~\cite{Beigi2013}. 
Moreover, we have that as $p\to1$, they converge to the usual relative entropy and that these entropies are monotone increasing in the parameter $p$~\cite{Muller-Lennert2013}.
The connection between these norms and R\'{e}nyi divergences will allow 
us to obtain several conditions that imply bounds on the relative entropy contraction coefficients and the 
less noisy ordering based on the Petz recovery map of the channel. For a quantum channel $\cM$ and a reference state $\sigma$, define $\Gamma_{\sigma}^p(X)=\sigma^{\frac{p}{2}}X\sigma^{\frac{p}{2}}$ with the convention  $\Gamma_{\sigma}=\Gamma_{\sigma}^1$ and the Petz recovery map as $$\hat{\cM}=\Gamma_{\sigma}\circ \cM^*\circ\Gamma_{\cM(\sigma)}^{-1}\,.$$ We omit the reference state $\sigma$ in the definition, as it should always be clear from the context.
One can readily check that restricting to the support of $\sigma$ it is a quantum channel that satisfies $\hat{\cM}\circ \cM(\sigma)=\sigma$ and it is known that if $D(\cM(\rho) \| \cM(\sigma)) =  D(\rho \| \sigma) $, then $\hat{\cM}\circ \cM(\rho)=\rho$~\cite{Petz1986}. That is, the relative entropy does not contract under $\cM$ if and only if one can reverse the action of the channel on two input states simultaneously with the Petz recovery map. 

Our techniques will mostly be based on studying how the $p,\sigma$ norms contract for different values of $p$ and reference states $\sigma$ under the Petz recovery map. Such statements are intimately related to strong data processing inequalities. Indeed, as shown in~\cite{muller2017monotonicity}, the data-processing inequality for the R\' enyi divergences is equivalent to the statement that $\|\hat{\cM}^*\|_{(p,\sigma)\to (p,\cM(\sigma))}\leq 1$ for all choices of input state $\sigma$ and channels $\cM$. The basic idea of this section is that if the inequality 
\begin{align}\label{equ:hypercontractive}
\|\hat{\cM}^*\|_{(p,\sigma)\to (q,\cM(\sigma))}\leq1
\end{align}
holds for $q>p$, then we also obtain a strong data processing inequality for $\cM$ and some R\' enyi entropy. 
Note that this statement is stronger than the one with $p=q$, as the norms are monotone increasing in $p$. Denote by $p',q'$ the H\"older conjugates of $p,q$.
One can show by duality of norms that Eq.~\eqref{equ:hypercontractive} is in fact equivalent to
\begin{align}\label{equ:petz}
\|\Gamma_{\sigma}^{-\frac{p'-1}{p'}}\circ\hat{\cM}(\rho)\|_{p'}\leq\|\Gamma_{\cM(\sigma)}^{-\frac{q'-1}{q'}}(\rho)\|_{q'}.
\end{align}
for all states $\rho$. As we will see in Proposition~\eqref{prophyper}, the expression above is equivalent to an SDPI for R\' enyi entropies.

We start with the following standard Lemma:
\begin{lemma}\label{lem:derivative}
	For any two quantum states $\rho,\sigma$ s.t. $\operatorname{supp}\rho\subset\operatorname{supp}\sigma$ we have:
	\begin{align*}
	\frac{d}{dp}\|\Gamma_{\sigma}^{-1}(\rho)\|_{p,\sigma}\big |_{p=1}=D(\rho||\sigma).
	\end{align*}
\end{lemma}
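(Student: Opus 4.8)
The plan is to compute the derivative directly by expanding the definition of the $\sigma$-weighted norm. Recall that $\|X\|_{p,\sigma}^p = \operatorname{tr}\big(|\sigma^{1/(2p)}X\sigma^{1/(2p)}|^p\big)$, so for $X = \Gamma_\sigma^{-1}(\rho) = \sigma^{-1/2}\rho\,\sigma^{-1/2}$ we get $\sigma^{1/(2p)}X\sigma^{1/(2p)} = \sigma^{1/(2p)-1/2}\rho\,\sigma^{1/(2p)-1/2}$. At $p=1$ this is simply $\rho$, which is positive, so for $p$ near $1$ the operator inside the absolute value stays positive (by continuity of the spectrum away from the boundary, using $\operatorname{supp}\rho \subseteq \operatorname{supp}\sigma$), and we may drop the absolute value. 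Writing $g(p) := \operatorname{tr}\big((\sigma^{s}\rho\,\sigma^{s})^p\big)$ with $s = s(p) = \tfrac{1}{2p}-\tfrac12$, we have $\|\Gamma_\sigma^{-1}(\rho)\|_{p,\sigma} = g(p)^{1/p}$, and I want $\tfrac{d}{dp} g(p)^{1/p}\big|_{p=1}$.

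**Carrying out the differentiation.** Since $g(1) = \operatorname{tr}\rho = 1$, the logarithmic derivative trick gives $\tfrac{d}{dp} g(p)^{1/p}\big|_{p=1} = \tfrac{d}{dp}\big(\tfrac1p \log g(p)\big)\big|_{p=1} = -\log g(1) + g'(1) = g'(1)$. So everything reduces to computing $g'(1)$. Here $g(p) = \operatorname{tr}\big(A(p)^p\big)$ where $A(p) = \sigma^{s(p)}\rho\,\sigma^{s(p)}$ with $A(1) = \rho$ and $s(1) = 0$, $s'(1) = -\tfrac12$. By the product/chain rule there are two contributions: (i) the derivative hitting the exponent $p$, which at $p=1$ contributes $\operatorname{tr}\big(\rho\log\rho\big)$ via $\tfrac{d}{dp}\operatorname{tr}(A^p) \ni \operatorname{tr}(A^p\log A)$; and (ii) the derivative hitting $A(p)$ inside, which at $p=1$ contributes $\operatorname{tr}\big(A'(1)\big)$ since the outer exponent is $1$. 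Now $A'(1) = s'(1)\big(\log\sigma \cdot \rho + \rho\log\sigma\big) = -\tfrac12(\log\sigma\,\rho + \rho\log\sigma)$, so $\operatorname{tr}(A'(1)) = -\operatorname{tr}(\rho\log\sigma)$. Adding the two pieces gives $g'(1) = \operatorname{tr}(\rho\log\rho) - \operatorname{tr}(\rho\log\sigma) = D(\rho\|\sigma)$, as desired.

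**The main obstacle.** The genuinely delicate point is justifying the differentiation of $\operatorname{tr}(A(p)^p)$ when both the base $A(p)$ and the exponent $p$ vary and $A$ and $\log\sigma$ need not commute. The clean way is to use the integral representation or the fact that $p\mapsto \operatorname{tr}(B(p)^{r(p)})$ is differentiable with $\tfrac{d}{dp}\operatorname{tr}(B^r) = \operatorname{tr}(B^r \log B)\, r' + r\,\operatorname{tr}(B^{r-1} B')$ whenever $B(p)>0$ depends smoothly on $p$ — a standard fact provable via the spectral calculus and Duhamel's formula, which legitimately avoids commutator issues because the trace is taken. One should also note explicitly that restricting to $\operatorname{supp}\sigma$ makes $\sigma$ invertible on the relevant subspace and $\rho$ invertible there too is \emph{not} needed — only $A(1)=\rho\geq 0$ with $\operatorname{tr}\rho=1$ and smoothness of $p\mapsto A(p)$ near $p=1$, which holds since $\sigma^{s(p)}$ is analytic in $p$ on the support. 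With that justification in place, the computation above is routine.

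Concretely: restrict to $\operatorname{supp}(\sigma)$ so that $\sigma$ is invertible; set $A(p) = \sigma^{\frac{1}{2p}-\frac12}\rho\,\sigma^{\frac{1}{2p}-\frac12}$, which is positive semidefinite for $p$ in a neighborhood of $1$ with $A(1)=\rho$; use $\|\Gamma_\sigma^{-1}(\rho)\|_{p,\sigma} = \big(\operatorname{tr} A(p)^p\big)^{1/p}$; take the logarithm, differentiate at $p=1$ using $\operatorname{tr} A(1) = 1$, and evaluate the two terms as above to obtain $D(\rho\|\sigma)$.
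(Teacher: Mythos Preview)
Your proof is correct but takes a different route from the paper. The paper observes that $\tfrac{p}{p-1}\log\varphi(p)=D_p(\rho\|\sigma)$ is the sandwiched R\'enyi divergence, where $\varphi(p):=\|\Gamma_\sigma^{-1}(\rho)\|_{p,\sigma}$; setting $\phi(p):=p\log\varphi(p)=(p-1)D_p(\rho\|\sigma)$, the difference quotient $\tfrac{\phi(p)-\phi(1)}{p-1}$ is exactly $D_p(\rho\|\sigma)$, so the known limit $D_p\to D$ gives $\phi'(1)=D(\rho\|\sigma)$, and then the elementary identity $\phi'(1)=\varphi'(1)$ (since $\varphi(1)=1$) finishes. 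You instead expand the norm directly and differentiate $g(p)=\tr(A(p)^p)$ from first principles, which amounts to reproving the $p\to 1$ limit of $D_p$ inside the argument. Your approach is more self-contained and makes the mechanism transparent; the paper's is shorter if one is willing to cite the R\'enyi limit as known. One small remark: your justification ``by continuity of the spectrum'' for dropping the absolute value is unnecessary, since $\sigma^{s}\rho\,\sigma^{s}$ is manifestly positive semidefinite for every $p$ (it has the form $B^*\rho B$ with $B=\sigma^s$ self-adjoint).
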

\begin{proof}
	Note that
	\begin{align*}
	\frac{p}{p-1}\log(\|\Gamma_{\sigma}^{-1}(\rho)\|_{p,\sigma})=D_p(\rho||\sigma),
	\end{align*}
	where $D_p$ is the sandwiched Renyi divergence. Define the functions
	\begin{align*}
	\phi(p)=p\log(\|\Gamma_{\sigma}^{-1}(\rho)\|_{p,\sigma}),\quad\varphi(p)=\|\Gamma_{\sigma}^{-1}(\rho)\|_{p,\sigma}.
	\end{align*}
	As $\lim_{p\to1}D_p(\rho||\sigma)=D(\rho||\sigma)$ and $\phi(1)=0$ we have:
	\begin{align*}
	\frac{d}{dp}\phi(p)|_{p=1}=D(\rho||\sigma).
	\end{align*}
	As $\phi(p)=p\log(\varphi(p))$, it follows that
	\begin{align*}
	\frac{d}{dp}\phi(p)=\log(\varphi(p))+p\frac{\varphi'(p)}{\varphi(p)},
	\end{align*}
	which yields the claim after noting that $\varphi(1)=1$.
\end{proof}
The following characterization of a strong DPI then follows:
\begin{proposition}\label{prop4}
	For some $\sigma>0$ and a quantum channel $\cM$, let $\hat{\cM}$ be the Petz recovery map with respect to $\sigma$. 
	Then the following are equivalent:
	\begin{enumerate}
	\item We have for all $\rho \in \cD(A)$
		\begin{align}\label{equ:strondpi12}
	D(\cM(\rho) \| \cM(\sigma)) & \leq \eta D(\rho \| \sigma)
	\end{align}
	\item  there exists an $\epsilon>0$ s.t. for all $0\leq\tau\leq\epsilon$ we have
	\begin{align}\label{equ:hyperdpi}
	\|\hat{\cM}^*\|_{(1+\eta\tau,\sigma)\to(1+\tau,\cM(\sigma))}\leq 1
	\end{align}
	\item there exists an $\epsilon>0$ s.t. for all $0\leq\tau\leq\epsilon$ we have
	\begin{align}\label{equ:recovery}
	D_{\frac{\eta\tau+1}{\eta\tau}}(\hat{\cM}(\rho) \| \sigma) \leq \frac{1+\eta\tau}{1+\tau}D_{\frac{\tau+1}{\tau}}(\rho\| \cM(\sigma))
	\end{align}
	\end{enumerate}
\end{proposition}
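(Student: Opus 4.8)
The plan is to prove the chain of implications $(1)\Rightarrow(2)\Rightarrow(3)\Rightarrow(1)$, using Lemma~\ref{lem:derivative} as the crucial quantitative input that links the small-$\tau$ behaviour of the weighted norms to the relative entropy. The equivalence of $(2)$ and $(3)$ is essentially bookkeeping: by the norm-duality identity relating \eqref{equ:hypercontractive} and \eqref{equ:petz}, and the definition $D_p(\rho\|\sigma)=\frac{p}{p-1}\log\|\sigma^{-1/2}\rho\sigma^{-1/2}\|_{p,\sigma}$, the statement $\|\hat{\cM}^*\|_{(1+\eta\tau,\sigma)\to(1+\tau,\cM(\sigma))}\le1$ is, after taking $p=1+\eta\tau$, $q=1+\tau$ and identifying the Hölder conjugates, exactly the inequality $\|\Gamma_\sigma^{-\frac{p'-1}{p'}}\circ\hat\cM(\rho)\|_{p'}\le\|\Gamma_{\cM(\sigma)}^{-\frac{q'-1}{q'}}(\rho)\|_{q'}$, which unwinds to the divergence inequality \eqref{equ:recovery} once one checks that $p'=\frac{\eta\tau+1}{\eta\tau}$ and $q'=\frac{\tau+1}{\tau}$ and that the prefactor $\frac{q'}{q'-1}\big/\frac{p'}{p'-1}$ equals $\frac{1+\eta\tau}{1+\tau}$. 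So I would present $(2)\Leftrightarrow(3)$ as a short paragraph invoking the discussion preceding the Lemma.

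For $(1)\Rightarrow(2)$: the idea is that the $p=q=1+\eta\tau$ version of \eqref{equ:hyperdpi} holds for \emph{all} $\tau\ge 0$ by the data processing inequality for sandwiched Rényi divergences (equivalently $\|\hat\cM^*\|_{(p,\sigma)\to(p,\cM(\sigma))}\le1$, as recalled from~\cite{muller2017monotonicity}), so the content is the strict improvement of the output exponent from $1+\eta\tau$ down to $1+\tau$. Define, for a fixed state $\rho$, the function
\begin{align*}
F(\tau)=\|\Gamma_{\cM(\sigma)}^{-\frac{q'-1}{q'}}(\rho')\|_{q'}\Big|_{q'=\frac{\tau+1}{\tau}}-\|\Gamma_\sigma^{-\frac{p'-1}{p'}}\circ\hat\cM(\rho')\|_{p'}\Big|_{p'=\frac{\eta\tau+1}{\eta\tau}}
\end{align*}
(with $\rho'$ suitably normalised); one checks $F(0)=0$, and Lemma~\ref{lem:derivative} together with $\hat\cM\circ\cM(\sigma)=\sigma$ gives $F'(0)=D(\cM(\rho)\|\cM(\sigma))-\eta D(\rho\|\sigma)\le 0$ under hypothesis $(1)$ — wait, that sign is backwards, so more carefully one wants to show $F(\tau)\ge0$ for small $\tau$ and since $F(0)=0$ one needs $F'(0)\ge0$; I would therefore set up the comparison so that the derivative at $\tau=0$ computes to $\eta D(\rho\|\sigma)-D(\cM(\rho)\|\cM(\sigma))$, which is nonnegative by $(1)$. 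A little care is needed because $F'(0)\ge0$ only gives the inequality for $\tau$ in some state-dependent neighbourhood, whereas \eqref{equ:hyperdpi} asks for a \emph{uniform} $\epsilon$; this is handled by a compactness argument over the (compact) set of input states $\rho$ together with continuity of the relevant map $(\tau,\rho)\mapsto F(\tau)$, or alternatively by an explicit second-order bound.

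For $(3)\Rightarrow(1)$: divide \eqref{equ:recovery} by $\tau$ and let $\tau\to0^+$. On the right, $\frac1\tau D_{\frac{\tau+1}{\tau}}(\rho\|\cM(\sigma))$ — after translating the Rényi divergence back into a derivative of a weighted norm via Lemma~\ref{lem:derivative} and the relation $D_p=\frac{p}{p-1}\log\|\cdot\|_{p,\sigma}$ — converges to $D(\rho\|\cM(\sigma))$; likewise the left side converges to $D(\hat\cM(\rho)\|\sigma)$ after accounting for the prefactor $\to1$. Applying this with $\rho$ replaced by $\cM(\rho_0)$ and using $\hat\cM\circ\cM(\rho_0)=\rho_0$ when equality would hold, but in general only $\hat\cM(\cM(\rho_0))$, one gets $D(\hat\cM(\cM(\rho_0))\|\sigma)\le \eta\, D(\cM(\rho_0)\|\cM(\sigma))$; combining with the data processing inequality $D(\cM(\rho_0)\|\cM(\sigma))\ge D(\hat\cM\cM(\rho_0)\|\hat\cM\cM(\sigma))=D(\hat\cM\cM(\rho_0)\|\sigma)$ in the \emph{other} direction does not immediately close the loop, so the cleaner route is: apply \eqref{equ:recovery} directly with input $\rho$, take $\tau\to0$ to obtain $D(\hat\cM(\rho)\|\sigma)\le\eta D(\rho\|\cM(\sigma))$ for all $\rho\in\cD(B)$, then specialise $\rho=\cM(\rho_0)$ and use that $\hat\cM$ recovers $\sigma$ from $\cM(\sigma)$ plus DPI for $\hat\cM$ to sandwich $D(\rho_0\|\sigma)\le D(\hat\cM\cM(\rho_0)\|\sigma)\le\ldots$ — I expect the honest statement to be that $(3)$ with $\tau\to0$ yields precisely the reversed SDPI $D(\hat\cM(\rho)\|\sigma)\le\eta D(\rho\|\cM(\sigma))$, and that this is equivalent to $(1)$ by the known correspondence between SDPIs for $\cM$ and for its Petz map (the same duality already used for $(2)\Leftrightarrow(3)$ at the infinitesimal level).

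The main obstacle I anticipate is precisely this last point: matching up the direction of the inequality under the Petz map, i.e. verifying cleanly that ``SDPI for $\cM$ at $\sigma$ with constant $\eta$'' is the same as the limiting form of \eqref{equ:recovery}, without a sign error, and getting the \emph{uniformity} of $\epsilon$ in $(2)$ and $(3)$ right rather than merely a pointwise-in-$\rho$ statement. Differentiating weighted $p$-norms at $p=1$ (Lemma~\ref{lem:derivative}) is the engine, but one must track that $\Gamma^{-1}_\sigma$ versus the fractional powers $\Gamma^{-\frac{p'-1}{p'}}$ agree to first order at the relevant endpoint, which is a short but delicate computation I would do carefully rather than wave at.
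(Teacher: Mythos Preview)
Your equivalence $(2)\Leftrightarrow(3)$ via norm duality is fine and matches the paper. The real problem is your route $(3)\Rightarrow(1)$. In \eqref{equ:recovery} the R\'enyi orders are the H\"older \emph{conjugates} $p'=\tfrac{\eta\tau+1}{\eta\tau}$ and $q'=\tfrac{\tau+1}{\tau}$, so as $\tau\to 0^+$ they tend to $+\infty$, not to $1$. Hence the limit of \eqref{equ:recovery} is a $D_\infty$ statement (and in fact the prefactor $\tfrac{1+\eta\tau}{1+\tau}\to 1$, so the $\eta$ disappears entirely: you just recover DPI for $\hat\cM$). Lemma~\ref{lem:derivative} does not apply here, because it computes derivatives of $\|\cdot\|_{p,\sigma}$ at $p=1$, and your ``divide by $\tau$ and let $\tau\to 0$'' step does not produce such a derivative. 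This is exactly the ``sign/direction'' obstacle you flagged, and it is genuine, not merely cosmetic.

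The paper bypasses this by going $(2)\Rightarrow(1)$ directly, and the point you are missing is the identity
\[
\hat\cM^*\big(\Gamma_\sigma^{-1}(\rho)\big)=\Gamma_{\cM(\sigma)}^{-1}\big(\cM(\rho)\big)\,,
\]
immediate from $\hat\cM^*=\Gamma_{\cM(\sigma)}^{-1}\circ\cM\circ\Gamma_\sigma$. Thus, applying \eqref{equ:hyperdpi} to $X=\Gamma_\sigma^{-1}(\rho)$ gives
\[
\|\Gamma_{\cM(\sigma)}^{-1}(\cM(\rho))\|_{1+\tau,\cM(\sigma)}\le \|\Gamma_\sigma^{-1}(\rho)\|_{1+\eta\tau,\sigma}\,,
\]
i.e.\ a comparison of $D_{1+\tau}(\cM(\rho)\|\cM(\sigma))$ against $D_{1+\eta\tau}(\rho\|\sigma)$, with R\'enyi orders \emph{near $1$}; letting $\tau\to 0$ then yields \eqref{equ:strondpi12} immediately. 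The same identity makes $(1)\Rightarrow(2)$ a one-line Taylor expansion via Lemma~\ref{lem:derivative} applied to both sides of this last display (restricting to $X\ge 0$, which suffices for $p\to q$ norms of completely positive maps), rather than the convoluted $F(\tau)$ you set up in the dual picture. Your worry about the \emph{uniformity} in $\rho$ of the $\epsilon$ in $(2)$ is legitimate; the paper also leaves this implicit, and your suggested compactness/continuity argument is the natural way to close that gap.
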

\begin{proof}
	Let us start by showing that~\eqref{equ:hyperdpi} implies~\eqref{equ:strondpi12}. 
	Note that by setting $X=\Gamma^{-1}(\rho)$, ~\eqref{equ:hyperdpi} gives:
	\begin{align*}
	\|X\|_{1+\eta\tau,\sigma}\geq \|\Gamma^{-1}_{\cM(\sigma)}(\cM(\rho))\|_{1+\tau,\cM(\sigma)}.
	\end{align*}
	Taking the logarithm and rewriting the Equation above in terms of sandwiched Renyi divergences gives
	\begin{align*}
	\frac{(1+\tau)}{\eta(1+\eta\tau)}D_{1+\tau}(\cM(\rho) \| \cM(\sigma)) & \leq D_{1+\eta\tau}(\rho \| \sigma)
	\end{align*}
	and Eq.~\eqref{equ:strondpi1} follows by taking the limit $\tau\to0$.
	Let us show the other direction. 
	It follows from Lemma~\ref{lem:derivative} and a Taylor expansion that
	\begin{align*}
	&\|X\|_{1+\eta\tau,\sigma}=1+\eta\tau D(\rho \| \sigma)+\mathcal{O}(\tau^2),\\
	&\|\Gamma^{-1}_{\cM(\sigma)}(\cM(\rho))\|_{1+\tau,\cM(\sigma)}=1+\tau D(\cM(\rho) \| \cM(\sigma))+\mathcal{O}(\tau^2),
	\end{align*}
	from which it follows that 
	\begin{align*}
	\|X\|_{1+\eta\tau,\sigma}\geq\|\Gamma^{-1}_{\cM(\sigma)}(\cM(\rho))\|_{1+\tau,\cM(\sigma)}
	\end{align*}
	for all $\tau$ small enough. The claim follows from the fact that is suffices to restrict to positive operators when computing $p\to q$ norms of completely positive maps~\cite{Olkiewicz1999}.
	Finally, Eq.~\eqref{equ:recovery} follows from a duality argument. Indeed, it is easy to show that Eq.~\eqref{equ:hyperdpi} is equivalent to Eq.~\eqref{equ:petz}. Taking the logarithm, we obtain the inequality in terms of the Petz recovery map.
\end{proof}
Thus, we see that an SDPI is equivalent to a nontrivial $p\to q$ inequality for the Petz recovery map for $p,q$ for all $p$ slightly larger than $1$. 
We also immediately obtain a contractive characterization of the less noisy order:
\begin{corollary}
Let $\cm,\cn$ be two quantum channels. Then the following are equivalent:
\begin{enumerate}
\item $\cn \succeq_{\operatorname{l.n.}} \cm$ 
\item For all $\sigma\in\cD(\cH)$ there and $X>0$ there exists an $\epsilon$ such that for all $\tau\leq \epsilon$ we have:
\begin{align}\label{equ:normlessnoisy}
\|\hat{\cm}^*(X)\|_{1+\tau,\cM(\sigma)}\leq \|\hat{\cn}^*(X)\|_{1+\tau,\cN(\sigma)}
\end{align}
\end{enumerate} 
\end{corollary}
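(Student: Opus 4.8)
The plan is to combine the equivalence from Proposition~\ref{propDmutualinfo} (which reduces $\cn \succeq_{\operatorname{l.n.}} \cm$ to the family of relative entropy inequalities $D(\cn(\rho)\|\cn(\sigma)) \ge D(\cm(\rho)\|\cm(\sigma))$ for all $\sigma$ and all $\rho$ with $\supp\rho \subseteq \supp\sigma$) with the infinitesimal hypercontractivity characterization of each such inequality obtained in Proposition~\ref{prop4}. Concretely, fixing $\sigma > 0$, the less noisy condition restricted to that reference state says exactly that $D(\cm(\rho)\|\cm(\sigma)) \le \eta \, D(\rho\|\sigma)$ with $\eta = 1$ fails to be improvable, i.e.\ we want $D(\cn(\rho)\|\cn(\sigma)) \ge D(\cm(\rho)\|\cm(\sigma))$ for all admissible $\rho$. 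The idea is to run the argument of Proposition~\ref{prop4} not with $\cM$ against the identity but with $\cm$ against $\cn$, so that the ``reference'' weighted norm is $\|\cdot\|_{p,\cn(\sigma)}$ rather than $\|\cdot\|_{p,\sigma}$.

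First I would record the direct analogue of Lemma~\ref{lem:derivative}: for a channel $\cn$ and the Petz map $\hat\cn$ with respect to $\sigma$, one has $\frac{d}{dp}\|\Gamma_{\cn(\sigma)}^{-1}(\cn(\rho))\|_{p,\cn(\sigma)}\big|_{p=1} = D(\cn(\rho)\|\cn(\sigma))$, and similarly for $\cm$; this is immediate from Lemma~\ref{lem:derivative} applied with reference state $\cn(\sigma)$ (resp.\ $\cm(\sigma)$) to the state $\cn(\rho)$ (resp.\ $\cm(\rho)$), using that $\|\Gamma_{\cn(\sigma)}^{-1}(\cn(\rho))\|_{1,\cn(\sigma)} = \tr[\cn(\rho)] = 1$. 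Next, for the implication (2) $\Rightarrow$ (1), I would set $X = \Gamma_{\sigma}^{-1}(\rho)$ in~\eqref{equ:normlessnoisy}, note that $\hat{\cm}^*(X) = \Gamma_{\cm(\sigma)}^{-1}(\cm(\rho))$ and $\hat{\cn}^*(X) = \Gamma_{\cn(\sigma)}^{-1}(\cn(\rho))$ (this is the defining property $\hat{\cM} = \Gamma_\sigma \circ \cM^* \circ \Gamma_{\cM(\sigma)}^{-1}$ dualized), rewrite both sides as sandwiched Rényi divergences $D_{1+\tau}$, take $\tau \to 0$, and invoke the derivative formula to recover $D(\cn(\rho)\|\cn(\sigma)) \ge D(\cm(\rho)\|\cm(\sigma))$; then quantify over $\sigma$ and apply Proposition~\ref{propDmutualinfo}. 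For (1) $\Rightarrow$ (2), I would Taylor-expand both weighted norms around $p = 1$ using the derivative formula, so that $\|\hat{\cm}^*(X)\|_{1+\tau,\cm(\sigma)} = 1 + \tau D(\cm(\rho)\|\cm(\sigma)) + \mathcal{O}(\tau^2)$ and likewise for $\cn$; the hypothesis makes the linear coefficients satisfy $D(\cm(\rho)\|\cm(\sigma)) \le D(\cn(\rho)\|\cn(\sigma))$, hence the inequality~\eqref{equ:normlessnoisy} holds for all sufficiently small $\tau$, and one reduces from positive operators $X > 0$ to general inputs by the fact (used already in the proof of Proposition~\ref{prop4}, citing~\cite{Olkiewicz1999}) that $p \to q$ norms of completely positive maps are attained on positive operators.

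The main obstacle I anticipate is bookkeeping around the $\mathcal{O}(\tau^2)$ terms: the threshold $\epsilon$ below which~\eqref{equ:normlessnoisy} holds genuinely depends on $X$ (equivalently on $\rho$), which is why the statement is phrased with ``for all $\sigma$ and $X > 0$ there exists an $\epsilon$'' rather than a uniform $\epsilon$ — so one must be careful not to claim a uniform bound, and correspondingly in the (2) $\Rightarrow$ (1) direction one only gets the limit $\tau \to 0$ for each fixed $\rho$, which is exactly enough since Proposition~\ref{propDmutualinfo} only needs the pointwise relative entropy inequality. A secondary subtlety is the support condition: the Petz map $\hat\cn$ is only a channel when restricted to $\supp\sigma$, so throughout one restricts to $\rho$ with $\supp\rho \subseteq \supp\sigma$, matching the hypothesis of Proposition~\ref{propDmutualinfo}; the case of general $\sigma$ (not full rank) and of $\rho$ not supported in $\sigma$ is handled as in the relative-entropy characterization, where $D(\rho\|\sigma) = \infty$ makes the inequality trivial. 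I would therefore state the corollary's proof as: ``This follows by applying the argument of Proposition~\ref{prop4} with the identity channel replaced by $\cn$, together with the characterization of $\succeq_{\operatorname{l.n.}}$ in Proposition~\ref{propDmutualinfo},'' and then spell out the two short directions above.
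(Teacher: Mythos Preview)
Your proposal is correct and follows essentially the same route as the paper's own proof: for $(2)\Rightarrow(1)$ you rewrite the weighted norm inequality as a sandwiched R\'enyi divergence inequality via $\hat{\cm}^*(\Gamma_\sigma^{-1}(\rho))=\Gamma_{\cm(\sigma)}^{-1}(\cm(\rho))$, take $\tau\to 0$, and invoke Proposition~\ref{propDmutualinfo}; for $(1)\Rightarrow(2)$ you Taylor-expand around $p=1$ using Lemma~\ref{lem:derivative}. The paper's proof is terser but identical in strategy. One small remark: your final reduction ``from positive operators $X>0$ to general inputs'' is unnecessary here, since the statement of the corollary is already restricted to $X>0$.
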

\begin{proof}
By rewriting the inequality in Eq.~\eqref{equ:normlessnoisy} in terms of the R\'enyi divergences and taking the limit $\tau\to0$ we see that they imply that
\begin{align*}
D(\cn(\rho)\|\cn(\sigma))\geq D(\cm(\rho)\|\cm(\sigma))
\end{align*}
for all $\rho,\sigma$. By Proposition~\ref{propDmutualinfo}, this is equivalent to $ \cn \succeq_{\operatorname{l.n.}} \cm$. 
To show the other direction we may again resort to a Taylor expansion. 
\end{proof}

It is straightforward to adapt the results above to the completely less noisy ordering by suitably adapting the involved channels and states.

However, it might be difficult to compute the $1+\tau$ norms analytically, as it is more convenient to work with integer values of $p$.
Thus, we will show that any nontrivial $p\to q$ inequality gives rise to an SDPI, possibly with an error term. 

\begin{proposition}\label{prophyper}

	Suppose that for some $1<p\leq q<\infty$, $C\geq 1$ and two channels $\cM,\cN$, where we assume that $\cN$ is invertible, we have
	\begin{align}\label{equ:hypercontractivitychannel}
	\|\hat{\cM}^*\circ (\hat{\cN}^*)^{-1}\|_{(p,\cN(\sigma))\to(q,\cM(\sigma))}\leq C.
	\end{align}
	and 
	\begin{align}\label{equ:contraction-trace}
	\|\hat{\cM}^*\circ (\hat{\cN}^*)^{-1}\|_{(1,\cN(\sigma))\to(1,\cM(\sigma))}\le 1\,.
	\end{align}
	Then
	\begin{align}\label{equ:strondpi1}
	D(\cM(\rho) \| \cM(\sigma)) & \leq \frac{(p-1)q}{p(q-1)}D(\cN(\rho) \| \cN(\sigma))+\frac{q}{q-1}\log(C) \quad \forall \rho \in \cD(\cH).
	\end{align}
\end{proposition}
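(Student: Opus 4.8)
The plan is to combine the two norm hypotheses by complex interpolation and then differentiate at the trivial endpoint. Write $T:=\hat{\cM}^*\circ(\hat{\cN}^*)^{-1}$, which is well defined precisely because $\cN$, hence $\hat{\cN}^*$, is invertible. Hypothesis~\eqref{equ:hypercontractivitychannel} reads $\|T\|_{(p,\cN(\sigma))\to(q,\cM(\sigma))}\le C$ and hypothesis~\eqref{equ:contraction-trace} reads $\|T\|_{(1,\cN(\sigma))\to(1,\cM(\sigma))}\le 1$. The $\mu$-weighted Schatten spaces $(\cB(\cK),\|\cdot\|_{r,\mu})_r$ form complex interpolation scales (this is the Stein--Hirschman/Kosaki interpolation underlying the data-processing proof for the sandwiched R\'enyi divergences, cf.~\cite{Beigi2013}), so Riesz--Thorin interpolation between these two endpoints gives, for every $\theta\in(0,1]$,
\begin{align*}
\|T\|_{(p_\theta,\cN(\sigma))\to(q_\theta,\cM(\sigma))}\le C^{\theta},\qquad \frac1{p_\theta}=1-\theta+\frac\theta p,\quad \frac1{q_\theta}=1-\theta+\frac\theta q,
\end{align*}
so that $p_\theta,q_\theta\to 1$ as $\theta\to 0$ with one-sided derivatives $q_\theta'(0)=\tfrac{q-1}{q}$ and $p_\theta'(0)=\tfrac{p-1}{p}$.

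Next I would insert the right test operator. Using $\hat{\cM}^*=\Gamma_{\cM(\sigma)}^{-1}\circ\cM\circ\Gamma_\sigma$ and likewise for $\cN$, the choice $X=\Gamma_\sigma^{-1}(\rho)$ gives $\hat{\cM}^*(X)=\Gamma_{\cM(\sigma)}^{-1}(\cM(\rho))$ and $\hat{\cN}^*(X)=\Gamma_{\cN(\sigma)}^{-1}(\cN(\rho))$. Applying the interpolated bound to $Y=\hat{\cN}^*(X)$ (so that $T(Y)=\hat{\cM}^*(X)$), taking logarithms and using the identity $\log\|\Gamma_\mu^{-1}(\omega)\|_{r,\mu}=\tfrac{r-1}{r}D_r(\omega\|\mu)$ that is immediate from the definition of $D_r$, this becomes
\begin{align*}
\frac{q_\theta-1}{q_\theta}\,D_{q_\theta}(\cM(\rho)\|\cM(\sigma))\;\le\;\frac{p_\theta-1}{p_\theta}\,D_{p_\theta}(\cN(\rho)\|\cN(\sigma))+\theta\log C\,.
\end{align*}
Both sides vanish at $\theta=0$ (the prefactors vanish and $\|\Gamma_\mu^{-1}(\omega)\|_{1,\mu}=1$), the inequality holds for all small $\theta\ge 0$, and both sides are differentiable in $\theta$ at $0$ because $D_r$ is smooth in $r$ near $r=1$ and $r\mapsto\|\Gamma_\mu^{-1}(\omega)\|_{r,\mu}$ has derivative $D(\omega\|\mu)$ at $r=1$ by \Cref{lem:derivative}. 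Comparing one-sided derivatives at $\theta=0$ (equivalently: divide by $\theta$ and let $\theta\to 0$, using $D_r\to D$ as $r\to 1$) yields
\begin{align*}
\frac{q-1}{q}\,D(\cM(\rho)\|\cM(\sigma))\le \frac{p-1}{p}\,D(\cN(\rho)\|\cN(\sigma))+\log C\,,
\end{align*}
and multiplying by $\tfrac{q}{q-1}$ gives exactly \eqref{equ:strondpi1}.

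The main obstacle I anticipate is making the first step fully rigorous: identifying $(\cB(\cK),\|\cdot\|_{r,\mu})_r$ with a genuine complex interpolation family (Kosaki's $L_r$ spaces relative to a state), verifying the admissibility/analyticity hypotheses needed to run Stein interpolation for the fixed operator $T$ on the strip, and checking that the $p=q=1$ endpoint behaves well even though for $r<1$ one only has quasi-norms — a range we fortunately never enter here since $p_\theta,q_\theta\ge 1$. A secondary technicality is the limit $\theta\to0$: one needs $\theta\mapsto D_{p_\theta}(\cN(\rho)\|\cN(\sigma))$ and $\theta\mapsto D_{q_\theta}(\cM(\rho)\|\cM(\sigma))$ to be finite and continuously differentiable near $\theta=0$, which is where the assumptions $\sigma>0$ and invertibility of $\cN$ (ensuring finiteness of all the divergences involved) enter. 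As a fallback, the Taylor-expansion argument used in the proof of \Cref{prop4} can be transcribed almost verbatim to obtain the same differential inequality while sidestepping any interpolation-regularity concerns.
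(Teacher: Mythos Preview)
Your proposal is correct and follows essentially the same route as the paper: interpolate between the $(1,1)$ and $(p,q)$ bounds for $T=\hat{\cM}^*\circ(\hat{\cN}^*)^{-1}$, plug in $Y=\hat{\cN}^*(\Gamma_\sigma^{-1}\rho)$, rewrite in terms of sandwiched R\'enyi divergences, and send the interpolation parameter to the trivial endpoint. The only cosmetic difference is the parametrization (the paper writes $\tfrac1{p_\theta}=\theta+\tfrac{1-\theta}{p}$ and lets $\theta\to1$, you write $\tfrac1{p_\theta}=1-\theta+\tfrac{\theta}{p}$ and let $\theta\to0$) and the final step: the paper computes $\tfrac{(p_\theta-1)q_\theta}{p_\theta(q_\theta-1)}$ and $\tfrac{q_\theta(1-\theta)}{q_\theta-1}$ explicitly and observes that both are \emph{independent of $\theta$}, so one can simply pass to the limit in the R\'enyi orders without any differentiation argument; your ``divide by $\theta$'' step is the same computation in disguise, since $\tfrac{p_\theta-1}{p_\theta}=\theta\tfrac{p-1}{p}$ and $\tfrac{q_\theta-1}{q_\theta}=\theta\tfrac{q-1}{q}$ exactly.
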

\begin{proof}
Combining Eq.~\eqref{equ:hypercontractivitychannel}, Eq.~\eqref{equ:contraction-trace} and the Stein-Weiss interpolation theorem (see e.g.~\cite{muller2017monotonicity} for a quantum information friendly exposition) yields that:
\begin{align}\label{equ:interpolationresult}
\|\hat{\cM}^*\circ (\hat{\cN}^*)^{-1}\|_{(p_\theta,\cN(\sigma))\to(q_\theta,\cM(\sigma))}\leq C^{1-\theta}
\end{align}
with
\begin{align*}
\frac{1}{p_\theta}=\theta+\frac{1-\theta}{p},\quad\frac{1}{q_\theta}=\theta+\frac{1-\theta}{q}.
\end{align*}
for $0\leq \theta\leq1$, which can be rewritten as:
\begin{align}\label{equ:rewrittenentropy}
\|\hat{\cM}^*(X)\|_{(q_\theta,\cM(\sigma))}\leq C^{1-\theta}\|\hat{\cN}^*(X)\|_{(q_\theta,\cN(\sigma))}
\end{align}
for all $X$.
Solving for $p_\theta$ and $q_\theta$ we see that
\begin{align*}
p_\theta=\frac{p}{\theta(p-1)+1},\quad q_\theta=\frac{q}{\theta(q-1)+1}.
\end{align*}
Taking the logarithm of Eq.~\eqref{equ:rewrittenentropy} and rewriting in terms of R\' enyi entropies we see that
\begin{align}\label{equ:beforetakinglimit}
D_{q_\theta}(\cM(\rho) \| \cM(\sigma)) & \leq \frac{(p_\theta-1)q_\theta}{p_\theta(q_\theta-1)}D_{p_\theta}(\cN(\rho) \| \cN(\sigma))+\frac{q_\theta(1-\theta)}{q_\theta-1}\log(C).
\end{align}
A close inspection of the expressions involved in the formula above shows that:
\begin{align*}
\frac{(p_\theta-1)q_\theta}{p_\theta(q_\theta-1)}=\frac{(p-1)q}{p(q-1)},\quad \frac{q_\theta(1-\theta)}{q_\theta-1}=\frac{q}{q-1}.
\end{align*}
The claim then follows by taking the limit $\theta\to 1$ in Eq.~\eqref{equ:beforetakinglimit} and noting that $p_\theta\to1$ and $q_\theta\to1$.
\end{proof}

For any quantum channel $\cM$ we have by the Russ-Dye theorem that
\begin{align*}
\|\hat{\cM}^*\|_{(1,\sigma)\to(1,\cM(\sigma))}\leq 1
\end{align*}
and, thus, Eq.\eqref{equ:contraction-trace} is always satisfied for $\cN=\operatorname{id}$. That is, if we wish to prove an SDPI for a quantum channel with and additive error term, a $p\to q$ inequality suffices.

The case of $2\to2$ norms of the Proposition above is particularly interesting, as it leads to entropic inequalities that tensorize:

\begin{corollary}\label{prop:strongdata}
	Suppose that $\cN$ is invertible as a linear map. Then for some $\sigma>0$, 
\begin{align}\label{equ:contractionrenyi2}
\|\Gamma_{\cM(\sigma)}^{-\frac{1}{2}}\circ \cM\circ \cN^{-1}\circ\Gamma_{\cN(\sigma)}^{\frac{1}{2}}\|_{2\to 2}\leq C.
\end{align}
implies that for all states $\rho$
	\begin{align}
D_2(\cM^{\otimes n}(\rho)\|\cM(\sigma)^{\otimes n})\le \,D_2(\cN^{\otimes n}(\rho)\|\cN(\sigma)^{\otimes n})+2n\log (C)\,.
\end{align}
\end{corollary}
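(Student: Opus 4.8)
The plan is to rewrite the $2\to 2$ norm hypothesis~\eqref{equ:contractionrenyi2} as a weighted-norm contraction for the composition of Petz recovery maps --- precisely the hypothesis of Proposition~\ref{prophyper} in the borderline case $p=q=2$ --- and then to exploit that $2\to 2$ operator norms on Hilbert--Schmidt space are multiplicative under tensor products.

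\textbf{Step 1 (reformulation via Petz maps).} I would first record the identity $\|X\|_{2,\sigma}=\|\Gamma_\sigma^{1/2}(X)\|_2$, where $\|\cdot\|_2$ denotes the Hilbert--Schmidt norm, and, using that the maps $\Gamma^p_\tau$ are self-adjoint and $(\cM^*)^*=\cM$, that $\hat{\cM}^*=\Gamma_{\cM(\sigma)}^{-1}\circ\cM\circ\Gamma_\sigma$, with the analogous formula for $\cN$ so that $(\hat{\cN}^*)^{-1}=\Gamma_\sigma^{-1}\circ\cN^{-1}\circ\Gamma_{\cN(\sigma)}$ (here the invertibility of $\cN$ enters). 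Composing these and substituting $Y=\Gamma_{\cN(\sigma)}^{1/2}(X)$ in the definition of the $(2,\cN(\sigma))\to(2,\cM(\sigma))$ norm, all the intermediate $\Gamma$ factors telescope, leaving
\[
\|\hat{\cM}^*\circ(\hat{\cN}^*)^{-1}\|_{(2,\cN(\sigma))\to(2,\cM(\sigma))}=\|\Gamma_{\cM(\sigma)}^{-1/2}\circ\cM\circ\cN^{-1}\circ\Gamma_{\cN(\sigma)}^{1/2}\|_{2\to 2}.
\]
Hence~\eqref{equ:contractionrenyi2} is exactly the statement $\|\hat{\cM}^*\circ(\hat{\cN}^*)^{-1}\|_{(2,\cN(\sigma))\to(2,\cM(\sigma))}\le C$, i.e.\ the case $p=q=2$ of~\eqref{equ:hypercontractivitychannel}.

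\textbf{Step 2 (tensorization).} On Hilbert--Schmidt space the $2\to 2$ norm of a superoperator is its operator norm, and operator norms multiply under tensor products; so Step~1 gives $\|(\Gamma_{\cM(\sigma)}^{-1/2}\circ\cM\circ\cN^{-1}\circ\Gamma_{\cN(\sigma)}^{1/2})^{\otimes n}\|_{2\to 2}\le C^n$. Using $\Gamma_{\tau^{\otimes n}}^{p}=(\Gamma_\tau^{p})^{\otimes n}$, $(\cM\circ\cN^{-1})^{\otimes n}=\cM^{\otimes n}\circ(\cN^{\otimes n})^{-1}$, and $\widehat{\cM^{\otimes n}}=\hat{\cM}^{\otimes n}$ (the Petz map of $\cM^{\otimes n}$ with respect to $\sigma^{\otimes n}$), running Step~1 backwards at the level of $n$ copies yields $\|\widehat{\cM^{\otimes n}}^*\circ(\widehat{\cN^{\otimes n}}^*)^{-1}(Y)\|_{2,\cM(\sigma)^{\otimes n}}\le C^n\,\|Y\|_{2,\cN(\sigma)^{\otimes n}}$ for every operator $Y$.

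\textbf{Step 3 (back to $D_2$).} For a state $\rho$, set $X=\Gamma_{\sigma^{\otimes n}}^{-1}(\rho)$, well defined since $\sigma>0$. Then $\widehat{\cN^{\otimes n}}^*(X)=\Gamma_{\cN(\sigma)^{\otimes n}}^{-1}(\cN^{\otimes n}(\rho))$ and $\widehat{\cM^{\otimes n}}^*(X)=\Gamma_{\cM(\sigma)^{\otimes n}}^{-1}(\cM^{\otimes n}(\rho))$, so applying Step~2 with $Y=\widehat{\cN^{\otimes n}}^*(X)$ gives $\|\Gamma_{\cM(\sigma)^{\otimes n}}^{-1}(\cM^{\otimes n}(\rho))\|_{2,\cM(\sigma)^{\otimes n}}\le C^n\|\Gamma_{\cN(\sigma)^{\otimes n}}^{-1}(\cN^{\otimes n}(\rho))\|_{2,\cN(\sigma)^{\otimes n}}$. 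Taking $2\log(\cdot)$ of both sides and recalling $D_2(\xi\|\tau)=2\log\|\Gamma_\tau^{-1}(\xi)\|_{2,\tau}$ produces the claimed bound $D_2(\cM^{\otimes n}(\rho)\|\cM(\sigma)^{\otimes n})\le D_2(\cN^{\otimes n}(\rho)\|\cN(\sigma)^{\otimes n})+2n\log C$.

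\textbf{Expected obstacle.} The only genuine bookkeeping is Step~1: writing down $\hat{\cM}^*$ correctly and verifying that after the substitution the product of all $\Gamma$ factors collapses to exactly $\Gamma_{\cM(\sigma)}^{-1/2}\circ\cM\circ\cN^{-1}\circ\Gamma_{\cN(\sigma)}^{1/2}$, and that the same identities carry over verbatim to $n$ copies, which in turn reduces to the trivial identity $\widehat{\cM^{\otimes n}}=\hat{\cM}^{\otimes n}$. Multiplicativity of the $2\to 2$ norm under tensor products and the one-line formula for $D_2$ in terms of the weighted norm are standard, so I anticipate no real difficulty there; the tensorization of the resulting $D_2$ inequality is then automatic, which is the whole point of singling out the $2\to 2$ case.
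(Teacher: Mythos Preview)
Your proposal is correct and follows essentially the same approach as the paper: both rewrite the hypothesis~\eqref{equ:contractionrenyi2} as a weighted $2$-norm bound $\|\hat{\cM}^*(X)\|_{2,\cM(\sigma)}\le C\|\hat{\cN}^*(X)\|_{2,\cN(\sigma)}$ via the identity $\|X\|_{2,\tau}=\|\Gamma_\tau^{1/2}X\|_2$, specialize to $X=\Gamma_\sigma^{-1}(\rho)$ to obtain the $D_2$ inequality, and tensorize using multiplicativity of the $2\to 2$ (operator) norm. The only cosmetic difference is ordering: the paper first proves the $n=1$ case and then tensorizes the hypothesis, whereas you tensorize the norm inequality first and then specialize to states; the algebra is identical.
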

\begin{proof}
Define $\hat{\cM}=\Gamma^{-1}_{\cM(\sigma)}\circ \cM\circ \Gamma_\sigma$ and $\hat{\cN}=\Gamma_{\cN(\sigma)}^{-1}\circ \cN\circ \Gamma_\sigma$. 
Note that Eq.~\eqref{equ:contractionrenyi2} is equivalent to
\begin{align*}
\|\Gamma_{\cM(\sigma)}^{\frac{1}{2}}\circ \hat{\cM}\circ \hat{\cN}^{-1}\circ\Gamma_{\cN(\sigma)}^{-\frac{1}{2}}\|_{2\to 2}\leq C.
\end{align*}
We can further massage this to
\begin{align}
\|\Gamma_{\cM(\sigma)}^{\frac{1}{2}}\circ \hat{\cM}(X)\|_{2}\leq C\|\Gamma_{\cN(\sigma)}^{\frac{1}{2}}\circ \hat{\cN}(X)\|_{2}.
\end{align}
for all operators $X$.
To see this, set $X=\hat{\cN}^{-1}\circ\Gamma_{\cN(\sigma)}^{-\frac{1}{2}}(Y)$.
Moreover, we have 
\begin{align*}
\|\Gamma_{\cN(\sigma)}^{\frac{1}{2}}\circ \hat{\cN}(X)\|_{2}=\|\hat{\cN}(X)\|_{2,\cN(\sigma)},\quad \|\Gamma_{\cM(\sigma)}^{\frac{1}{2}}\circ \hat{\cM}(X)\|_{2}=\|\hat{\cM}(X)\|_{2,\cM(\sigma)}.
\end{align*}
Thus, we conclude that for all $X$ we have:
\begin{align*}
\|\hat{\cM}(X)\|_{2,\cM(\sigma)}\leq C\|\hat{\cN}(X)\|_{2,\cN(\sigma)}.
\end{align*}
Picking $X=\sigma^{-\frac{1}{2}}\rho\sigma^{-\frac{1}{2}}$ and taking the logarithm yields the claim for $n=1$. The claim for other $n>1$ follows from noting that the condition in Eq.~\eqref{equ:contractionrenyi2}  tensorizes. Indeed, note that 
for $\cN^{\otimes n}$ and $\cM^{\otimes n}$ we have:
\begin{align*}
&\|\Gamma_{\cM(\sigma)^{\otimes n}}^{-\frac{1}{2}}\circ \cM^{\otimes n}\circ \left(\cN^{-1}\right)^{\otimes n}\circ\Gamma_{\cN(\sigma)^{\otimes n}}^{\frac{1}{2}}\|_{2\to 2}=\|\left(\Gamma_{\cM(\sigma)}^{-\frac{1}{2}}\circ \cM\circ \cN^{-1}\circ\Gamma_{\cN(\sigma)}^{\frac{1}{2}}\right)^{\otimes n}\|_{2\to 2}\\&=\|\left(\Gamma_{\cM(\sigma)}^{-\frac{1}{2}}\circ \cM\circ \cN^{-1}\circ\Gamma_{\cN(\sigma)}^{\frac{1}{2}}\right)\|_{2\to 2}^n.
\end{align*}
The last equality follows from the fact that the $2\to2$ norm just corresponds to the operator norm of the map, which is multiplicative.
\end{proof}
Note that the condition in Eq.~\eqref{equ:contractionrenyi2} is equivalent to the operator norm of $\Gamma_{\cM(\sigma)}^{-\frac{1}{2}}\circ \cM\circ \cN^{-1}\circ\Gamma_{\cN(\sigma)}^{\frac{1}{2}}$ being bounded by $C$. In particular, this means that it can be verified efficiently given the channels and a target state $\sigma$. Thus, we have identified a condition that implies a strong data processing inequality with the feature that it can both be verified efficiently and tensorizes.

This statement can also be used to obtain contraction coefficients that tensorize:
\begin{corollary}
For some $p>0$ let $\cN_p$ be the channel $\cN_p(\rho)=p\rho+(1-p)\sigma$ and suppose that
\begin{align}\label{equ:lessnoisydepolarizing}
\|\Gamma_{\cM(\sigma)}^{-\frac{1}{2}}\circ \cM\circ \cN_p^{-1}\circ\Gamma_{\sigma}^{\frac{1}{2}}\|_{2\to 2}\leq 1
\end{align}
Then for all states $\rho$
\begin{align*}
D_2(\cM^{\otimes n}(\rho)\|\cM^{\otimes n}(\sigma^{\otimes n}))\le \alpha(p,\sigma)\,D_2(\rho\|\sigma^{\otimes n})
\end{align*}
with $\alpha(p,\sigma)=\operatorname{exp}\left(2(1-\|\sigma^{-1}\|^{-1})\frac{\log(p)}{\log(\|\sigma^{-1}\|)}\right)$.
\end{corollary}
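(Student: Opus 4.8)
The plan is first to use Corollary~\ref{prop:strongdata} to discard $\cM$, reducing the claim to a strong data processing inequality for the depolarizing-type channel $\cN_p$, and then to establish that inequality by hand. To carry out the reduction, note that $\cN_p$ has eigenvalue $1$ on $\sigma$ and eigenvalue $p\neq 0$ on the traceless subspace, hence is invertible, and that $\cN_p(\sigma)=\sigma$, so $\Gamma_{\cN_p(\sigma)}=\Gamma_\sigma$ and the hypothesis~\eqref{equ:lessnoisydepolarizing} is exactly the hypothesis~\eqref{equ:contractionrenyi2} of Corollary~\ref{prop:strongdata} for the pair $(\cM,\cN_p)$ with $C=1$. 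That corollary then yields, for every state $\rho$ on $\cH^{\otimes n}$,
\[
D_2\bigl(\cM^{\otimes n}(\rho)\,\big\|\,\cM(\sigma)^{\otimes n}\bigr)\;\le\;D_2\bigl(\cN_p^{\otimes n}(\rho)\,\big\|\,\sigma^{\otimes n}\bigr),
\]
so it remains to prove $D_2(\cN_p^{\otimes n}(\rho)\|\sigma^{\otimes n})\le\alpha(p,\sigma)\,D_2(\rho\|\sigma^{\otimes n})$ uniformly in $n$ and $\rho$, i.e.\ to bound the $D_2$-contraction coefficient of $\cN_p$ at the product reference state.

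For a single copy the key point is that $\cN_p$ scales the $\chi^2$-geometry exactly: since $\cN_p(\rho)-\sigma=p(\rho-\sigma)$ one has $\chi^2(\cN_p(\rho)\|\sigma)=p^2\chi^2(\rho\|\sigma)$, where $\chi^2(\tau\|\sigma):=\tr[\sigma^{-1/2}\tau\sigma^{-1/2}\tau]-1$. Using $e^{D_2(\tau\|\sigma)}=1+\chi^2(\tau\|\sigma)$ this reads $D_2(\cN_p(\rho)\|\sigma)=\log(1+p^2\chi^2(\rho\|\sigma))$ while $D_2(\rho\|\sigma)=\log(1+\chi^2(\rho\|\sigma))$. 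Since $x\mapsto\log(1+p^2x)/\log(1+x)$ is increasing on $(0,\infty)$ and $\chi^2(\rho\|\sigma)\le\|\sigma^{-1}\|-1$, the bound being attained at pure states, the single-copy $D_2$-contraction coefficient of $\cN_p$ at $\sigma$ equals $\log\bigl(1+p^2(\|\sigma^{-1}\|-1)\bigr)/\log\|\sigma^{-1}\|$, which an elementary estimate on the logarithm (for instance, after substituting $p=\|\sigma^{-1}\|^{-u}$ and using concavity) bounds above by $\alpha(p,\sigma)$.

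To promote this to $n$ copies with an $n$-independent constant, I would diagonalise the Hilbert--Schmidt self-adjoint map $T:=\Gamma_\sigma^{-1/2}\circ\cN_p\circ\Gamma_\sigma^{1/2}$ on $\cB(\cH)$, which fixes the unit vector $\sigma^{1/2}$ and acts as multiplication by $p$ on its orthogonal complement; hence $T^{\otimes n}$ is diagonal on the joint eigenspaces of $\cB(\cH)^{\otimes n}$, labelled by an ``interaction order'' $k\in\{0,\dots,n\}$ on which it acts as $p^{k}$. Writing $(\sigma^{\otimes n})^{-1/4}\rho(\sigma^{\otimes n})^{-1/4}=(\sigma^{1/2})^{\otimes n}+\sum_{k\ge1}W_k$ with $W_k$ of order $k$, one gets $D_2(\rho\|\sigma^{\otimes n})=\log\bigl(1+\sum_k\|W_k\|_2^2\bigr)$ and $D_2(\cN_p^{\otimes n}(\rho)\|\sigma^{\otimes n})=\log\bigl(1+\sum_k p^{2k}\|W_k\|_2^2\bigr)$; equivalently, with $F_\rho(z):=1+\sum_{k\ge1}\|W_k\|_2^2\,z^k$, which is log-convex in $\log z$, the ratio to bound is $\log F_\rho(p^2)/\log F_\rho(1)$. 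The naive estimate $\sum_k p^{2k}\|W_k\|_2^2\le p^2\sum_k\|W_k\|_2^2$ is too lossy, since it would make $\alpha$ tend to $1$ with $n$ (as $\sum_k\|W_k\|_2^2=\chi^2(\rho\|\sigma^{\otimes n})$ can be of order $\|\sigma^{-1}\|^{n}$); one must keep the level weights and exploit that the low-order coefficients are far more constrained than $\chi^2(\rho\|\sigma^{\otimes n})$ alone --- for instance $\|W_1\|_2^2=\sum_i\chi^2(\rho_i\|\sigma)\le n(\|\sigma^{-1}\|-1)$, far below $\|\sigma^{-1}\|^n-1$, while genuinely high-order correlations are suppressed by the larger powers $p^{2k}$. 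Carrying out the resulting constrained maximisation of $\log F_\rho(p^2)/\log F_\rho(1)$ over the interaction-order profiles that are realisable by states should show that the extremiser is a tensor power of the single-copy extremiser, recovering the $n$-independent bound of the previous paragraph. This last step --- pinning down which profiles $(\|W_k\|_2^2)_k$ genuinely occur and checking that the product profile is worst --- is where the main difficulty lies. (One may also rephrase this via the semigroup $\cN_p=e^{t\cL}$, $\cL=\cR_\sigma-\operatorname{id}$, $p=e^{-t}$, seeking a R\'enyi-$2$ entropy-decay rate for the depolarizing semigroup that is uniform in the number of tensor factors.)
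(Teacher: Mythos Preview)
Your first step---applying Corollary~\ref{prop:strongdata} with $C=1$ to reduce to a $D_2$-contraction statement for $\cN_p^{\otimes n}$---is exactly what the paper does, and your verification that the hypotheses line up (invertibility of $\cN_p$, $\cN_p(\sigma)=\sigma$) is correct.

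The divergence is in the second step. The paper does not attempt to prove the tensorized inequality
\[
D_2\bigl(\cN_p^{\otimes n}(\rho)\,\big\|\,\sigma^{\otimes n}\bigr)\le \alpha(p,\sigma)\,D_2(\rho\|\sigma^{\otimes n})
\]
at all: it simply invokes it as a known result from \cite{[MF16]}. Your single-copy computation (the exact $\chi^2$-scaling, the monotonicity of $x\mapsto\log(1+p^2x)/\log(1+x)$, and the sharp bound $\chi^2(\rho\|\sigma)\le\|\sigma^{-1}\|-1$) is correct and gives a transparent picture of why the constant has the form it does. But, as you yourself flag, your $n$-copy argument is incomplete: you set up the interaction-order decomposition and the function $F_\rho$, observe that the naive bound is too weak, and then stop short of the constrained maximisation that would be needed. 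That last step is genuinely the nontrivial content of \cite{[MF16]}, and it is not obvious that the extremiser over realisable profiles $(\|W_k\|_2^2)_k$ is a product state---this is precisely a tensorization statement for the $D_2$-contraction coefficient of the generalized depolarizing channel, which does not follow from soft arguments.

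Your closing parenthetical is actually the more promising route if you want a self-contained proof: write $\cN_p=e^{t\cL}$ with $\cL=\cR_\sigma-\id$, $p=e^{-t}$, and establish a uniform-in-$n$ decay rate for $D_2$ along the tensorized depolarizing semigroup. This is the flavour of argument in \cite{[MF16]} (via hypercontractivity/entropy-production methods), and it avoids the combinatorics of the level decomposition. As written, though, your proof has a real gap at exactly the point where the paper defers to the literature.
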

\begin{proof}
By Prop.~\ref{prop:strongdata} we have that Eq.~\eqref{equ:lessnoisydepolarizing} implies that
\begin{align*}
D_2(\cM^{\otimes n}(\rho)\|\cM^{\otimes n}(\sigma^{\otimes n}))\le \,D_2(\cN_p^{\otimes n}(\rho)\|\cN_p^{\otimes n}(\sigma^{\otimes n})).
\end{align*}
In~\cite{[MF16]} the authors show that 
\begin{align*}
\,D_2(\cN_p^{\otimes n}(\rho)\|\cN_p^{\otimes n}(\sigma^{\otimes n}))\leq \alpha(p,\sigma)\,D_2(\rho\|\sigma^{\otimes n}),
\end{align*}
which completes the proof.
\end{proof}
The statement above can be used to bound capacities and entropic quantities of a quantum channel efficiently. For instance, if the channel $\cM$ on a $d$-dimensional space is doubly stochastic, we immediately see that if $\|\cM\circ \cN_p^{-1}\|_{2\to 2}\leq 1$, then 
\begin{align*}
D_2\left(\cM^{\otimes n}(\rho)\middle\|\frac{I}{d^n}\right)\le\text{exp}\left(2(1-d^{-1})\frac{\log(p)}{\log(d)}\right) \,D_2\left(\rho\middle\|\frac{I}{d^n}\right).
\end{align*}
By noting that $D_2(\cM^{\otimes n}(\rho)\|\frac{I}{d^n})\geq n-S(\cM^{\otimes n}(\rho))$, we immediately obtain that the minimum output entropy of $\cM^{\otimes n}(\rho)$ is  lower bounded by 
\begin{align*}
S(\cM^{\otimes n}(\rho))\geq n\left(1-\text{exp}\left(2(1-d^{-1})\frac{\log(p)}{\log(d)}\right)\right).
\end{align*}
Inspecting the expression $\|\cM\circ \cN_p^{-1}\|_{2\to 2}$ more closely, we see that:
\begin{align*}
\|\cM\circ \cN_p^{-1}\|_{2\to 2}=\frac{1}{p}\|\cM-(1-p)\cN_0\|_{2\to2}.
\end{align*}
Thus, we see that $\|\cM\circ \cN_p^{-1}\|_{2\to 2}\leq1$ if and only if the singular values of $\cM-(1-p)\cN_0$ are contained in a ball of radius $p$ around the origin in the complex plane. The largest value of $p$ for which this holds is called the spectral gap of $\cM$. 
\black

\section{Generalized contraction coefficients and partial orders}\label{fdivergence}

In this section we will discuss extensions of the concepts discussed to other divergences, in particular $f$-divergences and $\chi^2$-divergences. Note that both partial orders and contraction coefficients can be naturally defined in terms of any divergence by simply replacing the relative entropy by the desired divergence. 

Interestingly, classically contraction coefficients (and partial orders respectively) based on many different divergences have been shown actually to be the same, see e.g.~\cite{makur2018comparison}. In the quantum case this remains a mostly open problem with some of what is known being summarized in the following section.

\subsection{f-divergences}

Here, we extend the discussions of the previous sections to the setting of arbitrary $f$-divergences~\cite[Chapter 7]{OhyaPetz-Entropy-1993}. The two most commonly used ones are the standard $f$-divergence: given $\rho,\sigma\in\cD(\cH)$ with $\sigma$ full-rank:
\begin{align}
D_f(\rho \| \sigma ) := \tr \sigma^{1/2} f( \Delta_{\rho,\sigma})(\sigma^{1/2})\,,
\end{align}
 where $\Delta_{\rho,\sigma}(X):=\rho\,X\,\sigma^{-1}$ is the so-called relative modular operator between states $\sigma$ and $\rho$ and can be interpreted as a non-commutative generalization of the Radon-Nikodym derivative between two probability mass functions. Similarly, the maximal $f$-divergence is defined as
\begin{align}
\widehat D_f(\rho \| \sigma ) := \tr\sigma f(\sigma^{-1/2}\rho\sigma^{-1/2}).
\end{align}
Note that, often, only those functions that obey $f(1)=0$ are considered valid for f-divergences as it ensures that $D_f(\rho \| \rho) = 0$. However, this excludes e.g. R\'enyi divergences. 

Of course, we can define contraction coefficients $\eta_{D_f}(\cn)$ and $\eta_{{\widehat D}_f}(\cn)$ as for other divergences. 
Next, we want to consider mutual information like quantities based on f-divergences. Consider the following definitions,
\begin{align}
I_f(A:B) &:= D_f(\rho_{AB}\|\rho_A\otimes\rho_B), \\
\widehat I_f(A:B) &:= \widehat D_f(\rho_{AB}\|\rho_A\otimes\rho_B). 
\end{align} 
In what follows, we denote $\widetilde{I}_f,$ resp. $\widetilde{D}_f$ for either $I_f$ or $\widehat{I}_f$, resp. $D_f$ or $\widehat{D}_f$.
Note, that just like for the relative entropy, the following holds. Consider a classical quantum state $\rho_{UB}$ and its marginal $\rho_B$, we have
\begin{align}
\widetilde I_f(U:B) = \sum_u p(u) \,\widetilde D_f( \rho_B^u \| \rho_B ). 
\end{align} 
As a direct consequence, for a quantum channel $\cn$, we have
\begin{align}
\widetilde I_f(U:B) = \sum_u p(u) \widetilde D_f( \rho_B^u \| \rho_B ) \leq \eta_{{\widetilde D}_f}(\cn) \sum_u p(u) \widetilde D_f( \rho_A^u \| \rho_A ) = \eta_{{\widetilde D}_f}(\cn) \widetilde I_f(U:A),  
\end{align}
resulting in 
\begin{align}
\eta_{{\widetilde D}_f}(\cn) \geq \sup_{\rho_{UA}}\frac{\widetilde I_f(U:B)}{\widetilde I_f(U:A)}. 
\end{align}
 Before proving the equivalence of \Cref{propDmutualinfof}, we recall this very simple technical lemma:
\begin{lemma}\label{lemmachateauderive}
	Let $A$ be a self-adjoint operator over a finite dimensional Hilbert space, and let $f$ be a function on $\RR_+$ that is differentiable at $\Id$. Then 
	\begin{align*}
	Df[\Id](A)=f'(1)A\,.
	\end{align*} 
\end{lemma}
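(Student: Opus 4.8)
The plan is to reduce everything to the spectral decomposition of $A$ together with the ordinary scalar differentiability of $f$ at the point $1$. Here $Df[\Id](A)$ denotes the directional (Gâteaux) derivative of the operator function $X\mapsto f(X)$, defined via functional calculus, at $X=\Id$ in the direction $A$; that is, $Df[\Id](A)=\frac{d}{dt}\big|_{t=0}f(\Id+tA)$. Since $A=A^\dagger$, I would write its spectral decomposition $A=\sum_i a_i P_i$, with $a_i\in\RR$ its distinct eigenvalues and $P_i$ the associated orthogonal spectral projections, so that $\sum_i P_i=\Id$ and $P_iP_j=\delta_{ij}P_i$. For $|t|$ small enough that every $1+ta_i>0$, the operator $\Id+tA=\sum_i(1+ta_i)P_i$ is self-adjoint with spectrum in $\RR_+$, so the functional calculus gives $f(\Id+tA)=\sum_i f(1+ta_i)P_i$ and, at $t=0$, $f(\Id)=f(1)\Id=\sum_i f(1)P_i$.

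With this in hand, I would form the difference quotient, which is the finite sum
\[
\frac{f(\Id+tA)-f(\Id)}{t}=\sum_i\frac{f(1+ta_i)-f(1)}{t}\,P_i,
\]
and pass to the limit $t\to0$ term by term (legitimate since the sum is finite). If $a_i\neq0$ then $\frac{f(1+ta_i)-f(1)}{t}=a_i\,\frac{f(1+ta_i)-f(1)}{ta_i}\to a_i f'(1)$ by differentiability of $f$ at $1$; if $a_i=0$ the corresponding term is identically $0=a_if'(1)$. Hence the limit exists and equals $\sum_i a_if'(1)P_i=f'(1)\sum_i a_iP_i=f'(1)A$, which is the claim.

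I do not expect any real obstacle here: the lemma is simply the special case of the Daleckii--Krein formula at the point $\Id$, where all eigenvalues coincide and the divided difference collapses to $f'(1)$. The only point deserving a word of care is that $f$ is assumed differentiable merely at the single point $1$ rather than on a neighbourhood, but the argument above never needs more than that, since along each eigendirection one differentiates only the scalar function $t\mapsto f(1+ta_i)$ at $t=0$. If one wanted $Df[\Id]$ as a genuine Fréchet derivative on the full operator space, one could additionally remark that $A\mapsto f'(1)A$ is linear and that the computation above shows the first-order remainder is $o(\|H\|)$ for self-adjoint perturbations $H$; however, for the applications in the paper the directional statement above suffices.
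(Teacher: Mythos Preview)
Your proof is correct and follows the same strategy as the paper: diagonalize $A$ and compute the limit of the difference quotient eigenvalue by eigenvalue. In fact your version is written more carefully than the paper's, whose displayed difference quotient appears to have the roles of $\Id$ and $A$ swapped (a typo).
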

\begin{proof}
	Simply write the eigenvalue decomposition of $A:=\sum_i a_i|i\rangle\langle i|$, so that
	\begin{align*}
	Df[\Id](A)&=\lim_{\eps\to 0}\,\frac{f(A+\eps\Id)-f(A)}{\eps}	=\lim_{\eps\to 0}\,\sum_i\frac{f(a_i+\eps)-f(a_i)}{\eps}\,|i\rangle\langle i|=f'(1)\,A\,.
	\end{align*}
\end{proof}
The next Proposition is a direct generalization of Theorem 5.2 in \cite{raginsky2016strong}
\begin{proposition}\label{propDmutualinfof}
Let $f$ be differentiable, and let $\cm\in\CPTP(\cH_A,\cK_B)$ and $\cn\in\CPTP(\cH_A,\cK_{B'})$, $\sigma_{A}\in\cD(\cH_A)$ and $\eta\ge 0$. The following are equivalent: 
	\begin{itemize}
		\item[(i)] For all $\operatorname{c-q}$ states $\rho_{UA}$ with marginal $\sigma_{A}$, where $U$ is an arbitrary classical system, 
		\begin{align*}
		\,\eta \,\widetilde{I}_f(U;B)_{(\id_{U}\otimes \cm)(\rho_{UA})}\ge \widetilde{I}_f(U;B')_{(\id_{U}\otimes \cn)(\rho_{UA})}\,.
		\end{align*}
		\item[(ii)] For any state $\rho_{A}\in\cD(\cH_{A})$ with $\supp(\rho_{A})\subseteq\supp(\sigma_{A})$,
		\begin{align}\label{charactrelativeentf}
		\eta\,\widetilde{D}_f(\cm(\rho_{A})\|\cm(\sigma_{A}))\ge		\widetilde{D}_f(\cn(\rho_{A})\|\cn(\sigma_{A}))\,.
		\end{align}
	\end{itemize}
	Therefore, $\Phi\succeq_{\operatorname{l.n.}}\Psi$ if and only if \Cref{charactrelativeentf} holds for $\eta=1$.
\end{proposition}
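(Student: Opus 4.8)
The plan is to prove the equivalence (i)$\Leftrightarrow$(ii) by establishing each implication separately, following the classical blueprint of \cite[Theorem 5.2]{raginsky2016strong} while handling the non-commutative subtleties.

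The direction (i)$\Rightarrow$(ii) is the easy one. Given a state $\rho_A$ with $\supp(\rho_A)\subseteq\supp(\sigma_A)$, I would build an explicit c-q probe state on a binary classical register: set $\rho_{UA}=(1-t)\,|0\rangle\langle 0|\otimes\sigma_A + t\,|1\rangle\langle 1|\otimes\rho_A^{(t)}$, where $\rho_A^{(t)}$ is chosen so that the marginal $\tr_U(\rho_{UA})$ equals $\sigma_A$ exactly; concretely one takes the two branches to be $\sigma_A$ and a suitable perturbation, or more simply one uses the mixture whose $U$-marginal is fixed and lets $t\to 0$. Using the identity $\widetilde I_f(U:B)=\sum_u p(u)\,\widetilde D_f(\rho_B^u\|\rho_B)$ recorded just before the proposition, the mutual-information inequality in (i) becomes, after dividing by $t$ and taking $t\to 0$, precisely \eqref{charactrelativeentf}; here one invokes Lemma~\ref{lemmachateauderive} to differentiate $f$ at $\Id$ and identify the leading-order term. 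Since $\widetilde D_f$ is jointly continuous in its arguments on the relevant support condition, the limit is legitimate.

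The direction (ii)$\Rightarrow$(i) is the main obstacle and requires more care. Starting from a general c-q state $\rho_{UA}=\sum_u p(u)\,|u\rangle\langle u|\otimes\rho_A^u$ with $\sum_u p(u)\rho_A^u=\sigma_A$, one has $\supp(\rho_A^u)\subseteq\supp(\sigma_A)$ automatically, so \eqref{charactrelativeentf} applies to each branch: $\eta\,\widetilde D_f(\cm(\rho_A^u)\|\cm(\sigma_A))\ge\widetilde D_f(\cn(\rho_A^u)\|\cn(\sigma_A))$. Multiplying by $p(u)$ and summing over $u$, the left side is $\eta\,\widetilde I_f(U:B)_{(\id_U\otimes\cm)(\rho_{UA})}$ and the right side is $\widetilde I_f(U:B')_{(\id_U\otimes\cn)(\rho_{UA})}$, which is exactly (i). The subtle point is that this argument works branch-by-branch only because the conditioning register $U$ is classical; the real content one must verify carefully is that the formula $\widetilde I_f(U:B)=\sum_u p(u)\,\widetilde D_f(\rho_B^u\|\rho_B)$ indeed holds for both the standard $f$-divergence (via the block-diagonal structure of $\rho_{UB}$, $\rho_U\otimes\rho_B$ and the corresponding block-diagonalization of the modular operator $\Delta_{\rho_{UB},\rho_U\otimes\rho_B}$) and the maximal $f$-divergence (via block-diagonality of $(\rho_U\otimes\rho_B)^{-1/2}\rho_{UB}(\rho_U\otimes\rho_B)^{-1/2}$). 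This block-additivity is the genuine lemma underpinning the whole equivalence and is where I would spend most of the effort.

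Finally, the last sentence of the proposition, $\Phi\succeq_{\operatorname{l.n.}}\Psi$ iff \eqref{charactrelativeentf} holds with $\eta=1$, is then immediate: by definition $\Phi\succeq_{\operatorname{l.n.}}\Psi$ means $I(U:B)_{(\id_U\otimes\Phi)(\rho_{UA})}\ge I(U:B')_{(\id_U\otimes\Psi)(\rho_{UA})}$ for all c-q states $\rho_{UA}$ and all classical $U$, with no constraint on the marginal $\sigma_A$; taking $\widetilde I_f$ to be the ordinary (Umegaki-based) mutual information, this is the $\eta=1$ case of (i) ranging over all $\sigma_A$, hence by the equivalence just proved it is the $\eta=1$ case of (ii) ranging over all $\sigma_A$, which is exactly the relative-entropy characterization. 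I would close by remarking that the same proof applies verbatim to $\cm^{\otimes n}$ and to $\cm\otimes\id$, yielding the regularized and complete versions, as already noted after Proposition~\ref{propDmutualinfo}.
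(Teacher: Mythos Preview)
Your overall approach matches the paper's, but you have the two directions inverted in difficulty and your probe-state construction for (i)$\Rightarrow$(ii) is wrong.

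The direction (ii)$\Rightarrow$(i) is the \emph{easy} one, not the main obstacle. Your argument for it is exactly what the paper does: expand $\widetilde I_f(U:B)=\sum_u p(u)\,\widetilde D_f(\cm(\rho_A^u)\|\cm(\sigma_A))$ and apply \eqref{charactrelativeentf} term by term. The block-additivity you worry about is a routine check from the block-diagonal structure of c-q states and is not where the work lies.

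The direction (i)$\Rightarrow$(ii) is the one that needs the differentiation, and here your setup is broken. You propose $\rho_{UA}=(1-t)\,|0\rangle\langle 0|\otimes\sigma_A + t\,|1\rangle\langle 1|\otimes\rho_A^{(t)}$ with marginal $\sigma_A$; but that forces $(1-t)\sigma_A+t\,\rho_A^{(t)}=\sigma_A$, i.e.\ $\rho_A^{(t)}=\sigma_A$, and the probe is trivial. The correct construction (the paper's) puts the \emph{target} state on the small branch: take $P_{U_\lambda}(0)=\lambda$, $\rho_A^0=\rho_A$, and $\rho_A^1(\lambda)=(1-\lambda)^{-1}(\sigma_A-\lambda\rho_A)$, so that the marginal is $\sigma_A$. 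Then $\varphi(\lambda):=\eta\,\widetilde I_f(U_\lambda;B)-\widetilde I_f(U_\lambda;B')$ satisfies $\varphi(0)=0$ and $\varphi(\lambda)\ge 0$ by (i), so $\varphi'(0)\ge 0$. The crux, which you gloss over, is that
\[
\varphi'(0)=\eta\,\widetilde D_f(\cm(\rho_A)\|\cm(\sigma_A))-\widetilde D_f(\cn(\rho_A)\|\cn(\sigma_A))
\]
only after one verifies that $\left.\tfrac{d}{d\lambda}\right|_{\lambda=0}\widetilde D_f(\rho_A^1(\lambda)\|\sigma_A)=0$. Joint continuity of $\widetilde D_f$ does \emph{not} give this; it is a genuine first-order computation. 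This is precisely where Lemma~\ref{lemmachateauderive} is used: for the maximal $f$-divergence one differentiates $\tr\,\sigma_A f(\sigma_A^{-1/2}\rho_A^1\sigma_A^{-1/2})$ at $\lambda=0$, and for the standard $f$-divergence one differentiates $\tr\,\sigma_A^{1/2}f(\Delta_{\rho_A^1,\sigma_A})(\sigma_A^{1/2})$ at $\lambda=0$; in both cases $Df[\Id](A)=f'(1)A$ reduces the expression to $f'(1)\,\tr(\rho_A-\sigma_A)=0$. That computation is the heart of (i)$\Rightarrow$(ii), and you should treat it as such rather than as a footnote to continuity.
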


\begin{proof}
 We first prove (ii)$\Rightarrow$(i): for any $U\sim p_U$ and conditional states $\{\rho_{A^u}\}\in\cD(\cH_A)$, we have
	\begin{align*}
	&\widetilde{I}_f(U;B)=\sum_{u}P_U(u)\,\widetilde{D}_f(\cm(\rho_A^u)\|\cm(\sigma_A))\\
	&\widetilde{I}_f(U;B')=\sum_{u}P_U(u)\,\widetilde{D}_f(\cn(\rho_A^u)\|\cn(\sigma_A))\,,
	\end{align*}
	where $\sigma_A=\sum_u\,P_U(u)\rho_A^u$  is defined to be the marginal of $\rho_{AU}$. Then the result follows directly from (ii).
	Next, we prove that (i)$\Rightarrow$(ii): without loss of generality we assume that $\sigma_A$ is full-rank. Then, for any $\rho_A\in\cD(\cH_A)$, and $0\le \lambda\le \epsilon$, where $\epsilon$ is small enough to ensure that $\sigma_A-\epsilon\rho_A\geq0$,
	we let $U\equiv U_\lambda$ be the binary random variable of distribution $P_{U_\lambda}(0)=\lambda$, $P_{U_\lambda}(1)=1-\lambda$, and conditional states $\rho_A^0=\rho_A$, $\rho_A^1\equiv \rho_A^1(\lambda)=(1-\lambda)^{-1}(\sigma_A-\lambda\rho_A)$. Clearly we have that $\tr_U(\rho_{UA})=\sigma_A$.
	
	Then, let
	\begin{align*}
	\varphi(\lambda)&:=\eta\,\widetilde{I}_f(U_\lambda;B)-\widetilde{I}_f(U_{\lambda};B')\\
	&=\eta\,\lambda\,\widetilde{D}_f(\cm(\rho_A)\|\cm(\sigma_A))+\eta\,(1-\lambda)\widetilde{D}_f(\cm(\rho_A^1)\|\cm(\sigma_A))\\
	&-\lambda \widetilde{D}_f(\cn(\rho_A)\|\cn(\sigma_A))-(1-\lambda)\widetilde{D}_f(\cn(\rho_A^1)\|\cn(\sigma_A))\,.
	\end{align*}
Since $\rho_A^1=\sigma_A$ when $\lambda=0$, we have that $\varphi(0)=0$. Moreover, (i) implies that $\varphi(\lambda)\ge 0$ for any $\epsilon\geq\lambda\ge 0$. Since $\varphi(0)=0$ from $\rho_A^1(0)=\sigma_A$, we have that  $\varphi'(0)\ge 0$. The result follows after computing the latter derivative:
	\begin{align*}
	\varphi'(0)=&\eta \widetilde{D}_f(\cm(\rho_A)\|\cm(\sigma_A))-\widetilde{D}_f(\cn(\rho_A)\|\cn(\sigma_A)) \\
	&+\,\left.\frac{d}{d\lambda}\right|_{\lambda=0}\eta\widetilde{D}_f(\cm(\rho_A^1)\|\cm(\sigma_A))-\widetilde{D}_f(\cn(\rho_A^1)\|\cn(\sigma_A))\,,
	\end{align*}
	 We will be done as soon as we can show that the derivative above is equal to $0$. By linearity of the channels $\cm$ and $\cn$, it is enough to show that 
	 \begin{align*}
	 \left.\frac{d}{d\lambda}\right|_{\lambda=0}\,\widetilde{D}_f(\rho_A^1\|\sigma_A)=0\,.
	 \end{align*}
	 We first focus the case of the maximal $f$-divergences: we have
	 \begin{align}
	 	 \left.\frac{d}{d\lambda}\right|_{\lambda=0}\,\widetilde{D}_f(\rho_A^1\|\sigma_A)&=\tr\Big[   \rho_A^1 Df[\Id]\Big(\left.\frac{d}{d\lambda}\right|_{\lambda=0}  \sigma_A^{-\frac{1}{2}} \rho_A^1\sigma_A^{-\frac{1}{2}} \Big)\Big]\\
	 	 &=f'(1)\,\left.\frac{d}{d\lambda}\right|_{\lambda=0}\,\tr\big[\rho_A^1\sigma_A^{-\frac{1}{2}} \rho_A^1\sigma_A^{-\frac{1}{2}}\big]\\
	 	 &=2f'(1)\,\left.\frac{d}{d\lambda}\right|_{\lambda=0}\frac{1-2\lambda+\lambda^2\tr\big[\rho_A\sigma_A^{-\frac{1}{2}} \rho_A\sigma_A^{-\frac{1}{2}}\big]}{(1-\lambda)^2}
	 \end{align}
	where the second identity comes from \Cref{lemmachateauderive}. It is then straightforward to evaluate the derivative above and verify that
\begin{align*}
\left.\frac{d}{d\lambda}\right|_{\lambda=0}\frac{1-2\lambda+\lambda^2\tr\big[\rho_A\sigma_A^{-\frac{1}{2}} \rho_A\sigma_A^{-\frac{1}{2}}\big]}{(1-\lambda)^2}=0
\end{align*}
which gives what we needed to prove.
The case of the standard $f$-divergence  can be treated similarly: 
	\begin{align*}
\left.	\frac{d}{d\lambda}\right|_{\lambda=0}\,D_f(\rho_A^1\|\sigma_A)&=\tr\sigma_A^{\frac{1}{2}}\left.\frac{d}{d\lambda}\right|_{\lambda=0}f(L_{\rho_A^1} R_{\sigma_A^{-1}})\sigma_A^{\frac{1}{2}}\\
&\overset{(1)}{=}\tr\sigma_A^{\frac{1}{2}}Df[\id]\Big( \left. \frac{d}{d\lambda}\right|_{\lambda=0}L_{\rho_A^1} R_{\sigma_A^{-1}}\Big)\sigma_A^{\frac{1}{2}} \\
&=f'(1)\,\tr\sigma_A^{\frac{1}{2}}(\mathbb{I}-\rho_A\sigma_A^{-1})\sigma_A^{\frac{1}{2}}=0\,.
	\end{align*}
	where in $(1)$ we used once again \Cref{lemmachateauderive} in (1).

\end{proof}

	\begin{remark}
	The previous result can be easily extended to the case of optimized quantum $f$ divergences as defined in \cite{wilde2018optimized}.
	\end{remark} 
		\begin{remark}
	Note that we only require $f$ to be differentiable. Thus, Proposition~\ref{propDmutualinfof} holds for a larger class of functionals than $f-$divergences.
	\end{remark} 
As a result we can express the contraction coefficient based on $f$-divergences in terms of the corresponding $f$-mutual information
\begin{align}
\eta_{{\widetilde D}_f}(\cn) = \sup_{\rho_{UA}}\frac{\widetilde I_f(U:B)}{\widetilde I_f(U:A)}, 
\end{align}
as well as define partial orders based on $f$-mutual informations that can be equivalently formulated in terms of $f$-divergences. We will denote these by $\succeq_{\operatorname{D_f, l.n.}}$.
	
Let us end this section by discussing our favorite example: the erasure channel. One can easily find the following.
\begin{lemma}
For any function $f$ with $f(1)=0$, we have
\begin{align}
D_f(\ce(\rho) \| \ce(\sigma) ) &= (1-\epsilon) D_f(\rho \| \sigma ), \\
\widehat D_f(\ce(\rho) \| \ce(\sigma) ) &= (1-\epsilon) \widehat D_f(\rho \| \sigma )\,,
\end{align}
which immediately implies that for all functions $f$ and $g$ with the above property, we have
\begin{align}
\eta_{D_f}(\ce) = \eta_{{\widehat D}_g}(\ce) = \eta_{\Re}(\ce) = 1-\epsilon. 
\end{align}
\end{lemma}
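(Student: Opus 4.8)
The plan is to exploit the block structure of the erasure channel. Writing the output space as $\cH_B=\cH_A\oplus\CC|e\rangle$ with $|e\rangle\perp\cH_A$, the channel maps a state $\rho$ to the block-diagonal operator $\ce(\rho)=(1-\epsilon)\rho\oplus\epsilon$ (in block form with respect to $\cH_A\oplus\CC|e\rangle$), and likewise $\ce(\sigma)=(1-\epsilon)\sigma\oplus\epsilon$; we may assume $\sigma$, hence $\ce(\sigma)$ for $\epsilon<1$, is full rank, since for $\epsilon=1$ the channel is a replacer channel and both sides of each identity vanish (and for non-full-rank $\sigma$ one restricts to $\supp\sigma$). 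The key observation is that on the block $\cH_A$ the pair $\big(\ce(\rho),\ce(\sigma)\big)$ differs from $(\rho,\sigma)$ only by the common scalar $1-\epsilon$, which cancels in any ``relative'' expression, while on the one-dimensional block $\CC|e\rangle$ the two operators coincide, so any $f$ evaluated there contributes the factor $f(1)=0$.

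For the maximal $f$-divergence this is immediate: since $\ce(\sigma)^{-1/2}\ce(\rho)\ce(\sigma)^{-1/2}=(\sigma^{-1/2}\rho\sigma^{-1/2})\oplus 1$, we get $f\big(\ce(\sigma)^{-1/2}\ce(\rho)\ce(\sigma)^{-1/2}\big)=f(\sigma^{-1/2}\rho\sigma^{-1/2})\oplus f(1)$, and pairing with $\ce(\sigma)=(1-\epsilon)\sigma\oplus\epsilon$ under the trace, using $f(1)=0$, yields $\widehat D_f(\ce(\rho)\|\ce(\sigma))=(1-\epsilon)\widehat D_f(\rho\|\sigma)$. For the standard $f$-divergence I would analyse the relative modular operator $\Delta_{\ce(\rho),\ce(\sigma)}=L_{\ce(\rho)}R_{\ce(\sigma)^{-1}}$ on $\cB(\cH_B)$: decomposing $\cB(\cH_B)$ into the four subspaces of operators supported on the blocks $(A,A),(A,e),(e,A),(e,e)$, block-diagonality of $\ce(\rho)$ and $\ce(\sigma)^{-1}$ makes $\Delta_{\ce(\rho),\ce(\sigma)}$ leave each subspace invariant, acting as $\Delta_{\rho,\sigma}$ on $(A,A)$ (again the $1-\epsilon$ factors cancel), as multiplication by $1$ on $(e,e)$, and as certain scalar multiples of one-sided multiplications on the two off-diagonal subspaces. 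Since $\ce(\sigma)^{1/2}=(1-\epsilon)^{1/2}\sigma^{1/2}\oplus\epsilon^{1/2}$ has no off-diagonal part, those last actions never get used, and $f\big(\Delta_{\ce(\rho),\ce(\sigma)}\big)\big(\ce(\sigma)^{1/2}\big)=(1-\epsilon)^{1/2}f(\Delta_{\rho,\sigma})(\sigma^{1/2})\oplus f(1)\epsilon^{1/2}$; taking $\tr[\ce(\sigma)^{1/2}\,\cdot\,]$ and invoking $f(1)=0$ gives $D_f(\ce(\rho)\|\ce(\sigma))=(1-\epsilon)D_f(\rho\|\sigma)$.

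For the contraction coefficients, the two identities show that for every full-rank $\sigma$ and every $\rho$ with $0<\widetilde D_f(\rho\|\sigma)<\infty$ (with $\widetilde D_f$ standing for $D_f$ or $\widehat D_g$) the ratio $\widetilde D_f(\ce(\rho)\|\ce(\sigma))/\widetilde D_f(\rho\|\sigma)$ equals exactly $1-\epsilon$; since such $\rho$ exist whenever $\dim\cH_A\ge2$, taking the supremum over $\rho$ and then over $\sigma$ yields $\eta_{D_f}(\ce)=\eta_{\widehat D_g}(\ce)=1-\epsilon$, and $\eta_{\Re}(\ce)=1-\epsilon$ is recovered as the special case $f(x)=x\ln x$ (which satisfies $f(1)=0$), in agreement with \Cref{Prop:erasureContraction}. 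I do not expect a genuine obstacle: the only points needing care are the harmless edge cases ($\epsilon=1$, or $\sigma$ not full rank) and the bookkeeping verification that the off-diagonal action of $\Delta_{\ce(\rho),\ce(\sigma)}$ really drops out because $\ce(\sigma)^{1/2}$ is block-diagonal.
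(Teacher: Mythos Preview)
Your proposal is correct and is exactly the ``direct calculation'' the paper alludes to but does not spell out: you exploit the block-diagonal structure of $\ce(\rho)$ and $\ce(\sigma)$ on $\cH_A\oplus\CC|e\rangle$, observe that the common factor $1-\epsilon$ cancels in the relative modular operator (and in $\sigma^{-1/2}\rho\sigma^{-1/2}$) while the erasure block contributes $f(1)=0$, and read off the factor $1-\epsilon$ from the remaining trace. Your handling of both the standard and maximal $f$-divergences, including the observation that the off-diagonal invariant subspaces of $\Delta_{\ce(\rho),\ce(\sigma)}$ never enter because $\ce(\sigma)^{1/2}$ is block-diagonal, is accurate and more explicit than the paper's one-line proof.
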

\begin{proof}
By direct calculation. 
\end{proof}	
This implies that we can easily extend our previous Proposition~\ref{Prop:erasureContraction}. 
\begin{proposition}\label{Prop:erasureContractionF}
	Let $\cn\in\CPTP(\cH_A,\cK_B)$, $\eta_e\in[0,1]$ and $f$ any function with $f(1)=0$. Then the following are equivalent:
	\begin{itemize}
	\item[(i)] $\cm^{\operatorname{er}}_{\eta_e}\succeq_{\operatorname{D_f, l.n.}}\cn$.
	\item[(ii)] $\eta_{D_f}(\cn)\le \eta_{D_f}(\cm^{\operatorname{re}}_{\eta_e})=(1-\eta_e)$.
	\end{itemize}
\end{proposition}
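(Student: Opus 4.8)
The plan is to mirror the proof of Proposition~\ref{Prop:erasureContraction} verbatim, replacing the relative-entropy statements by their $f$-divergence analogues and using the Lemma immediately preceding this Proposition as the substitute for the explicit mutual-information computation $I(U:B)=(1-\eta_e)I(U:A)$ that was available for the erasure channel.

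First I would record the two ingredients. The preceding Lemma gives, for every pair of states with $\supp\rho\subseteq\supp\sigma$,
\[
\widetilde{D}_f\big(\cm^{\operatorname{er}}_{\eta_e}(\rho)\,\big\|\,\cm^{\operatorname{er}}_{\eta_e}(\sigma)\big)=(1-\eta_e)\,\widetilde{D}_f(\rho\|\sigma),
\]
and in particular, dividing through, $\eta_{D_f}(\cm^{\operatorname{er}}_{\eta_e})=1-\eta_e$, which is the equality asserted in (ii). Second, Proposition~\ref{propDmutualinfof}, applied with $\cm=\cm^{\operatorname{er}}_{\eta_e}$ and $\eta=1$, says that $\cm^{\operatorname{er}}_{\eta_e}\succeq_{\operatorname{D_f, l.n.}}\cn$ holds if and only if $\widetilde{D}_f(\cm^{\operatorname{er}}_{\eta_e}(\rho)\|\cm^{\operatorname{er}}_{\eta_e}(\sigma))\ge\widetilde{D}_f(\cn(\rho)\|\cn(\sigma))$ for all $\sigma\in\cD(\cH_A)$ and all $\rho$ with $\supp\rho\subseteq\supp\sigma$.

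Then the argument is just substitution. Combining the equivalence from Proposition~\ref{propDmutualinfof} with the Lemma, condition (i) is equivalent to
\[
(1-\eta_e)\,\widetilde{D}_f(\rho\|\sigma)\ \ge\ \widetilde{D}_f(\cn(\rho)\|\cn(\sigma))\qquad\text{for all }\rho,\sigma,
\]
i.e.\ to $\widetilde{D}_f(\cn(\rho)\|\cn(\sigma))/\widetilde{D}_f(\rho\|\sigma)\le 1-\eta_e$ whenever $0<\widetilde{D}_f(\rho\|\sigma)<\infty$. Taking the supremum over such $\rho$ and then over $\sigma$ turns this into $\eta_{D_f}(\cn)\le 1-\eta_e$, and conversely the bound $\eta_{D_f}(\cn)\le 1-\eta_e$ reinstates the pointwise inequality; together with $\eta_{D_f}(\cm^{\operatorname{er}}_{\eta_e})=1-\eta_e$ noted above this yields (i)$\Leftrightarrow$(ii).

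The only point needing care is that Proposition~\ref{propDmutualinfof} is stated for differentiable $f$ whereas here only $f(1)=0$ is assumed; I would either carry the differentiability hypothesis along where the mutual-information characterization of $\succeq_{\operatorname{D_f, l.n.}}$ is used, or, if $\succeq_{\operatorname{D_f, l.n.}}$ is taken by definition to be the $f$-divergence domination $\widetilde{D}_f(\cm(\cdot)\|\cm(\cdot))\ge\widetilde{D}_f(\cn(\cdot)\|\cn(\cdot))$, bypass Proposition~\ref{propDmutualinfof} entirely, in which case the proof consists solely of the substitution above together with the definition of $\eta_{D_f}$. No genuine obstacle is expected — this is a direct transcription of the $\eta_{\operatorname{Re}}$ case, with the Lemma doing all the work.
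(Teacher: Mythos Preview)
Your proposal is correct and is precisely the argument the paper intends: its proof reads ``Analog to that of Proposition~\ref{Prop:erasureContraction}'', and you have carried out exactly that analogy, replacing the mutual-information identity $I(U:B)=(1-\eta_e)I(U:A)$ by the divergence identity $\widetilde D_f(\ce(\rho)\|\ce(\sigma))=(1-\eta_e)\widetilde D_f(\rho\|\sigma)$ from the preceding Lemma and invoking Proposition~\ref{propDmutualinfof} in place of Proposition~\ref{propDmutualinfo}. Your remark about the differentiability hypothesis is well taken and not addressed in the paper; either of your two workarounds is fine.
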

\begin{proof}
Analog to that of Proposition~\ref{Prop:erasureContraction}. 
\end{proof}
And the same holds for the maximal $f$-divergence.

	 \subsection{Spectral approaches to quantum less-noisy pre-orders}

	 In Theorem 1 of \cite{makur2018comparison}, it was shown in the classical setting that the less noisy pre-order $\cM\succeq_{\operatorname{l.n.}}\Psi$ is equivalent to the following  condition: for any probability mass functions $p,q$
	 \begin{align*}
	 \chi^2(\cM(p),\cM(q))\ge \chi^2(\cn(p),\cn(q))\,,
	 \end{align*}
	 where the $\chi^2$ divergence is defined as follows:
	 \begin{align}\label{chisquareordering}
	 \chi^2(p,q):=\sum_{i}\,\frac{(p(i)-q(i))^2}{q(i)}\,.
	 \end{align}
	 This characterization is powerful as it allows for a spectral analysis of  the less noisy pre-order \cite{makur2018comparison}. The proof of this fact relies on two observations: to prove that less noisiness implies (\ref{chisquareordering}), it uses the characterization of less noisy seen in \Cref{lessnoisySDPI}, namely that $\cM\succeq_{\operatorname{l.n.}}\cn$ is equivalent to 
	 \begin{align}
	 D(\cM(p)\|\cM(q))\ge D(\cn(p)\|\cn(q))\,,
	 \end{align}
	 for any two probability mass functions $p$ and $q$, together with the well-known fact that the $\chi^2$ divergence locally approximates the relative entropy:
	 \begin{align}\label{relattochi2}
	 \lim_{\lambda\to 0^+}\frac{2}{\lambda^2}\,D(\lambda p+(1-\lambda)q\|q)=\chi^2(p,q)\,.
	 \end{align}
	 On the other hand, the relative entropy can be rewritten in terms of the following integral \cite{makur2018comparison} (see also \cite{melbourne2019relationships,nishiyama2019new,nishiyama2020relations} for slightly different related expressions):
	 \begin{align}\label{chi2torelat}
	 D(p\|q):=\int_0^\infty\,\frac{\chi^2\big(p,\frac{tp+q}{1+t}\big)}{1+t}\,dt\,.
	 \end{align}
	 The characterization of less noisiness in terms \Cref{chisquareordering} follows from the joint use of \Cref{relattochi2,chi2torelat}. This result implies as a special case the following result \cite{choi1994equivalence} (see also refinements, e.g. Theorem 3.6 of  \cite{raginsky2016strong}, and \cite{raginsky2016strong,ahlswede1976spreading} for the link to the contraction coefficient for the maximal correlation):
	 \begin{align}\label{chisquarecontractioncoeff}
	 \eta_{\operatorname{Re}}(\cM)=\eta_{\chi^2}(\cM):=\sup_{\substack{p,q\\0<\chi^2(p,q)<\infty}}\,\frac{\chi^2(\cM(p),\cM(q))}{\chi^2(p,q)}\,.
	 \end{align}
	 The $\chi^2$ quantity is known classically to locally approximate any $f$ divergence \cite{campbel1996extended,cencov2000statistical}. This is no longer the case in the non-commutative world, where a complete characterization of $\chi^2$ quantities, also known as Fisher information metrics, was found \cite{LR99,[TKRWV10]}. First of all, we define the following set of functions:
	 \begin{align*}
	 \cK:=\big\{ k;\,-k \text{ is operator monotone}, k(w^{-1})=wk(w), k(1)=1\big\}\,.
	 \end{align*}
	 Next, for $\rho,\sigma\in\cD(\cH)$, and $k\in\cK$, define the quantum $\chi^2$-divergence
	 \begin{align*}
	 \chi_k^2(\rho,\sigma):=\langle \rho-\sigma,\,\Omega_\sigma^k(\rho-\sigma)\rangle\,,
	 \end{align*}
when $\supp(\rho)\subseteq\supp(\sigma)$, and infinity otherwise. In the previous case, the inversion $\Omega_\sigma^k$ is defined as
\begin{align*}
\Omega_\sigma^k:=R_\sigma^{-1}\,k(\Delta_{\sigma.\sigma})\,.
\end{align*}
Here, we also restrict ourselves to a subclass of $f$-divergences: consider the class $\cG$ of continuous operator convex functions $g$ from $\RR_+$ to $\RR$ that satisfy $g(1)=0$. These functions can all be expressed in terms of the following integral representation:
\begin{align}\label{integralrep}
g(w)=a(w-1)+b(w-1)^2+c\,\frac{(w-1)^2}{w}\,+\int_0^\infty\,\frac{(w-1)^2}{w+s}\,d\nu(s)\,,
\end{align}
where $a,b,c>0$ and the integral of the positive measure $d\nu(s)$ on $(0,\infty)$ is bounded. Then, for any $k\in\cK$, there exists $g\in\cG$ such that \cite{LR99,[TKRWV10]}:
\begin{align}\label{eqchidf}
\chi^2_k(\rho,\sigma)=-\left.\frac{\partial^2}{\partial\alpha\partial\beta}D_g(\sigma+\alpha (\rho-\sigma)\|\sigma+\beta(\rho-\sigma))\right|_{\alpha=\beta=0}\,.
\end{align}
The function $k$ is related to $g$ by 
\begin{align}\label{kintermsofg}
k(w):=\frac{g(w)+wg(w^{-1})}{(w-1)^2}\,.
\end{align}
	Hence, like its classical restriction, the quantum $\chi^2$  divergence can be understood as a local approximation of the quantum relative entropy. However, \Cref{chi2torelat} is not known to hold in general in the quantum case. Rather, the following integral representation of the quantum relative entropy is known and has been recently used to get tightenings of the data processing inequality in \cite{carlen2017recovery} as a special case of the integral representation \eqref{integralrep}:
	 \begin{align}\label{relativeentkernel}
	 D(\rho\|\sigma)=\int_0^\infty\tr\big[(t+\Delta_{\sigma,\rho})^{-1}(\rho) \big]-\frac{1}{1+t}\, dt\,,
	 \end{align}
	It can be easily shown that the integrands in \Cref{relativeentkernel} and \Cref{chi2torelat} coincide when $\rho$ and $\sigma$ commute and can be identified with $p$ and $q$. However, the integrand \eqref{relativeentkernel} cannot be identified in general with an adequately normalized $\chi^2$ divergence. In particular, \Cref{chisquarecontractioncoeff} is not known to hold in general in the quantum setting, despite some
	 recent progress in that direction \cite{Lesniewski1998a,hiai2016contraction}. We also refer to the more recent paper \cite{cao2019tensorization} where the strong data processing for the $\chi^2$ divergence was shown to tensorize in the case where the state $\sigma$ is taken to be  a tensor product.
	 
	 For the reasons mentioned above, it seems unlikely that the equivalence between less noisy and the ordering of $\chi^2$ is true in the quantum setting. However, the following result is a direct consequence of the above discussion: 

	 \begin{proposition}\label{theorelat_chisquare}
	 	For any two quantum channels $\cM\in\CPTP(\cH_A,\cK_B),\,\cN\in\CPTP(\cH_A,\cK_{B'})$ and $\eta\ge 0$, $\operatorname{(i)}\Rightarrow\operatorname{(ii)}\Rightarrow\operatorname{(iii)}$, where:
	 	\begin{itemize}
	 		\item[$\operatorname{(i)}$] For any two quantum states $\rho,\sigma\in\cD(\cH_A)$ and any $t\ge  0$:
	 		\begin{align*}
	 	&\tr\cN(\sigma)^{\frac{1}{2}}(\Delta_{\cN(\rho),\cN(\sigma)}-\id)^2(\cN(\sigma)^{\frac{1}{2}})\le \eta	\tr\cM(\sigma)^{\frac{1}{2}}(\Delta_{\cM(\rho),\cM(\sigma)}-\id)^2(\cM(\sigma)^{\frac{1}{2}})\,;~~\operatorname{and}\\
	 	&\tr\cN(\sigma)^{\frac{1}{2}}\frac{(\Delta_{\cN(\rho),\cN(\sigma)}-\id)^2}{\Delta_{\cN(\rho),\cN(\sigma)}+t}(\cN(\sigma)^{\frac{1}{2}})\le\eta \tr\cM(\sigma)^{\frac{1}{2}}\frac{(\Delta_{\cM(\rho),\cM(\sigma)}-\id)^2}{\Delta_{\cM(\rho),\cM(\sigma)}+t}(\cM(\sigma)^{\frac{1}{2}})\,.
	 		\end{align*}
	 		\item[$\operatorname{(ii)}$] For all $g\in \cG$ and any two states $\rho,\sigma\in\cD(\cH_A)$:
	 		 \begin{align*}
	 		 D_g(\cN(\rho)\|\cN(\sigma))\le \eta  D_g(\cM(\rho)\|\cM(\sigma))\,.
	 		 \end{align*}
	 		\item[$\operatorname{(iii)}$] For any $k\in\cK$ and any two quantum states $\rho,\sigma\in\cD(\cH_A)$\,,
	 		\begin{align*}
	 		\chi^2_k(\cN(\rho),\cN(\sigma))\le \eta 	\chi^2_k(\cM(\rho),\cM(\sigma)) \,.
	 		\end{align*}
	 	\end{itemize}
	 	Moreover, if (ii) holds only for a given $g\in\cG$, then (iii) holds for $k\in\cK$ satisfying \eqref{kintermsofg}. Finally, obvious extensions of the above chain of implications hold true in the case of regularized and complete less noisy pre-orders, as well as for the related contraction coefficients.
	 \end{proposition}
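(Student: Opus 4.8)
The plan is to prove the two implications $\operatorname{(i)}\Rightarrow\operatorname{(ii)}$ and $\operatorname{(ii)}\Rightarrow\operatorname{(iii)}$ separately, after which the ``moreover'' clause and the regularized/complete/contraction-coefficient versions come out as visible refinements of the same two arguments. Throughout I reduce to the case where all states are full rank, the general case following by the usual support and limiting arguments.

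\emph{Step 1 ($\operatorname{(i)}\Rightarrow\operatorname{(ii)}$).} Fix $g\in\cG$ and use its integral representation \eqref{integralrep}. Since for fixed states $\tau,\nu$ the map $h\mapsto D_h(\tau\|\nu)=\tr\nu^{1/2}h(\Delta_{\tau,\nu})(\nu^{1/2})$ is linear and commutes with integration against the (finite, positive) measure appearing in \eqref{integralrep} — the integrands $D_{(w-1)^2/(w+s)}(\tau\|\nu)$ being nonnegative and uniformly bounded in $s$, so that monotone/dominated convergence applies — one obtains
\[
D_g(\tau\|\nu)=a\,D_{w-1}(\tau\|\nu)+b\,D_{(w-1)^2}(\tau\|\nu)+\int_{[0,\infty)} D_{\frac{(w-1)^2}{w+s}}(\tau\|\nu)\,d\tilde\nu(s),
\]
where $\tilde\nu$ is the finite positive measure on $[0,\infty)$ obtained from $d\nu$ by adding a point mass $c$ at $s=0$, so that the $c\,(w-1)^2/w$ term is exactly the $s=0$ contribution. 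For trace-normalized $\tau,\nu$ the first term vanishes, $D_{w-1}(\tau\|\nu)=\tr\tau-\tr\nu=0$. Writing this identity once with $(\tau,\nu)=(\cN(\rho),\cN(\sigma))$ and once with $(\tau,\nu)=(\cM(\rho),\cM(\sigma))$, the two inequalities in $\operatorname{(i)}$ say precisely that $D_{(w-1)^2}(\cN(\rho)\|\cN(\sigma))\le\eta\,D_{(w-1)^2}(\cM(\rho)\|\cM(\sigma))$ and, for every $t\ge0$, $D_{(w-1)^2/(w+t)}(\cN(\rho)\|\cN(\sigma))\le\eta\,D_{(w-1)^2/(w+t)}(\cM(\rho)\|\cM(\sigma))$; multiplying by $b\ge0$, integrating against $\tilde\nu\ge0$, and using $\eta\ge0$ yields $D_g(\cN(\rho)\|\cN(\sigma))\le\eta\,D_g(\cM(\rho)\|\cM(\sigma))$, i.e.\ $\operatorname{(ii)}$.

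\emph{Step 2 ($\operatorname{(ii)}\Rightarrow\operatorname{(iii)}$).} Fix $k\in\cK$ and take $g\in\cG$ related to $k$ by \eqref{kintermsofg}, so that \eqref{eqchidf} applies. For fixed full-rank states $\rho,\sigma$ put $\sigma_\gamma:=\sigma+\gamma(\rho-\sigma)$ — a density operator for $|\gamma|$ small — and define, for $\cL\in\{\cM,\cN\}$, the function $\Phi^{\cL}(\alpha,\beta):=D_g\big(\cL(\sigma_\alpha)\,\|\,\cL(\sigma_\beta)\big)$, which is $C^2$ near the origin. By linearity of the channels, $\cL(\sigma_\gamma)=\cL(\sigma)+\gamma\big(\cL(\rho)-\cL(\sigma)\big)$, so \eqref{eqchidf} gives $\chi^2_k(\cL(\rho),\cL(\sigma))=-\partial_\alpha\partial_\beta\Phi^{\cL}(0,0)$. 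Applying $\operatorname{(ii)}$ to the state pair $(\sigma_\alpha,\sigma_\beta)$ for all small $\alpha,\beta$, the function $\Psi:=\eta\,\Phi^{\cM}-\Phi^{\cN}$ is nonnegative near the origin and vanishes there, so the origin is a local minimum of $\Psi$ and its Hessian at $(0,0)$ is positive semidefinite; in particular $\partial_\alpha^2\Psi(0,0)\ge0$ and $\partial_\beta^2\Psi(0,0)\ge0$. The extra ingredient is that $D_g(\omega\|\omega)=g(1)\tr\omega=0$ for every state $\omega$, so $\Phi^{\cM}$, $\Phi^{\cN}$, hence $\Psi$, vanish identically on the diagonal $\{\alpha=\beta\}$; differentiating $\lambda\mapsto\Psi(\lambda,\lambda)\equiv0$ twice at $\lambda=0$ gives $\partial_\alpha^2\Psi(0,0)+2\,\partial_\alpha\partial_\beta\Psi(0,0)+\partial_\beta^2\Psi(0,0)=0$, whence $\partial_\alpha\partial_\beta\Psi(0,0)=-\tfrac12\big(\partial_\alpha^2\Psi(0,0)+\partial_\beta^2\Psi(0,0)\big)\le0$. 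Therefore $\eta\,\chi^2_k(\cM(\rho),\cM(\sigma))-\chi^2_k(\cN(\rho),\cN(\sigma))=-\partial_\alpha\partial_\beta\Psi(0,0)\ge0$, which is $\operatorname{(iii)}$. Since this computation used only the single $g$ attached to $k$, it simultaneously proves the ``moreover'' statement; and rerunning Steps 1 and 2 verbatim with $(\cM,\cN)$ replaced by $(\cM^{\otimes n},\cN^{\otimes n})$ for all $n$ (resp.\ by $(\cM\otimes\id,\cN\otimes\id)$, resp.\ by $(\id,\cN)$, then taking suprema over $\rho,\sigma$) gives the regularized and complete less-noisy versions and the statement for the associated contraction coefficients.

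I expect the difficulties to be technical rather than conceptual. In Step 1 one must carefully justify interchanging the trace and the functional calculus of the relative modular operator $\Delta_{\rho,\sigma}$ with the integral over $d\nu$, and control the support conditions under which the various $f$-divergences are finite (the reduction to full-rank states). In Step 2 the delicate point is exactly the sign of the mixed partial derivative: nonnegativity of $\Psi$ with a zero at the origin only forces the Hessian to be positive semidefinite, which by itself does not constrain $\partial_\alpha\partial_\beta\Psi(0,0)$; it is the additional vanishing of $\Psi$ along the entire diagonal, a consequence of $g(1)=0$, that pins the mixed derivative to the correct sign. One also needs that $\Phi^{\cM},\Phi^{\cN}$ are genuinely twice differentiable near the origin — which rests on smoothness of operator convex functions on $(0,\infty)$ — and that identity \eqref{eqchidf} is available, both of which we take from the cited literature.
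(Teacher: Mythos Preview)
Your proof is correct and follows exactly the approach the paper sketches: $\operatorname{(i)}\Rightarrow\operatorname{(ii)}$ by plugging the integral representation \eqref{integralrep} into $D_g$ and integrating the inequalities of $\operatorname{(i)}$ against the nonnegative weights, and $\operatorname{(ii)}\Rightarrow\operatorname{(iii)}$ by the local-approximation identity \eqref{eqchidf}. Your Step~2 is a clean expansion of what the paper leaves implicit---the observation that $\Psi=\eta\Phi^{\cM}-\Phi^{\cN}$ is nonnegative with PSD Hessian at the origin, while vanishing along the diagonal forces $\partial_\alpha\partial_\beta\Psi(0,0)\le0$---and your handling of the ``moreover'' clause and the regularized/complete/contraction-coefficient extensions is correct.
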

 
\begin{proof}
	This is a direct consequence of the expression \eqref{integralrep} as well as the relation \eqref{eqchidf}. 
\end{proof}

Finally, we will once more mention the erasure channel as an illustrating example. 
	 \begin{lemma}
	 	Let $\ce:A\rightarrow B$ be the quantum erasure channel of parameter $\eps$, $\rho_{A}$ and $\sigma_A$ be some quantum states. Then for any $k\in\cK$ we have that
	 	\begin{align}
	 	\chi_k^2(\ce(\rho),\ce(\sigma)) = (1-\epsilon)\,\chi_k^2(\rho,\sigma). 
	 	\end{align}
	 	Therefore, 
	 	\begin{align*}
	 	\eta_{\operatorname{Re}}(\ce) = \eta_{\chi^2}(\ce) = 1-\epsilon. 
	 	\end{align*}
	 \end{lemma}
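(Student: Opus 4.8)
The plan is to exploit that the erasure channel acts as $(1-\epsilon)$ times the identity on \emph{differences} of states, because the flag term $|e\rangle\langle e|$ cancels: one has $\ce(\rho)-\ce(\sigma)=(1-\epsilon)(\rho-\sigma)$, an operator living inside the embedded copy of $\cB(\cH_A)$ in $\cB(\cH_B)$ under the splitting $\cH_B=\cH_A\oplus\CC|e\rangle$. Thus the only thing one needs to understand is how the metric operator $\Omega^k_{\ce(\sigma)}$ acts on operators supported on $\cH_A$, and how it compares with $\Omega^k_\sigma$.

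First I would reduce to the case where $\sigma$ is full rank on $\cH_A$ (otherwise one works on $\operatorname{supp}(\sigma)$; note that $\operatorname{supp}(\rho)\subseteq\operatorname{supp}(\sigma)$ is preserved by $\ce$, and if that inclusion fails both sides of the claimed identity are $+\infty$). Since $\ce(\sigma)=(1-\epsilon)\sigma\oplus\epsilon|e\rangle\langle e|$ is block-diagonal, the relative modular operator $\Delta_{\ce(\sigma),\ce(\sigma)}=L_{\ce(\sigma)}R_{\ce(\sigma)}^{-1}$ leaves the subspace of operators supported on $\cH_A$ invariant, and on that subspace it coincides with $\Delta_{\sigma,\sigma}$ because $\big((1-\epsilon)\sigma\big)X\big((1-\epsilon)\sigma\big)^{-1}=\sigma X\sigma^{-1}$. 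Hence $k(\Delta_{\ce(\sigma),\ce(\sigma)})(X)=k(\Delta_{\sigma,\sigma})(X)$ for every $X$ supported on $\cH_A$, and this output is again supported on $\cH_A$. Applying $R_{\ce(\sigma)}^{-1}$, i.e. right multiplication by $\ce(\sigma)^{-1}$, to such an operator $Y$ gives $Y\big((1-\epsilon)\sigma\big)^{-1}=(1-\epsilon)^{-1}R_\sigma^{-1}(Y)$. Combining, $\Omega^k_{\ce(\sigma)}(X)=(1-\epsilon)^{-1}\,\Omega^k_\sigma(X)$ for every $X$ supported on $\cH_A$.

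Now substitute $X=\ce(\rho)-\ce(\sigma)=(1-\epsilon)(\rho-\sigma)$ and use linearity of $\Omega^k_\sigma$ together with (real) bilinearity of $\langle\cdot,\cdot\rangle$:
\begin{align*}
\chi^2_k(\ce(\rho),\ce(\sigma))=\big\langle (1-\epsilon)(\rho-\sigma),\,(1-\epsilon)^{-1}\,\Omega^k_\sigma\big((1-\epsilon)(\rho-\sigma)\big)\big\rangle=(1-\epsilon)\,\chi^2_k(\rho,\sigma),
\end{align*}
which is the first identity. The ``therefore'' statement is then immediate: taking the supremum over all $\rho,\sigma$ yields $\eta_{\chi^2_k}(\ce)=1-\epsilon$ for every $k\in\cK$, while $\eta_{\operatorname{Re}}(\ce)=1-\epsilon$ has already been established earlier in this work.

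The main (and essentially only) point requiring care is the support bookkeeping together with the verification that $\Delta_{\ce(\sigma),\ce(\sigma)}$, and hence $k$ applied to it, preserves the $\cH_A$-block; once that is set up, everything else is a one-line scaling computation, so I do not anticipate a genuine obstacle.
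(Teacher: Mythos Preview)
Your argument is correct and is precisely the ``direct calculation'' the paper has in mind: the paper does not spell out a proof for this lemma (it is stated as an illustrating example, in parallel with the earlier $f$-divergence lemma whose proof reads ``By direct calculation''), and your block-diagonal analysis of $\Omega^k_{\ce(\sigma)}$ together with $\ce(\rho)-\ce(\sigma)=(1-\epsilon)(\rho-\sigma)$ is the natural way to carry it out. The support bookkeeping and the reduction of $\Delta_{\ce(\sigma),\ce(\sigma)}$ to $\Delta_{\sigma,\sigma}$ on the $\cH_A$-block are handled correctly.
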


This means that at least for the erasure channel the contraction coefficients for relative entropy, f-divergences and $\chi_k^2$ divergence are the same.

\section{Functional inequalities}

Another standard way of obtaining strong data processing is through functional inequalities~\cite{Raginsky_2013},  which we will describe in a bit more detail below. This approach is particularly effective whenever we have a semigroup $(\cP_t)_{t\ge 0}$  converging to a unique full rank state $\sigma$. Such semigroups are called primitive. In~\cite{makur2018comparison} the authors relate such functional inequalities to the less noisy ordering in the classical setting, especially when comparing to symmetric channels. Here we will recover their results in the quantum setting, especially Proposition 12, while also revisiting the connections between functional and entropic inequalities.

We will start by defining logarithmic Sobolev inequalities and showing their consequences. Let $\cN$ be a primitive channel whose unique invariant state is $\sigma$, and denote by $\cN^*$ the corresponding completely, positive unital dual map. Corresponding to the map $\cN^*$, we construct a quantum Markov semigroup (QMS) of completely positive unital maps $\cP_t=\e^{-t(\id-\cN^*)}$ for all $t\ge 0$. In other words, the resulting semigroup $(\cP_t)_{t\ge 0}$ has $\mathcal{S}:=\cN^*-\id$ as its generator. The semigroup $(\cP_t)_{t\ge 0}$ is by construction also primitive, and $\sigma$ is its unique invariant state. 
We can then define a scalar product
\begin{align*}
\langle X,\,Y\rangle_\sigma:=\tr\big( \sigma^{\frac{1}{2}}X^\dagger\,\sigma^{\frac{1}{2}}Y\big)
\end{align*}
which induces the $\|\cdot\|_{2,\sigma}$ norm.
Next we define the Dirichlet form $\cE_{\mathcal{S}}$ corresponding to $\mathcal{S}$ as follows: for all $X,Y\in\cB_{sa}(\cH)$,
\begin{align*}
\cE_\mathcal{S}(X,Y)=-\frac{1}{2}\langle X,\,\mathcal{S}+\hat{\mathcal{S}}(Y)\rangle_\sigma\equiv \big\langle\,X,\,\big(\textrm{id}-\frac{\cN^*+\hat{\cN}^*}{2}\big)(Y)\big\rangle_\sigma.
\end{align*}
Next, we define the $2$-entropy as follows: for any $X\in\cB_{sa}(\cH)$,
\begin{align*}
\operatorname{Ent}_{2,\sigma}(X):=D\big(|\sigma^{\frac{1}{4}}X\sigma^{\frac{1}{4}}|^2\big\| \sigma\big)-\tr\big(|\sigma^{\frac{1}{2}} X\sigma^{\frac{1}{2}}|^2\big)\ln\tr\big(|\sigma^{\frac{1}{2}} X\sigma^{\frac{1}{2}}|^2\big)\,,
\end{align*}
where here the relative entropy $D(\cdot\|\cdot)$ has been extended to non-normalized positive operators $D(A\|B)=\tr(A(\ln A-\ln B))$ whenever $\supp(A)\subset \supp(B)$. The 
QMS $(\cP_t)_{t\ge 0}$ satisfies the \textit{logarithmic Sobolev inequality} (LSI) if  there exists $\alpha>0$ such that for every $X\in\cB_{sa}(\cH)$
\begin{align}\label{LSI}
\alpha \,\operatorname{Ent}_{2,\sigma}(X)\le \cE_\mathcal{S}(X,X)\,.
\end{align}
The largest $\alpha$ such that \Cref{LSI} holds 
\begin{align*}
\alpha(\cN^*):=\inf_{\substack{X\in\cB_{sa}(\cH)\\\operatorname{Ent}_2(X)\ne 0}}\,\frac{\cE_{\mathcal{S}}(X,X)}{\operatorname{Ent}_{2,\sigma}(X)}
\end{align*}
is called the \textit{logarithmic Sobolev constant} of  $(\cP_t)_{t\ge 0}$. Likewise, the logarithmic Sobolev inequality for the \textit{discrete-time} Markov chain with constant $\alpha>0$ states that for every $X\in\cB_{sa}(\cH)$,
\begin{align*}
\alpha\,\operatorname{Ent}_{2,\sigma}(X)\le\,\cE_{\id-\cN^*\circ\hat{\cN}^*}(X,X)\,,
\end{align*}
where $\cE_{\id-\cN^*\circ\hat{\cN}^*}(X,Y):=\big\langle X,\,\big(\id-\cN^*\hat{\cN}^*)(Y)\big\rangle_\sigma$ is sometimes referred to as the discrete Dirichlet form. 
Logarithmic Sobolev inequalities are particularly useful for reversible semigroups, i.e. semigroups  such that their Petz recovery map with respect to the invariant state is the semigroup itself. In this case, it is known that a LSI implies the following hypercontractive inequality:
\begin{align}\label{equ:hypercontracitve}
\|\cP_t^*\|_{2\to q(t)}\leq 1,\quad  q(t)=1+e^{2\alpha t}.
\end{align}
By the results of Proposition~\ref{prophyper}, we know that this can be used to obtain strong data processing inequalities involving R\'enyi divergences. Indeed, we immediately obtain:
\begin{align*}
D_2(\cP_t(\rho)\|\sigma)\leq \frac{1+e^{-2\alpha t}}{2}D_{q(t)}(\rho\|\sigma).
\end{align*}
Nevertheless, the inequality above is a bit unsatisfactory, as it does not give complete contraction even as $t\to\infty$.  But this can be easily remedied by  resorting to reversibility. As the semigroup is reversible, Eq.~\eqref{equ:hypercontracitve} also implies
\begin{align}
D_{2}(\cP_t(\rho)\|\sigma)\leq \frac{2e^{-2\alpha t}}{1+e^{-2\alpha t}}D_{1+e^{-\alpha t}}(\rho\|\sigma).
\end{align}
by duality. By Proposition~\ref{prophyper}, we conclude that:
\begin{align*}
D(\cP_t(\rho)\|\sigma)\leq \frac{2e^{-2\alpha t}}{1+e^{-2\alpha t}}D(\rho\|\sigma).
\end{align*}
Exploiting the fact that $\frac{2e^{-2\alpha t}}{1+e^{-2\alpha t}}\leq e^{-\alpha t}$ we then get an exponential decay of the relative entropy from the logarithmic Sobolev inequality. Although this result is by no means new~\cite{DS96,[OZ99],[KT13]}, we believe that the approach presented here showcases transparently how a LSI implies several different entropic inequalities.
\begin{comment}
It was shown in that the LSI constants satisfy the following: for any initial state $\rho\in\cD(\cH)$, and all $t\ge0$:
\begin{align*}
D(\cP_{t}(\rho)\|\sigma)\le\e^{-2\alpha(\cN^*)t}\,D(\rho\|\sigma)
\end{align*}
whereas classically for all $n\in\NN$, the following SDPI holds true~\cite{Miclo_1997,Raginsky_2013}:
\begin{align*}
D(\cN^n(\rho)\|\sigma)\le (1-\alpha(\cN^*\hat{\cN}^*))^n\,D(\rho\|\sigma).
\end{align*}
\end{comment}
Computing the LSI constant is, in general, an arduous task. For the depolarizing semigroup in dimension $d$, it is known that:
\begin{align}\label{equ:LSIdep}
\alpha=\frac{2(1-2d^{-1})}{\log(d-1)}.
\end{align}
It is possible to obtain estimates of LSI through~\ref{equ:LSIdep} for any doubly stochastic channel through so-called comparison techniques. The same is true for generalized depolarizing channels.

Indeed, an easy way of estimating the LSI constant of a semigroup assuming that we know the LSI constant of another QMS is via the comparison of their Dirichlet forms: assume that $\mathcal{S}_1$ and $\mathcal{S}_2$ have the same unique invariant state $\sigma$, then:
\begin{align*}
\exists\lambda>0,\,\forall X\in\cB_{sa}(\cH),\,\cE_{\mathcal{S}_1}(X,X)\le \,\lambda\cE_{\mathcal{S}_2}(X,X)~~~\Rightarrow~~~ \alpha(\mathcal{S}_2)\ge \frac{\alpha(\mathcal{S}_1)}{\lambda}\,. 
\end{align*}

In the next theorem, we show that the information-theoretic notion of less noisy domination is a sufficient condition for the comparison of Dirichlet forms. First, we recall the following Lemma from \cite{makur2018comparison}:
\begin{lemma}\label{polynormal}
	Given a positive semi-definite real matrix $A$ and a normal real matrix $B$,
	\begin{align*}
	A^2=AA^*\ge_{\PSD}BB^T~~~\Rightarrow~~~A=\frac{A+A^T}{2}\ge_{\PSD}\frac{B+B^T}{2}\,.
	\end{align*}
\end{lemma}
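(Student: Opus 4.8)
The plan is to reduce the stated matrix inequality to two classical facts: that a real normal matrix commutes with its symmetric and antisymmetric parts, and that $t\mapsto\sqrt{t}$ is operator monotone on positive semidefinite matrices. Note first that since $A$ is a real PSD matrix we have $A=A^T$, so $AA^*=A^2$ and the hypothesis reads $A^2\ge_{\PSD}BB^T$, while the conclusion $A=\tfrac{A+A^T}{2}\ge_{\PSD}\tfrac{B+B^T}{2}$ is just $A\ge_{\PSD}S$ for $S:=\tfrac12(B+B^T)$.

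The first step is to split $B$ into its symmetric and antisymmetric parts, $B=S+K$ with $S=\tfrac12(B+B^T)$, $K=\tfrac12(B-B^T)$, so that $S^T=S$ and $K^T=-K$. Since $B$ is normal, $BB^T=B^TB$; substituting $B=S+K$, $B^T=S-K$ and expanding both products, every term cancels except the condition $KS-SK=SK-KS$, which forces $SK=KS$. This is the only place normality enters. Using this commutation, $BB^T=(S+K)(S-K)=S^2-K^2+(KS-SK)=S^2-K^2$. Because $K$ is real antisymmetric, $K^2=-K^TK\le_{\PSD}0$, hence $-K^2\ge_{\PSD}0$ and therefore $BB^T\ge_{\PSD}S^2$. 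Combined with $A^2\ge_{\PSD}BB^T$ this yields $A^2\ge_{\PSD}S^2\ge_{\PSD}0$.

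The final step is operator monotonicity of the square root: from $A^2\ge_{\PSD}S^2$ we get $(A^2)^{1/2}\ge_{\PSD}(S^2)^{1/2}$, i.e.\ $A\ge_{\PSD}|S|$, using that $A\ge_{\PSD}0$ implies $(A^2)^{1/2}=A$. Since $|S|-S$ equals twice the negative part of the symmetric matrix $S$, hence is PSD, we have $|S|\ge_{\PSD}S$, and chaining gives $A\ge_{\PSD}S=\tfrac{B+B^T}{2}$, which is the claim.

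I do not anticipate a serious obstacle, but the subtlety worth flagging is precisely this last chain $A^2\ge_{\PSD}S^2\Rightarrow A\ge_{\PSD}|S|\ge_{\PSD}S$: operator monotonicity of $\sqrt{\cdot}$ delivers only $A\ge_{\PSD}|S|$ and \emph{not} $A\ge_{\PSD}S$ directly when $S$ is indefinite, so the auxiliary inequality $|S|\ge_{\PSD}S$ is essential. (Equivalently, this step may simply be cited as the L\"owner-type comparison lemma already used in~\cite{makur2018comparison}, with the extraction of the real normal form $B=S+K$ with $[S,K]=0$ being the new input here.)
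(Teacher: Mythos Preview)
Your proof is correct. The paper does not actually prove this lemma; it simply recalls it from \cite{makur2018comparison} and uses it as a black box, so there is no ``paper's own proof'' to compare against. Your argument---decomposing $B=S+K$ into symmetric and antisymmetric parts, extracting $[S,K]=0$ from normality, computing $BB^T=S^2-K^2\ge_{\PSD}S^2$, and then invoking operator monotonicity of $\sqrt{\cdot}$ followed by $|S|\ge_{\PSD}S$---is a clean and self-contained proof of the cited statement. The subtlety you flag (that $A^2\ge_{\PSD}S^2$ only gives $A\ge_{\PSD}|S|$, so the extra step $|S|\ge_{\PSD}S$ is needed when $S$ may be indefinite) is exactly right and correctly handled.
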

From this we get:
\begin{theorem}
	Let $\cN,\cM$ be two primitive quantum channels, with a common unique full-rank invariant state $\sigma$. Then the following are true:
	\begin{itemize}
		\item[(i)] If $\cM\sln\cN$, then:
		\begin{align*}
		\forall X\in\cB(\cH)_+,~~\cE_{\id-\cN^*\hat{\cN}^*}(X,X)\ge \cE_{\id-\cM^*\hat{\cM}^*}(X,X),
		\end{align*}
		or equivalently
		\begin{align*}
		D_2(\cN(\rho)\|\sigma)\leq D_2(\cM(\rho)\|\sigma)
		\end{align*}
		for all states $\rho$.
		\item[(ii)] Assume  that $\cM$ is positive semi-definite on $(\cB_{sa}(\cH),\langle\cdot,\,\cdot\rangle_\sigma)$ and that $\cN^*\hat{\cN}^*=\hat{\cN}^*\cN^*$. If $\cM\sln\cN$, then for all $X\in\cB_{sa}(\cH)$
		\begin{align*}
		\cE_{\mathcal{S}_\cN}(X,X)\ge \cE_{\mathcal{S}_{\cM}}(X,X)
		\end{align*}
		\item[(iii)] Assume that $\cN=\cN_{p,\sigma}$, with $\cN_{p,\sigma}$ the generalized depolarizing channel with depolarizing probability $p$ to a state $\sigma>0$, and $\cN_p\sln \cM$ for some channel $\cM$ with  $\sigma$ as its stationary state. Then:
		\begin{align*}
		\forall X\in\cB(\cH)_+,~~p\cE_{\id-\cN_{1,\sigma}^*}(X,X)\leq \cE_{\id-\cM^*}(X,X),
		\end{align*}
	\end{itemize}
\end{theorem}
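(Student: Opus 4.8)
\emph{Part (i).} The plan is to first convert both formulations into a statement about one quadratic functional, and then to prove the implication. Expanding the discrete Dirichlet form, $\cE_{\id-\cN^*\hat{\cN}^*}(X,X)=\|X\|_{2,\sigma}^2-\langle X,\cN^*\hat{\cN}^*(X)\rangle_\sigma$; since $\hat{\cN}^*$ is the adjoint of $\cN^*$ for $\langle\cdot,\cdot\rangle_\sigma$ this equals $\|X\|_{2,\sigma}^2-\|\hat{\cN}^*(X)\|_{2,\sigma}^2$, and using $\hat{\cN}^*=\Gamma_\sigma^{-1}\circ\cN\circ\Gamma_\sigma$ (valid on the support of $\sigma$, as $\cN(\sigma)=\sigma$) a one-line computation gives $\|\hat{\cN}^*(X)\|_{2,\sigma}^2=\tr\!\big[\cN(\rho')\sigma^{-1/2}\cN(\rho')\sigma^{-1/2}\big]$ with $\rho'=\sigma^{1/2}X\sigma^{1/2}$, which for $\rho'$ a state is $e^{D_2(\cN(\rho')\|\sigma)}$. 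As $\|X\|_{2,\sigma}^2$ is channel-independent, $\cE_{\id-\cN^*\hat{\cN}^*}(X,X)\ge\cE_{\id-\cM^*\hat{\cM}^*}(X,X)$ for all $X\ge 0$ is exactly $D_2(\cN(\rho)\|\sigma)\le D_2(\cM(\rho)\|\sigma)$ for all states $\rho$; and since $\rho'\mapsto\tr[\cN(\rho')\sigma^{-1/2}\cN(\rho')\sigma^{-1/2}]$ is a quadratic form, decomposing $\rho'$ into its trace part and a traceless part (on which the cross terms are channel-independent) shows this extends to all self-adjoint $X$. It remains to prove the implication: by Proposition~\ref{propDmutualinfo}, $\cM\sln\cN$ is equivalent to $D(\cN(\rho)\|\sigma)\le D(\cM(\rho)\|\sigma)$ for all $\rho$; feeding this into the $\chi^2$-divergence chain of Proposition~\ref{theorelat_chisquare} yields $\chi^2_k(\cN(\rho),\sigma)\le\chi^2_k(\cM(\rho),\sigma)$ for every $k\in\cK$, in particular for $k_0(w)=w^{-1/2}\in\cK$, and the elementary identity $e^{D_2(\rho\|\sigma)}=1+\chi^2_{k_0}(\rho,\sigma)$ (expand $\tr[\rho\sigma^{-1/2}\rho\sigma^{-1/2}]$ around $\rho=\sigma$) converts this into the desired $D_2$ comparison.

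\emph{Part (ii).} Writing $\cE_{\mathcal{S}_\cN}(X,X)=\|X\|_{2,\sigma}^2-\big\langle X,\tfrac{\cN^*+\hat{\cN}^*}{2}(X)\big\rangle_\sigma$, the claim is the operator inequality $\tfrac{\cN^*+\hat{\cN}^*}{2}\preceq\tfrac{\cM^*+\hat{\cM}^*}{2}$ on $\big(\cB_{sa}(\cH),\langle\cdot,\cdot\rangle_\sigma\big)$. In an orthonormal basis of this real Euclidean space the hypotheses say that $\cM^*$ is a positive semi-definite symmetric real matrix (so $\hat{\cM}^*=\cM^*$ and $\cM^*\hat{\cM}^*=(\cM^*)^2$) and that $\cN^*$ is a normal real matrix with transpose $\hat{\cN}^*$. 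Part (i), valid now for all self-adjoint $X$, gives $(\cM^*)^2\succeq\cN^*\hat{\cN}^*$, so Lemma~\ref{polynormal} applied with the positive semi-definite matrix $\cM^*$ and the normal matrix $\cN^*$ produces $\cM^*\succeq\tfrac{\cN^*+\hat{\cN}^*}{2}$, which is the desired bound.

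\emph{Part (iii).} The generalized depolarizing channel is reversible with respect to $\sigma$, i.e. $\widehat{\cN_{p,\sigma}}=\cN_{p,\sigma}$, and $\cN_{p,\sigma}^*$ acts on $\big(\cB_{sa}(\cH),\langle\cdot,\cdot\rangle_\sigma\big)$ as the identity on $\operatorname{span}\{I\}$ and as $(1-p)\,\id$ on its orthogonal complement $V$, with $\cN_{1,\sigma}^*$ the orthogonal projection onto $\operatorname{span}\{I\}$; in particular $\cE_{\mathcal{S}_{\cN_{p,\sigma}}}=p\,\cE_{\id-\cN_{1,\sigma}^*}$. Applying part (i) to $\cN_{p,\sigma}\sln\cM$ gives $\cM^*\hat{\cM}^*\preceq(\cN_{p,\sigma}^*)^2$, which on $V$ reads $\cM^*\hat{\cM}^*\preceq(1-p)^2\,\id$; hence $\|\cM^*|_V\|_{\sigma\to\sigma}=\|\hat{\cM}^*|_V\|_{\sigma\to\sigma}\le 1-p$, so $\big\|\tfrac{\cM^*+\hat{\cM}^*}{2}|_V\big\|_{\sigma\to\sigma}\le 1-p$, and since $\cM^*$ and $\cN_{p,\sigma}^*$ are unital they agree on $\operatorname{span}\{I\}$. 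Thus $\tfrac{\cM^*+\hat{\cM}^*}{2}\preceq\cN_{p,\sigma}^*$, i.e. $\langle X,\cM^*(X)\rangle_\sigma\le\langle X,\cN_{p,\sigma}^*(X)\rangle_\sigma$, which rearranges to $p\,\cE_{\id-\cN_{1,\sigma}^*}(X,X)\le\cE_{\id-\cM^*}(X,X)$.

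\emph{Main difficulty.} The only substantial step is the implication in part (i): upgrading the relative-entropy less-noisy inequality to the $D_2$ (equivalently $\chi^2_{k_0}$) inequality. Everything else is the bookkeeping above plus Lemma~\ref{polynormal} and the explicit depolarizing spectrum. Concretely, one must check that relative-entropy less-noisy domination already supplies the hypothesis of Proposition~\ref{theorelat_chisquare} — this is where the local approximation of $f$-divergences by $\chi^2$-quantities and the integral representations \eqref{integralrep}, \eqref{relativeentkernel} enter.
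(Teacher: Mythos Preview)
Your route is the paper's route: reduce (i) to a $\chi^2$/$D_2$ comparison via Proposition~\ref{theorelat_chisquare}, then feed the resulting operator inequality $\cN^*\hat{\cN}^*\preceq\cM^*\hat{\cM}^*$ into Lemma~\ref{polynormal} for (ii), and use the explicit spectrum of the depolarizing channel for (iii). The equivalence ``$D_2$ comparison $\Leftrightarrow$ discrete Dirichlet-form comparison'' you spell out is exactly the paper's line $\langle\cM^*(X),\cM^*(X)\rangle_\sigma\ge\langle\cN^*(X),\cN^*(X)\rangle_\sigma\Leftrightarrow\cE_{\id-\cM^*\hat{\cM}^*}\le\cE_{\id-\cN^*\hat{\cN}^*}$.

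There is one genuine gap, which you yourself flag as the ``main difficulty'' but do not actually close (and which the paper's own proof elides in the same way). You write that Proposition~\ref{theorelat_chisquare} ``yields $\chi^2_k(\cN(\rho),\sigma)\le\chi^2_k(\cM(\rho),\sigma)$ for every $k\in\cK$''. That is not what the proposition delivers from a single $g$: the final clause only gives the $\chi^2_k$ inequality for the $k$ associated to that $g$ through \eqref{kintermsofg}. Starting from the relative entropy, the resulting kernel is the logarithmic/Kubo--Mori one $k(w)=\tfrac{\ln w}{w-1}$, \emph{not} $k_0(w)=w^{-1/2}$, and it is only $k_0$ that produces $\chi^2_{k_0}(\rho,\sigma)=\tr[\rho\sigma^{-1/2}\rho\sigma^{-1/2}]-1=e^{D_2(\rho\|\sigma)}-1$. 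The integral-representation argument you gesture at via \eqref{relativeentkernel} is precisely the mechanism the paper warns, in the paragraph just above Proposition~\ref{theorelat_chisquare}, does \emph{not} identify the integrand with a bona fide $\chi^2$ in the noncommutative case. So as written, neither your proof nor the paper's actually justifies the passage from $\cM\sln\cN$ to the symmetric-$\chi^2$ (equivalently $D_2$) inequality; both simply assert it.

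One point in your favour: your argument for (iii), via the decomposition $\cB_{sa}(\cH)=\operatorname{span}\{I\}\oplus V$ and the norm bound $\|\cM^*|_V\|\le 1-p$ coming from $\cM^*\hat{\cM}^*\preceq(\cN_{p,\sigma}^*)^2$, is more self-contained than the paper's ``the assumptions of (ii) are fulfilled'', which, read literally, would require normality of $\cM^*$ rather than of $\cN_{p,\sigma}^*$.
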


\begin{proof}
	(i) We use the implication $(i)\Rightarrow (ii)$ of \Cref{theorelat_chisquare} in order to get that $\cM\sln\cN$ implies that for any two states $\rho_1,\rho_2$
	\begin{align*}
	\chi^2(\cM(\rho_1),\cM(\rho_2))\ge 	\chi^2(\cN(\rho_1),\cN(\rho_2))\,.
	\end{align*}
	This implies, taking $\rho_1:=\rho\equiv \sigma^{\frac{1}{2}}X\sigma^{\frac{1}{2}}$ and $\rho_2=\sigma$ that
	\begin{align*}
	\tr\big(\cM(\rho)\sigma^{-\frac{1}{2}}\cM(\rho)\sigma^{-\frac{1}{2}}\big)\ge 	\tr\big(\cN(\rho)\sigma^{-\frac{1}{2}}\cN(\rho)\sigma^{-\frac{1}{2}}\big)
	\end{align*}
	Note that the adjoint of $\cM^*$ with respect to the weighted scalar product is $\hat{\cM}^*$. Thus, we can rewrite this last condition as follows:
	\begin{align*}
	\langle \cM^*(X),\,\cM^*(X)\rangle_\sigma\ge \langle\cN^*(X),\,\cN^*(X)\rangle_\sigma\Leftrightarrow \cE_{\id-\cM^*\hat{\cM}^*}(X,X)\le \cE_{\id-\cN^*\hat{\cN}^*}(X,X)
	\end{align*}

	(ii) The previously proved result implies in particular that the maps $\cM^*$ and $\cN^*$ seen as matrices acting on the real vector space of  self-adjoint operators provided with the $\sigma$-KMS inner product $\langle\cdot,\,\cdot\rangle_\sigma$, 
	\begin{align*}
	\cN^*\hat{\cN}^*\le_{\PSD} 	\cM^*\hat{\cM}^* 
	\end{align*}
	From \Cref{polynormal}, we find that 
	\begin{align*}
	\frac{\cM^*+\hat{\cM}^*}{2}\ge_{\PSD}\frac{\cN^*+\hat{\cN}^*}{2}~~~\Rightarrow~~~\forall X\in\cB_{sa}(\cH),\,\cE_{\mathcal{S}_\cN}(X,X)\ge \cE_{\mathcal{S}_{\cM}}(X,X)
	\end{align*}
	
	(iii) Note that $\cN_{p,\sigma}^*\hat{\cN}_{p,\sigma}^*=\hat{\cN}_{p,\sigma}^*\cN_{p,\sigma}^*$ for the depolarizing channel. Thus, the assumptions of (ii) are fulfilled. The claim then follows by a simple manipulation.
\end{proof}

Note that $p\cE_{\id-\cN_{1,\sigma}^*}(X,X)$ is nothing but the variance of the observable $X$ with respect to the state $\sigma$. Thus, we conclude that from a less-noisy comparison between channels with the same invariant state we immediately get a comparison of Dirichlet forms and a bound on logarithmic Sobolev constants if one of them is known.
Unfortunately, such techniques do not yield sharp inequalities for the SDPI constant for channels. This is because it is known that for generalized depolarizing channels, we have that the optimal SDPI constant~\cite{muller2016relative} is given by:
\begin{align*}
D(\cN_{p,\sigma}(\rho)\|\sigma)\leq (1-p)^{1+\alpha(\sigma)}D(\rho\|\sigma)
\end{align*} 
where 
\begin{align*}
\alpha(\sigma)=\min _{x \in[0,1]} \left(1+q_{s_{\min }(\sigma)}(x)\right)
\end{align*}
with $q_{s_{\min }(\sigma)}$ the smallest eigenvalue of $\sigma$ and
\begin{align*}
q_{y}(x)=\begin{cases}
\frac{D_{2}(y \| x)}{D_{2}(x \| y)}, & x \neq y \\
1, & x=y
\end{cases},
\end{align*}
where $D_{2}$ is the binary relative entropy.
Moreover, for qubit depolarizing channels we have $\alpha(I/2)=1$ and we know that $\cN^{\otimes n}_{p,\sigma}$ satisfies a SDPI with constant $(1-p)$~\cite{beigi2018quantum}. This is to be contrasted with the LS estimate, which decays with the local dimension.

\section{Special classes of channels}

\subsection{Weyl-covariant channels}\label{depol}

In this section, we provide new comparison bounds for Weyl-covariant channels  \cite{weyl1950theory,davies1970repeated,holevo1993note,1996covariant} (see also \cite{watrous2018theory}). 
We consider the finite group $G:=\mathbb{Z}_n\times \mathbb{Z}_n$ with the following projective representation on $\cH:=\mathbb{C}^{n}$: 
\begin{align*}
(a,b)\in\mathbb{Z}_n\times \mathbb{Z}_n\mapsto W_{a,b}:=U^a\,V^b\,,~~~\text{ where }~~~U:=\sum_{c\in\mathbb{Z}_n}\ket{c+1}\bra{c}~~\text{ and }~~ V:=\sum_{c\in\mathbb{Z}_n}\,\,\e^{\frac{2\pi c i}{n}}\ket{c}\bra{c}\,.
\end{align*}
Next, define the \textit{discrete Fourier transform} $F$ to be the following unitary matrix on $\CC^n$ 
\begin{align*}
F:=\frac{1}{\sqrt{n}}\,\sum_{a,b\in\mathbb{Z}_n}\,\e^{\frac{2\pi i a b}{n}}\,E_{a,b}\,.
\end{align*}
A map $\cm$ on $\cB(\cH)$ is a \textit{Weyl-covariant map} if 
\begin{align*}
\cm\big( W_{a,b}\,X\,W_{a,b}^* \big)=W_{a,b}\,\cm(X)\,W_{a,b}^*
\end{align*}
for every $X\in\cB(\cH)$ and $(a,b)\in\mathbb{Z}_n\times \mathbb{Z}_n$. In particular, any additive noise channel of the form 
\begin{align}\label{Phif}
\cm_f(\rho):=\sum_{(a,b)\in\mathbb{Z}_n\times \mathbb{Z}_n}\,f(a,b)\,W_{a,b}\,\rho\, W_{a,b}^\dagger\,,
\end{align}
where $f:\mathbb{Z}_n\times \mathbb{Z}_n\to\RR_+$ is a probability mass function, is  Weyl-covariant, by the irreducibility of the representation $W$. Moreover, one can show that any Weyl-covariant channel has this form (see Corollary 4.15 in \cite{watrous2018theory}).  The next Lemma is an extension of  Lemma 1 in \cite{makur2018comparison}:
\begin{lemma}\label{lemma1}
	Any two Weyl-covariant noise channels $\cm_f$ and $\cm_h$ satisfy $\cm_h\succeq_{\operatorname{deg}}\cm_f$ if and only if there exists a probability distribution function $k$ on $\mathbb{Z}_n\times \mathbb{Z}_n$  such that $\cm_f=\cm_h\circ\cm_k=\cm_k\circ \cm_h$.
\end{lemma}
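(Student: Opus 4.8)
The plan is to prove both directions directly using the Weyl-covariant structure. The nontrivial direction is ``$\cm_h\succeq_{\operatorname{deg}}\cm_f \Rightarrow$ existence of $k$''; the converse is immediate since if $\cm_f=\cm_k\circ\cm_h$ then $\cm_k$ is itself a (Weyl-covariant) channel, so by definition $\cm_h\succeq_{\operatorname{deg}}\cm_f$. So I would dispatch the ``if'' direction in one line and spend the work on the ``only if'' direction.

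For the forward direction, suppose $\cm_f=\Theta\circ\cm_h$ for some channel $\Theta\in\CPTP(\cH,\cH)$. The first step is a symmetrization (twirling) argument: since both $\cm_f$ and $\cm_h$ are Weyl-covariant, I would replace $\Theta$ by its twirl
\begin{align*}
\widetilde\Theta(X):=\frac{1}{n^2}\sum_{(a,b)\in\mathbb{Z}_n\times\mathbb{Z}_n} W_{a,b}^*\,\Theta\big(W_{a,b}\,X\,W_{a,b}^*\big)\,W_{a,b}\,,
\end{align*}
which is still a channel, and check that $\widetilde\Theta\circ\cm_h=\cm_f$ still holds, using covariance of $\cm_h$ and of $\cm_f$ to push the Weyl unitaries through. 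Thus without loss of generality $\Theta$ is Weyl-covariant, hence by Corollary 4.15 of \cite{watrous2018theory} (the fact quoted just before the lemma) $\Theta=\cm_k$ for some probability mass function $k$ on $\mathbb{Z}_n\times\mathbb{Z}_n$. This already gives $\cm_f=\cm_k\circ\cm_h$, and commutativity $\cm_k\circ\cm_h=\cm_h\circ\cm_k$ is automatic because the composition of two additive-noise channels $\cm_k\circ\cm_h=\cm_{k*h}$ is the additive-noise channel associated to the convolution $k*h$ on the abelian group $\mathbb{Z}_n\times\mathbb{Z}_n$ (up to the phases in the projective representation, which cancel in the conjugation $W\cdot W^*$), and convolution on an abelian group is commutative.

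The main technical point to get right — and the place I expect the only real friction — is the composition identity $\cm_h\circ\cm_k=\cm_{k*h}$: one must verify that $W_{c,d}W_{a,b}\,\rho\,(W_{c,d}W_{a,b})^*$ equals $W_{a+c,b+d}\,\rho\,W_{a+c,b+d}^*$, i.e.\ that the projective phase $W_{c,d}W_{a,b}=\omega\,W_{a+c,b+d}$ with $|\omega|=1$ drops out under the two-sided conjugation. This is a short but genuine computation with the commutation relation $VU=\e^{2\pi i/n}UV$. Once that is in hand, everything else is bookkeeping: the twirl preserves trace-preservation and complete positivity (average of channels), and the quoted structure theorem does the rest. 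I would also remark that the same argument shows $\cm_k$ is unique iff $\cm_h$ is injective as a linear map, but that is not needed for the statement.
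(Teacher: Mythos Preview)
Your proof is correct and takes a different route from the paper's. The paper does not twirl; instead it evaluates the given degrading map $\cn$ directly on the Weyl operators, using that these are eigenvectors of any Weyl-covariant map: from $\cm_h(W_{a,b})=A_h(a,b)\,W_{a,b}$ and $\cm_f(W_{a,b})=A_f(a,b)\,W_{a,b}$ one deduces $\cn(W_{a,b})=\big(A_f(a,b)/A_h(a,b)\big)\,W_{a,b}$ whenever $A_h(a,b)\ne0$, declares $\cn(W_{a,b})=0$ ``without loss of generality'' on the remaining indices, and then appeals to Watrous' Theorem~4.14 to conclude $\cn=\cm_k$. Your symmetrization argument is cleaner in one respect: since $\widetilde\Theta$ is an average of channels it is automatically a channel, so the structure theorem (Corollary~4.15 rather than Theorem~4.14) immediately delivers $k$ as a genuine probability mass function. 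In the paper's argument the ``WLOG'' modification of $\cn$ on the kernel of $\cm_h$ is not accompanied by a check that the resulting linear map remains completely positive---a point your twirl sidesteps entirely (and in fact the twirl of $\cn$ \emph{is} the Weyl-diagonal part of $\cn$, which justifies that step a posteriori). Both proofs rest on the same structural input (Weyl-covariant channels are exactly the additive-noise channels $\cm_k$) and treat the commutativity $\cm_h\circ\cm_k=\cm_k\circ\cm_h$ via the abelian convolution in the same way.
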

\begin{proof}
	We simply need to prove the forward implication. By definition, there exists a channel $\cn$ such that $\cm_f=\cn\circ \cm_h$. Now, for any $(a,b)\in \mathbb{Z}_n\times  \mathbb{Z}_n$, since $W_{a,b}W_{c,d}=\alpha\,W_{c,d}W_{a,b}$ where $\alpha:=\e^{\frac{2\pi i(ad-bc)}{n}}$,
	\begin{align}
	\cn\circ\cm_h(W_{a.b})&=\sum_{(a',b')\in\mathbb{Z}_{n}\times \mathbb{Z}_n}h(a',b')\,\cn\big( W_{a',b'}  \,W_{a,b}\,W_{a',b'}^\dagger\big)\\
	&=\sum_{(a',b')\in\mathbb{Z}_n\times \mathbb{Z}_n}\,h(a',b')\,\e^{\frac{2\pi i(a'b-b'a)}{n}}\cn\big(  W_{a,b}\big)\,\\
	&\equiv A_h(a,b)\,\cn(W_{a,b})\,.
	\end{align}
	On the other hand, $\cn\circ \cm_h(W_{a,b})=\cm_f(W_{a,b})=A_f(a,b)\,W_{a,b}$. Assume first that $A_h(a,b)=0$ for some $(a,b)\in \mathbb{Z}_n\times \mathbb{Z}_n$. Then $A_f(a,b)=0$ from the degradability condition, and we can choose without loss of generality $\cn(W_{a,b})=0$. Now, assume that $A_h(a,b)\ne 0$. Then we have $\cn(W_{a,b})=A_f(a,b)/A_h(a,b)\,W_{a,b}$. Therefore, we found that there exists a function $C:\mathbb{Z}_n\times \mathbb{Z}_n\mapsto \CC$ such that for all $(a,b)\in \mathbb{Z}_n\times \mathbb{Z}_n$, $\cn(W_{a,b})=C(a,b)\,W_{a,b}$. We conclude by an appeal to Theorem 4.14 in \cite{watrous2018theory}.
	
\end{proof}

In \Cref{chanpreorder}, we introduced the notion of a less noisy domination region. Restricting ourselves to comparisons with  additive noise channels, we can also define the so-called \textit{additive less noisy domination region} of a channel $\cm$ acting on system $\cH$ as
\begin{align*}
\cL_{\cm}^{\operatorname{add}}:=\big\{  \cm_f:\,\cm \succeq_{\operatorname{l.n.}}\cm_f \big\}
\end{align*}
and its \textit{additive degradation region}
\begin{align*}
\cD_{\cm}^{\operatorname{add}}:=\big\{  \cm_f:\,\cm \succeq_{\operatorname{deg}}\cm_f \big\}\,,
\end{align*}
where both are subsets of the set of Weyl-covariant channels of the form of \eqref{Phif}. 
\begin{proposition}\label{propositioncovariance}
	Let $\cm_k$ be a Weyl-covariant quantum channel corresponding to the probability distribution function $k$. Then, $\cL_{\cm_f}^{\operatorname{add}}$ and $\cL_{\cm_f}^{\operatorname{add}}$ are non-empty, closed, convex, and invariant under the conjugate action of $\{W_{a,b}\}_{(a,b)\in\mathbb{Z}_n\times\mathbb{Z}_n}$: for all $(a,b)\in \mathbb{Z}_n\times \mathbb{Z}_n$,
	\begin{align*}
	\cm_h\in\cL_{\cm_f}^{\operatorname{add}}\Rightarrow \mathcal{W}_{a,b}\circ \cm_h\in\cL_{\cm_f}^{\operatorname{add}}~~~\text{ and }~~~	 \cm_h\in\cD_{\cm_f}^{\operatorname{add}}\Rightarrow \mathcal{W}_{a,b}\circ \cm_h\in\cD_{\cm_f}^{\operatorname{add}}\,,
	\end{align*}
	where $\mathcal{W}_{a,b}(X):=W_{a,b}XW_{a,b}^\dagger$. Moreover, $\cD_{\cm_f}^{\operatorname{add}}=\operatorname{conv}\big( \big\{  \mathcal{W}_{a,b}\circ \cm_f:\,(a,b)\in \mathbb{Z}_n\times\mathbb{Z}_n   \big\}  \big)$.
\end{proposition}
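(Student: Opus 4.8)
The plan is to assemble the statement from Proposition~\ref{obviousprops}, Lemma~\ref{lemma1}, and the elementary observation that the additive noise channels of the form~\eqref{Phif} form a compact convex family that is stable under post-composition by the unitary channels $\mathcal{W}_{a,b}$ and under convex combinations. The only genuinely new computation is this stability, and the only substantive input is Lemma~\ref{lemma1}; I do not expect a real obstacle, the point requiring the most care being the bookkeeping with the Weyl phases.

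First I would record the stability facts. For any probability mass function $h$ on $\mathbb{Z}_n\times\mathbb{Z}_n$ and any $(a,b)$, expanding $\mathcal{W}_{a,b}\circ\cm_h(\rho)=\sum_{c,d}h(c,d)\,W_{a,b}W_{c,d}\,\rho\,W_{c,d}^\dagger W_{a,b}^\dagger$ and using that $W_{a,b}W_{c,d}$ equals a phase times $W_{a+c,b+d}$, the phase cancels against its conjugate, so $\mathcal{W}_{a,b}\circ\cm_h=\cm_{h'}$ with $h'(c,d)=h(c-a,d-b)$, which is again a probability mass function since it is just a translate of $h$; hence $\mathcal{W}_{a,b}\circ\cm_h$ is again of the form~\eqref{Phif}. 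Likewise $\sum_i\lambda_i\cm_{g_i}=\cm_{\sum_i\lambda_i g_i}$ is of the form~\eqref{Phif} whenever $\lambda_i\ge 0$, $\sum_i\lambda_i=1$. Thus the set of additive noise channels is convex, and being the image of the compact simplex $\cP_{n^2}$ under the continuous linear map $g\mapsto\cm_g$, it is also compact, in particular closed.

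Given this, the first part is immediate. We have $\cL_{\cm_f}^{\operatorname{add}}=\cL_{\cm_f}\cap\{\text{additive noise channels}\}$ and $\cD_{\cm_f}^{\operatorname{add}}=\cD_{\cm_f}\cap\{\text{additive noise channels}\}$; by Proposition~\ref{obviousprops} and the previous paragraph each is an intersection of non-empty (they both contain $\cm_f$), closed, convex sets, hence non-empty, closed and convex. Invariance under the $\mathcal{W}_{a,b}$ follows from Proposition~\ref{obviousprops}, which gives $\cn\in\cL_{\cm_f}\Rightarrow\Lambda\circ\cn\in\cL_{\cm_f}$ and $\cn\in\cD_{\cm_f}\Rightarrow\Lambda\circ\cn\in\cD_{\cm_f}$ for any channel $\Lambda$ (take $\Lambda=\mathcal{W}_{a,b}$), combined with the fact just shown that $\mathcal{W}_{a,b}\circ\cm_h$ remains of the form~\eqref{Phif}.

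For the final identity I would prove the two inclusions. For ``$\supseteq$'': each $\mathcal{W}_{a,b}\circ\cm_f$ is an additive noise channel, and it is a degraded version of $\cm_f$ with degrading map $\Theta=\mathcal{W}_{a,b}$, so $\mathcal{W}_{a,b}\circ\cm_f\in\cD_{\cm_f}^{\operatorname{add}}$; since $\cD_{\cm_f}^{\operatorname{add}}$ is convex it then contains the convex hull of these channels. For ``$\subseteq$'': if $\cm_g\in\cD_{\cm_f}^{\operatorname{add}}$, then $\cm_g$ and $\cm_f$ are Weyl-covariant noise channels with $\cm_f\succeq_{\operatorname{deg}}\cm_g$, so Lemma~\ref{lemma1} provides a probability mass function $k$ with $\cm_g=\cm_k\circ\cm_f$; expanding $\cm_k\circ\cm_f(\rho)=\sum_{a,b}k(a,b)\,W_{a,b}\cm_f(\rho)W_{a,b}^\dagger=\sum_{a,b}k(a,b)\,(\mathcal{W}_{a,b}\circ\cm_f)(\rho)$ exhibits $\cm_g$ as a convex combination of the channels $\{\mathcal{W}_{a,b}\circ\cm_f\}_{(a,b)}$, which is what we wanted. \qedthm
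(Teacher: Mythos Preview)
Your proof is correct and follows essentially the same approach as the paper: both reduce the non-emptiness, closure, convexity and invariance to Proposition~\ref{obviousprops} intersected with the closed convex family of additive noise channels, and both obtain the convex-hull identity from Lemma~\ref{lemma1}. Your write-up simply makes explicit what the paper leaves implicit---the computation $\mathcal{W}_{a,b}\circ\cm_h=\cm_{h(\cdot-a,\cdot-b)}$, the compactness of the additive family, and the two inclusions for the final identity---so it is a more detailed version of the same argument rather than a different one.
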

\begin{proof}
	Non-emptiness, closure and convexity follow in the same way as in \Cref{obviousprops} in addition with the fact that the set of additive noise channels is closed and convex.  Invariance under the conjugate action of the Weyl operators is also apparent. The identity in (2) follows from \Cref{lemma1}.
\end{proof}
In the case when $f$ is the distribution $$\omega_\delta:=\Big(1-\delta,\frac{\delta}{n^2-1},...,\frac{\delta}{n^2-1}\Big)\,,$$ the map $\cm_\delta:=\cm_f$ reduces to
\begin{align*}
\cm_\delta(\rho)=(1-\delta)\rho+ \frac{\delta}{n^2-1}\sum_{(a,b)\ne (0,0)}\,W_{a,b}\,\rho\, W_{a,b}^\dagger&=(1-\delta)\,\rho+\frac{\delta}{n^2-1} (n\tr(\rho)\Id-\rho)\\
&=\rho p +(1-p)\tr(\rho)\,\frac{\Id}{n}\,,
\end{align*}
with $p=1-\frac{\delta n^2}{(n^2-1)}$. This is the depolarizing channel $\cN^{\operatorname{depol}}_p$ of noise parameter $p$. We recall that this map is  completely positive if and only if $-\frac{1}{n^2-1}\le p\le 1$, i.e. if and only if $0\le \delta\le 1$ \cite{horodecki1999reduction,lami2016bipartite}. Moreover, the family $(\cN^{\operatorname{depol}}_p)_{p\in\RR}$ forms a semigroup: $\cN^{\operatorname{depol}}_{p_1p_2}=\cN^{\operatorname{depol}}_{p_1}\circ\cN^{\operatorname{depol}}_{p_2}=\cN^{\operatorname{depol}}_{p_2}\circ \cN^{\operatorname{depol}}_{p_1}$. From this property, one can easily derive that for $\delta\le \frac{n^2-1}{n^2}$,
\begin{align}\label{degregdepol}
\delta\le \gamma\le 1-\frac{\delta}{n^2-1}~~~\Leftrightarrow~~~ \cm_\delta \succeq_{\operatorname{deg}}\cm_\gamma\,. 
\end{align}
In the next result, we prove that a depolarizing channel can never be less noisy than an erasure channel, hence extending Proposition 7 of \cite{makur2018comparison}:
\begin{proposition}
	For $n\ge 2$, given an erasure channel $\cE:\cB(\CC^n)\to\cB(\CC^{n+1})$ with erasure probability $\eps\in (0,1)$, there does not exist $\delta\in (0,1)$ such that the corresponding depolarizing channel $\cm_\delta:\cB(\CC^n)\to\cB(\CC^n)$ satisfies $\cm_\delta\succeq_{\operatorname{l.n.}} \cE$.
\end{proposition}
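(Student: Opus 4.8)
The plan is to argue by contradiction, combining the relative-entropy characterization of the less noisy order with the fact that the erasure channel contracts the relative entropy by \emph{exactly} $1-\epsilon$, whereas a depolarizing channel with $\delta>0$ maps all inputs into a bounded family of full-rank states and hence cannot keep pace with a pair of inputs whose relative entropy diverges.

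Suppose some $\delta\in(0,1)$ satisfied $\cm_\delta\succeq_{\operatorname{l.n.}}\cE$. By \Cref{propDmutualinfo} applied with $\eta=1$, together with the elementary identity $D(\cE(\rho)\|\cE(\sigma))=(1-\epsilon)\,D(\rho\|\sigma)$ for the erasure channel (see \Cref{fdivergence}), this would be equivalent to
\begin{align}\label{equ:lncontra}
D(\cm_\delta(\rho)\|\cm_\delta(\sigma))\ \ge\ (1-\epsilon)\,D(\rho\|\sigma)\qquad\text{for all }\rho,\sigma\text{ with }\supp(\rho)\subseteq\supp(\sigma).
\end{align}
I would then test \eqref{equ:lncontra} on the one-parameter family $\rho:=|0\rangle\langle 0|$ and $\sigma_s:=(1-s)\,|0\rangle\langle 0|+s\,|1\rangle\langle 1|$ for $s\in(0,1)$. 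Here $\supp(\rho)\subseteq\supp(\sigma_s)$ and $D(\rho\|\sigma_s)=-\log(1-s)\to\infty$ as $s\to 1^-$. Writing $p=1-\tfrac{\delta n^2}{n^2-1}$ for the depolarizing parameter, $\cm_\delta(\rho)=p\,|0\rangle\langle 0|+(1-p)\tfrac{\Id}{n}$ is a \emph{fixed} state, while $\cm_\delta(\sigma_s)=p\,\sigma_s+(1-p)\tfrac{\Id}{n}$ converges as $s\to 1^-$ to $p\,|1\rangle\langle 1|+(1-p)\tfrac{\Id}{n}$; both limiting states have smallest eigenvalue $\tfrac{1-p}{n}>0$, hence are full rank. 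Consequently $s\mapsto D(\cm_\delta(\rho)\|\cm_\delta(\sigma_s))$ extends continuously to $s=1$ and is bounded on $[\tfrac12,1]$, whereas the right-hand side of \eqref{equ:lncontra} tends to $+\infty$. Choosing $s$ sufficiently close to $1$ then contradicts \eqref{equ:lncontra} (note $1-\epsilon>0$), which proves the proposition.

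The argument is routine once set up this way; the one point that needs a little care is the boundedness step, i.e. verifying that $\cm_\delta(\rho)$ and the limit of $\cm_\delta(\sigma_s)$ stay full rank for \emph{every} admissible $\delta$ — equivalently $-\tfrac{1}{n-1}<p<1$, which is precisely where the hypotheses $0<\delta<1$ and $n\ge2$ enter (the latter via $-\tfrac{1}{n^2-1}>-\tfrac{1}{n-1}$). An equivalent alternative, closer to the classical argument of \cite{makur2018comparison}, would be to differentiate \eqref{equ:lncontra} twice at $\rho=\sigma$, obtaining $\chi^2_{\cm_\delta(\sigma)}(\cm_\delta(X))\ge(1-\epsilon)\,\chi^2_\sigma(X)$ for traceless $X$, and then let $\sigma$ degenerate to a pure state inside $\mathrm{span}\{|0\rangle,|1\rangle\}$, where the input-side Fisher information blows up but the output side stays finite; I prefer the direct version above since it avoids any operator-ordering subtleties in the $\chi^2$ kernel.
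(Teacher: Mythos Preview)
Your proof is correct and rests on the same observation as the paper's: for $\delta\in(0,1)$ the depolarizing channel sends every state to a full-rank state, so $D(\cm_\delta(\rho)\|\cm_\delta(\sigma))$ is always finite, whereas the erasure side can be made arbitrarily large (or infinite). The paper's argument is slightly more direct: it simply exhibits the single pair $\rho=\Id/n$, $\sigma=|1\rangle\langle 1|$, for which $D(\cE(\rho)\|\cE(\sigma))=+\infty$ while $D(\cm_\delta(\rho)\|\cm_\delta(\sigma))<\infty$, and invokes the relative-entropy characterization of $\succeq_{\operatorname{l.n.}}$ without passing through a limit. Your one-parameter family $\sigma_s$ is the natural ``regularized'' version of that same pair, and has the mild advantage of staying strictly inside the support hypothesis of \Cref{propDmutualinfo}; conversely the paper's version avoids any continuity bookkeeping. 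One cosmetic point: your claim that the smallest eigenvalue of $\cm_\delta(|0\rangle\langle0|)$ equals $(1-p)/n$ is only literally true when $p\ge 0$; for $p\in(-\tfrac{1}{n^2-1},0)$ the smallest eigenvalue is $p+(1-p)/n$, but this is still strictly positive on the whole admissible range, so your full-rank conclusion (and hence the proof) is unaffected.
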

\begin{proof}
	For an erasure channel $\cE$ with $\eps\in(0,1)$, we always have $D(\cE(\Id/n)\|\cE(|1\rangle\langle 1|))=+\infty$. On the other hand, for any $\delta\in (0,1)$, we have $D(\cm_\delta(\rho)\|\cm_\delta(\sigma))<+\infty$ for all $\rho,\sigma\in\cD(\CC^n)$. Hence $\cm_\delta\not{\succeq_{\operatorname{l.n.}}} \cE $ for any $\delta\in (0,1)$ by the characterization in \Cref{propDmutualinfo}.
\end{proof}

	As we saw in \Cref{inclusions}, degradability is at least as strong as any notion of less noisy ordering. This raises the question of whether one can find pair of channels $\cm,\cn$ such that $\cm\succeq_{\operatorname{l.n.}}^{\operatorname{c}}\cn$ without $\cm\succeq_{\operatorname{deg}}\cn$. In the next result, we make one step towards this direction by proving a separation between the (non-complete) less noisy ordering and degradability between two quantum channels. This result extends Proposition 11 of  \cite{makur2018comparison}:
	\begin{proposition}\label{lnnotdeg}
		Given a Weyl-covariant channel $\cm_f$ over $\mathbb{Z}_n\times \mathbb{Z}_n$, $n\ge 2$, $\delta\in\big[ 0,\frac{n^2-1}{n^2} \big]$ and $$\gamma_0:=\frac{1-\delta}{1-\delta+\frac{\delta}{(n^2-1)^2}}\,.$$ If  $\cm_f$ is a convex mixture of the channels $\mathcal{W}_{a,b}\circ \cm_\delta$ and $\mathcal{W}_{a',b'}\circ \cm_{\gamma_0}$, $(a,b),(a',b')\in\mathbb{Z}_n\times\mathbb{Z}_n$, then $$\cm_\delta \succeq_{\operatorname{l.n.}} \cm_f\,.$$
	\end{proposition}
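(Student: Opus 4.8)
The plan is to reduce the general convex-mixture statement to a single comparison, using the structural properties of the (additive) less noisy domination region, and then to establish that comparison through the $f$-divergence/$\chi^2$ characterisation of less noisiness.

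\emph{Reduction.} Conjugating an additive noise channel by a Weyl operator only translates the underlying probability distribution, so each $\mathcal W_{a,b}\circ\cm_\delta$ and $\mathcal W_{a',b'}\circ\cm_{\gamma_0}$ is again an additive noise channel, and hence so is any convex mixture $\cm_f$ of them; thus Proposition~\ref{propositioncovariance} applies to $\cL_{\cm_\delta}^{\operatorname{add}}$. Since that set is convex and invariant under the conjugate action of the Weyl operators, it suffices to show that the two generators $\cm_\delta$ and $\cm_{\gamma_0}$ both lie in $\cL_{\cm_\delta}^{\operatorname{add}}$: then all the channels $\mathcal W_{a,b}\circ\cm_\delta$ and $\mathcal W_{a',b'}\circ\cm_{\gamma_0}$ do, and so does every convex combination of them, which is the claim. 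Now $\cm_\delta\in\cL_{\cm_\delta}^{\operatorname{add}}$ by reflexivity of $\succeq_{\operatorname{l.n.}}$, so everything comes down to establishing $\cm_\delta\succeq_{\operatorname{l.n.}}\cm_{\gamma_0}$. (This cannot follow from degradability: writing $\cm_\gamma=\cN^{\operatorname{depol}}_{p_\gamma}$ with $p_\gamma=1-\tfrac{n^2\gamma}{n^2-1}$, the range \eqref{degregdepol} would force $\gamma_0\le 1-\tfrac{\delta}{n^2-1}$, which fails for $n\ge 2$ and $0<\delta<\tfrac{n^2-1}{n^2}$; this is exactly the announced less-noisy-but-not-degraded phenomenon.)

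\emph{The comparison.} By Proposition~\ref{propDmutualinfo}, $\cm_\delta\succeq_{\operatorname{l.n.}}\cm_{\gamma_0}$ is equivalent to $D(\cm_{\gamma_0}(\rho)\|\cm_{\gamma_0}(\sigma))\le D(\cm_\delta(\rho)\|\cm_\delta(\sigma))$ for all states $\rho,\sigma$ (supports present no issue, the depolarizing outputs being full rank for $0<\delta,\gamma_0<1$, with boundary cases trivial). I would obtain this from Proposition~\ref{theorelat_chisquare}: it is enough to verify its condition $\operatorname{(i)}$ for the pair $(\cm_\delta,\cm_{\gamma_0})$ with $\eta=1$, for then condition $\operatorname{(ii)}$ holds for all $g\in\cG$, and the choice $g(w)=w\ln w$ gives precisely the desired inequality. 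Condition $\operatorname{(i)}$ consists of two comparisons of standard $f$-divergences, for $f(w)=(w-1)^2$ and for the kernels $f_t(w)=\tfrac{(w-1)^2}{w+t}$, $t\ge0$. Using $\cm_\delta(\rho)-\cm_\delta(\sigma)=p_\delta(\rho-\sigma)$ and $\cm_{\gamma_0}(\rho)-\cm_{\gamma_0}(\sigma)=p_{\gamma_0}(\rho-\sigma)$ (with $p_\delta=1-\tfrac{n^2\delta}{n^2-1}$, $p_{\gamma_0}=1-\tfrac{n^2\gamma_0}{n^2-1}$), together with the $U(n)$-covariance of both channels (which lets one take $\sigma$ diagonal), each side reduces to an explicit expression in the eigenvalues of $\sigma$, the eigendata of $\rho$ relative to that basis, and the scalars $p_\delta,p_{\gamma_0},t$ (with an overall factor $p_\delta^2$, resp. $p_{\gamma_0}^2$, from the differences).

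\emph{Main obstacle and conclusion.} The crux is the resulting family of scalar inequalities, and this is where both the work and the specific value of $\gamma_0$ lie: one must show the comparison holds for every $t\ge0$ and every admissible $\rho,\sigma$, with $\gamma_0$ emerging as exactly the threshold at which some such inequality becomes tight (matching the classical value of~\cite{makur2018comparison}). Two points need care: the hypothesis $\delta\le\tfrac{n^2-1}{n^2}$ is precisely what makes these scalar inequalities hold, and it simultaneously guarantees $\gamma_0\ge\delta$ (equivalently $(1-\delta)(n^2-1)\ge\delta$), so that the family of channels covered by the statement is non-empty and contains $\cm_\delta$ itself; and since $p_{\gamma_0}<0$ for $\delta>0$, the map $\cm_{\gamma_0}$ reverses the sign of the traceless part of its argument, so one should track this through the estimates, using that $\cm_{\gamma_0}$ nonetheless maps states to full-rank states. (Alternatively one could estimate the integrands of the integral representation \eqref{relativeentkernel} of the relative entropy directly; the computational heart is the same.) Once these scalar inequalities are established, Propositions~\ref{propositioncovariance},~\ref{propDmutualinfo} and~\ref{theorelat_chisquare} assemble into the proof.
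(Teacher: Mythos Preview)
Your reduction is correct and matches the paper exactly: by the convexity and Weyl-invariance of $\cL_{\cm_\delta}^{\operatorname{add}}$ from Proposition~\ref{propositioncovariance}, it suffices to show $\cm_\delta\succeq_{\operatorname{l.n.}}\cm_{\gamma_0}$, and by Proposition~\ref{propDmutualinfo} this is the relative entropy comparison $D(\cm_\delta(\rho)\|\cm_\delta(\sigma))\ge D(\cm_{\gamma_0}(\rho)\|\cm_{\gamma_0}(\sigma))$ for all states.

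The gap is in the key comparison. Your plan is to verify condition~(i) of Proposition~\ref{theorelat_chisquare} and deduce~(ii), but you never actually establish the ``family of scalar inequalities'' you identify as the crux. Diagonalising $\sigma$ by $U(n)$-covariance still leaves you with a general $\rho$, and you give no argument controlling the dependence on its off-diagonal data; you also route through condition~(i), which is strictly stronger than the single relative entropy inequality you need. The paper's proof supplies exactly the missing idea: because depolarizing preserves eigenvectors, writing $\rho=\sum_j\lambda_j|e_j\rangle\langle e_j|$ and $\sigma=\sum_k\mu_k|f_k\rangle\langle f_k|$ one finds
\[
D(\cm_\delta(\rho)\|\cm_\delta(\sigma))=\sum_j\lambda_j(p)\ln\lambda_j(p)-\sum_{j,k}\lambda_j(p)\ln\mu_k(p)\,a_{jk},\qquad a_{jk}=|\langle e_j|f_k\rangle|^2,
\]
and similarly for $\cm_{\gamma_0}$. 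The difference of the two sides is therefore \emph{affine} in the doubly stochastic matrix $A=(a_{jk})$, so by Birkhoff's theorem it suffices to check the inequality at permutation matrices, i.e.\ for commuting $\rho,\sigma$. That reduces the problem to the classical depolarizing comparison, which is precisely Proposition~11 of \cite{makur2018comparison}, where the threshold $\gamma_0$ is computed. Without this Birkhoff step (or an equivalent device) your approach does not close; with it, there is no need to pass through the stronger condition~(i) at all.
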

\begin{remark}
	Choosing $\cm_f$ to be the depolarizing channel $\cm_\delta$, we already saw in \eqref{degregdepol} that $\cm_\delta \succeq_{\operatorname{deg}}\cm_\gamma$, and hence $\cm_\delta \succeq_{\operatorname{l.n.}}\cm_\gamma$, whenever $\delta\le \gamma\le 1-\frac{\delta}{n^2-1}$. Therefore,  since $\gamma_0\ge 1-\frac{\delta}{n^2-1}$, \Cref{lnnotdeg} provides us with a class of channels that belong to $\cL^{\operatorname{add}}_{\cm_\delta}\backslash \cD^{\operatorname{add}}_{\cm_\delta}$. However the problem of finding two channels $\cm,\cn$ such that $\cm\succeq_{\operatorname{l.n.}}^{\operatorname{c}}\cn$ but $\cm\not{\succeq_{\operatorname{deg}}}\cn$ still remains open, even when the channels $\cm$ and $\cn$ are taken to be classical.
\end{remark}

\begin{proof}
	Thanks to \Cref{propositioncovariance}, it is enough to prove that $\cm_\delta \succeq_{\operatorname{l.n.}}\cm_{\gamma_0}$. Then, by the characterization in \Cref{propDmutualinfo}, we simply need to show that for any two states $\rho,\sigma\in\cD(\CC^n)$,
	\begin{align*}
	D(\cm_\delta(\rho)\|\cm_\delta(\sigma))\ge	D(\cm_{\gamma_0}(\rho)\|\cm_{\gamma_0}(\sigma))\,.
	\end{align*}
	Denoting by $\rho:=\sum_j\,\lambda_j|e_j\rangle\langle e_j|$ and $\sigma:=\sum_{k}\mu_k|f_k\rangle\langle f_k|$ the eigenvalue decompositions of the states, we can easily show that the above relative entropies are linear functions of the square overlaps between the eigenvectors of $\rho$ and those of $\sigma$. For instance, denoting by $p=1-\frac{\delta n^2}{n^2-1}$, we have
	\begin{align}\label{eqtoprove}
\cm_\delta(\rho)=\sum_j\,\lambda_j(p)\,|e_j\rangle\langle e_j|\,,~~~\cm_\delta(\sigma)=\sum_k\,\mu_k(p)\,|f_k\rangle\langle f_k|\,,
	\end{align}
	where $\lambda_j(p):=p\lambda_j+(1-p)\frac{1}{n}$ and $\mu_k(p):=p\mu_k+(1-p)\frac{1}{n}$. Hence, 
	\begin{align*}
	D(\cm_\delta(\rho)\|\cm_\delta(\sigma))=\sum_j\lambda_j(p)\ln\lambda_j(p)-\sum_{j,k}\lambda_j(p)\ln\mu_k(p)\,|\langle e_j|f_k\rangle|^2\,.
	\end{align*}
	Similarly, we denote by $q=1-\frac{\gamma_0 n^2}{n^2-1}$ and define $\lambda_j(q)$ and $\mu_j(q)$ as the corresponding convex mixtures of the eigenvalues of $\rho$, resp. $\sigma$ and the ones of the completely mixed state. Denote by $A$ the $n\times n$ matrix whose entropies are given by $a_{jk}:=|\langle e_j|f_k\rangle|^2$. Since both $\{|e_j\rangle \}$ and $\{|f_k\rangle\}$ form orthonormal bases, $A$ is doubly stochastic. Then, by Birkhoff's theorem, $A$ can be written as a convex combination of permutation matrices. Therefore, we can conclude that if an inequality of the form
	\begin{align*}
	\sum_j\lambda_j(p)\ln\lambda_j(p)-\sum_{j,k}\lambda_j(p)\ln\mu_k(p)\,a_{jk}\ge	\sum_j\lambda_j(q)\ln\lambda_j(q)-\sum_{j,k}\lambda_j(q)\ln\mu_k(q)\,a_{jk}
	\end{align*}
	holds for all permutation matrices $A$, then it holds for all doubly stochastic matrices $A$, and therefore \eqref{eqtoprove} holds for all states $\rho,\sigma$. However, restricting to the case of permutation matrices means that we can, without loss of generality, assume that the states $\rho$ and $\sigma$ commute. In other words, we reduced the proof to that of classical depolarizing channels over classical probability mass functions, for which we can directly invoke Proposition 11 in \cite{makur2018comparison}.
\end{proof}

\begin{remark}
	The use of Birkhoff's theorem for reducing a proof to the commutative regime is not new. It was recently used in the context of functional inequalities in order to find the exact (modified) logarithmic Sobolev constant of the depolarizing semigroup \cite{muller2016relative,beigi2018quantum}.
\end{remark}

\subsection{Gaussian channels}

In this section, we consider the setting of  Gaussian quantum channels \cite{holevo2012quantum}: an \textit{$n$-mode} quantum system is modeled by the Hilbert space $L^2(\RR^n)\simeq L^2(\RR)^{\otimes n}$ of complex-valued, square-integrable functions on $\RR^n$. On this space, we define the so-called \textit{creation} $a_j^\dagger$ and \textit{annihilation} $a_j$ operators, $j\in [n]$, by means of their commutation relations
\begin{align*}
[a_i,a_j^\dagger]=\delta_{ij}\Id\,,~~~~~[a_i,a_j]=[a_i^\dagger,a_j^\dagger]=0\,,~~~~~i,j=1,\dots, n\,,
\end{align*}
where for each $j\in[n]$, $a_j$ and $a_j^\dagger $ act nontrivially on the $j$-th copy of the system. The number operator is defined as $$N^{(n)}:=\sum_{j=1}^na^\dagger_ja_j$$ and corresponds to the energy of the system. In the case $n=1$, it is known to have the following spectral decomposition onto the Fock basis $\{|k\rangle\}_{k\in\NN}$: $$N^{(1)}=\sum_{k=0}^\infty \,k\,|k\rangle\langle k|\,.$$ Next, we denote by $|z\rangle$ the coherent state of parameter $z\in\CC$. These are the eigenvectors of the one-mode annihilation operator, $a|z\rangle=z|z\rangle$.  In what follows, for a vector $\mathbf{z}\in\CC^n\equiv \RR^{2n}$, the tensor product of coherent states $|z_j\rangle$ will be denoted by $|\mathbf{z}\rangle$. The Weyl displacement operators  $D(\mathbf{z})$, $\mathbf{z}\in \mathbb{C}^n$, are unitary operators that rotate the $n$-mode vacuum state $|\mathbf{0}\rangle$ to the coherent state $|\mathbf{z}\rangle=D(\mathbf{z})|\mathbf{0}\rangle$. In terms of the creation and annihilation operators, they are defined as
\begin{align}
D(\mathbf{z})=\exp\Big[ \,\mathbf{a}^\dagger\cdot\mathbf{z}-\mathbf{z}^*\cdot\mathbf{a}  \Big]\,.
\end{align}

Next, an $n$-mode quantum Gaussian state is proportional to the exponential of a quadratic polynomial in the creation and annihilation operators. 
Among these states, thermal Gaussian states play an important role. They are defined by $n$-fold products of the following geometric probability distribution for the energy:
\begin{align*}
\sigma(E):=\frac{1}{E+1}\,\sum_{k=0}^\infty\,\Big(\frac{E}{E+1}\Big)^k\,|k\rangle\langle k|\,,~~~E>0\,.
\end{align*}
 The average energy of $\sigma(E)$ is equal to $\tr(\sigma(E)N^{(1)})=E$, whereas its entropy is equal to 
\begin{align}\label{entropygaussstate}
g(E):=H(\sigma(E))=(E+1)\ln(E+1)-E\ln(E)\,.
\end{align}
Quantum Gaussian channels are those quantum channels $\cN:\cD(L^2(\RR^n))\to \cD(L^2(\RR^m))$ that preserve the set of quantum Gaussian states. The most important classes of such channels are the beam-splitter, the squeezing, quantum Gaussian attenuators and quantum Gaussian amplifiers. The first two are unitary quantum analogs of the operation of linearly mixing random variables: a \textit{two-mode beam-splitter} $U^{\operatorname{bs}}_{\lambda}:\cD(L^2(\RR^2))\to\cD( L^2(\RR^2))$ of transmissivity $0<\lambda<1$ is defined by the following action on the ladder operators:
\begin{align*}
&U_\lambda^{\operatorname{bs}\dagger}\,a_1\,U_\lambda^{\operatorname{bs}}=\sqrt{\lambda}\,a_1+\sqrt{1-\lambda}\,a_2\\
&U_\lambda^{\operatorname{bs}\dagger}\,a_2\,U_\lambda^{\operatorname{bs}}=-\sqrt{1-\lambda}\,a_1+\sqrt{\lambda}\,a_2\,.
\end{align*}
This is a simple example of a passive Gaussian channel in the sense that it preserve the total energy: $[U_\lambda^{\operatorname{bs}} ,\,N^{(2)}]=0$. On the other hand, the \textit{two-mode squeezing} $U_\kappa^{\operatorname{sq}}$ of parameter $\kappa\ge 1$ increases the energy of the input. It is defined similarly to the beam-splitter as follows:
\begin{align*}
&U_\kappa^{\operatorname{sq}\dagger}\,a_1\,U_\kappa^{\operatorname{sq}}=\sqrt{\kappa}\,a_1+\sqrt{\kappa-1}\,a_2\\
&U_\kappa^{\operatorname{sq}\dagger}\,a_2\,U_\kappa^{\operatorname{sq}}=\sqrt{\kappa-1}\,a_1+\sqrt{\kappa}\,a_2\,.
\end{align*}
In both cases, it is standard to interpret the first factor of the tensor product as the system and the second factor as the environment. Next, for $\lambda\ge 0$, we denote by $\cB_\lambda:\cD(L^2(\RR^2))\to \cD(L^2(\RR))$  the reduced state on the system of either the beam-splitter ($\lambda\le 1$) or the squeezer ($\lambda\ge 1$). The \textit{quantum Gaussian attenuator} $\cE_{\lambda,E}:\cD(L^2(\RR))\to \cD(L^2(\RR))$, of parameters $0\le \lambda\le 1$ and $E\ge0$,  corresponds to the channel $\rho\mapsto \cB_\lambda(\rho\otimes \sigma(E))$. Similarly, the \textit{quantum Gaussian amplifier} $\cA_{\kappa,E}:\cD(L^2(\RR))\to \cD(L^2(\RR))$ of parameters $\kappa\ge 1$ and $E\ge0$,  corresponds to the channel $\rho\mapsto \cB_\kappa(\rho\otimes \sigma(E))$. 

Another important Gaussian channel considered in the literature is the \textit{Gaussian additive noise channel}. It is generated by a convex combination of displacement operators with a Gaussian probability measure:
\begin{align}\label{AGNC}
\cN_E(\rho):=\int_{\CC}\,D(\sqrt{E}\mathbf{z})\,\rho\,D(\sqrt{E}\mathbf{z})^\dagger\,\e^{-|\mathbf{z}|^2}\,\frac{dz}{\pi}\,.
\end{align}
In fact, the Gaussian additive noise channel can be seen as a large particle number limit of the attenuator channel (see \cite{giovannetti2004minimum}). 

All the channels mentioned above are typical examples of a \textit{gauge-covariant} Gaussian channel \cite{holevo2012quantum}: A channel $\cm:\cD(L^2(\RR^n))\to \cD(L^2(\RR^m))$ is called gauge-covariant if for all input state $\rho\in \cD(L^2(\RR^n))$,
\begin{align*}
\cm\big( U^{(n)}_\phi\,\rho\,(U^{(n)}_\phi)^\dagger \big)=U^{(m)}_\phi\,\cm(\rho)(U^{(m)}_\phi)^\dagger
\end{align*}
where $U^{(n)}_\phi:=\e^{-i\phi\,N^{(n)}}$ is the gauge transformation corresponding to the $n$-mode harmonic oscillator. 
In fact, every phase-covariant quantum Gaussian channel can be expressed as a quantum-limited amplifier composed with a quantum-limited attenuator.

The following conjecture, stated in \cite{guha2007classical}, was recently proved in the restricted one-mode case \cite{de2017gaussian,de2016gaussian} as well as for the range of parameters for which the channels become entanglement-breaking \cite{DePalmaEB18}; that is 
\begin{itemize}
\item[(i)] For any quantum Gaussian attenuator $\cE_{\lambda,E}$ with $E\ge\frac{\lambda}{1-\lambda}$;
\item[(ii)] For any quantum Gaussian amplifier $\cA_{\kappa,E}$ with $E\ge \frac{1}{\kappa-1}$;
\item[(iii)] For any quantum Gaussian additive noise channel $\cN_E$ with $E\ge 1$.
\end{itemize}

\begin{conjecture}[Constrained minimum output entropy conjecture for phase-covariant quantum channels]\label{MOE}
For any $n\in\NN$ and any $\rho\in\cD(L^2(\RR^n))$ with finite entropy, let $\sigma$ be the one-mode thermal Gaussian state with entropy $\frac{H(\rho)}{n}$. Then, for any $E\ge 0$, $0\le \lambda\le 1$, $\kappa \ge 1$: 
\begin{align}
&H(\cE^{\otimes n}_{\lambda,E}(\rho))\ge H(\cE_{\lambda,E}^{\otimes n}(\sigma^{\otimes n}))=n\,g\Big( \lambda\,g^{-1}\Big(\frac{H(\rho)}{n}\Big)+(1-\lambda)E\Big)\,,\label{EPNIatt}\\
&H(\cA^{\otimes n}_{\kappa,E}(\rho))\ge H(\cA_{\kappa,E}^{\otimes n}(\sigma^{\otimes n}))=n\,g\Big( \kappa\,g^{-1}\Big(\frac{H(\rho)}{n}\Big)+(\kappa-1)(E+1)\Big)\,,\label{EPNIampl}\\
&H(\cN_E^{\otimes n}(\rho))\ge H(\cN_E^{\otimes n}(\sigma^{\otimes n}))=n\,g\Big(g^{-1}\Big(\frac{H(\rho)}{n} \Big)   + E\Big)\label{EPNIadd}\,.
\end{align}
\end{conjecture}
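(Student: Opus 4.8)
The plan is to reduce all three inequalities to a single master statement about quantum-limited channels, to establish that statement first for a single mode, and only afterwards to attempt the multimode lift. I would begin by exploiting the decomposition recalled above: every gauge-covariant Gaussian channel factorizes as a quantum-limited amplifier composed with a quantum-limited attenuator, so it suffices to prove the constrained minimum output entropy bound for the two quantum-limited building blocks (the threshold-$E$ cases) and then propagate it through the composition. Concretely, for $\cE_{\lambda,E}$ one writes it as $\cA_{\kappa,0}\circ\cE_{\lambda',0}$ for suitable $\kappa,\lambda'$, applies the quantum-limited bounds to each factor, and uses the thermal energy-addition rules—a thermal input of mean photon number $\bar N$ is sent by $\cE_{\lambda,E}$ to a thermal state of mean photon number $\lambda\bar N+(1-\lambda)E$, and by $\cA_{\kappa,E}$ to one of mean photon number $\kappa\bar N+(\kappa-1)(E+1)$—to verify that thermal inputs attain exactly the right-hand sides of \eqref{EPNIatt} and \eqref{EPNIampl} when $\bar N=g^{-1}(H(\rho)/n)$. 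The additive noise channel $\cN_E$ is then handled as the large-particle-number limit of the attenuator, as recalled after \eqref{AGNC}, giving \eqref{EPNIadd}. Thus the whole conjecture collapses to showing that, for quantum-limited attenuators and amplifiers, thermal states are the constrained minimizers of the output entropy.

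For the single-mode quantum-limited attenuator I would use its semigroup structure $\cE_{\lambda_1\lambda_2,0}=\cE_{\lambda_1,0}\circ\cE_{\lambda_2,0}$. Setting $\lambda=e^{-t}$ turns the family into a one-parameter quantum Markov semigroup whose generator is the pure-loss Lindbladian, and I would study $F(t):=H(\cE_{e^{-t},0}(\rho))$. The key step is a de Bruijn-type identity expressing $\dot F(t)$ as an entropy-production (Fisher-information) functional, together with a sharp lower bound on that functional—a Gaussian log-Sobolev or isoperimetric-type inequality—that is saturated precisely by thermal states. Since thermal states remain thermal along the flow, only rescaling their mean photon number, and saturate every inequality in the chain, integrating the differential inequality from $t=0$ yields the claimed bound with equality at thermal inputs. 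As a fallback I would keep a variational argument: a minimizer $\rho^*$ at fixed $H(\rho)$ must satisfy an Euler--Lagrange stationarity condition, gauge covariance of $\cE_{\lambda,0}$ forces $\rho^*$ to be diagonal in the Fock basis, and the problem reduces to a classical optimization over photon-number distributions whose solution is the geometric (thermal) law. The amplifier case is entirely analogous via its own semigroup and energy rule.

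The genuinely hard step—and the reason the statement is only a conjecture for $n>1$—is lifting the single-mode result to arbitrary, possibly entangled, $n$-mode inputs, since the constraint fixes only the total entropy $H(\rho)$ and no product structure. The natural tool is a quantum Entropy Power Inequality for the quantum-limited attenuator: one would argue that an entropy-power transform of $H(\cE_{\lambda,0}^{\otimes n}(\rho))$ obeys a convolution-type inequality which, combined with the single-mode bound applied mode by mode and the concavity of the relevant $g$-transforms, reproduces the factor-$n$ right-hand side. I expect this to be the main obstacle: the available quantum entropy power inequalities are not known to be tight enough to close the constrained bound beyond one mode, and cross-mode entanglement can a priori lower the output entropy in ways a per-mode argument cannot control. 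A conceivable alternative is a direct majorization statement—showing $\cE_{\lambda,0}^{\otimes n}(\rho)$ is majorized by the thermal output of the same total entropy—but establishing multimode majorization for entangled inputs is exactly the missing ingredient. I would therefore present the multimode case conditionally and flag it as open, consistent with the status recorded in the excerpt, where only the one-mode and the entanglement-breaking parameter ranges are resolved.
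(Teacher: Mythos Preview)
The statement you are addressing is labeled a \emph{Conjecture} in the paper, and the paper does not supply a proof of it. Immediately before the statement the authors note that it ``was recently proved in the restricted one-mode case [\dots] as well as for the range of parameters for which the channels become entanglement-breaking,'' and then use the conjecture purely as a \emph{hypothesis} in the subsequent theorem on contraction coefficients (the theorem reads ``Assume that Conjecture~\ref{MOE} holds true. Then \ldots''). There is thus no proof in the paper against which to compare your proposal.

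That said, your outline is a reasonable sketch of the known one-mode strategy (factorize through quantum-limited channels, use the semigroup/isoperimetric approach for the building blocks, pass to the additive-noise channel by a limit), and you correctly diagnose the genuine obstacle: the multimode constrained bound for entangled inputs is precisely what is open, and neither the known quantum entropy power inequalities nor per-mode arguments are currently sharp enough to close it. Your concluding remark that the multimode case should be presented conditionally is therefore exactly the status the paper records. In short, there is no gap in your assessment of the difficulty, but there is also nothing to ``prove'' here in the sense the paper intends---the authors are stating an assumption, not a result.
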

 In the next result, we provide an exact expression for contraction coefficients of any of these Gaussian channels and their tensor products under the condition that \Cref{MOE} holds. More precisely, let $\sigma_j\in\cD(L^2(\RR))$, $j\in[n]$, be one-mode Gaussian states, $\cG$ be a one-mode Gaussian quantum channel, and define the contraction coefficients for the relative entropy as
 \begin{align}
 &\eta_{\operatorname{Re}}\big( \bigotimes_j \sigma_j,\cG^{\otimes n}\big):=\sup_{\substack{\rho\in\cD(L^2(\RR^n))\\0<D(\rho\|\bigotimes_j\sigma_j)<\infty}}\,\frac{D(\cG^{\otimes n}(\rho)\|\bigotimes_j\cG(\sigma_j))}{D(\rho\|\bigotimes_j\sigma_j)}\,.\label{ReGauss}
 \end{align}
 Here, we define the relative entropy as Lindblad in \cite{lindblad1974expectations,lindblad1973entropy}: given any two positive, trace-class operators $A,B$, if $\{|a_i\rangle\}$, resp. $\{|b_j\rangle\}$,  is a complete orthonormal set of eigenvectors of $A$ with corresponding eigenvalues $\{a_i\}$, resp. $B$ with eigenvalues $\{b_j\}$,
 \begin{align*}
 D(A\|B)&:=\sum_i\,\langle a_i|\,(A\ln A-A\ln B+B-A)\,|a_i\rangle\\
 &=\sum_{j}\,\langle b_j|\,(A\ln A-A\ln B+B-A)\,|b_j\rangle\\
 &=\sum_{i,j}|\langle a_i|b_j\rangle|^2\,\big( a_i\ln a_i-a_i\ln b_j+b_j-a_i \big)\,.
 \end{align*}
 where the sum is taken to be $+\infty$ if the series diverges.
 Furthermore, we define the \textit{energy-constrained} relative entropy contraction coefficient as follows: for $p>0$
 \begin{align*}
\eta_{\operatorname{Re}}^{p}\big(\cG^{\otimes n},\bigotimes_j\sigma_j\big):=\sup_{\substack{\rho\in\cD(L^2(\RR^n)),\,\tr(\rho\,N^{(n)})\le p\\0<D(\rho\|\bigotimes_j\sigma_j)<\infty}}\,\frac{D(\cG^{\otimes n}(\rho)\|\bigotimes_j \cG(\sigma_j))}{D(\rho\|\bigotimes_j\sigma_j)}\,.
 \end{align*}

 \begin{theorem}\label{tensorizationGausss}
Assume that \Cref{MOE} holds true. Then for any family $\{E_j\}_{j\in [n]}$ of energies, any $0\le \lambda\le 1$, $1\ge \kappa$, $E\ge 0$, and $p\ge \max_{j}\,E_j$: 
\begin{align}
&\eta_{\operatorname{Re}}\big(\cE_{\lambda,E}^{\otimes n},\bigotimes_j\sigma(E_j)\big)=\eta_{\operatorname{Re}}^p\big(\cE_{\lambda,E}^{\otimes n},\bigotimes_j\sigma(E_j)\big)=\lambda\,\max_{j\in[n]}\,   \frac{\ln\Big(\frac{\lambda E_j+(1-\lambda)E+1}{\lambda E_j+(1-\lambda)E}\Big)}{\ln\Big(\frac{E_j+1}{E_j}\Big)}\, \label{contrac}\,,\\
&\eta_{\operatorname{Re}}\big(\cA_{\kappa,E}^{\otimes n},\bigotimes_j\sigma(E_j)\big)=\eta_{\operatorname{Re}}^p\big(\cA_{\kappa,E}^{\otimes n},\bigotimes_j\sigma(E_j)\big)=\kappa\,\max_{j\in[n]}\,   \frac{\ln\Big(\frac{\kappa E_j+(\kappa-1)(E+1)+1}{\kappa E_j+(\kappa-1)(E+1)}\Big)}{\ln\Big(\frac{E_j+1}{E_j}\Big)}\,,\nonumber\\
&\eta_{\operatorname{Re}}\big(\cN_{E}^{\otimes n},\bigotimes_j\sigma(E_j)\big)=\eta_{\operatorname{Re}}^p\big(\cN_{E}^{\otimes n},\bigotimes_j\sigma(E_j)\big)=\,\max_{j\in[n]}\,   \frac{\ln\Big( \frac{ E_j+E+1}{ E_j+ E}\Big)}{\ln\Big( \frac{E_j+1}{E_j}\Big)}\, .\nonumber
\end{align}
 \end{theorem}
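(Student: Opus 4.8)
The plan is to pass to von Neumann entropies and mean photon numbers and then feed in \Cref{MOE}. Recall that $-\ln\sigma(E)=\ln(E+1)\,\mathbb{I}+g'(E)\,N^{(1)}$ with $g'(E)=\ln\tfrac{E+1}{E}$, $g(E)=\ln(E+1)+E\,g'(E)$ and $g''(E)=-\tfrac1{E(E+1)}$, so that for any state $\rho$ of finite mean photon number $\bar n(\rho):=\tr(\rho N^{(1)})$ one has the exact identity $D(\rho\|\sigma(E))=g(E)-H(\rho)+g'(E)(\bar n(\rho)-E)$, hence, by additivity of $-\ln\big(\bigotimes_j\sigma(E_j)\big)$,
\[
D\big(\rho\,\|\,{\textstyle\bigotimes_j\sigma(E_j)}\big)=\sum_j g(E_j)-H(\rho)+\sum_j g'(E_j)\,\big(\bar n_j(\rho)-E_j\big),
\]
where $\bar n_j(\rho)$ is the mean photon number of the $j$-th marginal. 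Each of $\cG\in\{\cE_{\lambda,E},\cA_{\kappa,E},\cN_E\}$ is gauge-covariant; I would record that $\cG$ sends $\sigma(s)\mapsto\sigma(as+b)$ with $(a,b)$ equal, respectively, to $(\lambda,(1-\lambda)E)$, $(\kappa,(\kappa-1)(E+1))$, $(1,E)$ — chosen precisely so that $\sigma(E_j)\mapsto\sigma(E_j')$ with $E_j'$ as in \eqref{contrac} — that it acts affinely on mean photon numbers, $\bar n(\cG(\rho))=a\,\bar n(\rho)+b$, and that it obeys the sharper displacement covariance $\cG\big(D(z)\rho D(z)^\dagger\big)=D(\sqrt a\,z)\,\cG(\rho)\,D(\sqrt a\,z)^\dagger$. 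Thus $\bigotimes_j\cG(\sigma(E_j))=\bigotimes_j\sigma(E_j')$, and inserting the two displayed formulas into the ratio defining the contraction coefficient leaves only a comparison between $H(\cG^{\otimes n}(\rho))$ and $H(\rho)$, the $\bar n_j$-linear contributions to numerator and denominator carrying coefficients $a\,g'(E_j')$ and $g'(E_j)$, respectively.

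For the upper bound set $\eta^*:=a\,\max_j g'(E_j')/g'(E_j)$ and $d_j:=\eta^*g'(E_j)-a\,g'(E_j')\ge 0$. Using \Cref{MOE} (in the forms \eqref{EPNIatt}, \eqref{EPNIampl}, \eqref{EPNIadd}) together with the extremality of Gaussian states at fixed covariance and the gauge symmetry of the thermal reference, one reduces the supremum to thermal product inputs $\rho=\sigma(\tau)^{\otimes n}$, for which $H(\rho)=n\,g(\tau)$, $H(\cG^{\otimes n}(\rho))=n\,g(a\tau+b)$ and $\bar n_j(\rho)=\tau$; I would also check that allowing displacements $\bigotimes_j D(z_j)\sigma(\tau)D(z_j)^\dagger$ only relaxes the target inequality (the coefficient of $|z_j|^2$ in ``$\eta^*\cdot$denominator $-$ numerator'' equals $d_j\ge0$). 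The claim then collapses to the one-variable inequality
\[
\Phi(\tau):=n\big(g(a\tau+b)-\eta^* g(\tau)\big)+\tau\,{\textstyle\sum_j d_j}\ \ge\ \sum_j\gamma_j\qquad(\tau\ge0),\qquad \gamma_j:=\ln(E_j'+1)+b\,g'(E_j')-\eta^*\ln(E_j+1).
\]
I would prove this by writing $\Phi=\sum_j h_j$ with $h_j(\tau):=g(a\tau+b)-\eta^* g(\tau)+d_j\tau$ and showing each $h_j$ is globally minimised at $\tau=E_j$, where $h_j(E_j)=\gamma_j$: stationarity $h_j'(E_j)=a\,g'(E_j')-\eta^* g'(E_j)+d_j=0$ is the definition of $d_j$, and the second-order condition $a^2g''(E_j')\ge\eta^* g''(E_j)$ follows from $\eta^*\ge a\,g'(E_j')/g'(E_j)$ and $g''(E)=-\tfrac1{E(E+1)}$ together with the elementary fact that $q(\xi):=\xi^2/\big((1+\xi)\ln(1+\xi)\big)$ is nondecreasing with $q(\xi)/\xi$ nonincreasing (itself a consequence of $\ln(1+\xi)\ge\xi/(1+\xi)$ and $(2+\xi)\ln(1+\xi)\ge\xi$); globality then follows because $g',g''$ are monotone and $h_j''=0$ reduces to a quadratic equation with a unique positive root. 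Summing $h_j(\tau)\ge\gamma_j$ over $j$ gives the claim, hence $D(\cG^{\otimes n}(\rho)\|\bigotimes_j\sigma(E_j'))\le\eta^*\,D(\rho\|\bigotimes_j\sigma(E_j))$ for every $\rho$.

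For the matching lower bound let $j^*$ attain the maximum in $\eta^*$ and take $\rho_r:=\big(\bigotimes_{j\ne j^*}\sigma(E_j)\big)\otimes D(r)\,\sigma(E_{j^*})\,D(r)^\dagger$ with $r\ne 0$. By displacement covariance $\cG^{\otimes n}(\rho_r)=\big(\bigotimes_{j\ne j^*}\sigma(E_j')\big)\otimes D(\sqrt a\,r)\,\sigma(E_{j^*}')\,D(\sqrt a\,r)^\dagger$; the factors $j\ne j^*$ contribute $0$ to both relative entropies, and on mode $j^*$ the entropy terms cancel (a displaced thermal state has the entropy of the underlying thermal state), leaving $D(\rho_r\|\bigotimes_j\sigma(E_j))=g'(E_{j^*})|r|^2$ and $D(\cG^{\otimes n}(\rho_r)\|\bigotimes_j\sigma(E_j'))=a\,g'(E_{j^*}')|r|^2$, i.e.\ ratio exactly $\eta^*$. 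This yields $\eta_{\operatorname{Re}}(\cG^{\otimes n},\bigotimes_j\sigma(E_j))=\eta^*$, which is the right-hand side of \eqref{contrac}; the amplifier and additive-noise cases are verbatim with the corresponding $(a,b)$. For the energy-constrained coefficient, $\eta_{\operatorname{Re}}^p\le\eta_{\operatorname{Re}}$ is trivial, and for $p\ge\max_j E_j$ equality is obtained by realising the same value within the budget: for $n=1$ the state $D(r)\sigma(E_1)D(r)^\dagger$ already has ratio exactly $\eta^*$ and mean photon number $E_1+|r|^2$, so a vanishingly small $r$ suffices when $p>E_1$, while at $p=E_1$ one uses $D(r)\sigma(E_1-|r|^2)D(r)^\dagger$ (mean photon number exactly $E_1$) and lets $r\to0$, a Taylor expansion giving ratio $\to\eta^*$; the general $n$ is handled analogously by concentrating the perturbation on the dominant mode $j^*$.

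The main obstacle is the upper bound, specifically the reduction to thermal product inputs: \Cref{MOE} controls the output entropy only at fixed \emph{total} input entropy, so one must argue — via Gaussian extremality and the gauge symmetry of the thermal reference — that a state with one large marginal photon number and the rest near vacuum cannot beat the balanced thermal input, and then carry out the scalar optimisation, whose delicate point is that the single thermal parameter $\tau$ forced by \Cref{MOE} must be weighed against all $n$ reference energies simultaneously; this is exactly what the splitting $\Phi=\sum_j h_j$ into per-mode pieces, each minimised at its own $E_j$, accomplishes. The remaining ingredients — the thermal relative-entropy identity, gauge and displacement covariance, and the calculus lemma on $q(\xi)$ — are routine.
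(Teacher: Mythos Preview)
Your lower bound is correct and in fact cleaner than the paper's (pure displacement of the thermal reference on the dominant mode gives the ratio $\eta^*$ exactly, no limiting argument needed). The energy-constrained part is also fine.

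The upper bound, however, has a real gap at precisely the point you flag as ``the main obstacle''. After invoking \Cref{MOE}, what remains to be shown is
\[
n\big(g(a\tau+b)-\eta^* g(\tau)\big)+\sum_j d_j\,\bar n_j(\rho)\ \ge\ C,\qquad \tau:=g^{-1}\!\big(H(\rho)/n\big),
\]
and you replace $\bar n_j(\rho)$ by $\tau$ to get your one-variable functional $\Phi(\tau)$. But \Cref{MOE} constrains only the \emph{total} input entropy (hence $\tau$), not the individual $\bar n_j(\rho)$; since $d_{j^*}=0$ at the maximising index, one can push essentially all the entropy onto mode $j^*$ and drive $\sum_j d_j\bar n_j(\rho)$ far below $\tau\sum_j d_j$, so the reduction to $\bar n_j=\tau$ is not justified. ``Gaussian extremality at fixed covariance'' does not save this: at fixed covariance the Gaussian state maximises both $H(\rho)$ and $H(\cG^{\otimes n}(\rho))$, and since $\eta^*<1$ the two effects pull $H_{\text{out}}-\eta^* H_{\text{in}}$ in opposite directions, so there is no reason the Gaussian (let alone the \emph{isotropic} thermal product) is extremal for the combination.

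The paper sidesteps this entirely by an observation you are missing: in the $j$-th summand of numerator and denominator the coefficient of $q_j:=\bar n_j-E_j$ is \emph{exactly} $c_j:=a\,g'(E_j')/g'(E_j)$ times as large in the numerator as in the denominator. Consequently the per-mode comparison $\text{num}_j\le c_j\,\text{denom}_j$ is \emph{equivalent} to the $q_j$-free scalar inequality $g(E_j')-H_{\text{out}}/n\le c_j\big(g(E_j)-H_{\text{in}}/n\big)$, which after \Cref{MOE} becomes the one-variable inequality in $\beta=g^{-1}(H(\rho)/n)$ that the paper isolates as its technical Lemma (\Cref{attenuatorineq}/\Cref{additivenoiseineq}). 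This decouples the photon-number dependence from the entropy dependence at the level of each mode separately, so one never has to argue that the marginals are balanced. Your $h_j$-splitting is, in spirit, trying to do the same thing, but only \emph{after} having already (illegitimately) collapsed all $\bar n_j$ to the common value $\tau$; the right move is to exploit the $c_j$-proportionality of the $q_j$ terms \emph{before} any such reduction.
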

 Before proving \Cref{tensorizationGausss}, we state and prove a couple of technical lemmas. The first one  allows us to restrict the optimization \eqref{ReGauss} to finite-rank input states (see Lemma 21 in \cite{makur2015linear} for a classical analogue):
\begin{lemma}[Finite-rank state characterization of $\eta_{\operatorname{Re}}$]\label{approxcontracReLemm} Let $\cT$ be the set of finite-rank $n$-mode quantum states
$\rho\in\cD(L^2(\RR^n))$ supported on $\cH_0:=\operatorname{lin}\{\,|k_1\rangle \otimes \dots\,\otimes |k_n\rangle ,\,k_j\in\NN\}$. Then for any $1$-mode bosonic channel $\cN$, and any thermal Gaussian states $\sigma_1,\dots,\sigma_n$:
\begin{align}\label{approxcontracRe}
\eta_{\operatorname{Re}}\big(\cN^{\otimes n},\bigotimes_j\sigma_j\big)= \sup_{\substack{\rho\in\cT\\D(\rho\|\bigotimes_j\sigma_j)<\infty}}\,\frac{D(\cN^{\otimes n}(\rho)\|\bigotimes_j\cN(\sigma_j))}{D(\rho\|\bigotimes_j\sigma_j)}\,.
\end{align}
 \end{lemma}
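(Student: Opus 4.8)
The plan is to show that the supremum defining $\eta_{\operatorname{Re}}\big(\cN^{\otimes n},\bigotimes_j\sigma_j\big)$, which a priori ranges over all states $\rho\in\cD(L^2(\RR^n))$ with $0<D(\rho\|\bigotimes_j\sigma_j)<\infty$, is not decreased when we restrict to finite-rank states supported on $\cH_0$. The inequality $\ge$ in Eq.~\eqref{approxcontracRe} is trivial since $\cT$ is a subset of the admissible states, so the entire content is the reverse inequality: given any admissible $\rho$, we must produce a sequence $\rho^{(m)}\in\cT$ such that the ratio $\frac{D(\cN^{\otimes n}(\rho^{(m)})\|\bigotimes_j\cN(\sigma_j))}{D(\rho^{(m)}\|\bigotimes_j\sigma_j)}$ converges to (or exceeds, in the limit) $\frac{D(\cN^{\otimes n}(\rho)\|\bigotimes_j\cN(\sigma_j))}{D(\rho\|\bigotimes_j\sigma_j)}$.

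The natural construction is truncation: let $P_m$ be the projection onto $\operatorname{lin}\{|k_1\rangle\otimes\cdots\otimes|k_n\rangle : k_j\le m\}$ and set $\rho^{(m)}:=P_m\rho P_m/\tr(P_m\rho P_m)$, which lies in $\cT$. The key steps are then: (1) show the denominator converges, $D(\rho^{(m)}\|\bigotimes_j\sigma_j)\to D(\rho\|\bigotimes_j\sigma_j)$; (2) show the numerator behaves well, namely $\liminf_m D(\cN^{\otimes n}(\rho^{(m)})\|\bigotimes_j\cN(\sigma_j))\ge D(\cN^{\otimes n}(\rho)\|\bigotimes_j\cN(\sigma_j))$, or ideally also converges. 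For (1) one uses that $\rho^{(m)}\to\rho$ in trace norm (standard for truncations of trace-class operators), lower semicontinuity of relative entropy for the $\liminf$ direction, and an upper bound of the form $D(\rho^{(m)}\|\bigotimes_j\sigma_j)\le \frac{1}{\tr(P_m\rho P_m)}\big(D(\rho\|\bigotimes_j\sigma_j) + (\text{correction terms vanishing as }m\to\infty)\big)$, exploiting that $\bigotimes_j\sigma_j$ is a thermal state with explicit Gibbs form $\propto e^{-\beta N}$, so $-\tr(\rho^{(m)}\ln\bigotimes_j\sigma_j)$ is controlled by $\tr(\rho^{(m)}N^{(n)})\le \tr(\rho N^{(n)})/\tr(P_m\rho P_m)$ when $\rho$ has finite energy; if $\rho$ has infinite energy one first truncates to bring the energy finite while barely changing the ratio. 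For (2) one applies data processing (the numerator is $\le$ a corresponding relative entropy of the inputs after $\cN^{\otimes n}$) together with lower semicontinuity of $D(\cN^{\otimes n}(\cdot)\|\bigotimes_j\cN(\sigma_j))$ under trace-norm convergence $\cN^{\otimes n}(\rho^{(m)})\to\cN^{\otimes n}(\rho)$, which holds because quantum channels are trace-norm continuous.

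Combining (1) and (2): $\liminf_m \frac{D(\cN^{\otimes n}(\rho^{(m)})\|\cdots)}{D(\rho^{(m)}\|\cdots)} \ge \frac{D(\cN^{\otimes n}(\rho)\|\cdots)}{D(\rho\|\cdots)}$ provided the denominator limit is the finite positive number $D(\rho\|\bigotimes_j\sigma_j)$ and the numerator $\liminf$ dominates $D(\cN^{\otimes n}(\rho)\|\cdots)$. Taking the supremum over the approximating sequence then yields $\eta_{\operatorname{Re}}\big(\cN^{\otimes n},\bigotimes_j\sigma_j\big)\le \sup_{\rho\in\cT}(\cdots)$, completing the proof. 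One caveat to handle cleanly: if for the original $\rho$ one has $D(\cN^{\otimes n}(\rho)\|\cdots)=0$ the ratio is zero and contributes nothing to the supremum, so we may assume the numerator is positive; and we must ensure the truncated states eventually satisfy $D(\rho^{(m)}\|\bigotimes_j\sigma_j)<\infty$, which is automatic since finite-rank states on $\cH_0$ always have finite relative entropy against a full-rank thermal state.

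The main obstacle I expect is controlling the denominator from above — relative entropy is only lower semicontinuous, not continuous, so $D(\rho^{(m)}\|\bigotimes_j\sigma_j)$ could in principle jump \emph{upward} in the limit, which would be fatal (it would shrink the ratio). The resolution hinges crucially on the explicit Gibbs/thermal structure of $\bigotimes_j\sigma_j$: writing $-\ln\sigma(E_j) = \beta_j N^{(1)} + c_j\Id$ with $\beta_j=\ln\frac{E_j+1}{E_j}>0$, the relative entropy splits as $D(\rho^{(m)}\|\bigotimes_j\sigma_j) = -H(\rho^{(m)}) + \sum_j(\beta_j\tr(\rho^{(m)}_j N^{(1)}) + c_j)$; the entropy term converges by Fannes-type continuity (or is handled via the known lower-semicontinuity plus a matching upper bound since $\rho^{(m)}$ is a compression), and the energy terms converge because truncation only removes high-energy components, so $\tr(\rho^{(m)} N^{(n)})\le \tr(\rho N^{(n)})/\tr(P_m\rho P_m)\to \tr(\rho N^{(n)})$ — giving genuine convergence, not merely an inequality, of the denominator. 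Assembling these energy and entropy estimates carefully, and first reducing to the finite-energy case via an energy truncation argument, is the technical heart of the proof.
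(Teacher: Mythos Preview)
Your overall architecture coincides with the paper's: truncate $\rho$ by the Fock projections $P_m$ (which commute with $\bigotimes_j\sigma_j$), show the denominator converges while the numerator is handled by lower semicontinuity of $D$ under trace-norm convergence, and combine into a $\liminf$ inequality for the ratio. The paper finishes with an explicit diagonalisation over a maximising sequence, which your final ``take the supremum'' remark glosses over but is routine.

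The genuine difference is in how you establish convergence of the denominator. You split $D(\rho^{(m)}\|\bigotimes_j\sigma_j)=-H(\rho^{(m)})+\sum_j\big(\beta_j\tr(\rho^{(m)}_jN^{(1)})+c_j\big)$ and argue each piece converges. This works cleanly only when $\tr(\rho N^{(n)})<\infty$ and $H(\rho)<\infty$; your proposed fix for the infinite-energy case (``first truncate to bring the energy finite while barely changing the ratio'') is essentially circular, since truncation is precisely what you are trying to justify. The paper avoids this entirely: because $[\Pi_k,\bigotimes_j\sigma_j]=0$, one has the exact identity
\[
D(\rho^{(k)}\|\textstyle\bigotimes_j\sigma_j)=\frac{1}{\tr(\Pi_k\rho)}\,D\big(\Pi_k\rho\Pi_k\,\big\|\,\Pi_k\textstyle\bigotimes_j\sigma_j\,\Pi_k\big)+1-\frac{\tr(\Pi_k\sigma)}{\tr(\Pi_k\rho)}-\ln\tr(\Pi_k\rho),
\]
and then invokes Lindblad's Lemma~4 (from \cite{lindblad1974expectations}), which gives $D(\Pi_k\rho\Pi_k\|\Pi_k\sigma\Pi_k)\to D(\rho\|\sigma)$ for \emph{any} $\rho$ with $D(\rho\|\sigma)<\infty$, with no energy or entropy hypothesis. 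This is both shorter and strictly more general than the Gibbs-decomposition route; I'd recommend you replace your entropy/energy argument with this one-line appeal to Lindblad's lemma.
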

 \begin{remark}
 	Remark that the optimization does not need to assume that $D(\rho\|\bigotimes_j\sigma_j)>0$: this is automatically true since $\bigotimes_j\sigma_j$ is faithful, and hence cannot be equal to any finite-rank state. 
 \end{remark}
 
\begin{proof}
Consider a quantum state $\rho\in\cD(L^2(\RR^n))$ such that $0<D(\rho\|\bigotimes_j\sigma_j)<\infty$, and define a sequence $\Pi_k$ of  finite-dimensional orthogonal projections on $L^2(\RR^n)$ which commute with $\bigotimes_j\sigma_j$ and such that $\Pi_k\to \Id$ strongly as $k\to\infty$ (i.e. $\lim_{k\to\infty}\|(\Pi_k-\Id)\psi\|=0$ for all $\psi\in L^2(\RR^n)$). Then, for $\rho^{(k)}:=\frac{\Pi_k\rho\Pi_k}{\tr(\rho\Pi_k)}$,
\begin{align*}
D(\rho^{(k)}\|\bigotimes_j\sigma_j) =\frac{1}{\tr(\Pi_k\rho)} \,D(\Pi_k\rho\,\Pi_k\|\Pi_k\bigotimes_j\sigma_j\Pi_k)+1-\frac{\tr(\Pi_k\sigma)}{\tr(\Pi_k\rho)}-\ln\tr(\Pi_k\rho)\,,
\end{align*}	
where the identity follows from the fact that $[\Pi_k,\bigotimes_j\sigma_j]=0$ by assumption. Now by strong convergence of the sequence $\{\Pi_k\}_{k\in\NN}$, $\tr(\Pi_k\sigma),\tr(\Pi_k\rho)\to 1$ as $k\to\infty$. Moreover, by Lemma 4 in \cite{lindblad1974expectations}, we have that $D(\Pi_k\rho\,\Pi_k\|\Pi_k\bigotimes_j\sigma_j\Pi_k)\to D(\rho\|\sigma)$ as $k\to\infty$. Hence,
\begin{align}\label{eqcont}
D(\rho^{(k)}\|\bigotimes_j\sigma_j)\,\underset{k\to\infty}{\to}\, D(\rho\|\bigotimes_j\sigma_j)\,.
\end{align}
Moreover, by lower semi-continuity of the quantum relative entropy, we also have that 
\begin{align}\label{eqsemicont}
\liminf_{k\to\infty}
\,D(\cN^{\otimes n}(\rho^{(k)})\|\bigotimes_j\cN(\sigma_j))\ge D(\cN^{\otimes n}(\rho)\|\bigotimes_j\cN(\sigma_j))\,.
\end{align}
Combining \Cref{eqcont} and \Cref{eqsemicont}, we have that 
\begin{align}\label{eq:convapprox}
\liminf_{k\to\infty}\,\frac{D(\cN^{\otimes n}(\rho^{(k)})\|\bigotimes_j\cN_j(\sigma))}{D(\rho^{(k)}\|\bigotimes_j\sigma_j)}\,\ge\, \frac{D(\cN^{\otimes n}(\rho)\|\bigotimes_j\cN(\sigma_j))}{D(\rho\|\bigotimes_j\sigma_j)}\,.
\end{align}
The rest of the proof relies on a diagonalization argument that was already used in Lemma 21 in \cite{makur2015linear}: suppose that $\{\rho_{m}\}_{m\in\NN}$ is a sequence of quantum states that satisfies $0<D(\rho_m\|\bigotimes_j\sigma_j)<\infty$ for all $m\in\NN$ and achieves the supremum in \eqref{ReGauss}:
\begin{align*}
\lim_{m\to\infty}\,\frac{D(\cN^{\otimes n}(\rho_m)\|\bigotimes_j\cN(\sigma_j))}{D(\rho_m\|\bigotimes_j\sigma_j)}=\eta_{\operatorname{Re}}(\cN^{\otimes n},\bigotimes_j\sigma_j)\,.
\end{align*}
Then, by \eqref{eq:convapprox}, we can construct a sequence $\{\rho_m^{(k(m))}\in\cT\}_{m\in\NN}$ where each $k(m)$ is chosen such that for every $m\in\NN$:
\begin{align*}
\frac{D(\cN^{\otimes n}(\rho_m^{(k(m))})\|\bigotimes_j\cN(\sigma_j))}{D(\rho_m^{(k(m))}\|\bigotimes_j\sigma_j)}\,\ge \, \frac{D(\cN^{\otimes n}(\rho_m)\|\bigotimes_j\cN(\sigma_j))}{D(\rho_m\|\bigotimes_j\sigma_j)}-\frac{1}{2^m}\,.
\end{align*}
Letting $m\to\infty$, we end up with
\begin{align*}
\liminf_{m\to\infty}\,\frac{D(\cN^{\otimes n}(\rho_m^{(k(m))})\|\bigotimes_j\cN(\sigma_j))}{D(\rho_m^{(k(m))}\|\bigotimes_j\sigma_j)}\,\ge\,\eta_{\operatorname{Re}}(\cN^{\otimes n},\bigotimes_j\sigma_j)\,.
\end{align*}
Since the supremum in \eqref{ReGauss} is over a smaller set than in \eqref{approxcontracRe}, the above inequality is actually an equality, and the result follows. 
\end{proof} 
The next technical lemma will serve as a replacement of  Bernoulli's inequality in \cite{makur2015linear}:
 \begin{lemma}
 	For any three numbers $\nu,\,0\le\beta\le\alpha$, 
 	\begin{align}\label{additivenoiseineq}
 	g(\nu+\alpha)-g(\nu+\beta)\le\,\frac{\ln\Big( \frac{\nu+\alpha+1}{\nu+\alpha} \Big)}{\ln\Big(\frac{\alpha+1}{\alpha}\Big)}\,\big( g(\alpha)-g(\beta) \big)\,,
 	\end{align}
 	where $g(x):=(x+1)\ln(x+1)-x\ln(x)$. Similarly, for any $\nu,\,0\le\beta\le\alpha$ and $0\le \eta$, we have
 	\begin{align}\label{attenuatorineq}
 	 	g(\eta\alpha+|1-\eta|\nu)-g(\eta\beta+|1-\eta|\nu)\le\,\eta\,\frac{\ln\Big( \frac{\eta\alpha+|1-\eta|\nu+1}{\eta\alpha+|1-\eta|\nu} \Big)}{\ln\Big(\frac{\alpha+1}{\alpha}\Big)}\,\big( g(\alpha)-g(\beta) \big)\,,
 	\end{align} 
 \end{lemma}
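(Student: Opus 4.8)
The plan is to reduce both inequalities to a single scalar monotonicity fact about $g$, which here plays the role of the classical Bernoulli inequality. Put $h:=g'$; a one-line computation gives $h(x)=\ln\frac{x+1}{x}=\ln\!\bigl(1+\tfrac1x\bigr)$, so on $(0,\infty)$ the map $h$ is positive, strictly decreasing and convex, with $h'(x)=-\tfrac1{x(x+1)}$. Both displayed inequalities are instances of
\begin{align*}
g(T(\alpha))-g(T(\beta))\ \le\ T'\,\frac{h(T(\alpha))}{h(\alpha)}\,\bigl(g(\alpha)-g(\beta)\bigr),
\end{align*}
with $T$ the increasing affine map $T(t)=\nu+t$ ($T'=1$) for the first inequality and $T(t)=\eta t+|1-\eta|\,\nu$ ($T'=\eta$) for the second. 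Using $g(b)-g(a)=\int_a^b h$ and substituting $s=T(t)$, this becomes $\int_\beta^\alpha h(T(t))\,dt\le\frac{h(T(\alpha))}{h(\alpha)}\int_\beta^\alpha h(t)\,dt$ (divide out $T'>0$; the case $\eta=0$ is trivial), so it suffices to prove the pointwise bound $\frac{h(T(t))}{h(t)}\le\frac{h(T(\alpha))}{h(\alpha)}$ for all $t\in(0,\alpha]$ and integrate it against $h(t)>0$; the improper endpoint $\beta=0$ is harmless since $\int_0^\alpha h<\infty$.

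Thus everything comes down to showing that $t\mapsto h(T(t))/h(t)$ is non-decreasing on $(0,\infty)$. Differentiating, clearing denominators and using $h'(x)=-1/(x(x+1))$, this is seen to be equivalent to
\begin{align*}
F(T(t))\ \ge\ T'\,F(t)\qquad(t>0),\qquad\text{where}\quad F(s):=-\frac{h(s)}{h'(s)}=s(s+1)\ln\!\bigl(1+\tfrac1s\bigr).
\end{align*}
I would then record three elementary properties of $F$: (i) $F$ is non-decreasing, since $F'(s)=(2s+1)\ln(1+\tfrac1s)-1\ge 1$, which is the Pad\'e bound $\ln(1+v)\ge\frac{2v}{2+v}$ in the variable $v=1/s$ (equality at $v=0$, then compare derivatives, reducing to $0\le v^2$); in particular $F(s)\ge s$; (ii) $G(s):=F(s)/s=(s+1)\ln(1+\tfrac1s)$ is non-increasing, since $G'(s)=\ln(1+\tfrac1s)-\tfrac1s\le0$; (iii) $F$ is concave (from $\ln(1+v)\le\frac{v(v+2)}{2(v+1)}$, again reducing to $0\le v^2$), and since $F(0)=0$ it is subhomogeneous, $F(\lambda s)\ge\lambda F(s)$ for $\lambda\in[0,1]$; finally, expanding the series of $\ln(1+\tfrac1s)$ gives $F(s)=s+\tfrac12+o(1)$ as $s\to\infty$.

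Now the first inequality follows at once: $T(t)=\nu+t\ge t$ and $T'=1$, so $F(T(t))\ge F(t)$ by (i). For the second inequality in the \emph{attenuator regime} $0\le\eta\le1$, using (i) and then (ii) (as $\eta t\le t$),
\begin{align*}
F(\eta t+(1-\eta)\nu)\ \ge\ F(\eta t)\ =\ \eta t\,G(\eta t)\ \ge\ \eta t\,G(t)\ =\ \eta F(t).
\end{align*}
The main obstacle is the \emph{amplifier regime} $\eta\ge1$: here subhomogeneity gives $F(\eta t)\le\eta F(t)$, the wrong direction, so one must show that the offset $|1-\eta|\,\nu=(\eta-1)\nu$ — with $\nu\ge1$ in the intended applications (the amplifier's environment energy being $E+1$) — overcompensates. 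Since $F$ is non-decreasing this reduces to $F(\eta t+\eta-1)\ge\eta F(t)$, i.e., writing $a:=t+1\ge1$, to $\Psi(a):=F(\eta a-1)-\eta F(a-1)\ge0$. Here $\Psi(1)=F(\eta-1)\ge0$, and $\Psi'(a)=\eta\bigl(F'(\eta a-1)-F'(a-1)\bigr)\le0$ because $F'$ is non-increasing by (iii) and $\eta a-1\ge a-1$; hence $\Psi$ is non-increasing on $[1,\infty)$, so $\Psi(a)\ge\lim_{a\to\infty}\Psi(a)=\tfrac{\eta-1}{2}\ge0$, the limit being read off from $F(s)=s+\tfrac12+o(1)$. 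This monotone‑and‑bounded‑below argument for $\Psi$ (in particular the precise constant $\tfrac12$ in the asymptotics of $F$) is the one genuinely delicate point; everything else is bookkeeping with the handful of elementary logarithmic inequalities above, exactly as in the classical Bernoulli step it replaces.
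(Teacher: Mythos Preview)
Your reduction to the monotonicity of $t\mapsto h(T(t))/h(t)$, equivalently to $F(T(t))\ge T'\,F(t)$ with $F(s)=s(s{+}1)\ln(1+\tfrac1s)=-h(s)/h'(s)$, is correct and clean. For the first inequality this is essentially the paper's argument read in the other variable: the paper fixes $(\alpha,\beta)$ and shows that $\nu\mapsto\bigl(g(\nu{+}\alpha)-g(\nu{+}\beta)\bigr)/g'(\nu{+}\alpha)$ decreases, which after the same integral representation also boils down to $F$ being increasing.

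The second inequality is where the two approaches diverge. The paper factors it as (first inequality with $\alpha\to\eta\alpha$, $\beta\to\eta\beta$, $\nu\to|1-\eta|\nu$) followed by the $\nu$-free scaling claim $\bigl(g(\eta\alpha)-g(\eta\beta)\bigr)/h(\eta\alpha)\le\eta\,\bigl(g(\alpha)-g(\beta)\bigr)/h(\alpha)$, which it reduces to $(x{+}1)\ln\tfrac{x+1}{x}$ being decreasing. You instead handle shift and scale in one stroke via $F$ and $G$. For $0\le\eta\le1$ both routes work and yours is slightly tidier.

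For $\eta>1$ you are right to flag a problem, and in fact you have caught a genuine gap in the paper. The paper's monotonicity of $\eta\mapsto h(\eta u)/h(\eta\alpha)$ yields the desired direction only for $\eta\le1$; for $\eta>1$ it gives the reverse inequality in the $\nu$-free scaling step, and indeed the lemma as stated is false at $\nu=0$: with $\eta=2$, $\alpha=1$, $\beta=\tfrac12$ one finds $g(2)-g(1)=\ln\tfrac{27}{16}\approx0.523$ while $2\,\tfrac{h(2)}{h(1)}\bigl(g(1)-g(\tfrac12)\bigr)=3\,\tfrac{\ln(3/2)}{\ln2}\ln\tfrac43\approx0.505$. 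Your fix---assume $\nu\ge1$ when $\eta>1$, exactly what the amplifier application supplies since there $\nu=E+1$---together with the $\Psi(a)=F(\eta a-1)-\eta F(a-1)$ argument (non-increasing by concavity of $F$, with limit $(\eta-1)/2\ge0$ from $F(s)=s+\tfrac12+o(1)$) is correct and delivers precisely what the subsequent Gaussian theorem needs. So your proposal not only proves the usable cases but also repairs the statement.
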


\begin{proof}
	The proof of \eqref{additivenoiseineq} amounts to proving that the function
	\begin{align*} \nu\mapsto \frac{g(\nu+\alpha)-g(\nu+\beta) }{\ln(\nu+\alpha+1)-\ln(\nu+\alpha)}\equiv \frac{g(\nu+\alpha)-g(\nu+\beta) }{g'(\nu+\alpha)}
	\end{align*}
	is monotone decreasing. Now, wince $g$ is differentiable, we have
\begin{align*}
g(\nu+\alpha)-g(\nu+\beta)=\int_{\nu+\beta}^{\nu+\alpha}\,\,g'(s)\,ds=\int_{\beta}^\alpha\,g'(s+\nu)\,ds=\int_\beta^\alpha\,\ln\Big( \frac{s+\nu+1}{s+\nu}  \Big)\,ds\,.
\end{align*}
Therefore, after dividing by $g'(\nu+\alpha)$, we reduced the proof to that of showing that the function 
$\nu\mapsto\frac{\ln(s+\nu+1)-\ln(s+\nu)}{\ln(\nu+\alpha+1)-\ln(\nu+\alpha)}$ is monotone decreasing for all $\beta\le s\le \alpha$. After differentiation, we further simplify the proof to that of showing that the function
\begin{align}\label{func}
x\mapsto -\frac{1}{(x+1)x}\,\frac{1}{\ln(x+1)-\ln(x)}
\end{align}
is increasing. This last assertion can be showed to hold after differentiating the function defined in \eqref{func}. To obtain \eqref{attenuatorineq}, we simply operate the following replacements in \eqref{additivenoiseineq}: $\alpha\to \eta\alpha$, $\nu\to|1-\eta|\nu$ and $\beta\to\eta\beta$. Then we are reduced to having to prove that 
\begin{align*}
\frac{g(\eta\alpha)-g(\eta\beta)}{\ln(\eta\alpha+1)-\ln(\eta\alpha)}\le\,\eta\, \frac{g(\alpha)-g(\beta)}{\ln(\alpha+1)-\ln(\alpha)}\,.
\end{align*}
Using once again the fundamental theorem of calculus and an obvious rescaling, we reduce the problem to that of proving that for all $u\le \alpha$ the function
\begin{align*}
\eta\mapsto \frac{\ln\Big(  \frac{\eta u+1}{\eta u} \Big)}{\ln\Big( \frac{\eta\alpha+1}{\eta\alpha}\Big)}
\end{align*}
is increasing. By differentiating this function, it is enough to show that the function $$x\mapsto (x+1)\ln\Big(\frac{x+1}{x}\Big)$$ on $\RR_+$ is decreasing. One last differentiation reduces the problem to the basic inequality $\ln\big(1+u)\le u$ on $\RR_+$.
\end{proof}

We are now ready to prove our main result.
\begin{proof}[Proof of \Cref{tensorizationGausss}]
\textbf{Step 1: Upper bounding $\eta_{\Re}$}: by \Cref{approxcontracReLemm}, we can restrict the optimization to that over finite-rank input states $\rho\in\cD(L^2(\RR^n))$ supported on $\cH_0=\operatorname{lin}\{|k_1\rangle\otimes  \dots\otimes| k_n\rangle,\,k_j\in\NN\}$, so that $\tr(\rho\,N^{(n)})<\infty$. Such states also have finite von Neumann entropy. Moreover, the energy at the output of any Gaussian quantum channel is also finite \cite{shirokov2019extension}. Then given any sequence $\{\sigma(E_j)\}$ of thermal Gaussian states, and denoting by 
$q_j:=\tr\big(\big(\rho_j-\sigma(E_j)\big)N^{(1)}\big)$ the difference in energy between $\rho_j=\tr_{\{j\}^c}(\rho)$ and $\sigma(E_j)$,
	\begin{align}
	D(\rho\|\bigotimes_j\sigma(E_j))&=-H(\rho)-\tr\big[ \rho\,\ln\,\big(\bigotimes_j\sigma(E_j)\big) \big]\nonumber\\
	&=-H(\rho)+\sum_j\Big(\, S\big(\sigma(E_j)\big)+q_j\ln\Big( \frac{E_j+1}{E_j}\Big)\Big)\nonumber\\
	&=-H(\rho)+\sum_j\Big(\,g(E_j)+q_j\ln\Big( \frac{E_j+1}{E_j}\Big)\Big)\,.
	\end{align}
	Let us now consider the attenuator channel $\cE_{\lambda,E}$. Since $\cE_{\lambda,E}^\dagger(N^{(1)})=\lambda\,N^{(1)}+(1-\lambda)E\,\Id$, we have
	\begin{align}
	&D(\cE_{\lambda,E}^{\otimes n}(\rho)\|\bigotimes_j\cE_{\lambda,E}(\sigma(E_j)))\nonumber\\	&=-H(\cE_{\lambda,E}^{\otimes n}(\rho))+\sum_j\Big(H(\cE_{\lambda,E}(\sigma(E_j)))+\lambda\,q_j\ln\Big( \frac{\lambda E_j+(1-\lambda)E+1}{\lambda E_j+(1-\lambda)E} \Big)   \Big)\nonumber\\
	&=-H(\cE_{\lambda,E}^{\otimes n }(\rho))+\sum_j\Big(\,g(\lambda E_j+(1-\lambda)E)+\lambda\,q_j\ln\Big( \frac{\lambda E_j+(1-\lambda)E+1}{\lambda E_j+(1-\lambda) E}\Big)\Big)\,.\label{eq:attenuator}
	\end{align}
Hence, for $\eta_{\operatorname{Re}}\big(\bigotimes_j\sigma(E_j),\cE_{\lambda,E}^{\otimes n}\big)$ to be upper bounded by the right-hand side of \eqref{contrac}, it suffices to prove that for any $j\in[n]$:
\begin{align*}
g(\lambda E_j+(1-\lambda)E)-\frac{H(\cE_{\lambda,E}^{\otimes n}(\rho))}{n}\le \frac{\ln\Big(\frac{\lambda E_j+(1-\lambda)E+1}{\lambda E_j+(1-\lambda)E}\Big)}{\ln\Big(\frac{E_j+1}{E_j}\Big)}\, \lambda\,\Big(g(E_j)-\frac{H(\rho)}{n}  \Big)\,.
\end{align*}
By the constrained minimum entropy conjecture \eqref{EPNIatt}, we can further simplify the problem to that of proving, for any ${\beta}:=g^{-1}\big(  \frac{H(\rho)}{n}\big)$,
\begin{align}
g(\lambda E_j+(1-\lambda)E)-\,g\big( \lambda\,{\beta}+(1-\lambda)E\big)\le\lambda\, \frac{\ln\Big(\frac{\lambda E_j+(1-\lambda)E+1}{\lambda E_j+(1-\lambda)E}\Big)}{\ln\Big(\frac{E_j+1}{E_j}\Big)}\,\big(g(E_j)-g({\beta})  \big)\,.
\end{align}
The result directly follows from \eqref{attenuatorineq} for $\alpha:= E_j$ and $\nu:=E$. The case of the amplifier $\cA_{\kappa,E}$ follow the exact same proof, since $\cA_{\kappa,E}^{\dagger}(N^{(1)})=\kappa N^{(1)}+(\kappa-1)(E+1)\Id$.

Now, we turn our attention to the additive noise channel $\cN_E$: here, since $\cN_E^\dagger(N^{(1)})=N^{(1)}+E\,\Id$, \eqref{eq:attenuator} has to be replaced by 
	\begin{align}
D(\cN_{E}^{\otimes n}(\rho)\|\bigotimes_j\cN_{E}(\sigma(E_j)))	&=-H(\cN_{E}^{\otimes n}(\rho))+\sum_j\Big(H(\cN_{E}(\sigma(E_j)))+q_j\ln\Big( \frac{ E_j+E+1}{ E_j+E} \Big)   \Big)\nonumber\\
&=-H(\cN_{E}^{\otimes }(\rho))+\sum_j\Big(\,g( E_j+E)+q_j\ln\Big( \frac{ E_j+E+1}{ E_j+ E}\Big)\Big)\label{eq:addnoise}\,.
\end{align}
Therefore, as previously, by conjecture \eqref{EPNIadd} it is enough to prove that for any $j\in[n]$, denoting by $\beta:=g^{-1}\big(  \frac{H(\rho)}{n}\big)$,
\begin{align*}
g(E_j+E)-g\big(\beta+E\big)\le \frac{\ln\Big( \frac{ E_j+E+1}{ E_j+ E}\Big)}{\ln\Big( \frac{E_j+1}{E_j}\Big)}\,\big(  g(E_j)-g(\beta)\big)
\end{align*}
Then, invoking \eqref{additivenoiseineq} for $\alpha:=E_j$ and $\nu:=E$ allows us to conclude. 

\textbf{Step 2: $\eta_{\Re}\ge \eta_{\Re}^{(p)}$}: this is obvious by definition of the contraction coefficients.

\textbf{Step 3: Lower bounding $\eta_{\Re}^{(p)}$ for all $p\ge \max_j E_j$}: let us start with the additive noise channel $\cN_{E}$: choosing $\rho$ to be a tensor product state, we can reduce to the case $n=1$ without loss of generality. Moreover, we choose $\rho$ to be the thermal Gaussian state $\sigma(E')$. Then, we have for any displaced thermal state $\sigma(E')_z:=D(z)\,\sigma(E')\,D(z)^\dagger$, $z\in \CC$,
\begin{align*}
&\frac{D(\cN_E(\sigma(E')_z)\|\cN_{E}(\sigma(E_1)))}{D(\sigma(E')_z\|\sigma(E_1))} \\ &=\frac{-g(E'+E)+(1+E'+|z|^2+E)\ln({E_1+1+E})-(E'+E+|z|^2)\ln(E_1+E)}{-g(E')+(1+E'+|z|^2)\ln(E_1+1)-(E'+|z|^2)\ln(E_1)}\,.
\end{align*}
The lower bound follows after taking $E'=E_1-\delta$, $|z|^2=\delta$ and taking the limit $\delta\to 0$. Since $\tr(\sigma(E')_z\,N^{(1)})=E_1$ for all $\delta>0$, the result follows. The cases of the amplifier and of the attenuator follow the exact same computations. 
The other two cases follows the exact same argument.
\end{proof}

\begin{remark}
	In the case of the Gaussian attenuator channel $\cE_{\lambda,E}$, we find that the contraction coefficient $\eta_{\operatorname{Re}}(\sigma(E)^{\otimes n},\cE^{\otimes n}_{\lambda,E})=\lambda$. This is also a consequence of the conditional entropy power inequality, as previously observed in  \cite{de2018conditional,de2018conditional2}.
\end{remark}

\begin{remark}
	In the classical setting, it is well-known that the contraction coefficient $\eta(\mathcal{G})=1$ for the classical additive white Gaussian noise channel, even when restricting to inputs of  finite energy. In order to get nontrivial contraction, the following non-linear strong data processing inequality was proposed in \cite{7279116,8186217}: for the channel $\cG:X\to Y$
	\begin{align*}
	F(t,\cG,E):=\underset{\substack{U\to X\to Y\\I(U;X)\le t\\ \mathbb{E}[X^2]\le E}}{\sup}\,I(U;Y)\,,
	\end{align*}
	where the supremum is over all joint distributions $P_{UX}$ with constrained input energy over system $X$. In \cite{8186217}, it was shown that such curves are always strictly less than $t$ for all $t>0$. A similar analysis in the quantum setting will be the subject of  future work.
\end{remark}

\section{Conclusions}

In this work, we discussed the relationship between the concept of contraction coefficients and the less noisy partial order. The equivalence between their respective relative entropy and mutual information formulations allowed us a new point of view on their properties and bringing our knowledge in the quantum setting closer to what is currently available in the classical case. Given the central position data processing takes in the field of quantum information theory, we expect our results to lay the ground for further results and applications to plenty information processing tasks. Indeed, the techniques developed here already found applications in obtaining state-of-the-art bounds on the performance of noisy quantum devices~\cite{DePalma2022,Wang2021a},  novel capacity bounds~\cite{Hirche2022} and the study of quantum differential privacy~\cite{Hirche2022a}.

Nevertheless, many open questions remain. Most crucially, it is not yet clear whether the relations between the different partial orders are actually strict. For example, we know that 
$$\cn\deg\cm  \Rightarrow   \cn\cln\cm  \Rightarrow \cn\rln\cm$$
but it is not known whether the opposite directions might also hold, namely
$$\cn\rln\cm  \xRightarrow{?}   \cn\cln\cm  \xRightarrow{?} \cn\deg\cm\,.$$ 
Moreover, the fully quantum less noisy order and the regularized less noisy order sit somewhat parallel in the hierarchy of orders, compare Figure~\ref{Fig:RelationsPO}. In the exact case, both have been used to e.g. to give a condition for which the quantum capacity becomes single-letter. It would have interesting applications to find a direct relationship between the two orders, i.e. one or both of the following:
$$\cn\rln\cm  \xRightarrow{?}   \cn\fqln\cm\, ,  \qquad\quad \cn\fqln\cm  \xRightarrow{?}   \cn\rln\cm \,.$$

Finally, for SDPI constants and contraction coefficients, natural open problems include a better understanding of their tensorization properties and the possible equivalence of constants based on different divergences.

\section*{Acknowledgments} 
 CH and DSF acknowledge financial support from the VILLUM FONDEN via the QMATH Centre of Excellence (Grant no. 10059)  and the QuantERA ERA-NET Cofund in Quantum Technologies implemented within the European Union’s Horizon 2020 Programme (QuantAlgo project) via the Innovation Fund Denmark. DSF acknowledges financial support from the European Research Council (grant agreement no. 81876).
 CR is partially supported by a Junior Researcher START Fellowship from the MCQST.  CR acknowledges funding by the Deutsche Forschungsgemeinschaft (DFG, German Research Foundation) under Germanys Excellence Strategy EXC-2111 390814868.

\bibliographystyle{plainurl}
\bibliography{library_doi}

\end{document}